\documentclass[12pt]{article}
\usepackage{amsmath}
\usepackage{graphicx}
\usepackage{enumerate}
\usepackage{natbib}
\usepackage{url} 
\usepackage{comment}

\usepackage{graphicx}
\graphicspath{ {./figs/} }
\usepackage{subcaption}

\pdfminorversion=4
\newcommand{\blind}{0}

\addtolength{\oddsidemargin}{-.5in}%
\addtolength{\evensidemargin}{-1in}%
\addtolength{\textwidth}{1in}%
\addtolength{\textheight}{1.7in}%
\addtolength{\topmargin}{-1in}%

\usepackage{amssymb,amsfonts,mathrsfs,amsmath,amsthm,mathtools,bm,dsfont, thmtools,bbm}\allowdisplaybreaks
\usepackage[dvipsnames]{xcolor}
\usepackage{array, graphicx, graphics,float,multirow,tabularx,tikz,pgfplots, longtable,threeparttable}
\usepackage{setspace}
\usepackage[colorlinks,citecolor=blue,urlcolor=blue, runcolor=blue, filecolor=cyan]{hyperref}
\usepackage{footnotebackref}
\usepackage{bibentry, natbib} 
\usepackage{paralist}  
\usepackage{appendix} 
\usepackage{mathrsfs}
\usepackage{enumitem}
\usepackage{authblk}
\usepackage{caption}
\usepackage{subcaption}
\usepackage{titlesec}
\usepackage{textcase,relsize}
\usepackage{datetime}
\usepackage{booktabs}

\declaretheoremstyle[notefont=\bfseries,notebraces={}{},%
    headpunct={},postheadspace=1em]{mystyle}
\declaretheorem[style=mystyle,numbered=no,name=Assumption]{asmp-hand}
\declaretheorem[style=mystyle,numbered=no,name=Condition]{cond-hand}
\declaretheorem[style=mystyle,numbered=no,name=Example]{exmp-hand}

\usepackage{titlesec}

\setcounter{secnumdepth}{4}

\titleformat{\paragraph}
{\normalfont\normalsize\bfseries}{\theparagraph}{1em}{}
\titlespacing*{\paragraph}
{0pt}{3.25ex plus 1ex minus .2ex}{1.5ex plus .2ex}

			\def \bfA {\mathbf{A}}
			\def \bfB {\mathbf{B}}
			\def \bfC {\mathbf{C}}
			
\def \bbE {\mathbb{E}}	\def \calE {\mathcal{E}}		\def \bfE {\mathbf{E}}
			\def \bfF {\mathbf{F}}
	\def \calG {\mathcal{G}}		
			\def \bfH {\mathbf{H}}
	\def \calI {\mathcal{I}}		\def \bfI {\mathbf{I}}
	\def \calJ {\mathcal{J}}		
	\def \calK {\mathcal{K}}		
	\def \calL {\mathcal{L}}		\def \bfL {\mathbf{L}}
			\def \bfM {\mathbf{M}}
	\def \calN {\mathcal{N}}		
	\def \calO {\mathcal{O}}		\def \bfO {\mathbf{O}}
\def \bbP {\mathbb{P}}	\def \calP {\mathcal{P}}		
			\def \bfQ {\mathbf{Q}}
\def \bbR {\mathbb{R}}			\def \bfR {\mathbf{R}}
	\def \calS {\mathcal{S}}		\def \bfS {\mathbf{S}}
	\def \calT {\mathcal{T}}		
			\def \bfU {\mathbf{U}}
			\def \bfV {\mathbf{V}}
			\def \bfW {\mathbf{W}}
			\def \bfX {\mathbf{X}}
			\def \bfY {\mathbf{Y}}
\def \bbZ {\mathbb{Z}}	\def \calZ {\mathcal{Z}}		\def \bfZ {\mathbf{Z}}

			\def \bfc {\mathbf{c}}
			
			\def \bfe {\mathbf{e}}
			\def \bff {\mathbf{f}}

			\def \bfs {\mathbf{s}}

			\def \bfx {\mathbf{x}}
			\def \bfy {\mathbf{y}}

\def \bbOne{\mathbbm{1}}


\def \Mstar {M^{\star}}


\def \Var {\mathrm{Var}}

\def \bfS {\bm{S}}
\def \bfY {\mathbf{Y}}

\def\calE{{\cal  E}} 
 
\def\calG{{\cal  G}} 
 
\def\calI{{\cal  I}} 
\def\calJ{{\cal  J}} 
\def\calK{{\cal  K}} 
\def\calL{{\cal  L}} 
 
\def\calN{{\cal  N}} 
\def\calO{{\cal  O}} 
\def\calP{{\cal  P}}

\def\calS{{\cal  S}} 
\def\calT{{\cal  T}}

\def\calZ{{\cal  Z}} 

%
%

 %
 %

\def \vect {\mathrm{vec}}

\def \rank {\mathrm{rank}}


\def \diag {\mathrm{diag}}

\def \inj {\mathrm{inj}}

\def \full {\mathrm{full}}

\def \proj {\mathrm{proj}}

\def \op {\mathrm{op}}

\newcommand{\norm}[1]{\left\Vert#1\right\Vert}

\DeclareMathOperator*{\argmin}{arg\,min}

\numberwithin{equation}{section}

\theoremstyle{definition}
\newtheorem{remark}{Remark}
\newtheorem{example}{Example} [section]

\theoremstyle{plain}
\newtheorem{theorem}{Theorem}[section]
\newtheorem{proposition}[theorem]{Proposition}
\newtheorem{lemma}[theorem]{Lemma}
\newtheorem{corollary}[theorem]{Corollary}
\newtheorem{claim}{Claim}
\newtheorem{condition}{Condition}
\newtheorem{assumption}{Assumption}[section]
\newtheorem*{assumption*}{Assumption}

\def \bbE {\mathbb{E}}	\def \calE {\mathcal{E}}		\def \bfE {\mathbf{E}}
\def \calP {\mathcal{P}}

\allowdisplaybreaks

\usepackage{algorithm,algpseudocode}

\makeatletter

\makeatother
   
\date{\today}
\newdateformat{mydate}{\monthname[\THEMONTH] \THEYEAR}

\begin{document}

\def\spacingset#1{\renewcommand{\baselinestretch}%
{#1}\small\normalsize} \spacingset{1}
  
\if0\blind
{ 
\title{\bf Uncertainty Quantification for Ranking with Heterogeneous Preferences}
    \author[1]{Jianqing Fan}
  \author[2]{Hyukjun Kwon}
  \author[3]{Xiaonan Zhu}
  \affil[1,2,3]{Department of Operations Research and Financial Engineering, Princeton University}
  \maketitle
} \fi

\if1\blind
{   
 \bigskip
 \bigskip
 \bigskip
\begin{center}
\spacingset{1.3}
    {\LARGE\bf Uncertainty Quantification for Ranking with Heterogeneous Preferences}
\end{center}
   \medskip
} \fi

\bigskip
\begin{abstract}
 This paper studies human preference learning based on partially revealed choice behavior and formulates the problem as a generalized Bradley--Terry--Luce (BTL) ranking model that accounts for heterogeneous preferences. Specifically, we assume that each user is associated with a nonparametric preference function, and each item is characterized by a low-dimensional latent feature vector --- their interaction defines the underlying  low-rank score matrix. 
 In this formulation, we propose an indirect regularization method for collaboratively learning the score matrix, which ensures entrywise $\ell_\infty$-norm error control --- a novel contribution to the heterogeneous preference learning literature. This technique is based on sieve approximation and can be extended to a broader class of binary choice models where a smooth link function is adopted. In addition, by applying a single step of the Newton--Raphson method, we debias the regularized estimator and establish uncertainty quantification for item scores and rankings of items, both for the aggregated and individual preferences. Extensive simulation results from synthetic and real datasets corroborate our theoretical findings.
\end{abstract}

\noindent%
{\it Keywords:} Bradley--Terry--Luce model, preference learning, collaborative ranking, 
sieve approximation, 
nonconvex optimization, $\ell_\infty$ error bound.
\vfill

\newpage
\spacingset{1.9} 


\section{Introduction}\label{sec:intro}

Learning human preferences from revealed choice behavior is a profound and fundamental goal in a wide range of theoretical and practical disciplines. Examples include the Bradley--Terry--Luce (BTL) model \citep{bradley1952rank,luce1959individual,mcfadden1972conditional}, web search \citep{dwork2001rank,wang2016learning}, assortment optimization \citep{aouad2018approximability, chen2020dynamic,lee2024lowassortment}, recommendation systems \citep{baltrunas2010group,li2023estimating}, education \citep{avery2013revealed}, voting \citep{plackett1975analysis,mattei2013preflib}, and instruction tuning used in recent large language models \citep{ouyang2022training,lee2024low}. However, in contrast to the homogeneous preference assumption typically adopted in the literature, users in real-world applications exhibit diverse preferences. This paper studies collaborative learning of heterogeneous, user-specific preferences based on user choices revealed for only a small fraction of items.

We formulate the heterogeneous preference learning problem as a generalized BTL ranking model, where users are allowed to have distinct item score vectors \citep[see][]{park2015preference,katz2018nonparametric,negahban2018learning,li2020neural}. Specifically, we assume that each user $i$ is associated with an unknown, nonparametric preference function $g_i(\cdot)$, each item $j$ is characterized by a low-dimensional latent feature vector ${\bm\eta}_j$, and the matrix $\big[g_i({\bm\eta}_j)\big]_{i,j}$ defines the underlying preference scores for user $i$ across item $j$. For each user $i$, the item pairs selected for comparisons follow an Erd\H{o}s--R\'enyi random graph with probability $p_i>0$. Once an item pair $(j, j')$ is sampled, user $i$ chooses $j$ over $j'$ with probability $\exp(g_i({\bm\eta}_j))/(\exp(g_i({\bm\eta}_j))+\exp(g_i({\bm\eta}_{j'})))$. Collecting these item-comparison samples, we aim to estimate the underlying preference scores.

We focus on a scenario where the nonparametric preference functions $\{g_i(\cdot)\}_{i \in [d_1]}$ satisfy certain regularity conditions, such as sufficient smoothness, which control the complexity of the score matrix and make regularization methods effective for collaborative learning. Specifically, we adopt the \textit{nuclear norm regularization}, which has been widely used and studied in low-rank models; see, for example, \cite{beck:2009}, \cite{candes2009exact}, \cite{cai:2010}, \cite{candes2010matrix}, \cite{mazumder:2010},  \cite{koltchinskii:2011}, \cite{ma:2011}, \cite{negahban2012restricted}, \cite{parikh2014proximal}, \cite{chen2020noisy}, and \cite{fan2025covariates}.

This paper makes a novel contribution to the heterogeneous preference learning literature by establishing \textit{entrywise error control} and \textit{uncertainty quantification} under the generalized BTL model.  While comparable results have been achieved in ranking literature under the assumption of homogeneous preferences \citep[e.g., see][]{chen2015spectral,chen2019spectral,fan2023spectral,fan2024covariate,fan2024uncertainty,fan2024ranking}, such results have not been established in the heterogeneous setting due to its inherent complexity. Moreover, as we collaboratively learn the preference score matrix, the required sample size for each user is substantially smaller than that in the ranking literature.

Our key strategy is the use of a \textit{reparameterized nonconvex surrogate} of the original convex regularized problem --- a powerful technique that has been widely adopted in matrix completion models \citep[e.g.,][]{chen:2019inference,chen2020noisy}, and causal inference models \citep[e.g.,][]{choi2023norank,choi2024inference,choi2024matrix}. This technique establishes an entrywise error bound by i) analyzing the error of nonconvex gradient descent iterates using the leave-one-out technique and ii) showing the closeness of the nonconvex gradient iterate and the solution of the convex regularization problem. 

However, as noted in \cite{fan2025covariates}, the maximum likelihood estimation (MLE) method, which is commonly adopted in BTL models \citep[e.g., see][]{park2015preference,negahban2018learning}, introduces significant technical challenges in the nonconvex surrogate strategy. Specifically, the closeness of the two solution concepts of the regularized MLE problem and its reparameterized nonconvex problem is not guaranteed. To address this issue, we propose a novel regularization approach, which we term \textit{indirect regularization.} 

Usually, when a parameter of interest is assumed to satisfy a certain structural assumption, such as sparsity or low-rankness, a penalty is imposed directly on the parameter to encourage the resulting estimator to conform to the assumed structure, which is \textit{direct regularization.} In contrast, although we impose the regularity conditions on the score matrix, leading to approximate low-rankness, we do not directly regularize the score matrix. In our model, the score matrix defines an item comparison probability matrix through a smooth function. We ``indirectly'' regularize this induced probability matrix, instead of the original preference score matrix. Specifically, we propose an \textit{indirectly regularized least squares formulation} and its reparameterized nonconvex surrogate, in lieu of the directly regularized MLE. We show that this indirect regularization effectively controls the structure of the original score matrix estimator, similar to the direct regularization. Moreover, unlike the direct regularization, this approach ensures the desired transversality between the original regularized problem and its reparameterized counterpart, leading to entrywise error control.

Our strategy is motivated by an observation that the (approximate) low-rank structure of the score matrix is well-transferred to that of the probability matrix as long as a smooth link function is adopted. We formalize this observation in light of sieve approximation theory \citep[see][]{chen2007large,fan2016projected,chernozhukov2023inference,choi2024inference}. Our approach offers a novel perspective for a broad class of high-dimensional BTL models where a smooth link function is assumed. In these models, MLE-based methods have been central, as the likelihood function can be easily specified using a known link function, and theoretical error bounds are guaranteed to some extent --- for example, empirical loss bounds in \cite{park2015preference} and Euclidean error bounds in \cite{negahban2018learning}. In this context, this paper introduces the regularized least square formulation as a compelling alternative, achieving an entrywise error bound --- another important and useful error bound --- and deriving uncertainty quantification results based on it. 

Uncertainty quantification for item preference scores and rankings is another key contribution of this paper. For aggregated preferences, we remove the regularization bias by applying a single step of the Newton--Raphson method to the log-likelihood, leveraging entrywise error bounds and local strong convexity. Remarkably, we obtain comparable results even for individual preferences --- highlighting a unique strength of our approach. To achieve this, following the Newton--Raphson debiasing, we further apply spectral projection to refine the estimator.

Our theory applies to a broad class of human preference learning problems where users have heterogeneous preferences, including the aforementioned applications. We highlight two particularly timely and impactful applications: (i) large-scale conference reviewing, and (ii) the ranking of large language models (LLMs).

In 2024,  NeurIPS, a machine learning conference, had 16,671 submissions, and in 2023, ICML received 6,538 submissions from 18,535 authors. These burden the referee system significantly and impact the quality of reviews with huge individual noise. Many recent works have analyzed the peer-review system and proposed various approaches \citep[see][]{su2021you, shah2022challenges, su2024analysis, yan2024isotonic, fernandes2025peer}. Our framework offers an alternative: by collecting random pairwise comparisons from reviewers, we estimate underlying preference scores, which can then be used for accept/reject decisions.

Our theory is also applicable to ranking LLMs, a task that has gained significant attention due to its rapid growth \citep[e.g.,][]{chatzi2024prediction, dhurandhar2024ranking, maia2024efficient, wang2024ranking}. Many works adopt the BTL model with (regularized) MLE. In contrast, our regularized least squares and debiasing approach offers entrywise error control and uncertainty quantification. Additionally, our framework extends to low-rank reward model training for reinforcement learning from human feedback  in LLMs \citep[e.g.,][]{ouyang2022training, lee2024low, park2024rlhf}.

 
The rest of the paper is structured as follows. Section~\ref{sec:model} introduces our model. Section~\ref{sec:discussion} presents the estimation procedure and its motivation. Section~\ref{sec:errorbounds} provides the convergence rates of our estimator. Section~\ref{sec:UQ} elaborates on our debiasing method and the resulting uncertainty quantification results. Section~\ref{sec:numericalstudies} reports numerical experiments using the reel-watching dataset from \textit{Kuaishou}, a Chinese short-video platform. All proofs are provided in the appendix.

Let $\norm{\cdot}$, $\norm{\cdot}_{\mathrm{F}}$, $\norm{\cdot}_*$, and $\norm{\cdot}_{\infty}$ denote the matrix operator norm, Frobenius norm, nuclear norm, and entrywise $\ell_{\infty}$ norm, respectively. Also, we use $\norm{\cdot}_{2,\infty}$ to denote the largest $\ell_2$ norm of all rows of a matrix. We write $\sigma_{\max}(\cdot)$ and $\sigma_{\min}(\cdot)$ to represent the largest and smallest (nonzero) singular values of a matrix, respectively, and $\sigma_{j}(\cdot)$ to represent the $j$th largest singular value of a matrix. For two sequences $a_{n}$ and $b_{n}$, we denote $a_{n} \ll b_{n}$ (or $b_{n} \gg a_{n}$) if $a_{n} =o(b_{n})$, $a_{n} \lesssim b_{n}$ (or $b_{n} \gtrsim a_{n}$) if $a_{n} =O(b_{n})$, and $a_{n} \asymp b_{n}$ if $a_{n} \lesssim b_{n}$ and $a_{n} \gtrsim b_{n}$. For each natural number $n$, we denote $[n]=\{1, \ldots, n\}$. Also, let $\bfe^N_j$, for $j=1, \ldots, N$, denote the standard basis vectors in $\bbR^N$. When there is no risk of confusion, we will drop the superscript $N$ and write $\bfe_j$. For a matrix $\bfX$, let $\bfX_{n, \cdot}$ and $\bfX_{\cdot, n}$ denote $n$th row and column of $\bfX$, respectively. 

 \section{Model}\label{sec:model}
Suppose that there are $d_1$ users and $d_2$ items. We assume that each item $j$ is associated with an $r$-dimensional latent feature vector ${\bm \eta}_j$ where $r \ll \min\{d_1,d_2\}$, and each user $i$ has a unknown, nonparametric preference function $g_i:\bbR^r \rightarrow \bbR$. Then, the preference score of user $i$ for item $j$ is defined as $g_i({\bm\eta}_j)$. We denote the preference score matrix by ${\bf\Theta}^\star\coloneqq[g_i({\bm\eta}_j)]_{i \in [d_1], j\in[d_2]}$, which governs the comparison behavior of each user.
Specifically, when user $i$ is presented with a pair of distinct items $(j, j')$, the probability that the user $i$ prefers item $j$ over $j'$, denoted as $j \succ_i j'$, is
\begin{align}
  \bbP\{j \succ_i j'\}  =\frac{\exp(\Theta^\star_{i,j}) }{\exp(\Theta^\star_{i,j})+\exp(\Theta^\star_{i,j'})}= \frac{1}{1+\exp(-(\Theta^\star_{i,j}-\Theta^\star_{i,j'}))} \coloneqq  \sigma(\Theta^\star_{i,j}-\Theta^\star_{i,j'}).\label{eq:BTLfunction}
\end{align}
The probability \eqref{eq:BTLfunction} follows from Luce's choice axiom \citep{luce1959individual} and is widely adopted 
in a wide range of discrete response models.

It is evident that the probability \eqref{eq:BTLfunction}, induced by ${\bf\Theta^{\star}}$, is invariant to row-wise shifts in ${\bf\Theta^{\star}}$. That is, any score matrix such that $\bf\Theta^\star+\bfc \bf1^\top$, with $\bfc \in \bbR^{d_1}$, leads to the equivalent probability $\sigma(\Theta^\star_{i,j}+c_i-(\Theta^\star_{i,j'}-c_i))=\sigma(\Theta^\star_{i,j}-\Theta^\star_{i,j'})$. To address the identification issue, we focus on the centered scores by imposing Assumption \ref{asp:identification}, namely, each individual's average preference score over all $d_2$ items is normalized at zero. We note that this assumption does not cause any loss of generality with respect to item rankings --- whether for an individual user or the aggregated ranking across multiple users. 
\begin{assumption}[Identification]\label{asp:identification}
   ${\bf\Theta}^\star \bf1=\bf0$, namely, row sum is zero.
\end{assumption}


For each user $i \in [d_1]$, we define a directed graph $\calG_i = ([d_2], E_i)$ where $[d_2]$ is the node set and $E_i \subseteq \{i\} \times [d_2] \times [d_2]$ denotes the edge set, indicating the pairs of items compared by user $i$. Without loss of generality, we consider only edges $(i,j,j')$ such that $j<j'$ throughout this paper.

\begin{assumption}[Random sampling and choices]\label{asp:randomness}
We assume the Erd{\H o}s--R{\' e}nyi random graph sampling for each user, with heterogeneous probabilities. That is, for each user $i \in [d_1]$, an ordered triplet $(i,j,j') \in \{i\}\times [d_2] \times [d_2]$ with $j<j'$ independently belongs to the edge set $E_i$ with probability $p_i>0$. Denoting $p_{\min}\coloneqq \min_i p_i$ and $p_{\max}\coloneqq \max_i p_i$, we assume $p_{\min}>c p_{\max}$ for some constant $c>0.$ Denote the mean as $\bar{p}\coloneqq (d_1)^{-1}\sum_{i\in[d_1]}p_i$. Also, when $(i,j, j')\in E_i$, we randomly observe $j \succ_i j'$ with probability specified in \eqref{eq:BTLfunction}.  
\end{assumption}

We begin by defining the $d_1 \times d_2(d_2-1)/2$ dimensional score ``gap'' matrix $\bfM^\star$. To do so, we first index all item pairs according to a pre-defined \textit{lexicographic} order:
 \begin{align*}
     \underbrace{(1,2)-(1,3)-\cdots-(1,d_2)}_{d_2-1}-\underbrace{(2,3)-(2,4)-\cdots-(2,d_2)}_{d_2-2}-\cdots-\underbrace{(d_2-1,d_2)}_{1},
\end{align*}  
and denote this order by $k=\calL(j,j')$, a function from the indices $k =1,\ldots, d_2(d_2-1)/2$ to the set of all ordered item pairs $(j,j')$ with $j<j'$. We then define $\Mstar_{i,k} \coloneqq \Theta^\star_{i,j}-\Theta^\star_{i,j'},$ for $k=\calL(j,j')$ and
the indicator of a sampled edge as $\delta_{i,k}\coloneqq \bbOne\{\text{$(i,\calL^{-1}(k)) \in \bigcup_{i=1}^{d_1}E_i$}\}$. When $\delta_{i,k}=1$ with $k=\calL(j,j')$, we define the comparison outcome $y_{i,k}\coloneqq \bbOne\{j \succ_i j'\}$ and define $S\coloneqq \{(i,k)|(i,\calL^{-1}(k))\in \bigcup_{i=1}^{d_1}E_i\}$.





\section{Estimation Strategy}\label{sec:discussion}

Without any structural constraints on ${\bf\Theta}^\star$, collaborative learning is generally infeasible. By imposing certain regularity conditions on preference functions $\{g_i(\cdot)\}_{i \in [d_1]}$, we can control the complexity of the parameter ${\bf\Theta}^\star$, making regularization methods, such as \textit{nuclear norm regularization}, effective.

We aim to establish an entrywise error bound for the score matrix ${\bf\Theta}^\star$ --- an important result that has not been established in the heterogeneous preference learning model \citep[e.g.,][]{park2015preference,katz2018nonparametric,negahban2018learning,li2020neural}. Unlike the Euclidean (or Frobenius norm) error bound, the entrywise error bound provides  uniform control on the  estimation error across individual entries --- for example, enabling more informative recommendations in online marketplaces. Additionally, it plays a crucial role in other theoretical results of this paper, uncertainty quantification for item score gaps and item rankings, which also represent new contributions to the preference learning literature.

Our strategy for the entrywise error control is utilizing a \textit{reparameterized nonconvex surrogate} of the convex, nuclear norm regularized problem, which has recently gained popularity in low-rank matrix completion models, causal inference models, and mixed membership models. This technique achieves an entrywise error bound by i) analyzing the error of nonconvex gradient descent iterates using the leave-one-out technique and ii) transferring this error to the solution of the original problem by establishing the closeness of the two solution concepts.

\subsection{Technical challenges in nonconvex surrogate for regularized MLE}\label{sec:tech_challenges}

In BTL models, the maximum likelihood estimation (MLE) methods are commonly adopted. However, as noted in \cite{fan2025covariates}, directly applying the nonconvex surrogate strategy to the regularized MLE introduces significant technical challenges. 

As mentioned, a key component of this strategy is establishing that the gap between the two solution concepts of the convex regularized problem and its corresponding nonconvex problem is exceedingly small. Remarkably, \cite{chen2020noisy} show that it suffices to bound this gap \textit{only within a low-dimensional subspace associated with the regularization}, as long as the gradient of the nonconvex problem is sufficiently small. This strategy significantly facilitates the proof, and has been adopted in the subsequent works \citep[e.g.,][]{chen:2019inference,choi2023norank,choi2024inference,choi2024matrix, wang2025robust, fan2025covariates}. However, this result does not hold in general when the MLE loss is employed.

To further illustrate this issue, consider the regularized MLE that minimizes:
\begin{align}
  -\underbrace{ \sum_{(i,k)\in S} p_i^{-1} \left(y_{i,k} M_{i,k} - \log \left(1+ \exp\left( M_{i,k}  \right) \right)\right)}_{\coloneqq \ell(\bfM)} + \lambda \norm{\bfM}_* \label{eq:penalMLE}
\end{align}
where $S$ denotes the sample set introduced in Section \ref{sec:model} and $\lambda >0$ is a regularization parameter. The corresponding nonconvex function is then defined as: 
\begin{align}
   g(\bfA, \bfB)\coloneqq -\underbrace{ \sum_{(i,k)\in S} p_i^{-1} \left(y_{i,k} [\bfA \bfB^\top]_{i,k} - \log \left(1+ \exp\left( [\bfA \bfB^\top]_{i,k}  \right) \right)\right)}_{\coloneqq \ell(\bfA \bfB^\top)} + \frac{\lambda}{2} \norm{\bfA}_{\mathrm{F}}^2+ \frac{\lambda}{2} \norm{\bfB}_{\mathrm{F}}^2, \label{eq:nonconvexMLE}
\end{align}
where $\bfA \in \bbR^{d_1 \times q}$ and $\bfB \in \bbR^{d_2(d_2-1)/2 \times q}$ with $q \coloneqq \rank(\bfM^\star)$ are factorization of $\bfM^\star$. The reparameterization and the replacement of the regularization component are motivated by the following equation \citep{mazumder:2010}:
\begin{align*}
    \norm{\bfM}_*=\min_{\bfA \in \bbR^{d_1 \times q}, \bfB \in \bbR^{d_2(d_2-1)/2 \times q}, \bfA\bfB^\top=\bfM} \frac{1}{2}\left( \norm{\bfA}_{\mathrm{F}}^2+ \norm{\bfB}_{\mathrm{F}}^2\right).
\end{align*}

The crucial step in analysis is to show that the sufficiently small gradient $\norm{\nabla g(\bfA, \bfB)}_{\mathrm{F}}$ implies that the gradient of the loss component at $(\bfA, \bfB)$, i.e., $-\nabla \ell(\bfA \bfB^\top)$, has \textit{a bounded spectral deviation} from the subspaces of $\bfA$ and $\bfB$. That is, we need to show
\begin{align}
    \norm{P_{T^\perp} \left(  -\nabla \ell(\bfA \bfB^\top) \right)} &= \norm{P_{T^\perp} \left(  -\sum_{(i,k)\in S}p_i^{-1} \left( y_{i,k} -   \sigma([\bfA \bfB^\top]_{i,k}) \right) \bfe_i^{d_1} \bfe_k^{d_2(d_2-1)/2 \top}\right)}  < \lambda/2 \label{eq:spectral_deviation}
\end{align}
where $P_{T}(\cdot)$ and $P_{T^\perp}(\cdot)$ denote the projection operators onto the subspace spanned by $\bfA$ and $\bfB$, and its orthogonal space, respectively, and $\bfe_i^{d_1}$ and $\bfe_k^{d_2(d_2-1)/2}$ are $d_1$ and $d_2(d_2-1)/2$ dimensional standard basis vectors. When there is no risk of confusion, we will simply write $\bfe_i$ and $\bfe_k$ instead. Let $\widetilde{\bfM}$ and $(\widetilde{\bfA}, \widetilde{\bfB})$ denote the two solutions of \eqref{eq:penalMLE} and \eqref{eq:nonconvexMLE}.\footnote{To be precise, we use different solution concepts for the two problems. On one hand, $\widetilde{\bfM}$ denotes the minimizer of \eqref{eq:penalMLE}. On the other hand, for the nonconvex problem, we initiate  gradient descent iterations from the ground truth. Although the truth is unknown, this poses no issue, as we do not numerically compute the nonconvex solutions in estimation: they only serve as intermediate variables that facilitate the analysis of the convex estimator $\widetilde{\bfM}$. We run the iterations for polynomial times $t_0$ and select the iteration $t' <t_0$ where the gradient $\norm{\nabla g(\bfA^t, \bfB^t)}_{\mathrm{F}}$ is minimized over $t$. $(\widetilde{\bfA}, \widetilde{\bfB})$ is the iterates at this stopping point. For more details, we refer to \cite{chen:2019inference}, \cite{chen2020noisy}, \cite{choi2023norank}, \cite{choi2024matrix}, \cite{choi2024inference}, and \cite{fan2025covariates}.} Then, as long as \eqref{eq:spectral_deviation} holds at $(\widetilde{\bfA},\widetilde{\bfB})$, one can establish
$$
\norm{\calP_{T^\perp}(\widetilde{\bfM}-\widetilde{\bfA} \widetilde{\bfB}^\top)}_{\mathrm{F}} \leq \norm{\calP_{T}(\widetilde{\bfM}-\widetilde{\bfA} \widetilde{\bfB}^\top)}_{\mathrm{F}}
$$
and thus
$$\norm{\widetilde{\bfM}-\widetilde{\bfA} \widetilde{\bfB}^\top}_{\mathrm{F}} \leq \norm{\calP_{T^\perp}(\widetilde{\bfM}-\widetilde{\bfA} \widetilde{\bfB}^\top)}_{\mathrm{F}}+\norm{\calP_{T}(\widetilde{\bfM}-\widetilde{\bfA} \widetilde{\bfB}^\top)}_{\mathrm{F}} \leq  2\norm{\calP_{T}(\widetilde{\bfM}-\widetilde{\bfA} \widetilde{\bfB}^\top)}_{\mathrm{F}}.$$
As a result, instead of controlling the entire gap $\widetilde{\bfM}-\widetilde{\bfA} \widetilde{\bfB}^\top$, we can focus only on the gap within the low-rank subspaces, which is easier.

However, condition \eqref{eq:spectral_deviation} is difficult to satisfy in the MLE, mainly due to the nonlinearity of the link function (and its inverse) --- clearly, $\sigma(\bfA \bfB^\top)$ and $\bfA \bfB^\top$ span different subspaces, \textit{both in terms of direction and dimension}, and thus $\sigma(\bfA \bfB^\top)$ has a non-negligible deviation from the linear space of $\bfA$ and $\bfB$. 



\subsection{Indirect regularization as a solution}\label{subsec:leastsquare}

This paper proposes a novel solution to the aforementioned issue, which we call \textit{indirect regularization}. The problem in the regularized MLE can be summarized as the fact that $\sigma(\bfA \bfB^\top)$, which determines the spectrum of the gradient $-\nabla \ell(\bfA \bfB^\top)$ does not lie within the space associated with the regularization, i.e., the subspace spanned by $\bfA$ and $\bfB$. 

To address this issue, instead of \eqref{eq:penalMLE} and \eqref{eq:nonconvexMLE}, we consider minimizing the following:
\begin{align}
  \frac{1}{2} \sum_{(i,k)\in S} p_i^{-1}\left(y_{i,k} - L_{i,k} \right)^2 + \lambda \norm{\bfL}_*, \label{eq:penalLS}
\end{align}
where $L_{i,k}$ represents the probability $\bbP(y_{i,k}=1)$, rather than the score gap. Its nonconvex counterpart is defined as
\begin{align}
   f(\bfX, \bfY)\coloneqq \frac{1}{2}  \sum_{(i,k)\in S} p_i^{-1} \left(y_{i,k}- [\bfX \bfY^\top]_{i,k} \right)^2 + \frac{\lambda}{2} \norm{\bfX}_{\mathrm{F}}^2+ \frac{\lambda}{2} \norm{\bfY}_{\mathrm{F}}^2. \label{eq:nonconvexLS}
\end{align}
In this regularized least squares formulation, it is important to note that the regularization is applied not to $\bfM^\star$, but rather to the probability matrix $\sigma(\bfM^\star)$ which is induced by $\bfM^\star$ and $\bf\Theta^\star$ --- a motivation of the term ``indirect regularization.'' Moreover, the reparameterization in \eqref{eq:nonconvexLS} is also performed on this matrix. Importantly, the gradient of the square error loss is linear in $\bfX$ and $\bfY$, and thus we can easily show that this gradient lies within the subspaces of $\bfX$ and $\bfY$. Consequently, we can establish the closeness of the solutions of \eqref{eq:penalLS} and \eqref{eq:nonconvexLS}.


In a large class of BTL models --- among many, ranking models \citep{chen2019spectral,fan2023spectral,fan2024uncertainty,fan2024ranking}, preference learning models \citep{park2015preference,katz2018nonparametric,negahban2018learning,li2020neural}, mixed-membership models \citep{fan2025covariates}, and 1-bit matrix completion models \citep{chen2023statistical} --- the least squares formulation has been less popular than MLE approaches. Notably, this paper uncovers an overlooked advantage of the regularized least squares approach in this class of problems. It can serve as a cornerstone for achieving entrywise error control and uncertainty quantification through the nonconvex surrogate strategy, by enabling us to control the spectral deviation of the gradient.


However, another crucial --- and more challenging --- question remains. The nuclear norm regularization and its nonconvex surrogate are designed for low-rank estimation. Can this strategy be applied directly to $\sigma(\bfM^\star)$, rather than to $\bfM^\star$ or $\bf\Theta^\star$? Specifically, what is the rank of $\sigma(\bfM^\star)$?

\subsection{Low-rank approximation of $\sigma(\bfM^\star)$}\label{sec:low_rank_approximation}

The structure of $\sigma(\bfM^\star)$ depends on the properties of the nonparametric preference functions $\{g_i(\cdot)\}_{i \in [d_1]}$. Note that, for $k=\calL(j,j'),$ the score gap is $\Mstar_{i,k}= \Theta^\star_{i,j}-\Theta^\star_{i,j'}=g_i({\bm\eta}_{j})-g_i({\bm\eta}_{j'})$. Then, the probability \eqref{eq:BTLfunction} is
\begin{align}
  \bbP\{j \succ_i j'\}  & = \frac{1}{1+\exp(-(\Theta^\star_{i,j}-\Theta^\star_{i,j'}))}  = \sigma(g_i({\bm\eta}_{j})-g_i({\bm\eta}_{j'})) \coloneqq \widetilde{\sigma}_i(\widetilde{\bff}_k) \label{eq:tildesigmaf}
\end{align}
where $\widetilde{\bff}_k \coloneqq [{\bm\eta}_{j}^\top, {\bm\eta}_{j'}^\top]^\top$ is $2r$-dimensional, and $ \widetilde{\sigma}_i(\cdot)$ is such that $ \widetilde{\sigma}_i(\bfx,\bfy)=\sigma(g_i(\bfx)-g_i(\bfy))$ for $r$-dimensional vectors $\bfx$ and $\bfy.$  

Consider a sieve approximation of \eqref{eq:tildesigmaf}:
\begin{align*}
 \sigma(M^\star_{i,k}) =\widetilde{\sigma}_i(\widetilde{\bff}_k) = \sum_{l=1}^R \lambda_{i,l}\phi_l(\widetilde{\bff}_k)+\varepsilon_{i,k}={\bm\lambda}_i^\top {\bm\phi}_k+\varepsilon_{i,k}
\end{align*}
where ${\bm\lambda}_i \coloneqq (\lambda_{i,1}, \ldots, \lambda_{i,R})^\top \in \bbR^R$ is the sieve coefficient, ${\bm\phi}_k \coloneqq (\phi_1(\widetilde{\bff}_k), \ldots, \phi_R(\widetilde{\bff}_k))^\top \in \bbR^R$ is the basis functions, $R$ is the sieve dimension, and $\varepsilon_{i,k}$ is the sieve approximation error. Denote ${\bf\Lambda}$ as $d_1 \times R$ matrix that stacks ${\bm\lambda}_i^\top$, ${\bf\Phi}$ as $d_2(d_2-1)/2 \times R$ matrix that stacks ${\bm\phi}_k^\top$ and $\bm\calE$ as $d_1 \times d_2(d_2-1)/2$ matrix of $\varepsilon_{i,k}$. Then, in the matrix notation, we can write
\begin{align*}
      \sigma(\bfM^\star)  = \underbrace{{\bf\Lambda} {\bf\Phi}^\top}_{\coloneqq \bfL^\star}+ {\bm\calE}
\end{align*}
where $\rank(\bfL^\star)=R.$

Recent works in the low-rank matrix inference literature \citep{chernozhukov2023inference, choi2024inference} have studied approximate low-rankness, condition numbers, and incoherence properties of matrices of the form~\eqref{eq:tildesigmaf}, using sieve approximation techniques \citep[see][]{chen2007large, fan2016projected}. We adopt assumptions similar to those in this literature.

First, we consider sufficiently smooth functions $\{g_i(\cdot)\}_{i\in [d_1]}$. Then, it is reasonable to expect that the functions $\{\widetilde{\sigma}_i(\cdot)\}_{i\in [d_1]}$ will inherit this smoothness, due to the smoothness of $\sigma(\cdot).$ Specifically, we assume that $\{\widetilde{\sigma}_i(\cdot)\}_{i\in [d_1]}$ belong to a H\"older class: For some constant $C>0,$ and sufficiently large $a,b>0,$
\begin{align*}
     \left\lbrace h : \max_{b_1+ \cdots + b_{2r}=b}\left\vert \frac{\partial^{b}h(\bfx)}{\partial x_1^{b_1}\cdots\partial x_r^{b_{2r}}} - \frac{\partial^{b}h(\bfy)}{\partial y_1^{b_1}\cdots\partial y_r^{b_{2r}}} \right\vert \leq C \| \bfx - \bfy\|^{a}, \quad \text{for all $\bfx, \bfy$} \right  \rbrace .
\end{align*}
It is well-known that \citep[see][]{chen2007large,fan2016projected,chernozhukov2023inference,choi2024inference}, if the functions belong to this class and we use standard basis functions such as polynomials, trigonometric polynomials, and B-splines, the sieve approximation error is bounded by
	\[
		\max_{i,k}|\varepsilon_{i,k}| \lesssim R^{-s}, \ \ \ \ \\ \text{where } s \coloneqq  (a + b)/\dim(\widetilde{\bff}_k). 
	\]
As long as $\{\widetilde{\sigma}_i(\cdot)\}_{i\in [d_1]}$ is sufficiently smooth, the value of $s>0$ is large enough so that $\max_{i,k}|\varepsilon_{i,k}|$ remains sufficiently small, even when the sieve dimension $R$ grows slowly.

Second, we allow the condition number of $\bfL^\star$, denoted by $\kappa\coloneqq\sigma_{\max}(\bfL^\star)/\sigma_{\min}(\bfL^\star)$, to grow with $R$, provided that the growth is not too fast. This assumption is imposed to ensure the presence of ``spiked" singular values of $\bfL^\star$ \citep[see, e.g.,][]{abbe2020entrywise, chernozhukov2023inference}.

  
Lastly, we consider another important condition for low-rank matrix estimation --- namely, incoherence. Essentially, the incoherence of $\bfL^\star$ follows from the non-spikiness of $\bfL^\star$, which is a consequence of the boundedness of $\sigma(\cdot)$. As all entries of $\sigma(\bfM^\star)$ are bounded between 0 and 1, we will safely assume that $\max_{i,l}|\lambda_{i,l}|$ is bounded. The boundedness of $\max_{l,k}|\phi_l (\bff_k)|$ follows from the choice of basis functions. Define $\bfS_{\bf \Lambda}\coloneqq d_1^{-1}{\bf \Lambda}^\top{\bf \Lambda}$ and $\bfS_{\bf \Phi}\coloneqq (d_2(d_2-1)/2)^{-1}{\bf \Phi}^\top{\bf \Phi}$, and denote the singular value decomposition of $\bfL^\star$ as $\bfU^\star {\bf \Sigma}^\star \bfV^{\star\top}$ where ${\bf \Sigma}^\star =\diag (\sigma_1^\star, \ldots, \sigma_R^\star)$ with $ \sigma_1^\star\geq\cdots \geq\sigma_R^\star.$ Then, we can show
\begin{align}
    \norm{\bfU^\star}_{2, \infty} &\leq d_1^{-\frac{1}{2}} \norm{{\bf \Lambda}}_{2, \infty} \sigma_{\min}^{-\frac{1}{2}}(\bfS_{\bf\Lambda}) ; \label{eq:Uinco}\\
    \norm{\bfV^\star}_{2, \infty} &\leq (d_2(d_2-1)/2)^{-\frac{1}{2}}  \norm{{\bf \Phi}}_{2, \infty} \sigma_{\min}^{-\frac{1}{2}}(\bfS_{\bf\Phi}),\label{eq:Vinco}
\end{align}
by following, for example, the proof of Lemma 5.1 
in \cite{chernozhukov2023inference}. Denote the maximal value of the condition numbers of $\bfS_{\bf\Lambda}$ and $\bfS_{\bf\Phi}$ as $\mu$. Then, the singular vectors $\bfU^\star$ and $\bfV^\star$ will readily satisfy the incoherence condition with the incoherence parameter $\mu$. 

The following assumption formalizes the above discussion on the structure of $\sigma(\bfM^\star)$.\footnote{Although our assumption and those in \cite{chernozhukov2023inference, choi2024inference} appear similar, their motivations are quite different. \cite{chernozhukov2023inference} and \cite{choi2024inference} assume an approximately low-rank structure and its sieve representation --- rather than an exactly low-rank one --- as a modeling choice in their low-rank inference problems. In contrast, in this paper, the approximate low-rank structure arises as a consequence of adopting a smooth link function for binary outcomes, and we use the sieve representation to address the technical challenges in heterogeneous preference learning (Section \ref{sec:tech_challenges}), through the indirect regularization.}
\begin{assumption}\label{asp:sieve_assumptions}
    \begin{enumerate}
        \item[i)] $\max_{i,k} |\varepsilon_{i,k}| \lesssim R^{-s}  \ll  \sqrt{ \frac{1}{\max\{d_1,d_2(d_2-1)/2\}}}$. 
        \item[ii)] $\max_{i,l}|\lambda_{i,l}|<C$, $\max_{l,k}|\phi_l (\bff_k)|<C$, $\sigma_{\max}(\bfS_{\bf\Lambda})<C$, and $\sigma_{\max}(\bfS_{\bf\Phi})<C$ for some $C>0.$ Also, $\max\{ \sigma_{\max}(\bfS_{\bf\Lambda}) / \sigma_{\min}(\bfS_{\bf\Lambda}), \sigma_{\max}(\bfS_{\bf\Phi}) / \sigma_{\min}(\bfS_{\bf\Phi})\} < \mu$.
        \item[iii)] Denoting $\bar{d}\coloneqq d_1+d_2(d_2-1)/2$ and $\sigma^\star_{\min} \coloneqq\sigma_{\min}(\bfL^\star)$,
        \begin{align*}
         \frac{\bar{d}^2}{\bar{p} \min\{d_1, d_2(d_2-1)/2\} (\sigma_{\min}^\star)^2 } & \ll  \frac{1}{\kappa^4 \mu R \log (\bar{d})}, \\
  \frac{\kappa^2 \mu R}{\min\{d_1, d_2(d_2-1)/2\} }   \sqrt{\frac{\bar{d}\log^2(\bar{d})}{\bar{p}}} &\ll 1.
 \end{align*}
    \end{enumerate} 
\end{assumption}
For comparable assumptions in the low-rank inference literature, see Assumptions 4.6 and 5.1 in \cite{chernozhukov2023inference}, Assumptions 3.1 and 3.4 in \cite{choi2024inference}, as well as those in \cite{choi2024matrix}. Furthermore, we note that the incoherence property can be established by combining \eqref{eq:Uinco}, \eqref{eq:Vinco}, and Assumption \ref{asp:sieve_assumptions} ii).
\begin{lemma}\label{lem:incoherence}
    Under Assumption \ref{asp:sieve_assumptions} ii), we have
    \begin{align*}
         \norm{\bfU^\star}_{2, \infty} \lesssim  \sqrt{\frac{R \mu}{d_1}} \quad \text{and} \quad \norm{\bfV^\star}_{2, \infty} \lesssim  \sqrt{\frac{R \mu}{d_2(d_2-1)/2}}.
    \end{align*}
\end{lemma}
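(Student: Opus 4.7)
The plan is to substitute the entrywise magnitude and conditioning bounds from Assumption \ref{asp:sieve_assumptions} ii) into the two a priori estimates \eqref{eq:Uinco} and \eqref{eq:Vinco}, which the excerpt has already derived above the lemma (via the argument of Lemma 5.1 in \cite{chernozhukov2023inference}). So the work to do is just to bound each of the three factors $d_1^{-1/2}$, $\norm{{\bf\Lambda}}_{2,\infty}$, and $\sigma_{\min}^{-1/2}(\bfS_{\bf\Lambda})$ (and the analogous factors for $\bfV^\star$) by the desired quantities.

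First I would handle the row-wise norms. Since ${\bf\Lambda}\in\bbR^{d_1\times R}$ has entries of magnitude at most $C$, each of its rows has $\ell_2$-norm bounded by $C\sqrt{R}$, giving $\norm{{\bf\Lambda}}_{2,\infty}\le C\sqrt{R}$; the identical argument applied to ${\bf\Phi}\in\bbR^{d_2(d_2-1)/2\times R}$ together with $\max_{l,k}|\phi_l(\widetilde{\bff}_k)|\le C$ yields $\norm{{\bf\Phi}}_{2,\infty}\le C\sqrt{R}$. Next I would convert the condition-number bound into a spectral lower bound: the assumption $\sigma_{\max}(\bfS_{\bf\Lambda})/\sigma_{\min}(\bfS_{\bf\Lambda})<\mu$ rearranges to $\sigma_{\min}^{-1/2}(\bfS_{\bf\Lambda})\le\sqrt{\mu/\sigma_{\max}(\bfS_{\bf\Lambda})}\lesssim\sqrt{\mu}$, and an identical calculation applies to $\bfS_{\bf\Phi}$. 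Plugging these three estimates back into \eqref{eq:Uinco} and \eqref{eq:Vinco} produces exactly the claimed bounds $\norm{\bfU^\star}_{2,\infty}\lesssim\sqrt{R\mu/d_1}$ and $\norm{\bfV^\star}_{2,\infty}\lesssim\sqrt{R\mu/(d_2(d_2-1)/2)}$.

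The only point requiring care, and the mildest obstacle in an otherwise one-line substitution, is the passage from the dimensionless condition-number bound to an absolute upper bound on $\sigma_{\min}^{-1}(\bfS_{\bf\Lambda})$: strictly this needs $\sigma_{\max}(\bfS_{\bf\Lambda})\asymp 1$ rather than merely $\sigma_{\max}(\bfS_{\bf\Lambda})\le C$. This nondegeneracy is the standard convention in the sieve-based low-rank inference literature (cf.\ Lemma 5.1 of \cite{chernozhukov2023inference}), is consistent with the spirit of Assumption \ref{asp:sieve_assumptions} ii), and is naturally justified here because $\bfS_{\bf\Lambda}$ and $\bfS_{\bf\Phi}$ are normalized Gram matrices of bounded sieve coefficients/basis evaluations whose product ${\bf\Lambda}{\bf\Phi}^\top=\bfL^\star$ is genuinely rank-$R$. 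Beyond this bookkeeping remark, no further work is required.
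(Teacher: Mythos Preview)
Your proposal is correct and matches the paper's own approach exactly: the paper states only that the lemma follows ``by combining \eqref{eq:Uinco}, \eqref{eq:Vinco}, and Assumption \ref{asp:sieve_assumptions} ii),'' which is precisely the substitution you carry out. Your observation that the step $\sigma_{\min}^{-1/2}(\bfS_{\bf\Lambda})\lesssim\sqrt{\mu}$ tacitly requires $\sigma_{\max}(\bfS_{\bf\Lambda})\gtrsim 1$ (not just $\le C$) is a valid point the paper leaves implicit; as you note, this nondegeneracy is the standard convention in the cited sieve literature and is consistent with $\rank(\bfL^\star)=R$.
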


\subsection{Formal Estimation Procedure}\label{sec:est_procedure}

First, we conduct the nuclear norm regularized least squares using the observed item comparisons. This step estimates the probability $\sigma(\bfM^\star)$, controlling the complexity of the estimate. Second, we recover the score gap $\bfM^\star$ from the estimate in Step 1. Lastly, we take averages to estimate the original score matrix ${\bf\Theta}^\star$ from the estimated score gaps from Step 2. The formal procedure is as follows: 
\begin{enumerate}
    \item[{\bf Step 1:}] Solve the following problem:
\begin{align}
    \widehat{\bfL} \coloneqq \argmin_{\bfL \in \bbR^{d_1 \times d_2(d_2-1)/2}} \frac{1}{2} \sum_{(i,k)\in S} p_i^{-1}\left(y_{i,k} - L_{i,k} \right)^2 + \lambda \norm{\bfL}_* \label{eq:convexob}
\end{align}
where $\lambda>0$ is a regularization parameter.
\item[{\bf Step 2:}] Compute $\widehat{M}_{i,k}= \sigma^{-1}\left(\widehat{L}_{i,k} \right)$ for each $(i,k)$.
\item[{\bf Step 3:}] Take the averages $ \widehat{\bf\Theta}_{\cdot,j}=(d_2)^{-1} \left(\sum_{j'>j} \widehat{\bfM}_{\cdot, \calL(j,j')} -\sum_{j'<j} \widehat{\bfM}_{\cdot, \calL(j',j)} \right)$ for each $j\in[d_2].$
\end{enumerate}

\section{Euclidean and Entrywise Error Bounds}\label{sec:errorbounds}


\begin{theorem}\label{thm:errorboundforTheta}
 Suppose Assumption \ref{asp:identification}-\ref{asp:randomness} and \ref{asp:sieve_assumptions} hold. Suppose also that $\norm{{\bf\Theta}^{\star}}_{\infty}$ is bounded. Denote $\bar{d}\coloneqq d_1+d_2(d_2-1)/2$ and assume $\lambda=C_{\lambda} \sqrt{ \bar{d} / \bar{p}}$ for some large constant $C_{\lambda}>0$. Then, with probability at least $1-O(\bar{d}^{-10})$, we have
     \begin{align*}
      \frac{\|\widehat{\bf\Theta}-{\bf\Theta}^{\star}\|_{\mathrm{F}}}{\sqrt{d_1d_2}}   \lesssim    \sqrt{ \frac{\kappa^2 R \bar{d}}{\bar{p} d_1d_2^2  }}; \quad \quad     \norm{\widehat{\bf\Theta}-{\bf\Theta}^\star}_{\infty}  & \lesssim  \frac{\kappa^2 \mu R }{\min\{d_1, d_2(d_2-1)/2\}} \sqrt{\frac{  \bar{d}\log(\bar{d})}{\bar{p} }}.
\end{align*}
 \end{theorem}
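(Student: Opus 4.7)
The plan is to decompose the proof into three phases matching the three-step estimation procedure of Section~\ref{sec:est_procedure}. In Phase~I, I establish Frobenius and entrywise error bounds for the convex least-squares estimator $\widehat{\bfL}$ of the induced probability matrix $\bfL^\star = {\bf\Lambda}{\bf\Phi}^\top$; specifically, $\|\widehat{\bfL} - \bfL^\star\|_{\mathrm{F}} \lesssim \sqrt{\kappa^2 R \bar{d}/\bar{p}}$ and $\|\widehat{\bfL} - \bfL^\star\|_\infty \lesssim \kappa^2 \mu R / \min\{d_1, d_2(d_2-1)/2\} \cdot \sqrt{\bar{d}\log(\bar{d})/\bar{p}}$. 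In Phase~II, I propagate these bounds to $\widehat{\bfM}$ via Lipschitzness of $\sigma^{-1}$, and in Phase~III to $\widehat{\bf\Theta}$ via the averaging formula in Step~3. Phase~I contains the main obstacle.

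Phase~I executes the reparameterized nonconvex surrogate strategy described in Sections~\ref{sec:tech_challenges}--\ref{subsec:leastsquare}. I consider the nonconvex problem $f(\bfX,\bfY)$ in \eqref{eq:nonconvexLS} with $\bfX \in \bbR^{d_1 \times R}$, $\bfY \in \bbR^{d_2(d_2-1)/2 \times R}$, and run gradient descent iterates $(\bfX^t, \bfY^t)$ initialized at the truncated SVD of $\bfL^\star$. Writing $y_{i,k} = L^\star_{i,k} + \varepsilon_{i,k} + \xi_{i,k}$ with sieve error $\varepsilon_{i,k}$ (negligible by Assumption~\ref{asp:sieve_assumptions} i)) and bounded zero-mean noise $\xi_{i,k} = y_{i,k} - \bbP\{y_{i,k}=1\}$, the steps are: (a)~show linear contraction of the iterates via restricted strong convexity of the sampled quadratic loss on the incoherent cone; (b)~control $\|\bfX^t\bfY^{t\top} - \bfL^\star\|_\infty$ and $\|\bfX^t\bfY^{t\top} - \bfL^\star\|_{2,\infty}$ by constructing leave-one-out auxiliary sequences that omit one user $i$ (or pair $k$) at a time, leveraging the incoherence from Lemma~\ref{lem:incoherence}; (c)~show $\widehat{\bfL}$ is close to $\widehat{\bfX}\widehat{\bfY}^\top$ at an iterate with sufficiently small gradient. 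Step~(c) is precisely where the indirect regularization is essential: because the least-squares gradient is linear in $\bfL$, the spectral transversality condition analogous to \eqref{eq:spectral_deviation} reduces to an operator-norm bound on the stochastic residual $\sum_{(i,k)\in S} p_i^{-1}(\xi_{i,k} + \varepsilon_{i,k})\bfe_i \bfe_k^\top$, controllable via matrix Bernstein inequalities. With $\lambda \asymp \sqrt{\bar{d}/\bar{p}}$, the transversality then holds with probability at least $1 - O(\bar{d}^{-10})$.

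For Phase~II, since $\|{\bf\Theta}^\star\|_\infty$ is bounded, $\bfM^\star$ is bounded entrywise and $\bfL^\star = \sigma(\bfM^\star) - \bm{\calE}$ lies in a compact subset of $(0,1)$. Under Assumption~\ref{asp:sieve_assumptions} iii), the entrywise error from Phase~I is smaller than the margin to the boundary, so $\widehat{\bfL}$ also stays in the interior and the logit $\sigma^{-1}$ is Lipschitz in a neighborhood of $\bfL^\star$. Hence $\|\widehat{\bfM} - \bfM^\star\|_\infty \lesssim \|\widehat{\bfL} - \bfL^\star\|_\infty + \|\bm{\calE}\|_\infty$ and analogously in Frobenius norm, with the sieve contributions negligible by Assumption~\ref{asp:sieve_assumptions} i). Phase~III invokes Assumption~\ref{asp:identification}: it forces ${\bf\Theta}^\star$ to satisfy the same averaging identity that defines $\widehat{\bf\Theta}$ from $\widehat{\bfM}$, so $\|\widehat{\bf\Theta} - {\bf\Theta}^\star\|_\infty \leq \|\widehat{\bfM} - \bfM^\star\|_\infty$ immediately, while a Cauchy--Schwarz argument exploiting that each pair index $k$ appears in exactly two such averages yields $\|\widehat{\bf\Theta} - {\bf\Theta}^\star\|_{\mathrm{F}}^2 \lesssim d_2^{-1}\|\widehat{\bfM} - \bfM^\star\|_{\mathrm{F}}^2$, producing the additional $1/\sqrt{d_2}$ factor in the Frobenius rate.

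The hardest part will be Phase~I step~(c): establishing the closeness of the convex nuclear-norm solution and the nonconvex iterate. Although the least-squares formulation makes the spectral deviation inequality tractable --- in contrast to the MLE case highlighted in Section~\ref{sec:tech_challenges} --- the argument still demands delicate concentration control on the Erd\H{o}s--R\'enyi-induced random-sampling residual and careful propagation of the incoherence property through the leave-one-out sequences, which together constitute the bulk of the technical work.
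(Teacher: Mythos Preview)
Your proposal is correct and follows essentially the same three-phase architecture as the paper: Phase~I corresponds to the paper's Theorem~\ref{thm:errorboundsforL} (built on the nonconvex surrogate machinery of Sections~\ref{sec:sectionA}--\ref{sec:sectionB}, including the leave-one-out analysis and the convex--nonconvex closeness Lemma~\ref{LemmaA1}), Phase~II to Corollary~\ref{cor:errorboundforM}, and Phase~III to the short proof of Theorem~\ref{thm:errorboundforTheta} via Cauchy--Schwarz on the averaging identity. Your identification of step~(c) as the crux, and of the linear least-squares gradient as the mechanism that makes the spectral transversality tractable, matches exactly the paper's emphasis and the content of Claim~\ref{ClaimA1}.
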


This theorem shows that the estimation errors are evenly distributed across entries, in the sense that the entrywise error bound is of the similar order as the normalized Frobenius norm error bound. To illustrate this, suppose $d_1 \asymp d_2^2 \asymp \bar{d}$. In this case, the normalized Frobenius norm error bound  and entrywise error bound are given by $\sqrt{\tfrac{\kappa^2 R}{\bar{p}\bar{d}}}$ and $\kappa^2 \mu R \sqrt{\tfrac{\log(\bar{d})}{\bar{p}\bar{d}}},$
respectively, where the entrywise rate is slightly slower when $\kappa$, $\mu$ and $R$ is not too large.

It is also worth emphasizing that the required sample size --- i.e., the number of item comparisons --- in Theorem~\ref{thm:errorboundforTheta} can be significantly smaller than that typically assumed in the ranking literature. This is a consequence of the regularity conditions discussed in Section~\ref{sec:low_rank_approximation} --- these conditions essentially reduce the number of parameters to estimate from \( d_1 d_2(d_2 - 1)/2 \) to \( (d_1 + d_2(d_2 - 1)/2)R \). Specifically, assuming \( d_1 \asymp d_2^2 \) for simplicity, Assumption~\ref{asp:sieve_assumptions} iii) implies that the order of required sample size for user \( i \), i.e., \( p_i d_2(d_2 - 1)/2 \), depends only on the sieve dimension \( R \), the parameters \( \kappa \) and \( \mu \), and logarithmic factors --- rather than on the problem dimensions \( d_1 \) or \( d_2 \). In contrast, applying results from the ranking literature, for example, \citet{chen2019spectral}, to separately learn the preference of user $i$ requires a sample size of at least order $d_2$, ignoring model parameters and logarithmic factors.

In addition, the entrywise error bound in Theorem~\ref{thm:errorboundforTheta} leads to a corollary on top-$K$ item selection for both individual and aggregated preferences, a central topic in the ranking literature \citep[e.g.,][]{chen2019spectral}. This result is presented in the appendix (Section \ref{sec:topK}).

\section{Uncertainty Quantification}\label{sec:UQ}

\subsection{One-step Newton--Raphson-type debiasing}

Since our estimation procedure is based on a regularization method, it is evident that the resulting estimator inherently suffers from regularization bias, which complicates its distributional characterization. To address it, we employ a simple yet powerful debiasing scheme --- namely, the \textit{one-step Newton--Raphson debiasing} \citep[e.g., see][]{javanmard2013confidence,javanmard2014confidence}. To the best of our knowledge, this paper presents the first uncertainty quantification results in the heterogeneous preference learning literature \citep{park2015preference,katz2018nonparametric,negahban2018learning,li2020neural}.

Leveraging the facts that the entries of $\widehat\bfM$ are in a close neighborhood of the truth (see Corollary \ref{cor:errorboundforM}) and the log-likelihood function exhibits the local strong convexity, we can show that only a single iteration of Newton--Raphson update effectively absorbs the regularization bias. Recall the log-likelihood for the observed sample: 
\begin{align*}
    \ell (\bfM) = \sum_{(i,k)\in S} p_i^{-1} \left(y_{i,k} M_{i,k} - \log \left(1+ \exp\left( M_{i,k}  \right) \right)\right)
\end{align*}
for $\bfM \in \bbR^{d_1 \times d_2(d_2-1)/2}$. The gradient for entries $(i,k)\in S$ is then
\begin{align*}
    [\nabla_{\bfM}  \ell (\bfM)]_{i,k} =  p_i^{-1}\left( y_{i,k} - \frac{\exp\left( M_{i,k}  \right)}{1+ \exp\left( M_{i,k}  \right)} \right)  = p_i^{-1}\left( y_{i,k} - \sigma \left( M_{i,k}  \right)\right).
\end{align*}
Using the first and second gradients of the log-likelihood, we define the one-step Newton--Raphson debiased estimator as follows:
\begin{align}
    \widehat{M}^{\mathrm{NR}}_{i,k}  \coloneqq & \widehat{M}_{i,k} +p_i^{-1} \delta_{i,k} \left(\sigma'(\widehat{M}_{i,k}) \right)^{-1} \left( y_{i,k} - \sigma \left( \widehat{M}_{i,k}  \right) \right) \quad \text{for each $(i,k) \in [d_1] \times [d_2(d_2-1)/2]$}\label{eq:defofdebias}
\end{align}
where $\widehat{\bfM}$ is defined in Section \ref{sec:est_procedure} and $\sigma'(\cdot)$ is the derivative of the sigmoid function. 

\subsection{Inference for the aggregated preference}\label{sec:aggregated_inference}

While our theory can analyze individual preferences, the aggregated preference is always an important object for analysis. Suppose that a pair of items is given, and we are interested in testing whether a group of users prefers one item over the other. Typical examples of such groups include the entire set of users or a subset of users sharing attributes such as the same gender, age group, or region. We focus on the set of all users, as extensions to other cases are straightforward. 
\begin{theorem}\label{thm:asymptoticnormality}
    Suppose a pair of items $(j,j')$ with $j<j'$ is given. Let $k=\calL(j,j')$. Suppose the assumptions in Theorem \ref{thm:errorboundforTheta} hold, and assume further that
    \begin{align*}
        \left(\frac{\kappa \mu R }{\min\{d_1,d_2(d_2-1)/2\}}\right)^2 \frac{\bar{d} \log(\bar{d})}{\bar{p}  }  \ll \frac{1}{ \sqrt{\bar{p}d_1} }.
    \end{align*}
    Define
    \begin{align*}
      v^\star_k \coloneqq \frac{1}{d_1^2}  \sum_{i=1}^{d_1}  \frac{\delta_{i,k}}{p_i^2}\left(\frac{1}{\sigma(M^\star_{i,k})(1-\sigma(M^\star_{i,k}))}  \right).
    \end{align*}
    Then,
\begin{align*}
    &v_k^{\star -\frac{1}{2}}\left(\frac{1}{d_1} \sum_{i=1}^{d_1} \widehat{M}^{\mathrm{NR}}_{i,k}-\frac{1}{d_1} \sum_{i=1}^{d_1} M^\star_{i,k}\right)   \overset{\mathrm{d}}{\rightarrow} \calN \left( 0,1\right).
\end{align*}
\end{theorem}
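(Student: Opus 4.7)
The plan is to Taylor-expand the one-step Newton--Raphson correction \eqref{eq:defofdebias} around $M^\star_{i,k}$, isolate a sum that is asymptotically normal, and show every remaining piece is of smaller order than $\sqrt{v_k^\star}\asymp 1/\sqrt{d_1\bar{p}}$. Writing $\Delta_{i,k}\coloneqq \widehat{M}_{i,k}-M^\star_{i,k}$ and using the Taylor expansion $\sigma(\widehat{M}_{i,k})=\sigma(M^\star_{i,k})+\sigma'(M^\star_{i,k})\Delta_{i,k}+\tfrac{1}{2}\sigma''(\xi_{i,k})\Delta_{i,k}^{2}$ together with a first-order expansion of $(\sigma'(\widehat{M}_{i,k}))^{-1}$ around $(\sigma'(M^\star_{i,k}))^{-1}$, substitution into \eqref{eq:defofdebias} and algebraic grouping yields
\begin{align*}
\widehat{M}^{\mathrm{NR}}_{i,k}-M^\star_{i,k}
&= \Delta_{i,k}\bigl(1-p_i^{-1}\delta_{i,k}\bigr)
 + p_i^{-1}\delta_{i,k}\bigl(\sigma'(M^\star_{i,k})\bigr)^{-1}\bigl(y_{i,k}-\sigma(M^\star_{i,k})\bigr)
 + R_{i,k},
\end{align*}
where $R_{i,k}$ collects the quadratic Taylor remainder in $\Delta_{i,k}$ together with cross terms given by products of $\Delta_{i,k}$ with the mean-zero noise $y_{i,k}-\sigma(M^\star_{i,k})$.

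The leading contribution after averaging is $d_1^{-1}\sum_i Z_i$ with $Z_i\coloneqq p_i^{-1}\delta_{i,k}(\sigma'(M^\star_{i,k}))^{-1}(y_{i,k}-\sigma(M^\star_{i,k}))$. Conditionally on the sampling pattern $\{\delta_{i,k}\}_{i=1}^{d_1}$, these summands are independent and mean zero, and using $\sigma'(\cdot)=\sigma(\cdot)(1-\sigma(\cdot))$ a direct computation gives $d_1^{-2}\sum_i \mathrm{Var}(Z_i\mid\delta_{i,k})=v_k^\star$. A conditional Lindeberg--Feller CLT then delivers the $\calN(0,1)$ limit: boundedness of $\|\bf\Theta^\star\|_\infty$ makes $\|\bfM^\star\|_\infty$ bounded and so $(\sigma(M^\star_{i,k})(1-\sigma(M^\star_{i,k})))^{-1}$ uniformly bounded, while Assumption~\ref{asp:randomness} ($p_{\min}\asymp p_{\max}$) prevents any single summand from dominating, so the Lindeberg ratio vanishes in probability.

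It remains to show $d_1^{-1}\sum_i R_{i,k}=o_\bbP(\sqrt{v_k^\star})$ and $d_1^{-1}\sum_i\Delta_{i,k}(1-p_i^{-1}\delta_{i,k})=o_\bbP(\sqrt{v_k^\star})$. The quadratic Taylor piece is bounded crudely by $\|\widehat{\bfM}-\bfM^\star\|_\infty^{2}\cdot d_1^{-1}\sum_i p_i^{-1}\delta_{i,k}$; the second factor is $O_\bbP(1)$ by Bernstein, and by Corollary~\ref{cor:errorboundforM} combined with the stated rate condition the prefactor is $o(1/\sqrt{d_1\bar p})$. The remaining term $d_1^{-1}\sum_i\Delta_{i,k}(1-p_i^{-1}\delta_{i,k})$, together with the cross term inside $R_{i,k}$, is more delicate because $\Delta_{i,k}$ depends on $(\delta_{i,k},y_{i,k})$ through the estimator $\widehat{\bfM}$. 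I would introduce a leave-one-out surrogate $\widehat{\bfM}^{(i)}$ built from a copy of \eqref{eq:nonconvexLS} that omits user $i$'s observations, so that $\Delta_{i,k}^{(i)}\coloneqq\widehat{M}^{(i)}_{i,k}-M^\star_{i,k}$ is independent of $(\delta_{i,\cdot},y_{i,\cdot})$; a conditional Bernstein bound then handles the now-independent sum, and the perturbation $\widehat{M}_{i,k}-\widehat{M}^{(i)}_{i,k}$ is absorbed by re-running the nonconvex-surrogate leave-one-out machinery that already underlies Theorem~\ref{thm:errorboundforTheta}.

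The hard part is precisely this last step: constructing $d_1$ leave-one-out estimators, coupling each to the convex solution $\widehat{\bfM}$ with entrywise control, and propagating those bounds through the cross terms uniformly in $i$. This is where the indirect-regularization choice of Section~\ref{subsec:leastsquare} pays off --- the gradient of the least-squares loss lies within the low-rank subspace, so the closeness-of-solutions argument from Section~\ref{sec:tech_challenges} can be executed uniformly over the $d_1$ leave-one-out problems, whereas the analogous step would fail for the regularized MLE \eqref{eq:penalMLE}.
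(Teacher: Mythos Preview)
Your overall architecture---Taylor expansion, isolation of the sum $d_1^{-1}\sum_i p_i^{-1}\delta_{i,k}(\sigma'(M^\star_{i,k}))^{-1}(y_{i,k}-\sigma(M^\star_{i,k}))$ as the CLT-driving term, conditional Lindeberg, and a leave-one-out device for the dependence between $\Delta_{i,k}$ and $(\delta_{i,k},y_{i,k})$---matches the paper's proof. The gap is in \emph{which} leave-one-out you choose.

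You propose row-wise surrogates $\widehat{\bfM}^{(i)}$ that omit user $i$'s observations, one for each $i\in[d_1]$, and then claim ``a conditional Bernstein bound then handles the now-independent sum.'' But the summands are \emph{not} independent after this substitution. The sum $d_1^{-1}\sum_i(1-p_i^{-1}\delta_{i,k})\Delta^{(i)}_{i,k}$ is over $i$ for a \emph{fixed} column $k$; the only fresh randomness across $i$ is $\{\delta_{i,k}\}_{i=1}^{d_1}$. Your surrogate $\Delta^{(i)}_{i,k}$ drops row $i$ but still depends on $\delta_{i',k}$ for every $i'\neq i$, so summand $i$ depends on all of $\delta_{1,k},\dots,\delta_{d_1,k}$ and no conditioning makes the terms independent. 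The same obstruction applies to the cross term $p_i^{-1}\delta_{i,k}\,\Delta_{i,k}\,(y_{i,k}-\sigma(M^\star_{i,k}))$ you placed in $R_{i,k}$.

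The paper avoids this by leaving out the \emph{column} $k$ rather than the rows. It uses the single nonconvex iterate $(\bfX^{t_*,(d_1+k)},\bfY^{t_*,(d_1+k)})$ from Section~\ref{sec:sectionB}, which is independent of the entire column $\{\delta_{i,k},y_{i,k}\}_{i=1}^{d_1}$. Concretely, it writes $(\sigma'(\widehat M_{i,k}))^{-1}=\rho(\widehat L_{i,k})$ and telescopes through $\rho(\bfX^{t_*}_{i,\cdot}\bfY^{t_*\top}_{k,\cdot})$ and $\rho(\bfX^{t_*,(d_1+k)}_{i,\cdot}\bfY^{t_*,(d_1+k)\top}_{k,\cdot})$, and does the same for $\Delta_{i,k}$ inside $\Delta_{11}$. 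After replacing $\widehat L_{i,k}$ (resp.\ $\Delta_{i,k}$) by the column-$(d_1+k)$ surrogate, the sum over $i$ is genuinely a sum of conditionally independent mean-zero terms and matrix Bernstein applies; the swap costs are controlled by the already-established bounds $\norm{\bfF^{t_*}\bfH^{t_*}-\bfF^{t_*,(d_1+k)}\bfR^{t_*,(d_1+k)}}_{\mathrm F}$ from Lemma~\ref{LemmaB1}. So only \emph{one} auxiliary sequence is needed (not $d_1$), and it is exactly the one you already have from the entrywise analysis.
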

 
We highlight that the variance $v^\star_k$ is in the form of $\bfc^\top \calI^{-1}\bfc$ where $\calI=-\bbE \nabla^2_\bfM \ell(\bfM^\star)$ is the Fisher information, and the vector $\bfc$ is such that $\bfc^\top \vect(\widehat{\bfM}^{\mathrm{d}}) = (d_1)^{-1}\sum_{i=1}^{d_1} \widehat{\bfM}^{\mathrm{d}}_{i,k}.$ Moreover, the variance $v^\star_k$ can be easily estimated by
\begin{align*}
    \widehat{v}_k \coloneqq \frac{1}{d_1^2}  \sum_{i=1}^{d_1} \frac{1}{p_i} \left(\frac{1}{\sigma(\widehat{M}_{i,k})(1-\sigma(\widehat{M}_{i,k}))}  \right),
\end{align*}
providing a feasible version of Theorem \ref{thm:asymptoticnormality}.   
\begin{proposition}\label{prop:feasibleCLT}
Under the assumptions in Theorem \ref{thm:asymptoticnormality}, we have $\widehat{v}_k  =v^\star_k + o(v^\star_k)$ with probability at least $1-O(\bar{d}^{-10})$. As a result,
\begin{align*}
    \widehat{v}_k^{-\frac{1}{2}}  \left(  \frac{1}{d_1} \sum_{i=1}^{d_1} \widehat{M}^{\mathrm{NR}}_{i,k}   - \frac{1}{d_1} \sum_{i=1}^{d_1}M^\star_{i,k}  \right)  \overset{\mathrm{d}}{\rightarrow} \calN (0, 1).
\end{align*}
\end{proposition}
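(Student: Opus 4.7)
The plan is to show $\widehat v_k = v_k^\star(1+o(1))$ with the stated probability and then obtain the feasible CLT from Theorem~\ref{thm:asymptoticnormality} by Slutsky's lemma. First I would pin down the scale of $v_k^\star$. Since $\norm{{\bf\Theta}^\star}_{\infty}$ is bounded by assumption, so is $M^\star_{i,k}$, whence $\sigma(M^\star_{i,k})(1-\sigma(M^\star_{i,k}))$ lies in a compact subset of $(0,1/4]$; together with $p_i\asymp\bar p$ from Assumption~\ref{asp:randomness}, this yields $\bbE[v_k^\star]\asymp (d_1\bar p)^{-1}$. The random quantity $v_k^\star$ concentrates around $\bbE[v_k^\star]$ at relative rate $o(1)$ via Bernstein's inequality applied to the independent Bernoulli summands $\delta_{i,k}/p_i^2$, using $d_1\bar p\gg\log(\bar d)$, which is implied by Assumption~\ref{asp:sieve_assumptions}~iii).

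Next I would write $\widehat v_k-v_k^\star = A+B$, where
\begin{align*}
  A &= \frac{1}{d_1^2}\sum_{i=1}^{d_1}\frac{1}{p_i}\left[\frac{1}{\sigma(\widehat M_{i,k})(1-\sigma(\widehat M_{i,k}))} - \frac{1}{\sigma(M^\star_{i,k})(1-\sigma(M^\star_{i,k}))}\right],\\
  B &= \frac{1}{d_1^2}\sum_{i=1}^{d_1}\left(\frac{1}{p_i}-\frac{\delta_{i,k}}{p_i^2}\right)\frac{1}{\sigma(M^\star_{i,k})(1-\sigma(M^\star_{i,k}))}.
\end{align*}
For $A$, Theorem~\ref{thm:errorboundforTheta} together with the identity $\widehat M_{i,k}-M^\star_{i,k}=(\widehat\Theta_{i,j}-\Theta^\star_{i,j})-(\widehat\Theta_{i,j'}-\Theta^\star_{i,j'})$ for $k=\calL(j,j')$ yields $\norm{\widehat{\bfM}-\bfM^\star}_{\infty}=o(1)$, so each $\widehat M_{i,k}$ lies in a bounded neighborhood of $M^\star_{i,k}$ on which $x\mapsto 1/(\sigma(x)(1-\sigma(x)))$ is Lipschitz. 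Hence $|A|\lesssim (d_1\bar p)^{-1}\norm{\widehat\bfM-\bfM^\star}_\infty=o((d_1\bar p)^{-1})=o(\bbE[v_k^\star])$. For $B$, each summand is mean zero and independent across $i$; its absolute value is bounded by a multiple of $(d_1^2\bar p^2)^{-1}$ and its variance by a multiple of $(d_1^4\bar p^3)^{-1}$, so a Bernstein bound delivers $|B|\lesssim \sqrt{\log(\bar d)/(d_1^3\bar p^3)}+\log(\bar d)/(d_1^2\bar p^2)=o((d_1\bar p)^{-1})$ under the same regime $d_1\bar p\gg\log(\bar d)$. Combining the controls on $A$, $B$, and the concentration of $v_k^\star$ yields $\widehat v_k=v_k^\star+o(v_k^\star)$, hence $\widehat v_k^{-1/2}=(v_k^\star)^{-1/2}(1+o(1))$, and Slutsky's lemma applied to Theorem~\ref{thm:asymptoticnormality} gives the claimed feasible CLT.

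The main obstacle is the bookkeeping around the \emph{random} normalization $v_k^\star$: because it involves $\delta_{i,k}/p_i^2$ rather than its expectation $1/p_i$, one must simultaneously control the plug-in perturbation from $\widehat M\to M^\star$ (term $A$) and the Bernoulli fluctuation (term $B$), and also establish a matching high-probability lower bound on $v_k^\star$ so that both errors are genuinely $o(v_k^\star)$. All three reductions boil down to Bernstein-type concentration plus boundedness of $\bfM^\star$, so no new ingredient is needed beyond the entrywise bound of Theorem~\ref{thm:errorboundforTheta}; the key is ensuring the sample-size condition $d_1\bar p\gg\log(\bar d)$ is compatible with the rate assumption of Theorem~\ref{thm:asymptoticnormality}, which follows directly from Assumption~\ref{asp:sieve_assumptions}~iii).
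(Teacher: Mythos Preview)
Your approach is essentially identical to the paper's: both decompose $\widehat v_k - v_k^\star$ through the intermediate $\bbE[v_k^\star]$, controlling the Bernoulli fluctuation (your $B$) by Bernstein and the plug-in error (your $A$) by the entrywise bound on the score-gap estimator, and both establish $v_k^\star\asymp(\bar p d_1)^{-1}$ along the way.

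One technical slip to fix: the identity $\widehat M_{i,k}-M^\star_{i,k}=(\widehat\Theta_{i,j}-\Theta^\star_{i,j})-(\widehat\Theta_{i,j'}-\Theta^\star_{i,j'})$ is false in general, because $\widehat{\bf\Theta}$ is defined in Step~3 by \emph{averaging} the entries of $\widehat{\bfM}$, and $\widehat{\bfM}$ need not be expressible as pairwise differences of any score vector. You should instead invoke Corollary~\ref{cor:errorboundforM} (or equivalently Theorem~\ref{thm:errorboundsforL} applied to $\rho(x)=1/(x(1-x))$, which is what the paper does) to obtain $\norm{\widehat{\bfM}-\bfM^\star}_\infty=o(1)$ directly; with that substitution your argument for $A$ goes through unchanged.
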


\subsection{Inference for individual preferences}\label{sec:individualinference}

We are now interested in testing individual preference --- whether user $i$ prefers $j$ over $j'.$ This goal, which is typically more demanding than testing the ``averaged'' preferences, can be achieved by incorporating two additional steps: i) rank estimation and ii) sample splitting.

{\bf Rank estimation:} Recall that, for $k=\calL(j,j')$, we have $\Mstar_{i,k}=g_i({\bm\eta}_{j})-g_i({\bm\eta}_{j'})\coloneqq \widetilde{g}_i(\widetilde{\bff_k})$. By assuming a set of assumptions on $\{\widetilde{g}_i(\cdot)\}_{i\in[d_1]}$, similar to those in Section~\ref{sec:low_rank_approximation}, we can ensure an approximate low-rank structure and spiked singular values of $\bfM^\star$. This enables us to estimate the rank of the dominant low-rank component of $\bfM^\star$, denoted as $q$. The assumptions are elaborated in the appendix (Section~\ref{sec:LowrankM}), as they are similar to those in Section~\ref{sec:low_rank_approximation}. If the rank $q$ is consistently estimated using rank estimation methods \citep[e.g., see][]{bai2002determining,ahn2013eigenvalue, chernozhukov2023inference,choi2024inference} and the error bounds for $\bfM^\star$ (Corollary \ref{cor:errorboundforM}), we can estimate consistent singular spaces and project the debiased estimator $\widehat{\bfM}^{\mathrm{NR}}$ onto them to control the dimension of $\widehat{\bfM}^{\mathrm{NR}}$. For simplicity, this paper assumes that the rank $q$ is known, as in \cite{chernozhukov2023inference}.

{\bf Sample splitting:} Now, the distributional characterization of the projected estimator hinges on the analysis of the estimated singular subspaces. To this end, we will utilize the ``representation formula of spectral projectors'' \citep[see][]{xia2021normal,xia2021statistical}, which represents the singular subspace estimation error in terms of projected versions of the error $\widehat{\bfM}^{\mathrm{NR}}-\bfM^\star.$ A key technical requirement is that the debiased estimator must satisfy a certain error bound, i.e., $\norm{\widehat{\bfM}^{\mathrm{NR}}-\bfM^\star}\lesssim\sqrt{\bar{d}/\bar{p}}$ (up to log-terms and some parameter multiplications).  Sample splitting induces a sort of independence in the analysis, which allows us to invoke concentration inequalities to establish the bound. We emphasize that sample splitting is adopted purely for technical reasons, and our simulation results (see Section~\ref{sec:syntheticdataexperiment}) indicate that it does not lead to significant changes in estimation performance. Moreover, the use of sample splitting for debiasing regularized estimators is not uncommon in the low-rank inference literature \citep[e.g.,][]{xia2021statistical,chernozhukov2023inference}.

Recall that the set $S$ represents the entire sample. Without loss of generality, we assume that $|S|$ is even, and randomly split $S$ into two subsets, $S^1$ and $S^2$. Let $\delta^l_{i,k}$ denote the edge formation indicator $\delta_{i,k}$ where $(i,k) \in S^l$, for $l=1,2.$ Let $y^l_{i,k}$ denote the item comparison indicator $y^l_{i,k}=\bbOne\{j \succ_i j'\}$ where $(i,k)=(i,\calL(j,j'))\in S^l$ for $l=1,2.$ The formal inference procedure for individual preferences is as follows: 
\begin{enumerate}
    \item For the two subsamples, implement Steps~1 and 2 in Section~\ref{sec:est_procedure} to obtain \( \widehat{\bfM}^{1} \) and \( \widehat{\bfM}^{2} \), respectively. 
    \item Implement the debiasing step \eqref{eq:defofdebias} for each initial estimator in a cross-validated manner: 
    \begin{align*}
    \widehat{M}^{\mathrm{NR},1}_{i,k}  \coloneqq & \widehat{M}^1_{i,k} +\frac{2}{p_i} \delta^2_{i,k} \left(\sigma'(\widehat{\bfM}^1_{i,k}) \right)^{-1} \left( y^2_{i,k} - \sigma \left( \widehat{M}^1_{i,k}  \right) \right) \quad \text{for each $(i,k) \in [d_1] \times [d_2(d_2-1)/2]$};\\
    \widehat{M}^{\mathrm{NR},2}_{i,k}  \coloneqq & \widehat{M}^2_{i,k} +\frac{2}{p_i} \delta^1_{i,k} \left(\sigma'(\widehat{\bfM}^2_{i,k}) \right)^{-1} \left( y^1_{i,k} - \sigma \left( \widehat{M}^2_{i,k}  \right) \right) \quad \text{for each $(i,k) \in [d_1] \times [d_2(d_2-1)/2]$}.
    \end{align*}
    \item For each debiased estimator $\widehat{\bfM}^{\mathrm{NR},l}$ ($l=1,2$) compute its best rank-$q$ approximation 
    \begin{align*}
        \widehat{\bfM}^{\mathrm{proj},l} \coloneqq \argmin_{\rank(\bfM) \leq q}\norm{ \widehat{\bfM}^{\mathrm{NR},l}-\bfM}_{\mathrm{F}}^2 =\widehat{\bfU}^{\mathrm{NR},l} \widehat{\bfU}^{\mathrm{NR},l \top} \widehat{\bfM}^{\mathrm{NR},l} \widehat{\bfV}^{\mathrm{NR},l} \widehat{\bfV}^{\mathrm{NR},l \top}
    \end{align*}
    where $\widehat{\bfU}^{\mathrm{NR},l}$ and $\widehat{\bfV}^{\mathrm{NR},l}$ are the top-$q$ left and right singular vectors of $\widehat{\bfM}^{\mathrm{NR},l}.$ The final estimator is then 
    \begin{align*}
        \widehat{\bfM}^{\proj} \coloneqq \frac{1}{2}\widehat{\bfM}^{\mathrm{proj},1} + \frac{1}{2}\widehat{\bfM}^{\mathrm{proj},2}.
    \end{align*}
\end{enumerate}

\begin{theorem}\label{thm:asymptoticnormality_indiv}
    Suppose user $i$ and a pair of items $(j,j')$ with $j<j'$ are given. Let $k=\calL(j,j')$. Suppose the assumptions in Theorem \ref{thm:errorboundforTheta} hold. Additionally, suppose that technical assumptions \ref{asp:approx_low_rank_M}–\ref{asp:SSVforM} and \ref{asp:inference_indiv} are satisfied for the chosen rank $q$. Define
 \begin{align*}
   w^\star_{i,k} &\coloneqq  \underbrace{\bfV^{\bfM^\star }_{k,\cdot} \left(\sum_{k'=1}^{d_2(d_2-1)/2} \delta_{i,k'}\frac{1}{p_i^2\sigma(M^\star_{i,k'})(1-\sigma(M^\star_{i,k'}))}   \bfV^{\bfM^\star \top}_{k',\cdot}\bfV^{\bfM^\star}_{k',\cdot}\right)\bfV^{\bfM^\star \top}_{k,\cdot}}_{\coloneqq w^{\star,1}_{i,k}}\\
          &\quad + \underbrace{\bfU^{\bfM^\star}_{i,\cdot}\left(\sum_{i'=1}^{d_1} \delta_{i',k}\left(\frac{1}{p_{i'}^2\sigma(M^\star_{i',k})(1-\sigma(M^\star_{i',k}))} \right)\bfU^{\bfM^\star \top}_{i',\cdot}\bfU^{\bfM^\star}_{i',\cdot}\right)\bfU^{\bfM^\star \top}_{i,\cdot}}_{\coloneqq w^{\star,2}_{i,k}} 
      \end{align*}
where $\bfU^{\bfM^\star}$ and $\bfV^{\bfM^\star}$ are the top-$q$ left and right singular vectors of $\bfM^\star$, respectively. 
\begin{align*}
    w_{i,k}^{\star -\frac{1}{2}} \left( \widehat{M}^{\proj}_{i,k}-M^\star_{i,k}\right)   \overset{\mathrm{d}}{\rightarrow} \calN \left(0, 1 \right).
\end{align*}
\end{theorem}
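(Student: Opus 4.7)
My plan follows the standard spectral-projection inference strategy \citep[e.g.,][]{xia2021normal,xia2021statistical,chernozhukov2023inference,choi2024inference}. I write $\widehat{\bfM}^{\mathrm{NR},l} = \bfM^{\star} + \bfE^l$ with $\bfE^l = \bfZ^l + \bfR^l$, where the leading noise is
\begin{align*}
Z^l_{i,k} \coloneqq \frac{2}{p_i}\,\delta^{l'}_{i,k}\bigl(\sigma'(M^\star_{i,k})\bigr)^{-1}\bigl(y^{l'}_{i,k} - \sigma(M^\star_{i,k})\bigr), \qquad l' \in \{1,2\}\setminus\{l\},
\end{align*}
and the remainder $\bfR^l$ collects (i) the nuisance term $(\widehat{M}^l_{i,k}-M^\star_{i,k})\bigl(1-(2/p_i)\delta^{l'}_{i,k}\bigr)$ produced by Taylor-expanding $\sigma(\widehat{M}^l_{i,k})$ and $\sigma'(\widehat{M}^l_{i,k})$ around $M^\star_{i,k}$, and (ii) a second-order Taylor remainder of order $O(|\widehat{M}^l_{i,k}-M^\star_{i,k}|^2/p_i)$. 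The critical consequence of sample splitting is that, conditionally on $S^l$, both $\bfZ^l$ and the mean-zero weights $1-(2/p_i)\delta^{l'}_{i,k}$ have entries that are independent across $(i,k)$.

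\textbf{Operator-norm control of $\bfE^l$.} I next establish $\|\bfE^l\|\lesssim\sqrt{\bar d/\bar p}$, up to polylog and incoherence factors. For $\bfZ^l$, matrix Bernstein applied to the sum of independent rank-one random matrices delivers the bound. For the nuisance component of $\bfR^l$, I combine the entrywise bound on $\widehat{\bfM}^l-\bfM^\star$ --- which follows from Theorem~\ref{thm:errorboundforTheta} and Corollary~\ref{cor:errorboundforM} --- with matrix Bernstein applied to the independent mean-zero weights $1-(2/p_i)\delta^{l'}_{i,k}$. The quadratic Taylor remainder is uniformly $O(\|\widehat{\bfM}^l-\bfM^\star\|_\infty^2/p_{\min})$ and its Frobenius norm is dwarfed by $\sqrt{\bar d/\bar p}$ thanks to the entrywise rate in Theorem~\ref{thm:errorboundforTheta}.

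\textbf{Spectral representation and CLT.} Under the spiked singular-value Assumption~\ref{asp:SSVforM}, the ratio $\|\bfE^l\|/\sigma^\star_{\min}(\bfM^\star)$ is small, so the representation formula for the rank-$q$ spectral projector \citep{xia2021normal} gives
\begin{align*}
\widehat{\bfM}^{\mathrm{proj},l}-\bfM^\star \;=\; \calP_{\bfU^{\bfM^\star}}\bfE^l + \bfE^l\calP_{\bfV^{\bfM^\star}} - \calP_{\bfU^{\bfM^\star}}\bfE^l\calP_{\bfV^{\bfM^\star}} + \calR^{l,\proj},
\end{align*}
with $\calP_{\bfU}\coloneqq\bfU\bfU^\top$ and $\|\calR^{l,\proj}\|\lesssim\|\bfE^l\|^2/\sigma^\star_{\min}(\bfM^\star)$. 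Evaluating at entry $(i,k)$, averaging over $l=1,2$, and using the incoherence of $\bfU^{\bfM^\star},\bfV^{\bfM^\star}$ to show that the cross term $\calP_{\bfU^{\bfM^\star}}\bfE^l\calP_{\bfV^{\bfM^\star}}$ and the nuisance/quadratic contributions of $\bfR^l$ are of strictly lower order than $\sqrt{w^\star_{i,k}}$, the leading contribution reduces to
\begin{align*}
\frac{1}{2}\sum_{l=1,2}\Bigl\{(\calP_{\bfU^{\bfM^\star}}\bfZ^l)_{i,k} + (\bfZ^l\calP_{\bfV^{\bfM^\star}})_{i,k}\Bigr\}.
\end{align*}
This is a linear combination of conditionally independent, bounded, mean-zero random variables indexed by $i'$ and $k'$, so Lindeberg--Feller yields asymptotic normality. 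A direct computation gives $\Var(Z^l_{i',k'})=2/(p_{i'}\sigma(M^\star_{i',k'})(1-\sigma(M^\star_{i',k'})))$; combined with the factor $1/4$ from the averaging, the sum of two independent splits, and the replacement of $\bbP(\delta^{l'}_{i,k'}=1)=p_i/2$ by its empirical counterpart $\delta_{i,k'}/(2p_i)$ (justified by Bernstein), the variance of the leading display equals $w^{\star,1}_{i,k}+w^{\star,2}_{i,k}=w^\star_{i,k}$. The covariance between the $\bfU$- and $\bfV$-projected pieces shares only the single entry $Z^l_{i,k}$ and is of order $\|\bfU^{\bfM^\star}_{i,\cdot}\|\,\|\bfV^{\bfM^\star}_{k,\cdot}\|/p_i\lesssim q\mu/(p_i\sqrt{d_1 d_2(d_2-1)/2})$, which is negligible relative to $w^\star_{i,k}$.

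\textbf{Main obstacle.} The central technical difficulty is the operator-norm bound $\|\bfE^l\|\lesssim\sqrt{\bar d/\bar p}$. The Newton--Raphson correction couples $\widehat{\bfM}^l$ with the debiasing sample nonlinearly through $\sigma(\widehat{M}^l)$ and $\sigma'(\widehat{M}^l)$, so without some form of independence $\bfE^l$ is not a sum of independent terms and matrix Bernstein cannot be invoked. Sample splitting is precisely what restores this independence and is therefore unavoidable in the analysis, even though Section~\ref{sec:numericalstudies} shows it can be omitted in practice. Once this bound is secured, the representation-formula and Lindeberg--Feller arguments above proceed along standard lines.
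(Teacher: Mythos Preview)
Your overall strategy mirrors the paper's: decompose $\widehat{\bfM}^{\mathrm{NR},l}-\bfM^\star$ into a leading mean-zero noise plus nuisance, establish $\|\bfE^l\|\lesssim\sqrt{\kappa^2 R\bar d/\bar p}$ via matrix Bernstein under sample splitting (the paper's Lemma~\ref{lem:Mhatderrorbound}), invoke Xia's spectral representation, and finish with Lindeberg--Feller. The leading-term identification and variance computation are correct and match $w^\star_{i,k}$.

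The gap is in the spectral remainder $\calR^{l,\proj}$. You assert only the operator-norm bound $\|\calR^{l,\proj}\|\lesssim\|\bfE^l\|^2/\sigma_{\min}(\bfM^\star)$, but entrywise inference requires $|(\calR^{l,\proj})_{i,k}|=o(\sqrt{w^\star_{i,k}})$. Plugging in $\|\bfE^l\|^2\lesssim\kappa^2 R\bar d/\bar p$ and $\sqrt{w^\star_{i,k}}\asymp(\bar p\min\{d_1,d_2(d_2-1)/2\})^{-1/2}$, the operator-norm bound misses the target by a factor of order $\sqrt{d_1 d_2(d_2-1)/2}/(\mu(\bfM^\star)q)$. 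The higher-order terms in Xia's expansion contain pieces that begin with $\mathfrak{P}^\perp$ (those with $s_1=0$), which do not lie in the column span of $\bfU^{\bfM^\star}$, so you cannot simply pre-multiply by $\|\bfU^{\bfM^\star}_{i,\cdot}\|$ to localize. The paper closes this by first proving a row-wise $\ell_{2,\infty}$ bound on $\widehat{\bfU}^{\mathrm{NR},l}\widehat{\bfU}^{\mathrm{NR},l\top}-\bfU^{\bfM^\star}\bfU^{\bfM^\star\top}$ and its $\bfV$-counterpart (Lemma~\ref{lem:perturbedsingularvector}, which invokes Lemma~9 of \cite{xia2021statistical} precisely to handle the boundary-$\mathfrak{P}^\perp$ case), and then propagates this through to entrywise control of the second-order and cross pieces (Lemmas~\ref{lem:negligible1} and~\ref{lem:negligible2}). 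This $\ell_{2,\infty}$ refinement is what generates the incoherence factors $\sqrt{(\mu(\bfM^\star))^2 q^2/(d_1 d_2(d_2-1)/2)}$ and $\mu(\bfM^\star)q/\min\{d_1,d_2(d_2-1)/2\}$ appearing in the last two conditions of Assumption~\ref{asp:inference_indiv}; without it those conditions have no role. Your claim that the post-Bernstein steps ``proceed along standard lines'' therefore understates what remains: the $\ell_{2,\infty}$ subspace-perturbation analysis is the crux of the remainder control and must be made explicit.
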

We also highlight that, as in the inference for the aggregated preference, the variance $w^\star_{i,k}$ has the form of a linear combination of the inverse of the Fisher information, where the linear combination arises from the singular subspace projection. We estimate $w^\star_{i,k}$ by
\begin{align*}
    \widehat{w}_{i,k}  &\coloneqq \underbrace{\widehat{\bfV}_{k,\cdot}^{\mathrm{NR},1}\left(\sum_{k'=1}^{d_2(d_2-1)/2}  \frac{1}{p_i\sigma(\widehat{M}_{i,k'})(1-\sigma(\widehat{M}_{i,k'}))} \widehat{\bfV}_{k',\cdot}^{\mathrm{NR},1 \top}\widehat{\bfV}_{k',\cdot}^{\mathrm{NR},1} \right)\widehat{\bfV}^{\mathrm{NR},1 \top}_{k,\cdot}}_{\coloneqq \widehat{w}^1_{i,k}}\\
          &\quad + \underbrace{\widehat{\bfU}_{i,\cdot}^{\mathrm{NR},1}\left(\sum_{i'=1}^{d_1} \left(\frac{1}{p_i \sigma(\widehat{M}_{i',k})(1-\sigma(\widehat{M}_{i',k}))}  \right)\widehat{\bfU}^{\mathrm{NR},1 \top}_{i',\cdot}\widehat{\bfU}_{i',\cdot}^{\mathrm{NR},1}\right)\widehat{\bfU}^{\mathrm{NR},1 \top}_{i,\cdot}}_{\coloneqq \widehat{w}^2_{i,k}}.
\end{align*}
Alternatively, we may use $\widehat{\bfU}^{\mathrm{NR},2}$ and $\widehat{\bfV}^{\mathrm{NR},2}$ for the variance estimation. Based on these variance estimators, a feasible version of Theorem \ref{thm:asymptoticnormality_indiv} follows.
\begin{proposition}\label{prop:feasibleCLT_indiv}
Suppose the assumptions in Theorem \ref{thm:asymptoticnormality_indiv} hold. Suppose also that Assumption \ref{asp:forvarianceestimation} holds. Then, $\widehat{w}_{i,k}  =w^\star_{i,k} + o(w^\star_{i,k})$ with probability at least $1-O(\bar{d}^{-9})$. As a result,
\begin{align*}
    \widehat{w}_{i,k}^{-\frac{1}{2}}  \left(  \widehat{M}^{\proj}_{i,k}  -M^\star_{i,k} \right)  \overset{\mathrm{d}}{\rightarrow} \calN (0, 1).
\end{align*}
\end{proposition}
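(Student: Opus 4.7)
The feasible CLT follows from Theorem \ref{thm:asymptoticnormality_indiv} by Slutsky's theorem once the consistency claim $\widehat w_{i,k}/w^\star_{i,k}\to 1$ in probability is established. I would decompose $\widehat w_{i,k}-w^\star_{i,k}=(\widehat w^1_{i,k}-w^{\star,1}_{i,k})+(\widehat w^2_{i,k}-w^{\star,2}_{i,k})$ and treat the two summands identically by the symmetry between $\bfV$ (sums over $k'$) and $\bfU$ (sums over $i'$). Within each summand I would separate three sources of error: (i) plugging $\sigma(\widehat M_{\cdot,\cdot})(1-\sigma(\widehat M_{\cdot,\cdot}))$ in place of $\sigma(M^\star_{\cdot,\cdot})(1-\sigma(M^\star_{\cdot,\cdot}))$ in the weights; (ii) replacing the true singular vectors $\bfU^{\bfM^\star},\bfV^{\bfM^\star}$ (up to unknown orthogonal rotations) by the sample-split estimates $\widehat{\bfU}^{\mathrm{NR},1},\widehat{\bfV}^{\mathrm{NR},1}$; and (iii) replacing the random weights $\delta_{i,k'}/p_i^2$ and $\delta_{i',k}/p_{i'}^2$ by their scaled conditional expectations $1/p_i$ and $1/p_{i'}$ inside the sums. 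Because the inner quadratic forms $\widehat{\bfV}^{\mathrm{NR},1\top}_{k',\cdot}\widehat{\bfV}^{\mathrm{NR},1}_{k',\cdot}$ are invariant under right-rotation of the subspace basis, the unknown rotation matrices cancel cleanly in the sandwich form, so step (ii) is well-defined.

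For source (i), the entrywise bound from Corollary \ref{cor:errorboundforM} together with the boundedness of $M^\star$ imply that $\sigma(M^\star)(1-\sigma(M^\star))$ is bounded away from zero, so the Lipschitz continuity of $x\mapsto 1/[\sigma(x)(1-\sigma(x))]$ on the relevant compact interval transfers the $\ell_\infty$ error of $\widehat\bfM$ directly to each weight. For source (iii), conditional on the subsample used to form $\widehat{\bfU}^{\mathrm{NR},1},\widehat{\bfV}^{\mathrm{NR},1}$, the inner sum $\sum_{k'}(\delta_{i,k'}/p_i)\cdot[\text{bounded rank-}q\text{ matrix}]$ is a sum of independent centered (after removing the mean) bounded rank-$q$ matrices, and matrix Bernstein combined with the incoherence bound in Lemma \ref{lem:incoherence} and its analogue for $\bfM^\star$ from Section \ref{sec:LowrankM} yields the required concentration; the sandwich $\bfV^{\bfM^\star}_{k,\cdot}[\cdot]\bfV^{\bfM^\star\top}_{k,\cdot}$ then translates this to scalar concentration of the correct order relative to $w^\star_{i,k}$.

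The main obstacle is source (ii). Because $\widehat w^1_{i,k}$ singles out the specific row $\widehat{\bfV}^{\mathrm{NR},1}_{k,\cdot}$ and uses $\widehat{\bfV}^{\mathrm{NR},1}_{k',\cdot}$ for every $k'$ inside the weighted sum, a spectral-norm perturbation of the singular subspace is not enough; one needs row-wise, incoherence-preserving control. I would invoke the representation formula of spectral projectors cited in Section \ref{sec:individualinference} (and already used in the proof of Theorem \ref{thm:asymptoticnormality_indiv}) to expand $\widehat{\bfU}^{\mathrm{NR},1}\widehat{\bfU}^{\mathrm{NR},1\top}-\bfU^{\bfM^\star}\bfU^{\bfM^\star\top}$ and the $\bfV$-side counterpart as a power series in $\widehat{\bfM}^{\mathrm{NR},1}-\bfM^\star$. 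The linear term admits row-wise concentration using the sample-split independence between $\widehat{\bfM}^{\mathrm{NR},1}$ and $S^2$, while higher-order terms are governed by the spectral-norm bound $\|\widehat{\bfM}^{\mathrm{NR},1}-\bfM^\star\|\lesssim\sqrt{\bar d/\bar p}$ (up to logs and parameters) already established in the proof of Theorem \ref{thm:asymptoticnormality_indiv}, divided by the singular-value gap from the $\bfM^\star$ analogue of Assumption \ref{asp:sieve_assumptions}.

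Assembling the three contributions and dividing by $w^\star_{i,k}$, whose order matches the asymptotic variance scale of Theorem \ref{thm:asymptoticnormality_indiv}, each error becomes $o(1)$ in probability under the sample-size and dimension conditions imposed via Assumption \ref{asp:forvarianceestimation}. This yields $\widehat w_{i,k}=w^\star_{i,k}+o(w^\star_{i,k})$ with probability at least $1-O(\bar d^{-9})$, and Slutsky's theorem combined with Theorem \ref{thm:asymptoticnormality_indiv} delivers the claimed feasible CLT.
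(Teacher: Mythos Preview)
Your proposal is correct and follows essentially the same route as the paper. The three-way decomposition into (i) the Lipschitz replacement $\sigma(\widehat M)\to\sigma(M^\star)$, (ii) the singular-subspace perturbation, and (iii) the concentration of $\delta_{i,k'}/p_i^2$ around $1/p_i$ matches the paper's argument exactly; the paper simply organizes (ii) by first packaging the $\ell_{2,\infty}$ control of $\widehat{\bfU}^{\mathrm{NR},1}\widehat{\bfU}^{\mathrm{NR},1\top}-\bfU^{\bfM^\star}\bfU^{\bfM^\star\top}$ into a standalone lemma (Lemma~\ref{lem:perturbedsingularvector}, proved via the representation formula you cite) together with an incoherence-preservation lemma for $\widehat{\bfU}^{\mathrm{NR},1}$ (Lemma~\ref{lem:inco_estimatedSV}), and then closes with a short Cauchy--Schwarz step rather than re-deriving row-wise concentration inside the variance proof itself. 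One small simplification relative to your sketch: the row-wise control in Lemma~\ref{lem:perturbedsingularvector} does not actually require sample-split independence---it follows deterministically from the spectral-norm bound on $\widehat{\bfM}^{\mathrm{NR},1}-\bfM^\star$ and the structure of the series expansion---so your step (ii) can be made slightly cleaner than you indicate.
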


\subsection{Ranking inferences}\label{sec:ranking_inference}

In the ranking literature, where a homogeneous score ``vector'' for items is assumed, another important uncertainty quantification is about the ranking of items \citep[see, e.g.,][]{gao2023uncertainty, fan2023spectral, fan2024covariate, fan2024uncertainty, fan2024ranking}. In our setting, we are interested in extending this type of uncertainty quantification to individual preferences. Specifically, this section considers a two-sided confidence interval (CI) of item rankings for individual preferences. As the main idea for the aggregated preference is similar, we present it in the appendix (Section \ref{sec:market_ranking}). In addition, one-sided CIs and their applications \citep[see,][]{fan2024ranking} --- such as the \textit{top-$K$ placement test} (testing whether an item is ranked $K$th or higher) and \textit{sure screening of top-$K$ candidates} (constructing a confidence set that includes all top-$K$ items) --- are analyzed in the appendix (Section~\ref{sec:additionalrankinginferences}).


When analyzing the distribution of the score gap estimator for the individual preference (Section \ref{sec:individualinference}), we derive its linear expansion (see the proof of Theorem \ref{thm:asymptoticnormality_indiv}): for user $i$ and items $j \neq j'$,
\begin{align}
      \widehat{M}^{\proj}_{i,\bar{\calL}(j,j')}-M^\star_{i,\bar{\calL}(j,j')}\approx &\sum_{k'=1}^{d_2(d_2-1)/2} \underbrace{\frac{1}{p_i}\delta_{i,k'}\left( y_{i,k'}- \sigma(M^\star_{i,k'}) \right) \left(\sigma'(M^\star_{i,k'}) \right)^{-1}\bfV^{\bfM^\star}_{k',\cdot}\bfV^{\bfM^\star \top}_{\bar{\calL}(j,j'),\cdot}}_{\coloneqq \xi_{i,\bar{\calL}(j,j'); k'}}\nonumber\\
      &\quad + \sum_{i'=1}^{d_1} \underbrace{\frac{1}{p_{i'}}\bfU^{\bfM^\star}_{i,\cdot}\bfU^{\bfM^\star \top}_{i',\cdot} \delta_{i',\bar{\calL}(j,j')}\left( y_{i',\bar{\calL}(j,j')}- \sigma(M^\star_{i',\bar{\calL}(j,j')}) \right) \left(\sigma'(M^\star_{i',\bar{\calL}(j,j')}) \right)^{-1} }_{\coloneqq \nu_{i,\bar{\calL}(j,j'); i'}} \label{eq:Mprojlinearexpansion}
      \end{align}
where $\bar{\calL}(j, j')=\calL(j, j')$ if $j<j'$, and $\bar{\calL}(j, j')=\calL(j', j)$ if $j>j'$. We can leverage the Gaussian multiplier bootstrap method for this linear expansion in order to construct the simultaneous CIs \citep{chernozhuokov2022improved}. 
 
Suppose that two subsets $\calJ$ and $\calK$ such that $\calJ \subset \calK \subset [d_2]$ are given. Here, $\mathcal{J}$ denotes the set of items whose ranks are to be estimated among the items in $\mathcal{K}$. In the literature, we typically take $\mathcal{K} = [d_2]$ \citep{fan2024covariate,fan2024ranking}. Denote the ranking of item $j$ in $\calK$ for user $i$ based on the $i$th row of ${\bf\Theta}^\star$ by $r^\calK_{i,j}$. For simplicity, we assume that there is no tie. Suppose also that we have simultaneous score gap CIs $\{[C^i_L(j,j'), C^i_U(j,j')] \}_{j \in\calJ, j' \in \calK \setminus \{j\}}$ such that $\Theta^\star_{i,j}-\Theta^\star_{i,j'} \in [C^i_L(j,j'), C^i_U(j,j')], \, \forall j \in \calJ, \, \forall j'\in \calK \setminus\{j\}$ with probability at least $1-\alpha.$ Then, we will have the following relation \citep[see][]{ fan2024covariate, fan2024ranking}, which leads to the ranking CIs for all $j \in \calJ$:
\begin{align}
    1-\alpha & \leq \bbP \left( \Theta^\star_{i,j}-\Theta^\star_{i,j'}\in [C^i_L(j,j'), C^i_U(j,j')], \, \, \forall j \in \calJ, \, \forall j' \in \calK \setminus \{j\} \right) \nonumber \\
    & \leq \bbP \left( \bigcap_{j \in \calJ} \bigg\{1+ \underbrace{\sum_{j' \in \calK \setminus \{j\} }\bbOne \{C^i_U(j,j')<0\}}_{\text{``better than item $j$ in $\calK$ for $i$''}} \leq r^\calK_{i,j} \leq |\calK| - \underbrace{\sum_{j' \in \calK \setminus \{j\}}\bbOne \{C^i_L(j,j')>0\}}_{\text{``worse than item $j$ in $\calK$ for $i$''}}\bigg\} \right).\label{eq:rankinginterval_indiv}
\end{align}

In order to construct such simultaneous score gap CIs, we need to obtain the $(1-\alpha)$th quantile of the distribution of
\begin{align*}
    \widehat{\calT}_{\calJ, \calK}^i \coloneqq \max_{j \in \calJ}\max_{j' \in \calK \setminus \{j\}} \bigg| \frac{ 1 }{\sqrt{\widehat{w}_{i,\bar{\calL}(j,j')}}} \left(\widehat{M}^\proj_{i,\bar{\calL}(j,j')}-M^\star_{i,\bar{\calL}(j,j')} \right)\bigg|.
\end{align*} 
To analyze $\widehat{\calT}^i_{\calJ, \calK}$, motivated by the linear expansion \eqref{eq:Mprojlinearexpansion}, we define its population counterpart
\begin{align*}
    \calT^{i}_{\calJ, \calK} \coloneqq \max_{j \in \calJ}\max_{j' \in \calK \setminus \{j\}} \Bigg| \frac{ 1 }{\sqrt{w^\star_{i,\bar{\calL}(j,j')}}} \left(\sum_{k'=1}^{d_2(d_2-1)/2}  \xi_{i,\bar{\calL}(j,j'); k'} + \sum_{i'=1}^{d_1} \nu_{i,\bar{\calL}(j,j'); i'}\right)\Bigg|.
\end{align*} 
and its bootstrap version
\begin{align*}
    \calG^{i}_{\calJ, \calK} \coloneqq \max_{j \in \calJ}\max_{j' \in \calK \setminus \{j\}} \Bigg| \frac{ 1 }{\sqrt{w^\star_{i,\bar{\calL}(j,j')}}} \left(\sum_{k'=1}^{d_2(d_2-1)/2}  \xi_{i,\bar{\calL}(j,j'); k'}Z^{\xi}_{k'} + \sum_{i'=1}^{d_1} \nu_{i,\bar{\calL}(j,j'); i'}Z^{\nu}_{i'}\right)\Bigg|
\end{align*} 
where $\{Z^\xi_{k'}, Z^\nu_{i'}\}_{k'\geq 1,i'\geq 1}$ are i.i.d. standard normal random variables. Let $\calG^i_{\calJ, \calK;(1-\alpha)}$ be the $(1-\alpha)$th quantile of $\calG^i_{\calJ, \calK}.$ The feasible version for $\calG^i_{\calJ, \calK}$ is naturally defined as:
\begin{align*}
    \widehat{\calG}^i_{\calJ, \calK} \coloneqq \max_{j \in \calJ}\max_{j' \in \calK \setminus\{j\}} \Bigg| \frac{ 1 }{\sqrt{\widehat{w}_{i,\bar{\calL}(j,j')}}} \left(\sum_{k'=1}^{d_2(d_2-1)/2}  \widehat{\xi}_{i,\bar{\calL}(j,j'); k'}Z^{\xi}_{k'} + \sum_{i'=1}^{d_1} \widehat{\nu}_{i,\bar{\calL}(j,j'); i'}Z^{\nu}_{i'}\right)\Bigg|
\end{align*}
where 
\begin{align*}
    \widehat{\xi}_{i,\bar{\calL}(j,j'); k'} &=\begin{cases}
        \frac{1}{p_i}\delta^1_{i,k'}\left( y^1_{i,k'}- \sigma(\widehat{M}^{2}_{i,k'}) \right) \left(\sigma'(\widehat{M}^2_{i,k'}) \right)^{-1}\widehat{\bfV}^{\mathrm{NR},2}_{k',\cdot}\widehat{\bfV}^{\mathrm{NR},2 \top}_{\bar{\calL}(j,j'),\cdot} \quad \textit{if $(i,k')\in S^1$,}\\
        \frac{1}{p_i}\delta^2_{i,k'}\left( y^2_{i,k'}- \sigma(\widehat{M}^{1}_{i,k'}) \right) \left(\sigma'(\widehat{M}^1_{i,k'}) \right)^{-1}\widehat{\bfV}^{\mathrm{NR},1}_{k',\cdot}\widehat{\bfV}^{\mathrm{NR},1 \top}_{\bar{\calL}(j,j'),\cdot} \quad \textit{if $(i,k')\in S^2$,}
    \end{cases} \\
    \widehat{\nu}_{i,\bar{\calL}(j,j');i'} &=\begin{cases}
        \frac{1}{p_{i'}}\widehat{\bfU}^{\mathrm{NR},2}_{i,\cdot}\widehat{\bfU}^{\mathrm{NR},2 \top}_{i',\cdot} \delta^1_{i',\bar{\calL}(j,j')}\left( y^1_{i',\bar{\calL}(j,j')}- \sigma(\widehat{M}^2_{i',\bar{\calL}(j,j')}) \right) \left(\sigma'(\widehat{M}^2_{i',\bar{\calL}(j,j')}) \right)^{-1}\quad \textit{if $(i',\bar{\calL}(j,j'))\in S^1$,}\\
        \frac{1}{p_{i'}}\widehat{\bfU}^{\mathrm{NR},1}_{i,\cdot}\widehat{\bfU}^{\mathrm{NR},1 \top}_{i',\cdot} \delta^2_{i',\bar{\calL}(j,j')}\left( y^2_{i',\bar{\calL}(j,j')}- \sigma(\widehat{M}^1_{i',\bar{\calL}(j,j')}) \right) \left(\sigma'(\widehat{M}^1_{i',\bar{\calL}(j,j')}) \right)^{-1}\quad \textit{if $(i',\bar{\calL}(j,j'))\in S^2$}.
    \end{cases}
\end{align*}
The subsample sets $S^1$ and $S^2$ are as defined in Section \ref{sec:individualinference}. The following result validates the bootstrap approach for constructing simultaneous CIs. Let $\mu(\bfM^\star)$ denote the incoherence parameter of $\bfM^\star$ which is defined in the proof of Theorem \ref{thm:asymptoticnormality_indiv} (Section \ref{sec:proofofindivUQ}). 
\begin{theorem}\label{thm:individualbootstrap}
    Suppose that user $i\in [d_1]$ is given. Let $\widehat{\calG}^i_{\calJ, \calK;(1-\alpha)}$ be the $(1-\alpha)$th quantile of $\widehat{\calG}^i_{\calJ, \calK}.$ Suppose that the assumptions in Proposition \ref{prop:feasibleCLT_indiv} and Assumption \ref{asp:indivrankingbootstrap} hold. Particularly, for all $(i,k)$ such that $k = \bar{\calL}(j,j')$ where $j\in \calJ$ and $j' \in \calK \setminus \{j\}$, Assumption \ref{asp:inference_indiv} holds. For simplicity, assume $d_1 \asymp d_2(d_2-1)/2$. Then, we have
\begin{align*}
    |\bbP \left(\widehat{\calT}^i_{\calJ, \calK} > \widehat{\calG}^i_{\calJ, \calK;(1-\alpha)}\right) -\alpha|  \lesssim \left(\frac{(\mu(\bfM^\star))^4q^4\log^5(\bar{d}d_2)}{\bar{p}\bar{d}} \right)^{\frac{1}{4}}+o(1)
\end{align*}
with probability at least $1-O(\bar{d}^{-8})$.
\end{theorem}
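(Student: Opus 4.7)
The plan is to reduce the probability statement to a comparison of three quantities in Kolmogorov distance: the feasible statistic $\widehat{\calT}^i_{\calJ,\calK}$, its oracle linearized counterpart $\calT^i_{\calJ,\calK}$, and the Gaussian analogue $\calG^i_{\calJ,\calK}$; then combine with anti--concentration of Gaussian maxima. Concretely, the triangle inequality together with
\[
\bbP\big(\calG^i_{\calJ,\calK}\le \widehat{\calG}^i_{\calJ,\calK;(1-\alpha)}\big)=1-\alpha+o(1)
\]
reduces the theorem to bounding (a) $\sup_t |\bbP(\widehat{\calT}^i_{\calJ,\calK}\le t)-\bbP(\calT^i_{\calJ,\calK}\le t)|$, (b) $\sup_t |\bbP(\calT^i_{\calJ,\calK}\le t)-\bbP(\calG^i_{\calJ,\calK}\le t)|$, and (c) the conditional Kolmogorov distance between $\widehat{\calG}^i_{\calJ,\calK}$ and $\calG^i_{\calJ,\calK}$, against an anti-concentration bound for $\calG^i_{\calJ,\calK}$ of Nazarov type.

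For (a), I use the linearization in \eqref{eq:Mprojlinearexpansion} and show that the remainder, uniformly over $j\in\calJ$, $j'\in\calK\setminus\{j\}$, is $o_{\bbP}(\sqrt{w^\star_{i,\bar{\calL}(j,j')}}/\log(\bar{d}d_2))$. The uniform remainder control repeats the argument used pointwise in Theorem~\ref{thm:asymptoticnormality_indiv}, substituting entrywise bounds from Theorem~\ref{thm:errorboundforTheta} and Corollary~\ref{cor:errorboundforM}, the representation formula for spectral projectors of Xia (2021), and a union bound over the at most $|\calJ||\calK\setminus\{j\}|\le d_2^2$ pairs; the extra $\log(\bar{d}d_2)$ factor is absorbed by Assumption~\ref{asp:inference_indiv}/\ref{asp:indivrankingbootstrap}. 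Combined with the variance ratio control from Proposition~\ref{prop:feasibleCLT_indiv}, this gives (a) at a rate dominated by the right-hand side of the claim.

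For (b), I apply the high-dimensional max-type Gaussian approximation of Chernozhukov--Chetverikov--Koike~(2022) to the normalized sums
\[
\frac{1}{\sqrt{w^\star_{i,\bar{\calL}(j,j')}}}\Big(\sum_{k'}\xi_{i,\bar{\calL}(j,j');k'}+\sum_{i'}\nu_{i,\bar{\calL}(j,j');i'}\Big),
\]
indexed by the at most $2|\calJ||\calK\setminus\{j\}|$ signed pairs. The summands are mean-zero, independent across $(i',k')$, sub-exponential with Orlicz norms uniformly bounded by $C/(p_i\,\sigma'(M^\star_{\cdot,\cdot}))\sqrt{\mu(\bfM^\star)q/\bar{d}}$ thanks to Assumption~\ref{asp:SSVforM} and the incoherence of $\bfM^\star$, and have variances bounded below by $w^\star_{i,\bar{\calL}(j,j')}$. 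The CCK bound then yields (b) of order $(\mu(\bfM^\star)^4 q^4 \log^5(\bar{d}d_2)/(\bar{p}\bar{d}))^{1/4}$. For (c), I use the Gaussian comparison inequality, which reduces the conditional Kolmogorov distance between $\widehat{\calG}^i_{\calJ,\calK}$ and $\calG^i_{\calJ,\calK}$ to the entrywise maximum deviation of the bootstrap covariance $\mathrm{Cov}^\ast(\widehat{\xi}_{\cdot},\widehat{\nu}_{\cdot})$ from the population one. I decompose this deviation into the contributions from (i) the plug-ins $\sigma(\widehat{M}^l)$ and $\sigma'(\widehat{M}^l)$ for $\sigma(M^\star)$ and $\sigma'(M^\star)$, (ii) the estimated singular vectors $\widehat{\bfU}^{\mathrm{NR},l},\widehat{\bfV}^{\mathrm{NR},l}$ for $\bfU^{\bfM^\star},\bfV^{\bfM^\star}$, and (iii) the empirical variance estimate $\widehat{w}$ for $w^\star$; (i) and (iii) are controlled by Theorem~\ref{thm:errorboundforTheta}, Corollary~\ref{cor:errorboundforM}, and Proposition~\ref{prop:feasibleCLT_indiv}, while (ii) uses a two-to-infinity Davis--Kahan/Wedin-type perturbation bound for $\widehat{\bfM}^{\mathrm{NR},l}$ against $\bfM^\star$, exploiting the spectral gap from Assumption~\ref{asp:SSVforM} and the sample-splitting independence between $(\widehat{\bfU}^{\mathrm{NR},l},\widehat{\bfV}^{\mathrm{NR},l})$ and the bootstrap multipliers on the opposite half-sample.

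The main obstacle is (c). The entries $\widehat{\xi}$ and $\widehat{\nu}$ depend multiplicatively on $\widehat{\bfU}^{\mathrm{NR},l},\widehat{\bfV}^{\mathrm{NR},l}$ and on estimated link derivatives, so bounding the covariance deviation uniformly over all $(j,j',j'',j''')$ requires a row-wise (that is, $\|\cdot\|_{2,\infty}$) perturbation bound for the estimated singular vectors, rather than an operator-norm one. Sample splitting is exactly what makes this tractable: conditionally on the half used for estimating the singular vectors, the bootstrap multipliers on the other half are independent standard normals, so Gaussian comparison applies and the remaining task is the deterministic row-wise perturbation bound. Once this bound is obtained at rate matching the right-hand side of the claim, adding (a), (b), (c) and invoking anti-concentration delivers the stated conclusion with probability at least $1-O(\bar{d}^{-8})$.
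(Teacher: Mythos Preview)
Your proposal is correct and follows essentially the same three-part decomposition as the paper: Lemma~\ref{lem:bootstrap_indiv} is your (b) via Chernozhukov--Chetverikov--Koike, Lemma~\ref{lem:TTindiv} is your (a) via the uniform linearization remainder from the proof of Theorem~\ref{thm:asymptoticnormality_indiv} plus variance-ratio control, and Lemma~\ref{lem:GGindiv} is your (c). The one technical divergence is in (c): the paper does not invoke a Gaussian covariance-comparison inequality but instead couples $\widehat{\calG}^i_{\calJ,\calK}$ and $\calG^i_{\calJ,\calK}$ through the \emph{same} multipliers $\{Z^\xi_{k'},Z^\nu_{i'}\}$ and bounds the pathwise difference $|\widehat{\calG}^i_{\calJ,\calK}-\calG^i_{\calJ,\calK}|$ directly via sub-exponential Bernstein bounds on the multiplier sums, with the $\|\cdot\|_{2,\infty}$ singular-vector perturbation supplied by Lemma~\ref{lem:perturbedsingularvector} (the Xia representation-formula bound you allude to) and Lemma~\ref{lem:inco_estimatedSV}; this yields the same conclusion as your covariance-deviation route.
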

We then define the CIs for the score gaps for all $j \in \calJ$ and $j' \in \calK \setminus \{j\}$ as follows:
\begin{align*}
    &C^i_L(j,j') = (-1)^{1-\bbOne \{j<j'\} }\widehat{M}^\proj_{i,\bar{\calL}(j,j')} -\widehat{\calG}^i_{\calJ, \calK;(1-\alpha)}\times \sqrt{\widehat{w}_{i,\bar{\calL}(j,j')}}; \\
    &C^i_U(j,j') = (-1)^{1-\bbOne \{j<j'\} }\widehat{M}^\proj_{i,\bar{\calL}(j,j')} +\widehat{\calG}^i_{\calJ, \calK;(1-\alpha)}\times \sqrt{\widehat{w}_{i,\bar{\calL}(j,j')}}.
\end{align*}
In addition, the $(1-\alpha) \times 100$\% simultaneous CIs for $\{r^\calK_{i,j}\}_{j \in \calJ}$ are defined as $\{[\widehat{r}_{i,j}^{U}, \widehat{r}_{i,j}^{L}]\}_{j \in \calJ}$ where
\begin{align*}
    \widehat{r}_{i,j}^U \coloneqq 1+  \sum_{j' \in \calK \setminus \{j\}}\bbOne \{C^i_U(j,j')<0\} \quad   \text{and} \quad \widehat{r}_{i,j}^L \coloneqq |\calK| -  \sum_{j' \in \calK \setminus \{j\}}\bbOne \{C^i_L(j,j')>0\} \quad \text{for each $j \in \calJ$}
\end{align*}
as shown in \eqref{eq:rankinginterval_indiv}.

\section{Numerical Studies}\label{sec:numericalstudies}

In this section, we examine our theoretical results using real data --- the reel-watching dataset from \textit{Kuaishou}, a Chinese short-video-sharing mobile application \citep{gao2022kuairec}.\footnote{In the appendix, Section \ref{sec:syntheticdataexperiment} reports additional numerical experiments using synthetic data to illustrate convergence rates and inferential theory.} The original \textit{Kuaishou} dataset contains $d_1 = 1,411$ users and $d_2 = 3,327$ items (videos), together with complete information on user-video interactions. As a measure of user preference, we use the \textit{watch ratio}, defined as the watched duration divided by the length of the video. The watch ratio exhibits substantial heterogeneity, reflecting diverse user preferences.   See the left panel of Figure~\ref{fig:watchratio}, which shows the distribution of the raw watch ratios, clipped at $20$. 

To improve the robustness of the estimation, we discretize the watch ratios into $14$ baskets: ${[0, 0.1], (0.1, 0.2], \ldots, (0.9, 1], (1, 1.2], (1.2, 1.5], (1.5, 2], (2, \infty)}$, marking each interval a score respectively  $0, \cdots, 13$. The right panel of Figure~\ref{fig:watchratio} displays the frequency of scores in each basket. For example, a score of $13$, corresponding to the watch ratio basket $(2, \infty)$, indicates that the user watched the video more than twice, suggesting a strong interest in the content.  Conversely, a low score such as 1, corresponding to the watch ratio basket $(0.1, 0.2]$ implies that the user lost interest and stopped watching shortly after starting. Based on this interpretation, for unobserved items, we can construct a recommendation system by leveraging the estimated rankings to suggest highly ranked videos tailored to each user. 

\begin{figure}[ht!]
\centering
\begin{subfigure}[b]{0.49\textwidth}
\centering
\includegraphics[width=\textwidth]{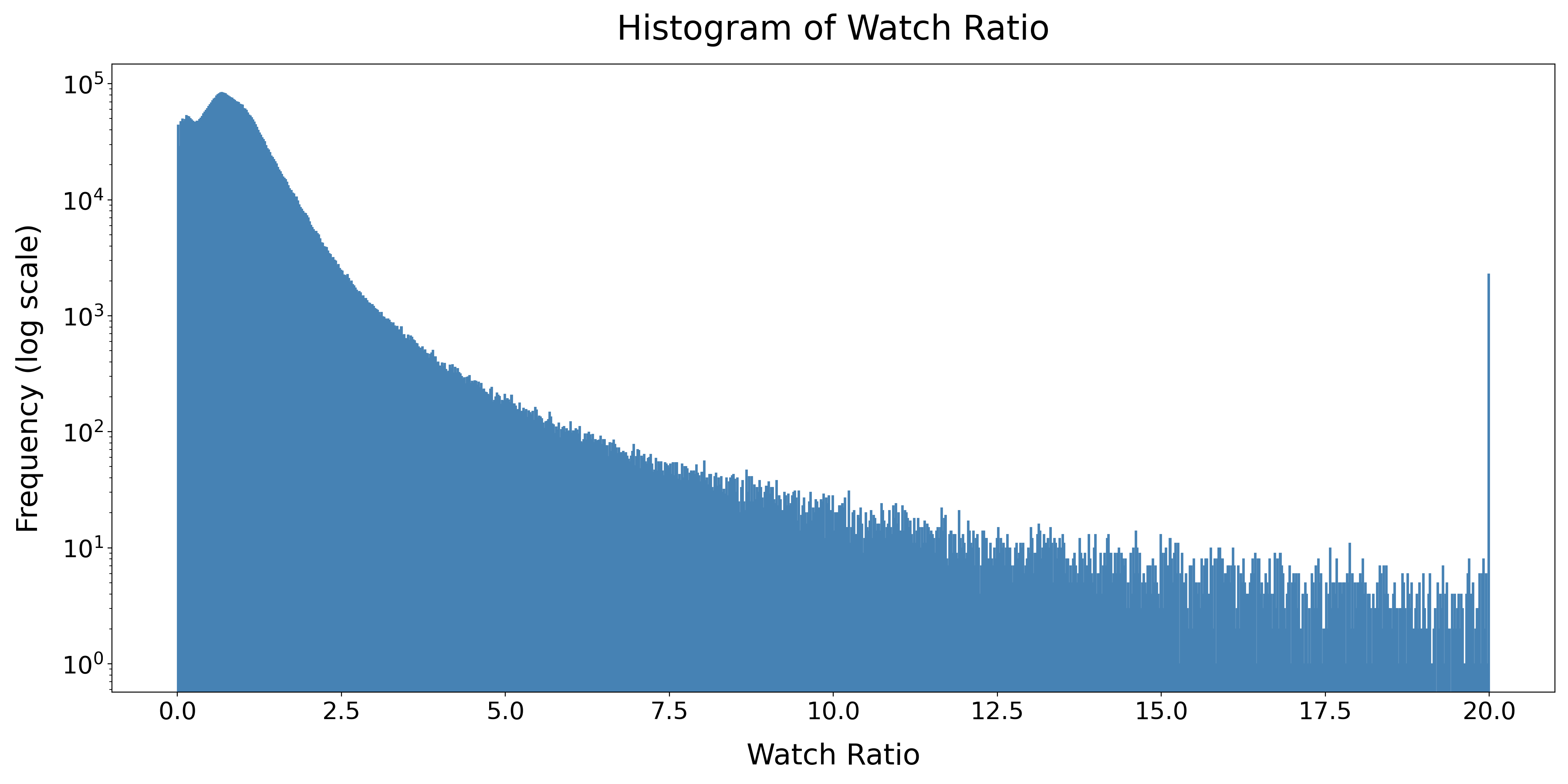}
\end{subfigure}
\hfill
\begin{subfigure}[b]{0.49\textwidth}
\centering
\includegraphics[width=\textwidth]{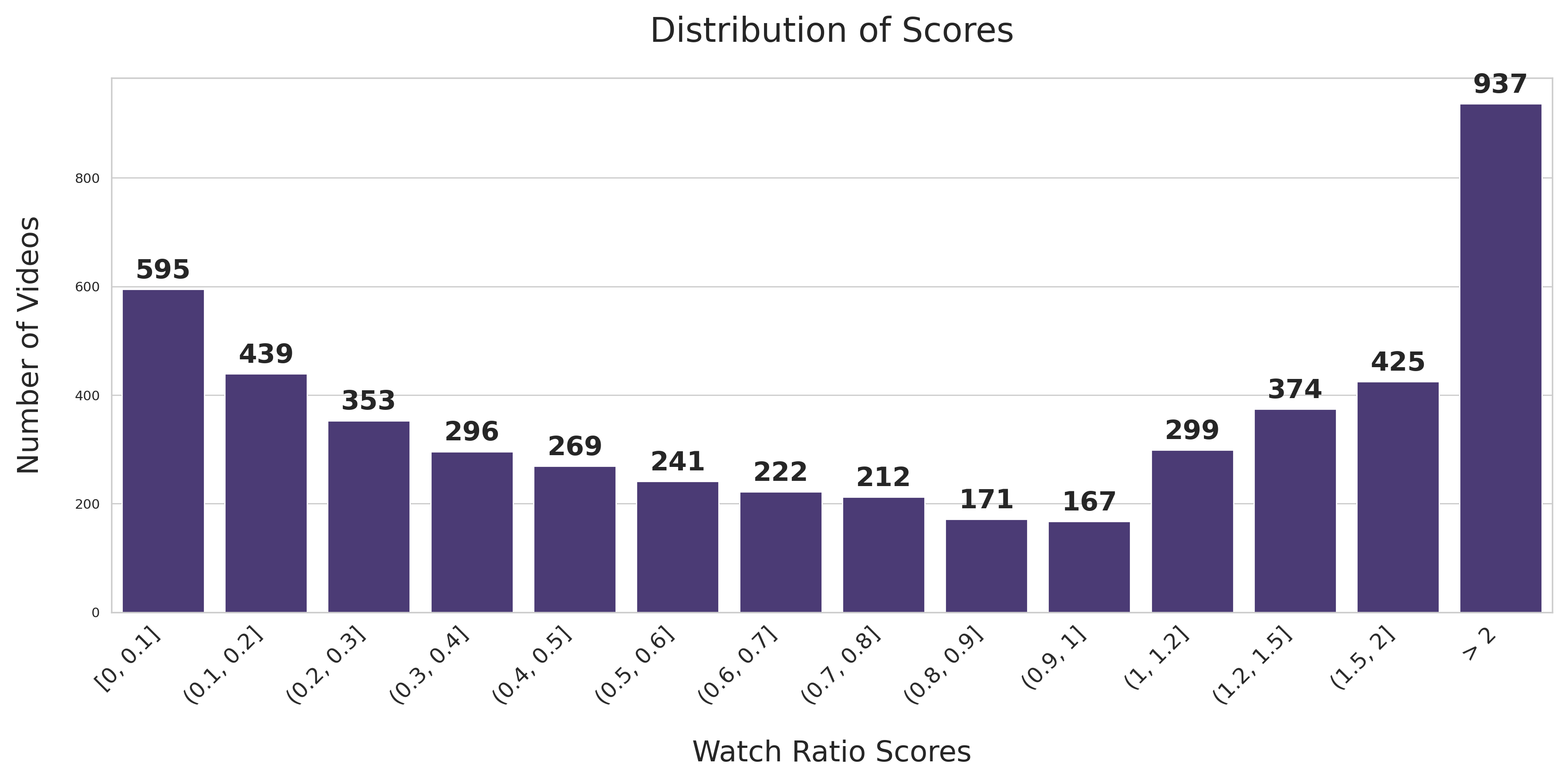}
\end{subfigure}
\caption{
Histogram of the watch ratios and their assigned baskets and scores.
}
\label{fig:watchratio}
\end{figure}

We randomly select $780$ users and $40$ videos from the dataset and construct a $780\times 40$ dimensional score matrix. Each row is then demeaned, serving as a ``quasi-true'' score matrix $\boldsymbol{\Theta}^\star$. With a fixed observation probability $p_i = p = 0.5$, we generate pairwise comparison graphs following Assumption \ref{asp:randomness}, namely, any two videos being compared by user $i$ with probability 0.5. Based on the observation comparisons, we follow the ranking inference procedures presented in Section \ref{sec:ranking_inference} and Section \ref{sec:market_ranking}. Specifically, we construct the confidence intervals for item rankings, at both aggregated and individual levels, setting $\alpha=0.05$ and $\calJ=\calK=[d_2].$ In addition, we use the same value of the regularization parameter $\lambda$ and the rank estimation method as in the synthetic data experiment (see Section \ref{sec:syntheticdataexperiment}). 

We also compute the true rank for each item based on the quasi-true score matrix ${\bf\Theta}^\star$ and its average. For the individual case, since the scores are discretized into only $14$ baskets, ties occur frequently. Therefore, instead of reporting a single rank value, we represent the true rank as an interval, whose length corresponds to the number of items having the same score. For example, User 11 gives the same rating of Video 17 as two other videos, resulting in a rank interval $[2, 4]$.

\begin{table}[!htp]
\centering
\small
\begin{tabular}{c@{\hskip 0.5cm}ccc@{\hskip 1cm}ccc}
\toprule
\multirow{2}{*}{\textbf{Item ID}} & \multicolumn{3}{c@{\hskip 1cm}}{\textbf{Aggregated preference}} & \multicolumn{3}{c}{\textbf{Individual preference (User 11)}} \\
\cmidrule(r){2-4} \cmidrule(l){5-7}
 & True Rank & CI & CI Length & True Rank & CI & CI Length \\
\midrule
23 & 1  & [1, 1]   & 0  & [1, 1]     & [1, 8]  & 7 \\
34  & 2  & [2, 5]   & 3  & [5, 9]    & [1, 10]   & 9 \\
17 & 3  & [2, 8]  & 6  & [2, 4]    & [1, 10]  & 9 \\
22 & 4  & [2, 8]  & 6 & [5, 9]     & [1, 8]  & 7 \\
33  & 5  & [2, 7]  & 5 & [13, 14]     & [2, 20]  & 18 \\
\\
6 & 36 & [36, 37] & 1  & [37, 39]   & [32, 37] & 5 \\
18 & 37 & [36, 38] & 2  & [33, 33]   & [34, 38] & 4 \\
29 & 38 & [37, 39] & 2  & [40, 40]   & [39, 40] & 1 \\
19 & 39 & [38, 40] & 2  & [37, 39]   & [35, 40] & 5 \\
21 & 40 & [39, 40] & 1  & [37, 39]   & [36, 40] & 4  \\
\bottomrule
\end{tabular}
\caption{Confidence intervals for the ranks of selected items using the aggregated preference and User 11's preference.}
\label{tab:kuai_ci}
\end{table}

 
We estimate the ranks of all videos based on a synthetic data generated according to the generalized BTL model.  The results for the ranking confidence intervals are presented in Table~\ref{tab:kuai_ci}. For the individual level, we randomly select User 11. The items are sorted by the true rank at the aggregated level, and we report the results only for the top 5 and bottom 5 items to save space. The confidence intervals for the aggregated-level ranking perform very well --- covering the true ranks while maintaining short lengths --- indicating that our method provides accurate and reliable rankings at the aggregated level. At the individual level, the estimated confidence intervals also cover the corresponding true ranks, although they tend to be slightly wider than those at the aggregate level due to larger variance. Nonetheless, the intervals remain reasonably informative.

 \section*{Acknowledgments} 
Fan's research is supported by the Office of Naval Research (ONR) grants N00014-22-1-2340 and N00014-25-1-2317, and by the National Science Foundation (NSF) grants DMS-2210833 and DMS-2412029.

 \onehalfspacing
\bibliographystyle{apalike}
\bibliography{reference}

\begin{thebibliography}{}

\bibitem[Abbe et~al., 2020]{abbe2020entrywise}
Abbe, E., Fan, J., Wang, K., and Zhong, Y. (2020).
\newblock Entrywise eigenvector analysis of random matrices with low expected rank.
\newblock {\em Annals of statistics}, 48(3):1452.

\bibitem[Ahn and Horenstein, 2013]{ahn2013eigenvalue}
Ahn, S.~C. and Horenstein, A.~R. (2013).
\newblock Eigenvalue ratio test for the number of factors.
\newblock {\em Econometrica}, 81(3):1203--1227.

\bibitem[Aouad et~al., 2018]{aouad2018approximability}
Aouad, A., Farias, V., Levi, R., and Segev, D. (2018).
\newblock The approximability of assortment optimization under ranking preferences.
\newblock {\em Operations Research}, 66(6):1661--1669.

\bibitem[Avery et~al., 2013]{avery2013revealed}
Avery, C.~N., Glickman, M.~E., Hoxby, C.~M., and Metrick, A. (2013).
\newblock A revealed preference ranking of us colleges and universities.
\newblock {\em The Quarterly Journal of Economics}, 128(1):425--467.

\bibitem[Bai and Ng, 2002]{bai2002determining}
Bai, J. and Ng, S. (2002).
\newblock Determining the number of factors in approximate factor models.
\newblock {\em Econometrica}, 70(1):191--221.

\bibitem[Baltrunas et~al., 2010]{baltrunas2010group}
Baltrunas, L., Makcinskas, T., and Ricci, F. (2010).
\newblock Group recommendations with rank aggregation and collaborative filtering.
\newblock In {\em Proceedings of the fourth ACM conference on Recommender systems}, pages 119--126.

\bibitem[Bandeira et~al., 2016]{bandeira2016sharp}
Bandeira, A.~S., Van~Handel, R., et~al. (2016).
\newblock Sharp nonasymptotic bounds on the norm of random matrices with independent entries.
\newblock {\em The Annals of Probability}, 44(4):2479--2506.

\bibitem[Beck and Teboulle, 2009]{beck:2009}
Beck, A. and Teboulle, M. (2009).
\newblock A fast iterative shrinkage-thresholding algorithm for linear inverse problems.
\newblock {\em SIAM journal on imaging sciences}, 2(1):183--202.

\bibitem[Bradley and Terry, 1952]{bradley1952rank}
Bradley, R.~A. and Terry, M.~E. (1952).
\newblock Rank analysis of incomplete block designs: I. the method of paired comparisons.
\newblock {\em Biometrika}, 39(3/4):324--345.

\bibitem[Cai et~al., 2010]{cai:2010}
Cai, J.-F., Cand{\`e}s, E.~J., and Shen, Z. (2010).
\newblock A singular value thresholding algorithm for matrix completion.
\newblock {\em SIAM Journal on optimization}, 20(4):1956--1982.

\bibitem[Cand\`{e}s and Plan, 2010]{candes2010matrix}
Cand\`{e}s, E.~J. and Plan, Y. (2010).
\newblock Matrix completion with noise.
\newblock {\em Proceedings of the IEEE}, 98(6):925--936.

\bibitem[Cand\`{e}s and Recht, 2009]{candes2009exact}
Cand\`{e}s, E.~J. and Recht, B. (2009).
\newblock Exact matrix completion via convex optimization.
\newblock {\em Foundations of Computational mathematics}, 9(6):717.

\bibitem[Chatzi et~al., 2024]{chatzi2024prediction}
Chatzi, I., Straitouri, E., Thejaswi, S., and Rodriguez, M. (2024).
\newblock Prediction-powered ranking of large language models.
\newblock {\em Advances in Neural Information Processing Systems}, 37:113096--113133.

\bibitem[Chen and Li, 2017]{chen2017memory}
Chen, J. and Li, X. (2017).
\newblock Memory-efficient kernel pca via partial matrix sampling and nonconvex optimization: a model-free analysis of local minima.
\newblock {\em arXiv preprint arXiv:1711.01742}.

\bibitem[Chen, 2007]{chen2007large}
Chen, X. (2007).
\newblock Large sample sieve estimation of semi-nonparametric models.
\newblock {\em Handbook of econometrics}, 6:5549--5632.

\bibitem[Chen et~al., 2020a]{chen2020dynamic}
Chen, X., Wang, Y., and Zhou, Y. (2020a).
\newblock Dynamic assortment optimization with changing contextual information.
\newblock {\em The Journal of Machine Learning Research}, 21(1):8918--8961.

\bibitem[Chen et~al., 2020b]{chen2020noisy}
Chen, Y., Chi, Y., Fan, J., Ma, C., and Yan, Y. (2020b).
\newblock Noisy matrix completion: Understanding statistical guarantees for convex relaxation via nonconvex optimization.
\newblock {\em SIAM journal on optimization}, 30(4):3098--3121.

\bibitem[Chen et~al., 2019a]{chen2019spectral}
Chen, Y., Fan, J., Ma, C., and Wang, K. (2019a).
\newblock Spectral method and regularized mle are both optimal for top-k ranking.
\newblock {\em Annals of statistics}, 47(4):2204.

\bibitem[Chen et~al., 2019b]{chen:2019inference}
Chen, Y., Fan, J., Ma, C., and Yan, Y. (2019b).
\newblock Inference and uncertainty quantification for noisy matrix completion.
\newblock {\em Proceedings of the National Academy of Sciences}, 116(46):22931--22937.

\bibitem[Chen et~al., 2023]{chen2023statistical}
Chen, Y., Li, C., Ouyang, J., and Xu, G. (2023).
\newblock Statistical inference for noisy incomplete binary matrix.
\newblock {\em Journal of Machine Learning Research}, 24(95):1--66.

\bibitem[Chen and Suh, 2015]{chen2015spectral}
Chen, Y. and Suh, C. (2015).
\newblock Spectral mle: Top-k rank aggregation from pairwise comparisons.
\newblock In {\em International Conference on Machine Learning}, pages 371--380. PMLR.

\bibitem[Chernozhukov et~al., 2023]{chernozhukov2023inference}
Chernozhukov, V., Hansen, C., Liao, Y., and Zhu, Y. (2023).
\newblock Inference for low-rank models.
\newblock {\em The Annals of statistics}, 51(3):1309--1330.

\bibitem[Chernozhuokov et~al., 2022]{chernozhuokov2022improved}
Chernozhuokov, V., Chetverikov, D., Kato, K., and Koike, Y. (2022).
\newblock Improved central limit theorem and bootstrap approximations in high dimensions.
\newblock {\em The Annals of Statistics}, 50(5):2562--2586.

\bibitem[Choi et~al., 2023]{choi2023norank}
Choi, J., Kwon, H., and Liao, Y. (2023).
\newblock Inference for low-rank models without estimating the rank.
\newblock {\em arXiv preprint arXiv:2311.16440}.

\bibitem[Choi et~al., 2024]{choi2024inference}
Choi, J., Kwon, H., and Liao, Y. (2024).
\newblock Inference for low-rank completion without sample splitting with application to treatment effect estimation.
\newblock {\em Journal of Econometrics}, 240(1):105682.

\bibitem[Choi and Yuan, 2024]{choi2024matrix}
Choi, J. and Yuan, M. (2024).
\newblock Matrix completion when missing is not at random and its applications in causal panel data models.
\newblock {\em Journal of the American Statistical Association}, pages 1--15.

\bibitem[Dhurandhar et~al., 2024]{dhurandhar2024ranking}
Dhurandhar, A., Nair, R., Singh, M., Daly, E., and Ramamurthy, K.~N. (2024).
\newblock Ranking large language models without ground truth.
\newblock {\em arXiv preprint arXiv:2402.14860}.

\bibitem[Dwork et~al., 2001]{dwork2001rank}
Dwork, C., Kumar, R., Naor, M., and Sivakumar, D. (2001).
\newblock Rank aggregation methods for the web.
\newblock In {\em Proceedings of the 10th international conference on World Wide Web}, pages 613--622.

\bibitem[Fan et~al., 2025a]{fan2025covariates}
Fan, J., Ge, J., and Hou, J. (2025a).
\newblock Covariates-adjusted mixed-membership estimation: A novel network model with optimal guarantees.
\newblock {\em arXiv preprint arXiv:2502.06671}.

\bibitem[Fan et~al., 2024a]{fan2024covariate}
Fan, J., Hou, J., and Yu, M. (2024a).
\newblock Covariate assisted entity ranking with sparse intrinsic scores.
\newblock {\em arXiv preprint arXiv:2407.08814}.

\bibitem[Fan et~al., 2024b]{fan2024uncertainty}
Fan, J., Hou, J., and Yu, M. (2024b).
\newblock Uncertainty quantification of mle for entity ranking with covariates.
\newblock {\em Journal of Machine Learning Research}, 25(358):1--83.

\bibitem[Fan et~al., 2016]{fan2016projected}
Fan, J., Liao, Y., and Wang, W. (2016).
\newblock Projected principal component analysis in factor models.
\newblock {\em Annals of statistics}, 44(1):219.

\bibitem[Fan et~al., 2025b]{fan2024ranking}
Fan, J., Lou, Z., Wang, W., and Yu, M. (2025b).
\newblock Ranking inferences based on the top choice of multiway comparisons.
\newblock {\em Journal of the American Statistical Association}, 120:237--250.

\bibitem[Fan et~al., 2025c]{fan2023spectral}
Fan, J., Lou, Z., Wang, W., and Yu, M. (2025c).
\newblock Spectral ranking inferences based on general multiway comparisons.
\newblock {\em Operations Research}, page to appear.

\bibitem[Fernandes et~al., 2025]{fernandes2025peer}
Fernandes, C., Siderius, J., and Singal, R. (2025).
\newblock Peer review market design: Effort-based matching and admission control.
\newblock {\em Available at SSRN}.

\bibitem[Gao et~al., 2022]{gao2022kuairec}
Gao, C., Li, S., Lei, W., Chen, J., Li, B., Jiang, P., He, X., Mao, J., and Chua, T.-S. (2022).
\newblock Kuairec: A fully-observed dataset and insights for evaluating recommender systems.
\newblock In {\em Proceedings of the 31st ACM International Conference on Information \& Knowledge Management}, CIKM '22, page 540–550.

\bibitem[Gao et~al., 2023]{gao2023uncertainty}
Gao, C., Shen, Y., and Zhang, A.~Y. (2023).
\newblock Uncertainty quantification in the bradley--terry--luce model.
\newblock {\em Information and Inference: A Journal of the IMA}, 12(2):1073--1140.

\bibitem[Javanmard and Montanari, 2013]{javanmard2013confidence}
Javanmard, A. and Montanari, A. (2013).
\newblock Confidence intervals and hypothesis testing for high-dimensional statistical models.
\newblock {\em Advances in neural information processing systems}, 26.

\bibitem[Javanmard and Montanari, 2014]{javanmard2014confidence}
Javanmard, A. and Montanari, A. (2014).
\newblock Confidence intervals and hypothesis testing for high-dimensional regression.
\newblock {\em The Journal of Machine Learning Research}, 15(1):2869--2909.

\bibitem[Katz-Samuels and Scott, 2018]{katz2018nonparametric}
Katz-Samuels, J. and Scott, C. (2018).
\newblock Nonparametric preference completion.
\newblock In {\em International Conference on Artificial Intelligence and Statistics}, pages 632--641. PMLR.

\bibitem[Keshavan et~al., 2010]{keshavan2010matrix}
Keshavan, R.~H., Montanari, A., and Oh, S. (2010).
\newblock Matrix completion from a few entries.
\newblock {\em IEEE transactions on information theory}, 56(6):2980--2998.

\bibitem[Koltchinskii et~al., 2011]{koltchinskii:2011}
Koltchinskii, V., Lounici, K., Tsybakov, A.~B., et~al. (2011).
\newblock Nuclear-norm penalization and optimal rates for noisy low-rank matrix completion.
\newblock {\em The Annals of Statistics}, 39(5):2302--2329.

\bibitem[Lee et~al., 2024a]{lee2024low}
Lee, S.~J., Sun, W.~W., and Liu, Y. (2024a).
\newblock Low-rank contextual reinforcement learning from heterogeneous human feedback.
\newblock {\em arXiv preprint arXiv:2412.19436}.

\bibitem[Lee et~al., 2024b]{lee2024lowassortment}
Lee, S.~J., Sun, W.~W., and Liu, Y. (2024b).
\newblock Low-rank online dynamic assortment with dual contextual information.
\newblock {\em arXiv preprint arXiv:2404.17592}.

\bibitem[Li et~al., 2023]{li2023estimating}
Li, H., Simchi-Levi, D., Wu, M.~X., and Zhu, W. (2023).
\newblock Estimating and exploiting the impact of photo layout: A structural approach.
\newblock {\em Management Science}, 69(9):5209--5233.

\bibitem[Li et~al., 2020]{li2020neural}
Li, Z., Xu, Q., Jiang, Y., Ma, K., Cao, X., and Huang, Q. (2020).
\newblock Neural collaborative preference learning with pairwise comparisons.
\newblock {\em IEEE Transactions on Multimedia}, 23:1977--1989.

\bibitem[Luce, 1959]{luce1959individual}
Luce, R.~D. (1959).
\newblock {\em Individual choice behavior}, volume~4.
\newblock Wiley New York.

\bibitem[Ma et~al., 2011]{ma:2011}
Ma, S., Goldfarb, D., and Chen, L. (2011).
\newblock Fixed point and bregman iterative methods for matrix rank minimization.
\newblock {\em Mathematical Programming}, 128(1-2):321--353.

\bibitem[Maia~Polo et~al., 2024]{maia2024efficient}
Maia~Polo, F., Xu, R., Weber, L., Silva, M., Bhardwaj, O., Choshen, L., de~Oliveira, A., Sun, Y., and Yurochkin, M. (2024).
\newblock Efficient multi-prompt evaluation of llms.
\newblock {\em Advances in Neural Information Processing Systems}, 37:22483--22512.

\bibitem[Mattei and Walsh, 2013]{mattei2013preflib}
Mattei, N. and Walsh, T. (2013).
\newblock Preflib: A library for preferences http://www.preflib.org.
\newblock In {\em International conference on algorithmic decision theory}, pages 259--270. Springer.

\bibitem[Mazumder et~al., 2010]{mazumder:2010}
Mazumder, R., Hastie, T., and Tibshirani, R. (2010).
\newblock Spectral regularization algorithms for learning large incomplete matrices.
\newblock {\em Journal of machine learning research}, 11(Aug):2287--2322.

\bibitem[McFadden, 1972]{mcfadden1972conditional}
McFadden, D. (1972).
\newblock Conditional logit analysis of qualitative choice behavior.

\bibitem[Negahban et~al., 2018]{negahban2018learning}
Negahban, S., Oh, S., Thekumparampil, K.~K., and Xu, J. (2018).
\newblock Learning from comparisons and choices.
\newblock {\em Journal of Machine Learning Research}, 19(40):1--95.

\bibitem[Negahban and Wainwright, 2012]{negahban2012restricted}
Negahban, S. and Wainwright, M.~J. (2012).
\newblock Restricted strong convexity and weighted matrix completion: Optimal bounds with noise.
\newblock {\em The Journal of Machine Learning Research}, 13(1):1665--1697.

\bibitem[Ouyang et~al., 2022]{ouyang2022training}
Ouyang, L., Wu, J., Jiang, X., Almeida, D., Wainwright, C., Mishkin, P., Zhang, C., Agarwal, S., Slama, K., and Ray, A. (2022).
\newblock Training language models to follow instructions with human feedback.
\newblock {\em Advances in {N}eural {I}nformation {P}rocessing {S}ystems}, 35:27730--27744.

\bibitem[Parikh and Boyd, 2014]{parikh2014proximal}
Parikh, N. and Boyd, S. (2014).
\newblock Proximal algorithms.
\newblock {\em Foundations and Trends in optimization}, 1(3):127--239.

\bibitem[Park et~al., 2024]{park2024rlhf}
Park, C., Liu, M., Kong, D., Zhang, K., and Ozdaglar, A. (2024).
\newblock Rlhf from heterogeneous feedback via personalization and preference aggregation.
\newblock {\em arXiv preprint arXiv:2405.00254}.

\bibitem[Park et~al., 2015]{park2015preference}
Park, D., Neeman, J., Zhang, J., Sanghavi, S., and Dhillon, I. (2015).
\newblock Preference completion: Large-scale collaborative ranking from pairwise comparisons.
\newblock In {\em International Conference on Machine Learning}, pages 1907--1916. PMLR.

\bibitem[Plackett, 1975]{plackett1975analysis}
Plackett, R.~L. (1975).
\newblock The analysis of permutations.
\newblock {\em Journal of the Royal Statistical Society: Series C (Applied Statistics)}, 24(2):193--202.

\bibitem[Shah, 2022]{shah2022challenges}
Shah, N.~B. (2022).
\newblock Challenges, experiments, and computational solutions in peer review.
\newblock {\em Communications of the ACM}, 65(6):76--87.

\bibitem[Su et~al., 2024]{su2024analysis}
Su, B., Zhang, J., Collina, N., Yan, Y., Li, D., Cho, K., Fan, J., Roth, A., and Su, W.~J. (2024).
\newblock Analysis of the icml 2023 ranking data: Can authors' opinions of their own papers assist peer review in machine learning?
\newblock {\em arXiv preprint arXiv:2408.13430}.

\bibitem[Su, 2021]{su2021you}
Su, W. (2021).
\newblock You are the best reviewer of your own papers: An owner-assisted scoring mechanism.
\newblock {\em Advances in Neural Information Processing Systems}, 34:27929--27939.

\bibitem[Tropp, 2015]{tropp:2015}
Tropp, J.~A. (2015).
\newblock An introduction to matrix concentration inequalities.
\newblock {\em arXiv preprint arXiv:1501.01571}.

\bibitem[Wang and Fan, 2025]{wang2025robust}
Wang, B. and Fan, J. (2025).
\newblock Robust matrix completion with heavy-tailed noise.
\newblock {\em Journal of the American Statistical Association}, 120(550):922--934.

\bibitem[Wang et~al., 2016]{wang2016learning}
Wang, X., Bendersky, M., Metzler, D., and Najork, M. (2016).
\newblock Learning to rank with selection bias in personal search.
\newblock In {\em Proceedings of the 39th International ACM SIGIR conference on Research and Development in Information Retrieval}, pages 115--124.

\bibitem[Wang et~al., 2024]{wang2024ranking}
Wang, Z., Han, Y., Fang, E.~X., Wang, L., and Lu, J. (2024).
\newblock Ranking of large language model with nonparametric prompts.
\newblock {\em arXiv preprint arXiv:2412.05506}.

\bibitem[Xia, 2021]{xia2021normal}
Xia, D. (2021).
\newblock Normal approximation and confidence region of singular subspaces.
\newblock {\em Electronic Journal of Statistics}, 15(2):3798--3851.

\bibitem[Xia and Yuan, 2021]{xia2021statistical}
Xia, D. and Yuan, M. (2021).
\newblock Statistical inferences of linear forms for noisy matrix completion.
\newblock {\em Journal of the Royal Statistical Society: Series B (Statistical Methodology)}, 83(1):58--77.

\bibitem[Yan et~al., 2024]{yan2024isotonic}
Yan, Y., Su, W.~J., and Fan, J. (2024).
\newblock Isotonic mechanism for exponential family estimation in machine learning peer review.

\bibitem[Zheng and Lafferty, 2016]{zheng2016convergence}
Zheng, Q. and Lafferty, J. (2016).
\newblock Convergence analysis for rectangular matrix completion using burer-monteiro factorization and gradient descent.
\newblock {\em arXiv preprint arXiv:1605.07051}.

\end{thebibliography}

\newpage

\appendix

{\LARGE \bf
\begin{center}
    APPENDIX
\end{center}
}

This supplementary appendix contains additional materials omitted from the main text, including synthetic data experiments (\ref{sec:syntheticdataexperiment}), additional theoretical assumptions and results (\ref{sec:topK}--\ref{sec:additionalrankinginferences}), and technical proofs (\ref{sec:proofs}). 

\section{Synthetic Data Experiments}\label{sec:syntheticdataexperiment}
We generate $\bf\Theta^\star$ using the following data generating process: for $(i,j)\in [d_1]\times [d_2]$,
	\begin{align}
        \Theta^\star_{i,j} =g_i({\bm\eta}_j)={\bm\alpha}_i'{\bm\beta}_j+ h_i\left( \zeta_j  \right), \quad \text{ where }\ \ h_i(\zeta_j)\coloneqq \sum_{m=1}^{\infty} \frac{|W_{i,m}|}{m^2}\cdot \sin(m\zeta_j). \label{eq:simul_DGP}
	\end{align}
Here, $W_{i,m}$ and $\zeta_j$ are independently generated from standard normal distribution, and ${\bm\alpha}_i'=[1, a_i]$ and ${\bm\beta}_j'=[b^1_j, b^2_j]$, where $a_i$, $b^1_j$, and $b^2_j$ are independently drawn from $\mathrm{Uniform}[0,1]$. Once $\bf\Theta^\star$ is generated, we demean and rescale each row to ensure $\bf\Theta^\star {\bf1}={\bf0}$ and $\norm{\bf\Theta^\star}_{\infty}=1.5$, so that each row is centered and normalized. Throughout the experiments, we set $\lambda=\sqrt{0.5\bar{d}/\bar{p}}$.

{\bf Convergence rates.} We first examine the convergence rates for our score matrix estimator in the Frobenius norm and the entrywise max norm (see Theorem \ref{thm:errorboundforTheta}). We also report the estimation error for the aggregated preference. The number of items is chosen as $d_2 \in \{20, 30, \ldots, 70, 80\}$, and the number of users is set to $d_1 = d_2(d_2 - 1)/2$. In the first experiment, we set $p_i = 0.8$ for one third of the users and $p_i = 0.4$ for the remaining users. In the second experiment, we instead use the lower probabilities $0.4$ and $0.2$ in place of $0.8$ and $0.4$, respectively. For each design, we repeat the procedure 20 times, regenerating $\mathbf{\Theta}^\star$ in each replication. Figure \ref{fig:error_bound} reports the experimental results, showing that all errors converge to zero as the dimension and the number of evaluators increase.

\begin{figure}[htbp]
    \centering
    \includegraphics[width=\textwidth, height=6cm]{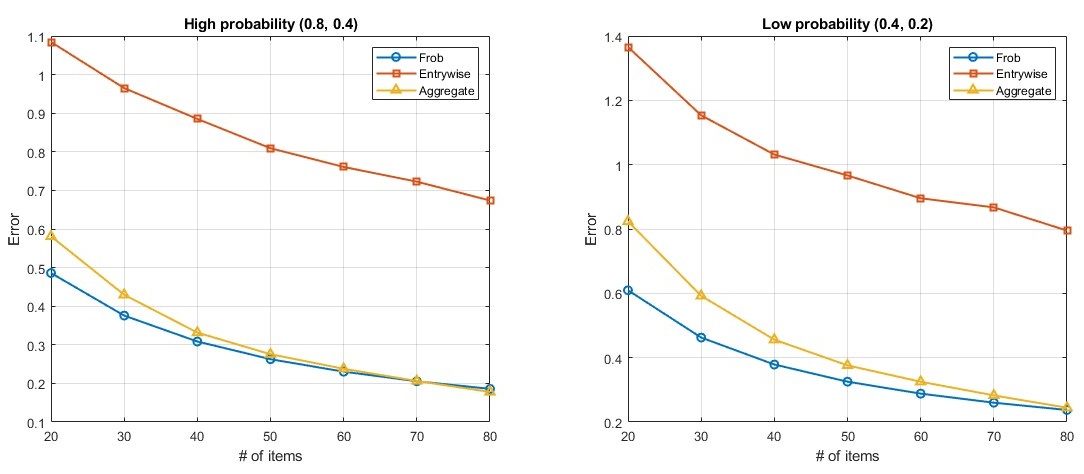}
    \caption{The left panel corresponds to the first experiment with higher probabilities 
($0.8$ for one third of the users and $0.4$ for the remaining users), while the right panel corresponds to the lower probabilities 
(0.4 and 0.2 in lieu of 0.8 and 0.4, respectively). ``Frob'' denotes the simulation average of the estimation error measured as 
$\lVert \widehat{\mathbf{\Theta}} - \mathbf{\Theta}^{\star} \rVert_{\mathrm{F}} / \sqrt{d_1 d_2}$, ``Entrywise'' denotes the simulation average of $\lVert \widehat{\mathbf{\Theta}} - \mathbf{\Theta}^\star \rVert_{\infty} / \lVert \mathbf{\Theta}^\star \rVert_{\infty}$, and ``Aggregate'' denotes the simulation average of
$\big\lVert \sum_{i=1}^{d_1} 
  (\widehat{\mathbf{\Theta}}_{i,\cdot} - \mathbf{\Theta}^{\star}_{i,\cdot}) 
\big\rVert \big/ 
\big\lVert \sum_{i=1}^{d_1} \mathbf{\Theta}^{\star}_{i,\cdot} 
\big\rVert$.}
    \label{fig:error_bound}
\end{figure}

{\bf Asymptotic normality.} 
We now validate the asymptotic normality results for both aggregated and individual score gaps (see Propositions \ref{prop:feasibleCLT} and \ref{prop:feasibleCLT_indiv}). We continue to use the same DGP as before. In both cases, we estimate the score gap between Item 1 and Item 2. Also, for the individual case, we select User 1. These choices do not restrict generality since $\mathbf{\Theta}^\star$ is randomly regenerated each time. For the aggregated score gap, we assume a homogeneous probability $p_i=p$ for each $i$, with $p \in \{0.4, 0.8\}$. For the individual score gap case, we use $p \in \{0.6, 0.8\}$. The number of items is chosen as $d_2 \in \{20, 40\}$, and the number of users is set to $d_1 = d_2(d_2-1)/2$, as before. For each setting, we conduct 500 iterations, regenerating $\mathbf{\Theta}^\star$ each time. For the individual case, we employ a simple rank estimator: the rank $q$ is defined as the number of singular values of $\widehat{\mathbf{M}}$ that exceed $10\%$ of the largest singular value of $\widehat{\mathbf{M}}$. Figure \ref{fig:asymp_normal_market} and \ref{fig:asymp_normal_indiv} present the results, with the histograms closely approximating the standard normal distribution.

\begin{figure}[htbp]
    \centering
    \begin{subfigure}{0.45\textwidth}
        \centering
        \includegraphics[width=\linewidth, height=4cm]{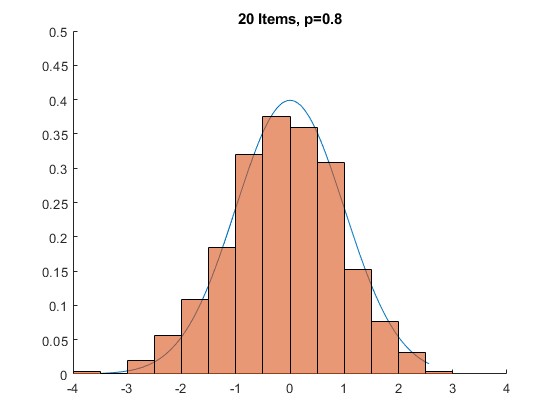}
    \end{subfigure}
     \begin{subfigure}{0.45\textwidth}
        \centering
        \includegraphics[width=\linewidth, height=4cm]{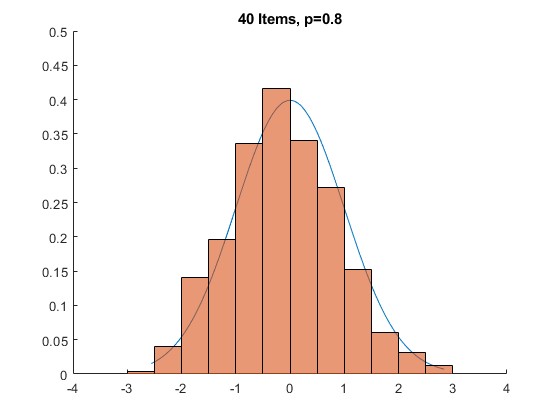}
    \end{subfigure}

    \begin{subfigure}{0.45\textwidth}
        \centering
        \includegraphics[width=\linewidth, height=4cm]{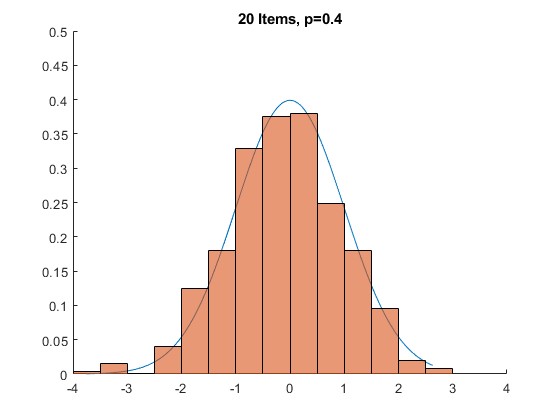}
    \end{subfigure}
     \begin{subfigure}{0.45\textwidth}
        \centering
        \includegraphics[width=\linewidth, height=4cm]{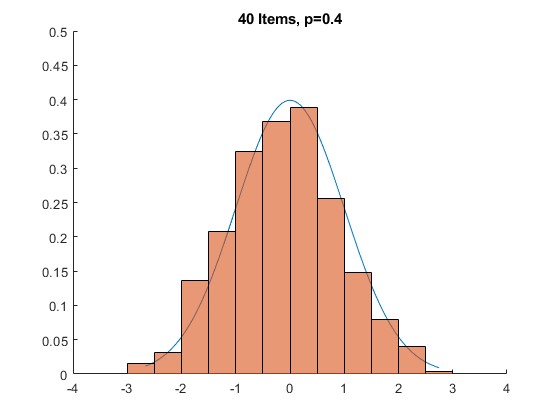}
    \end{subfigure}

   \caption{Histograms for the aggregated score gap. The first and second rows correspond to probabilities $p=0.8$ and $p=0.4$, respectively. From left to right, the number of items $d_2$ increases from 20 to 40. Each panel depicts the standardized error $\widehat{v}_k^{-1/2} \left( \tfrac{1}{d_1} \sum_{i=1}^{d_1} \widehat{M}^{\mathrm{NR}}_{i,1} 
- \tfrac{1}{d_1} \sum_{i=1}^{d_1} \Mstar_{i,1} \right)$. The solid blue line represents the standard normal distribution.} 
    \label{fig:asymp_normal_market}
\end{figure}

\begin{figure}[htbp]
    \centering
    \begin{subfigure}{0.45\textwidth}
        \centering
        \includegraphics[width=\linewidth, height=3.8cm]{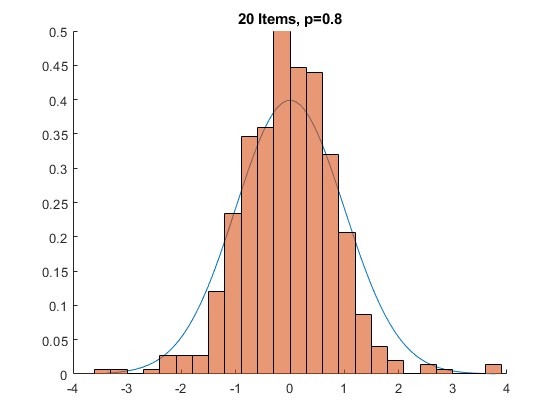}
    \end{subfigure}
     \begin{subfigure}{0.45\textwidth}
        \centering
        \includegraphics[width=\linewidth, height=3.8cm]{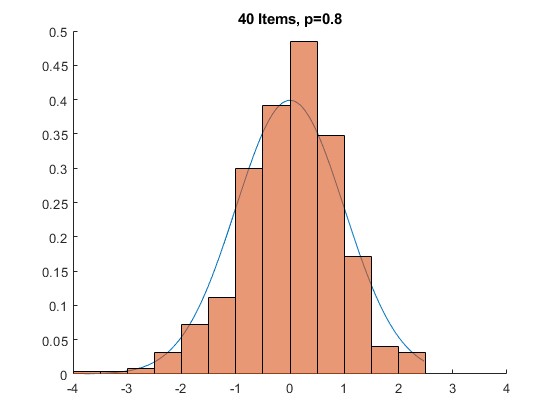}
    \end{subfigure}

    \begin{subfigure}{0.45\textwidth}
        \centering
        \includegraphics[width=\linewidth, height=3.8cm]{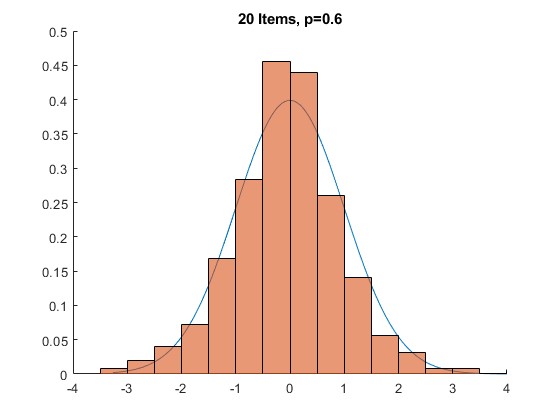}
    \end{subfigure}
     \begin{subfigure}{0.45\textwidth}
        \centering
        \includegraphics[width=\linewidth, height=3.8cm]{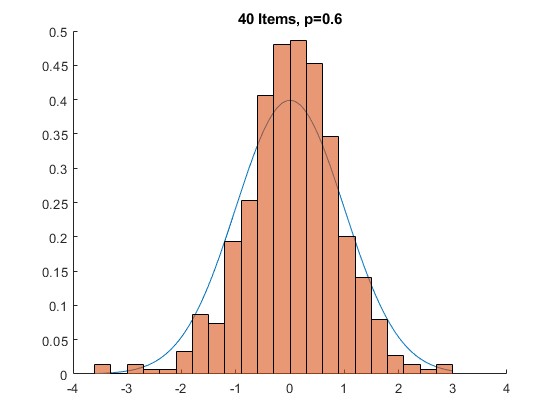}
    \end{subfigure}

    \caption{Histograms for the individual score gap. As in Figure \ref{fig:asymp_normal_market}, the first and second rows correspond to probabilities $p=0.8$ and $p=0.6$, respectively, and from left to right, $d_2$ increases from 20 to 40. Each panel depicts the standardized error $\widehat{w}_{1,1}^{-\frac{1}{2}}  \left(  \widehat{M}^{\proj}_{1,1}  -M^\star_{1,1} \right)$ for each setting. Again, the solid blue line represents the standard normal distribution.}  
    \label{fig:asymp_normal_indiv}
\end{figure}

{\bf Ranking confidence intervals.} Next, we perform experiments for the ranking confidence intervals for aggregated and individual levels (see Section \ref{sec:ranking_inference} and Section \ref{sec:market_ranking}). We maintain the setting from the previous experiments for asymptotic normality, except for choosing $d_2$ in $\{20, 30, 40\}$ and $p=0.8$ for all experiment settings. For both cases, we construct the confidence ranking interval for Item 1 out of all items. That is, we set $\calJ=\{1\}$ and $\calK=[d_2]$ for both cases. Again, the true $\bf\Theta^\star$ is being regenerated, so we do not lose generality. To compute the critical values, we conduct the bootstrap 2,000 times, targeting the confidence level $(1-\alpha) \times 100\%$ at 95\%. The entire procedure is repeated 500 times. Table \ref{tab:simul_ranking_interval} provides the simulation results. The true ranking is almost always contained within the constructed confidence interval throughout experiments, which is consistent with the conservative construction of the confidence intervals for tanks.  Note that the ranges of item ranks increase linearly with $d_2$, yet the confidence interval length increases much more slowly than the number of items in both the aggregated and individual cases (see the columns with title {\sl length/$d_2$}). Thus, as the dimension grows, the confidence intervals become increasingly informative.

\begin{table}[!htp]
\centering
\small
\begin{tabular}{c@{\hskip 0.5cm}ccc@{\hskip 1cm}ccc}
\toprule
\multirow{2}{*}{\textbf{\# of items}} & \multicolumn{3}{c@{\hskip 1cm}}{\textbf{Aggregated preference}} & \multicolumn{3}{c}{\textbf{Individual preference (User 1)}} \\
\cmidrule(r){2-4} \cmidrule(l){5-7}
 & CI Length & Length/$d_2$ & Coverage & CI Length & Length/$d_2$ & Coverage \\
\midrule
$d_2=20$ &  10.2360  & 51.18\%    & 1   & 16.3840  & 81.92\%   & 1\\
$d_2=30$  & 11.5460  & 38.49\%   & 1  & 19.1400    & 63.80\%   & 0.994 \\
$d_2=40$  & 12.4460  & 31.11\%   & 1  & 21.7340    & 54.33\%  & 1 \\
\bottomrule
\end{tabular}
\caption{Ranking confidence intervals. ``CI Length'' reports the average length of the ranking confidence interval for item~1, 
averaged over 500 repetitions. 
``Length/$d_2$'' is the ratio of this average length to the number of items. 
``Coverage'' denotes the proportion of repetitions in which the true rank of item~1 
is contained within the constructed confidence interval. }
\label{tab:simul_ranking_interval}
\end{table}
 
{\bf Sample splitting versus no sample splitting.} Lastly, we examine the efficiency loss incurred by sample splitting for uncertainty quantification of individual preferences (see Section~\ref{sec:individualinference}). To this end, we define the full-sample estimator $\widehat{\mathbf{M}}^{\proj,\full}$, constructed in the same way as $\widehat{\mathbf{M}}^\proj$ but without sample splitting. That is, 
\begin{align*}
        \widehat{\bfM}^{\proj,\full} \coloneqq \argmin_{\rank(\bfM) \leq q}\lVert \widehat{\bfM}^{\mathrm{NR}}-\bfM\rVert_{\mathrm{F}}^2 =\widehat{\bfU}^{\mathrm{NR}} \widehat{\bfU}^{\mathrm{NR} \top} \widehat{\bfM}^{\mathrm{NR}} \widehat{\bfV}^{\mathrm{NR}} \widehat{\bfV}^{\mathrm{NR} \top}
    \end{align*}
    where $\widehat{\bfM}^{\mathrm{NR}}$ is defined in \eqref{eq:defofdebias}, and   $\widehat{\bfU}^{\mathrm{NR}}$ and $\widehat{\bfV}^{\mathrm{NR}}$ are the top-$q$ left and right singular vectors of $\widehat{\bfM}^{\mathrm{NR}}.$ We compute the Frobenius norm and the operator norm error of $\widehat{\bfM}^\proj$ and $\widehat{\bfM}^{\proj,\full}$. We use the same DGP as in the previous experiments, with $p_i = p \in \{0.4, 0.8\}$, 
$d_2 \in \{20,30,40,50,60\}$, and $d_1 = d_2(d_2-1)/2$. 
For each setting, the procedure is repeated 20 times. 
Figure~\ref{fig:comparison_ss_noss} reports the results. 
We find that sample splitting has little effect on the operator norm error in most settings. 
Moreover, the Frobenius norm errors are quite close, and the gap tends to shrink as the number of items increases.

\begin{figure}[htbp]
    \centering
    \includegraphics[width=\textwidth, height=7cm]{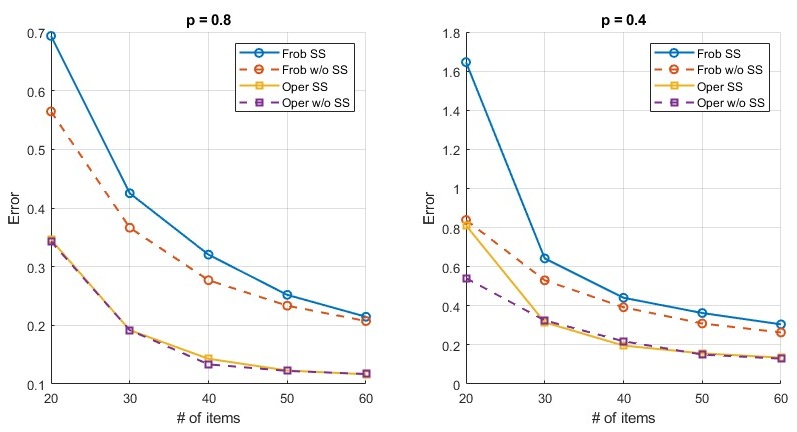}
    \caption{The left panel corresponds to the first experiment with higher probability $p=0.8$ and the right panel corresponds to the lower probability $p=0.4$.``Frob SS'' and ``Frob w/o SS'' denote the simulation average of the estimation error $\lVert\widehat{\bfM}^\proj-\bfM^\star\rVert_{\mathrm{F}}/\sqrt{d_1 d_2(d_2-1)/2}$ and $\lVert\widehat{\bfM}^{\proj,\full}-\bfM^\star\rVert_{\mathrm{F}}/\sqrt{d_1 d_2(d_2-1)/2}$, respectively. Likewise, ``Oper SS'' and ``Oper w/o SS'' denote the simulation average of the estimation error $\lVert\widehat{\bfM}^\proj-\bfM^\star\rVert/\sqrt{d_1 d_2(d_2-1)/2}$ and $\lVert\widehat{\bfM}^{\proj,\full}-\bfM^\star\rVert/\sqrt{d_1 d_2(d_2-1)/2}$, respectively.}
    \label{fig:comparison_ss_noss}
\end{figure}

\section{Top-$K$ item selection}\label{sec:topK}

Building on the error bound in Theorem~\ref{thm:errorboundforTheta}, we present a corollary on top-$K$ item selection for both individual and aggregated preferences. For each user $i\in [d_1]$, let $\Theta^\star_{i,(l)}$ be the score of $i$'s $l$th preferred item, and define $\Delta_{i,K} \coloneqq \Theta^\star_{i,(K)}-\Theta^\star_{i,(K+1)}$ for each $i \in [d_1].$ Also, define the aggregated evaluation of item $j \in [d_2]$ as $\bar{\Theta}^\star_j \coloneqq (d_1)^{-1} \sum_{i=1}^{d_1} \Theta^\star_{i,j}$ and $\bar{\Theta}^\star_{(l)}$ be the score of the $l$th preferred item. Denote $\Delta_K \coloneqq \bar{\Theta}^\star_{(K)}-\bar{\Theta}^\star_{(K+1)}.$
 
\begin{corollary}\label{cor:topK}
    Suppose the assumptions in Theorem \ref{thm:errorboundforTheta} hold and that
    \begin{align*}
        \Delta_{i,K} > C  \frac{\kappa^2 \mu R   }{\min\{d_1, d_2(d_2-1)/2\}} \sqrt{\frac{  \bar{d}\log(\bar{d})}{\bar{p} }}
    \end{align*}
for some sufficiently large $C>0$. Then, the set of top-$K$ items based on the $i$th row of \( \widehat{\bf\Theta} \) coincides with the top-$K$ items under user~$i$'s true preference, with probability at least \( 1 - O(\bar{d}^{-10}) \). Also, if we assume that
\begin{align*}
        \Delta_{K} > C  \frac{\kappa^2 \mu R  }{\min\{d_1, d_2(d_2-1)/2\}} \sqrt{\frac{  \bar{d}\log(\bar{d})}{ \bar{p}}}
    \end{align*}
for some sufficiently large $C >0$, then the set of top-$K$ items based on the average $d_1^{-1} \sum_{i=1}^{d_1}\widehat{\bf\Theta}_{i,\cdot}$ coincides with the top-$K$ items for the aggregated preference, with probability at least $1-O(\bar{d}^{-10})$.
\end{corollary}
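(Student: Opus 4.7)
The plan is to deduce the top-$K$ selection consistency directly from the entrywise error bound already established in Theorem \ref{thm:errorboundforTheta}. The key observation is a standard ``gap vs.\ sup-norm error'' argument: if the preference-score estimation error, measured in $\ell_\infty$, is strictly smaller than half the separation $\Delta_{i,K}$ (respectively $\Delta_K$), then the sets of top-$K$ items cannot be perturbed.

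\textbf{Step 1 (individual case).} Condition on the high-probability event from Theorem \ref{thm:errorboundforTheta}, on which
\[
\norm{\widehat{\bf\Theta}-{\bf\Theta}^\star}_{\infty} \;\leq\; C'\,\frac{\kappa^2\mu R}{\min\{d_1,d_2(d_2-1)/2\}}\sqrt{\frac{\bar d\log(\bar d)}{\bar p}}
\]
for some absolute constant $C'>0$. Fix a user $i$ and let $\calS_i^\star$ denote the true top-$K$ set of items for row $i$ and $\widehat{\calS}_i$ the top-$K$ set based on $\widehat{\bf\Theta}_{i,\cdot}$. For any $j\in\calS_i^\star$ and any $j'\notin\calS_i^\star$, by the definition of $\Delta_{i,K}$ we have $\Theta^\star_{i,j}-\Theta^\star_{i,j'}\geq\Delta_{i,K}$. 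By the triangle inequality,
\[
\widehat\Theta_{i,j}-\widehat\Theta_{i,j'} \;\geq\; \Theta^\star_{i,j}-\Theta^\star_{i,j'}-2\norm{\widehat{\bf\Theta}-{\bf\Theta}^\star}_\infty \;\geq\; \Delta_{i,K}-2\norm{\widehat{\bf\Theta}-{\bf\Theta}^\star}_\infty.
\]
Choosing the constant $C$ in the gap assumption larger than $2C'$ makes the right-hand side strictly positive, so every in-set item is ranked above every out-of-set item by $\widehat{\bf\Theta}_{i,\cdot}$. This forces $\widehat{\calS}_i=\calS_i^\star$.

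\textbf{Step 2 (aggregated case).} The only modification is to replace $\widehat{\bf\Theta}_{i,\cdot}$ and ${\bf\Theta}^\star_{i,\cdot}$ by their column averages $\bar{\widehat\Theta}_j\coloneqq d_1^{-1}\sum_i\widehat\Theta_{i,j}$ and $\bar\Theta^\star_j$. Since averaging over rows does not increase the entrywise error,
\[
\max_j|\bar{\widehat\Theta}_j-\bar\Theta^\star_j|\;\leq\;\norm{\widehat{\bf\Theta}-{\bf\Theta}^\star}_\infty,
\]
the identical two-item comparison argument, now using $\Delta_K$ in place of $\Delta_{i,K}$, yields the claim on $1-O(\bar d^{-10})$.

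\textbf{What could go wrong.} There is essentially no obstacle here beyond bookkeeping: the result is a direct consequence of the entrywise bound, and the real work has already been absorbed into Theorem \ref{thm:errorboundforTheta}. The only subtle point is that for the aggregated case one might hope to gain a $\sqrt{d_1}$ factor by averaging the errors, but because the regularized estimator induces cross-row dependence (and the leave-one-out analysis underlying Theorem \ref{thm:errorboundforTheta} does not attempt to exploit inter-row cancellation), the safe route is to bound each aggregated column by the uniform entrywise bound, which is already enough to match the stated gap condition.
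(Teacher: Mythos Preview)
Your proof is correct and is exactly the standard gap-vs-$\ell_\infty$ argument the paper has in mind; in fact the paper does not even spell out a separate proof for this corollary, treating it as an immediate consequence of the entrywise bound in Theorem~\ref{thm:errorboundforTheta}. Your observation that averaging cannot gain a $\sqrt{d_1}$ factor here (and that the stated gap condition does not require it) is also on target.
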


\begin{remark}
Corollary \ref{cor:topK} relies on a sufficiently large score gap between the $K$th and $K+1$th ranked items. Even when this gap condition is inapplicable, we can still conduct inference on the top-$K$ items by leveraging the uncertainty quantification results developed in Section~\ref{sec:ranking_inference} and in Section~\ref{sec:additionalrankinginferences}.
\end{remark}
 
\section{Low-rank approximation of $\bfM^\star$}\label{sec:LowrankM}

Consider a sieve approximation of $\bfM^\star$:
\begin{align*}
 M^\star_{i,k} = \widetilde{g}_i(\widetilde{\bff}_k) = \sum_{l=1}^q \gamma_{i,l}\psi_l(\widetilde{\bff}_k)+\xi_{i,k}={\bm\gamma}_i^\top {\bm\psi}_k+\xi_{i,k}
\end{align*}
where $\widetilde{\bff}_k \coloneqq [{\bm\eta}_{j}^\top, {\bm\eta}_{j'}^\top]^\top$ is $2r$-dimensional, $ \widetilde{g}_i(\cdot)$ is such that $ \widetilde{g}_i(\bfx,\bfy)=g_i(\bfx)-g_i(\bfy)$ for $r$-dimensional vectors $\bfx$ and $\bfy$, ${\bm\gamma}_i \coloneqq (\gamma_{i,1}, \ldots, \gamma_{i,q})^\top \in \bbR^q$ is the sieve coefficient, ${\bm\psi}_k \coloneqq (\psi_1(\widetilde{\bff}_k), \ldots, \psi_q(\widetilde{\bff}_k))^\top \in \bbR^q$ is the basis functions, $q$ is the sieve dimension, and $\xi_{i,j}$ is the sieve approximation error. Denote ${\bf\Gamma}$ as $d_1 \times q$ matrix that stacks ${\bm\gamma}_i^\top$, ${\bf\Psi}$ as $d_2(d_2-1)/2 \times q$ matrix that stacks ${\bm\psi}_k^\top$ and $\bm\Xi$ as $d_1 \times d_2(d_2-1)/2$ matrix of $\xi_{i,k}$. Then, in the matrix notation, we can write
\begin{align*}
      \bfM^\star  =  {\bf\Gamma} {\bf\Psi}^\top + {\bm\Xi}.
\end{align*}

First,  we assume that $\{\widetilde{g}_i(\cdot)\}_{i\in [d_1]}$ belong to a H\"older class: For some $a,b,C>0,$
\begin{align*}
     \left\lbrace h : \max_{b_1+ \cdots + b_r=b}\left\vert \frac{\partial^{b}h(x)}{\partial x_1^{b_1}\cdots\partial x_r^{b_r}} - \frac{\partial^{b}h(y)}{\partial y_1^{b_1}\cdots\partial y_r^{b_r}} \right\vert \leq C \| \bfx - \bfy\|^{a}, \quad \text{for all $\bfx, \bfy$} \right  \rbrace .
\end{align*}
Similar to Section~\ref{sec:low_rank_approximation}, we assume that $q$ grows slowly to capture the dominant part of $\bfM^\star$ and the sieve approximation error becomes negligible:
\begin{assumption}\label{asp:approx_low_rank_M}
$\max_{i,k} |\xi_{i,k}| \ll  \sqrt{ \frac{1}{\max\{d_1,d_2(d_2-1)/2\}}}$.
\end{assumption}

Second, we assume $\sigma_{\max}({\bf\Gamma} {\bf\Psi}^\top)$ has standard strength and its condition number, denoted by $\vartheta$, grows slowly. Specifically, we assume the following:
\begin{assumption}\label{asp:SSVforM}
    $\sigma_{\max}({\bf\Gamma} {\bf\Psi}^\top)\asymp \sqrt{d_1d_2(d_2-1)/2}$ and $\vartheta \sqrt{\frac{\kappa^2 R \bar{d}}{ \bar{p}}}\ll \sqrt{d_1d_2(d_2-1)/2}$.
\end{assumption}

We note that Assumption \ref{asp:approx_low_rank_M}, Assumption \ref{asp:SSVforM}, and Corollary \ref{cor:errorboundforM} together imply that the leading $q$ singular values of $\bfM^\star$ grow faster than the sieve approximation error and the Frobenius norm estimation error in $\widehat{\bfM}$. As a result, the singular values of $\widehat{\bfM}$ can be used to estimate $q$ by following the rank estimation methods \citep[e.g., see][]{bai2002determining,ahn2013eigenvalue, chernozhukov2023inference,choi2024inference}.

 \section{Ranking inference for the aggregated preference}\label{sec:market_ranking}

This section elaborates on ranking inference for the aggregated preference. The analysis is similar to that for individual preferences (Section~\ref{sec:ranking_inference}) but is comparatively simpler. For simplicity, we set $\calK=[d_2]$ in this section. In the aggregated preference case, from the proof of Theorem \ref{thm:asymptoticnormality}, we have, for $j\neq j'$,
\begin{align}
    &\frac{1}{d_1} \sum_{i=1}^{d_1}\widehat{M}^{\mathrm{NR}}_{i,\bar{\calL}(j,j')}  -\frac{1}{d_1} \sum_{i=1}^{d_1}M^\star_{i,\bar{\calL}(j,j')} \nonumber \\
    &  \approx   \frac{1}{d_1} \sum_{i=1}^{d_1}  \underbrace{\frac{1}{p_i} \delta_{i,\bar{\calL}(j,j')} \left(\sigma'(M^\star_{i,\bar{\calL}(j,j')}) \right)^{-1} \left( y_{i,\bar{\calL}(j,j')}- \sigma(M^\star_{i,\bar{\calL}(j,j')}) \right)}_{\coloneqq \xi_{i,\bar{\calL}(j,j')}}\label{eq:linearexpansion}
\end{align}
where $\bar{\calL}(j, j')=\calL(j, j')$ if $j<j'$, and $\bar{\calL}(j, j')=\calL(j', j)$ if $j>j'$.

Denote the aggregated average score for item $j$ and its ranking induced by this score as $\bar{\Theta}^\star_{j}$ and $\bar{r}_{j}$, resp. We assume that there is no tie. Suppose that, for a subset of items $\calJ\in[d_2]$, we have simultaneous score gap CIs $\{[C_L(j,j'), C_U(j,j')] \}_{j \in\calJ, j'\neq j}$ such that $\bar{\Theta}^\star_{j}-\bar{\Theta}^\star_{j'} \in [C_L(j,j'), C_U(j,j')], \, \forall j \in \calJ, \, \forall j'\neq j$ with probability at least $1-\alpha.$ Then, we will have the following relation, which leads to the ranking CIs for all $j \in \calJ$:
\begin{align}
    1-\alpha & \leq \bbP \left( \bar{\Theta}^\star_{j}-\bar{\Theta}^\star_{j'}\in [C_L(j,j'), C_U(j,j')], \, \, \forall j \in \calJ, \, \forall j'\neq j \right) \nonumber \\
    & \leq \bbP \left( \bigcap_{j \in \calJ} \bigg\{1+ \underbrace{\sum_{j' \neq j}\bbOne \{C_U(j,j')<0\}}_{\text{``better than item $j$''}} \leq \bar{r}_{j} \leq d_2 - \underbrace{\sum_{j' \neq j}\bbOne \{C_L(j,j')>0\}}_{\text{``worse than item $j$''}}\bigg\} \right).\label{eq:rankinginterval}
\end{align}

In order to construct these simultaneous score gap CIs, the $(1-\alpha)$th quantile of the distribution of
\begin{align*}
    \widehat{\calT}_{\calJ} \coloneqq \max_{j \in \calJ}\max_{j' \neq j} \bigg| \frac{ 1 }{\sqrt{\widehat{v}_{\bar{\calL}(j,j')}}}\left(\frac{1}{d_1} \sum_{i=1}^{d_1}\widehat{M}^{\mathrm{NR}}_{i,\bar{\calL}(j,j')}  -\frac{1}{d_1} \sum_{i=1}^{d_1}M^\star_{i,\bar{\calL}(j,j')} \right)\bigg|
\end{align*}
is needed. Motivated by the linear expansion \eqref{eq:linearexpansion}, we define its population counterpart
\begin{align*}
    \calT_{\calJ}\coloneqq \max_{j \in \calJ}\max_{j' \neq j}\bigg|\frac{1}{\sqrt{v^\star_{\bar{\calL}(j,j')}}} \frac{1}{d_1}\sum_{i=1}^{d_1}  \xi_{i, \bar{\calL}(j,j')} \bigg|
\end{align*}
and its bootstrap counterpart
\begin{align*}
    \calG_\calJ\coloneqq \max_{j \in \calJ}\max_{j' \neq j} \bigg|\frac{1}{\sqrt{v^\star_{\bar{\calL}(j,j')}}} \frac{1}{d_1}\sum_{i=1}^{d_1}  \xi_{i, \bar{\calL}(j,j')} Z_i\bigg|
\end{align*}
where $\{Z_i\}_{i\geq1}$ are i.i.d. standard normal random variables. Let $\calG_{\calJ;(1-\alpha)}$ be the $(1-\alpha)$th quantile of $\calG_{\calJ}.$ The Gaussian multiplier bootstrap \citep[Theorem 2.2 in][]{chernozhuokov2022improved} guarantees:
\begin{align}
    |\bbP \left(\calT_\calJ > \calG_{\calJ;(1-\alpha)}\right) -\alpha|  \lesssim \left(\frac{\log^5(d_1d_2)}{\bar{p}d_1} \right)^{\frac{1}{4}}. \label{eq:bootstrap}
\end{align}
The feasible version for $\calG_\calJ$ is naturally defined as:
\begin{align*}
    \widehat{\calG}_\calJ \coloneqq \max_{j \in \calJ}\max_{j' \neq j} \bigg|\frac{1}{\sqrt{\widehat{v}_{\bar{\calL}(j,j')}}} \frac{1}{d_1}\sum_{i=1}^{d_1} \widehat{\xi}_{i,\bar{\calL}(j,j')} Z_i\bigg|
\end{align*}
where $\widehat{\xi}_{i,\bar{\calL}(j,j')}=p_i^{-1} \delta_{i,\bar{\calL}(j,j')} \left(\sigma'(\widehat{M}_{i,\bar{\calL}(j,j')}) \right)^{-1} \left( y_{i,\bar{\calL}(j,j')}- \sigma(\widehat{M}_{i,\bar{\calL}(j,j')}) \right)$. Then, we have the following result that validates the bootstrap approach.
\begin{theorem}\label{thm:marketbootstrap}
    Let $\widehat{\calG}_{\calJ;(1-\alpha)}$ be the $(1-\alpha)$th quantile of $\widehat{\calG}_\calJ.$ Suppose that the assumptions in Theorem \ref{thm:asymptoticnormality} hold. In addition, assume that $\bar{p}d_1 \gg \log^3 (\bar{d})$. Then, we have
\begin{align*}
    |\bbP \left(\widehat{\calT_\calJ} > \widehat{\calG}_{\calJ;(1-\alpha)}\right) -\alpha|  \lesssim \left(\frac{\log^5(d_1d_2)}{\bar{p}d_1} \right)^{\frac{1}{4}}+o(1)
\end{align*}
with probability at least $1-O(\bar{d}^{-9})$.
\end{theorem}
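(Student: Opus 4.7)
The plan is to decompose the claim into three approximations chained together via anti-concentration: (i) replace the feasible statistic $\widehat{\calT}_\calJ$ by the population linearized statistic $\calT_\calJ$; (ii) invoke the Gaussian multiplier bootstrap validity \eqref{eq:bootstrap} to relate $\calT_\calJ$ to the quantile $\calG_{\calJ;(1-\alpha)}$; (iii) show that $\widehat{\calG}_{\calJ;(1-\alpha)}$ is close to $\calG_{\calJ;(1-\alpha)}$. The error in each approximation will be absorbed either into the stated rate $(\log^5(d_1d_2)/(\bar{p}d_1))^{1/4}$ or into the $o(1)$ term.

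First, I would establish a uniform version of the linear expansion \eqref{eq:linearexpansion}. From the proof of Theorem \ref{thm:asymptoticnormality}, the debiased estimator admits the representation $\tfrac{1}{d_1}\sum_i \widehat M^{\mathrm{NR}}_{i,k}-\tfrac{1}{d_1}\sum_i M^\star_{i,k}=\tfrac{1}{d_1}\sum_i\xi_{i,k}+R_k$, where $R_k$ collects a second-order Taylor remainder stemming from the local expansion of $\sigma(\cdot)$ at $\widehat M_{i,k}$ around $M^\star_{i,k}$ and can be bounded by $\norm{\widehat\bfM-\bfM^\star}_\infty^2$ times a constant. Using Corollary \ref{cor:errorboundforM} together with the additional sample-size condition of Theorem \ref{thm:asymptoticnormality} and the assumption $\bar{p}d_1\gg\log^3(\bar d)$, one shows $\max_{k}|R_k|/\sqrt{v^\star_k}=o(1/\sqrt{\log(d_1d_2)})$. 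Combined with Proposition \ref{prop:feasibleCLT}, which gives $\widehat v_k=v^\star_k(1+o(1))$ uniformly in $k$ (the uniformity follows by re-examining the proof using union bounds on $O(d_2^2)$ events, each already holding with probability $1-O(\bar d^{-10})$), we conclude $|\widehat{\calT}_\calJ-\calT_\calJ|=o_{\bbP}(1/\sqrt{\log(d_1d_2)})$.

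Second, I would directly cite the high-dimensional Gaussian multiplier bootstrap theorem in \citet{chernozhuokov2022improved} applied to the independent centered summands $\{\xi_{i,\bar\calL(j,j')}\}_{i\in[d_1]}$ indexed by $(j,j')\in\calJ\times([d_2]\setminus\{j\})$. The summands are bounded (since $\sigma'$ is bounded below on the bounded range of $M^\star$ and $\delta_{i,k},y_{i,k}\in\{0,1\}$) and have variances lower-bounded by a constant times $p_i^{-1}$ entrywise, which after normalization by $\sqrt{v^\star_k}$ yields the rate in \eqref{eq:bootstrap}, valid under $\bar{p}d_1\gg\log^7(d_1d_2)$ (absorbed into the added assumption $\bar{p}d_1\gg\log^3(\bar d)$ together with the sample-size condition of Theorem \ref{thm:asymptoticnormality}).

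Third, the main technical hurdle is comparing the feasible bootstrap quantile $\widehat{\calG}_{\calJ;(1-\alpha)}$ to the infeasible one $\calG_{\calJ;(1-\alpha)}$. Conditionally on the data, both $\widehat{\calG}_\calJ$ and $\calG_\calJ$ are suprema of Gaussian processes indexed by $(j,j')$ with covariances determined by inner products of the vectors $\{\xi_{i,k}\}_i$ and $\{\widehat\xi_{i,k}\}_i$, respectively, normalized by $\sqrt{v^\star_k}$ and $\sqrt{\widehat v_k}$. By the Gaussian comparison inequality (Theorem 2 of \citet{chernozhuokov2022improved}), the Kolmogorov distance between $\calG_\calJ$ and $\widehat{\calG}_\calJ$ is controlled by the maximum covariance discrepancy. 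Explicitly, for each pair $k=\bar\calL(j,j')$ and $k'=\bar\calL(\tilde j,\tilde j')$,
\begin{align*}
\Delta_{k,k'} \;\coloneqq\; \Bigl|\tfrac{1}{d_1^2}\sum_{i}\tfrac{\widehat\xi_{i,k}\widehat\xi_{i,k'}}{\sqrt{\widehat v_k\widehat v_{k'}}} - \tfrac{1}{d_1^2}\sum_{i}\tfrac{\xi_{i,k}\xi_{i,k'}}{\sqrt{v^\star_k v^\star_{k'}}}\Bigr|.
\end{align*}
Writing $\widehat\xi_{i,k}-\xi_{i,k}=p_i^{-1}\delta_{i,k}[(\sigma'(\widehat M_{i,k}))^{-1}-(\sigma'(M^\star_{i,k}))^{-1}](y_{i,k}-\sigma(M^\star_{i,k}))+p_i^{-1}\delta_{i,k}(\sigma'(\widehat M_{i,k}))^{-1}(\sigma(M^\star_{i,k})-\sigma(\widehat M_{i,k}))$, both differences are controlled pointwise by $\norm{\widehat\bfM-\bfM^\star}_\infty$ times bounded factors, via the Lipschitz property of $\sigma$ and $1/\sigma'$ on the bounded range of $\bfM^\star$ (using that $\norm{\widehat\bfM-\bfM^\star}_\infty=o(1)$ from Corollary \ref{cor:errorboundforM}). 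Bernstein's inequality applied to the cross-product sums, together with the variance ratio $\widehat v_k/v^\star_k=1+o(1)$ uniformly, yields $\max_{k,k'}\Delta_{k,k'}=o(1/\log^2(d_1d_2))$ with probability $1-O(\bar d^{-9})$. Plugging into the Gaussian comparison gives $\sup_t|\bbP_Z(\widehat{\calG}_\calJ\le t)-\bbP_Z(\calG_\calJ\le t)|=o(1)$, hence the quantiles agree up to $o(1)$.

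Finally, I would combine (i)--(iii) using anti-concentration of the supremum of Gaussian processes (the fact that $\calT_\calJ$ is approximately distributed as $\calG_\calJ$ together with Nazarov-type anti-concentration): $|\bbP(\widehat{\calT}_\calJ>\widehat{\calG}_{\calJ;(1-\alpha)})-\bbP(\calT_\calJ>\calG_{\calJ;(1-\alpha)})|=o(1)$, and then \eqref{eq:bootstrap} delivers the stated rate. The hardest part of the argument will be step three, i.e., controlling $\max_{k,k'}\Delta_{k,k'}$ uniformly over the $O(d_2^4)$ index pairs while keeping the probability bound at $1-O(\bar d^{-9})$; this requires a careful Bernstein bound on each $\Delta_{k,k'}$ followed by a union bound, and exploits the entrywise error control of Theorem \ref{thm:errorboundforTheta} in an essential way.
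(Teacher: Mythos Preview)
Your three-step decomposition matches the paper's structure exactly: the paper proves the theorem via Lemma \ref{lem:bootstrap_market} (your step (ii)), Lemma \ref{lem:TTmarket} (your step (i)), and Lemma \ref{lem:GGmarket} (your step (iii)), combined through anti-concentration just as you propose.

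There is one genuine concern in your step (i). You write that $R_k$ ``collects a second-order Taylor remainder \ldots and can be bounded by $\norm{\widehat\bfM-\bfM^\star}_\infty^2$ times a constant.'' This is not correct: from the decomposition \eqref{eq:Mddecomposition}, the remainder after subtracting $\tfrac{1}{d_1}\sum_i\xi_{i,k}$ also contains the \emph{first-order} term $\tfrac{1}{d_1}\sum_i(1-p_i^{-1}\delta_{i,k})(\widehat M_{i,k}-M^\star_{i,k})$ and the term from replacing $(\sigma'(\widehat M_{i,k}))^{-1}$ by $(\sigma'(M^\star_{i,k}))^{-1}$ in the score. Neither is controlled by $\norm{\widehat\bfM-\bfM^\star}_\infty^2$. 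The paper handles them via the leave-one-out decomposition in the proof of Theorem \ref{thm:asymptoticnormality} (the terms $b_1,b_2,b_3$ and $\Delta_{111},\Delta_{112},\Delta_{113}$), ultimately obtaining $R_k=o(1/\sqrt{\bar p d_1})$ uniformly in $k$. Since you do cite that proof, your conclusion survives; but your stated \emph{reason} is wrong, and a direct attempt to bound $R_k\lesssim\norm{\widehat\bfM-\bfM^\star}_\infty^2$ would fail.

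Your step (iii) takes a genuinely different route. The paper exploits that $\widehat{\calG}_\calJ$ and $\calG_\calJ$ are built from the \emph{same} Gaussian multipliers $\{Z_i\}$, so it bounds $|\widehat{\calG}_\calJ-\calG_\calJ|\le\max_{j,j'}\big|\sum_i\Delta_{i,\bar\calL(j,j')}Z_i\big|$ directly and applies a Bernstein-type inequality to each of the $O(d_2^2)$ linear forms, then anti-concentration. Your covariance-comparison approach via the Gaussian comparison inequality also works, but requires controlling $O(d_2^4)$ covariance discrepancies $\Delta_{k,k'}$; the paper's direct coupling is simpler and needs a smaller union bound.
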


We then use the bootstrap quantile $\widehat{\calG}_{\calJ;(1-\alpha)}$ for constructing the above simultaneous CIs $\{[C_L(j,j'), C_U(j,j')] \}_{j \in \calJ, j'\neq j}$. That is, we define the CIs for all $j\in \calJ$ and $j'\neq j$ as follows:
\begin{align*}
    &C_L(j,j') \coloneqq (-1)^{1-\bbOne\{j<j'\}} \frac{1}{d_1}\sum_{i=1}^{d_1}\widehat{M}^{\mathrm{NR}}_{i,\bar{\calL}(j,j')} -\widehat{\calG}_{\calJ;(1-\alpha)}\times \sqrt{\widehat{v}_{\bar{\calL}(j,j')}}; \\
    &C_U(j,j') \coloneqq (-1)^{1-\bbOne\{j<j'\}} \frac{1}{d_1}\sum_{i=1}^{d_1}\widehat{M}^{\mathrm{NR}}_{i,\bar{\calL}(j,j')} +\widehat{\calG}_{\calJ;(1-\alpha)}\times \sqrt{\widehat{v}_{\bar{\calL}(j,j'))}}.
\end{align*}

Lastly, the $(1-\alpha)\times 100$\% simultaneous CIs for $\{\bar{r}_j\}_{j \in \calJ}$ are defined as $\{[\widehat{r}_j^U, \widehat{r}_j^L]\}_{j \in \calJ}$ where
\begin{align}
    \widehat{r}_j^U \coloneqq 1+  \sum_{j' \neq j}\bbOne \{C_U(j,j')<0\} \quad   \text{and} \quad \widehat{r}_j^L \coloneqq d_2 -  \sum_{j' \neq j}\bbOne \{C_L(j,j')>0\} \quad \text{for each $j \in \calJ$},\label{eq:rankinginferenceLU}
\end{align}
as shown in \eqref{eq:rankinginterval}. 

\section{Additional Results for Ranking Inferences}\label{sec:additionalrankinginferences}
In Section~\ref{sec:ranking_inference} and Section~\ref{sec:market_ranking}, we construct two-sided CIs for item rankings for both individual and aggregated preferences. This section considers hypothesis testing based on one-sided confidence intervals. For brevity, we will focus on the aggregated preference.\footnote{As in Section~\ref{sec:ranking_inference} and Section~\ref{sec:market_ranking}, the main difference in the individual preference case is the use of the linear expansion \eqref{eq:Mprojlinearexpansion}.} As in Section~\ref{sec:market_ranking}, assume that $\calK=[d_2].$

\begin{example}[Top-$K$ placement test]\label{ex:topKplacement}
    Suppose that we are interested in testing whether a given item $j$ belongs to the set of top-$K$ items with respect to the true aggregated preference $(d_1)^{-1}\sum_{i \in [d_1]} {\bf\Theta}^\star_{i,\cdot}$. To do so, we define 
    \begin{align*}
    \widetilde{\calT}_{j} \coloneqq \max_{j' \neq j}   \frac{ 1 }{\sqrt{\widehat{v}_{\bar{\calL}(j,j')}}}  (-1)^{\bbOne\{j <j'\} }\left(\frac{1}{d_1} \sum_{i=1}^{d_1}\widehat{M}^{\mathrm{NR}}_{i,\bar{\calL}(j,j')}  -\frac{1}{d_1} \sum_{i=1}^{d_1}M^\star_{i,\bar{\calL}(j,j')} \right) .
\end{align*}
and its bootstrap counterpart
\begin{align*}
    \widetilde{\calG}_j \coloneqq  \max_{j' \neq j}  \frac{1}{\sqrt{\widehat{v}_{\bar{\calL}(j,j')}}} (-1)^{\bbOne\{j <j'\} } \frac{1}{d_1}\sum_{i=1}^{d_1} p_i^{-1} \delta_{i,\bar{\calL}(j,j')} \left(\sigma'(\widehat{M}_{i,\bar{\calL}(j,j')}) \right)^{-1} \left( y_{i,\bar{\calL}(j,j')}- \sigma(\widehat{M}_{i,\bar{\calL}(j,j')}) \right) Z_i 
\end{align*}
where $\{Z_i\}_{i\geq1}$ are i.i.d. standard normal random variables. Denote the $(1-\alpha)$th quantile of $\widetilde{\calG}_j$ as $\widetilde{\calG}_{j; (1-\alpha)}$ and define
\begin{align*}
    C_L(j,j') &\coloneqq (-1)^{\bbOne\{j<j'\}} \frac{1}{d_1}\sum_{i=1}^{d_1}\widehat{M}^{\mathrm{NR}}_{i,\bar{\calL}(j,j')} -\widetilde{\calG}_{j;(1-\alpha)}\times \sqrt{\widehat{v}_{\bar{\calL}(j,j'))}}; \\
    \widetilde{r}_j^U &\coloneqq 1+  \sum_{j' \neq j}\bbOne \{C_L(j,j')>0\}.
\end{align*}
Consider the hypotheses
\begin{align*}
    H_0: \bar{r}_j \leq K \quad \text{versus} \quad H_1: \bar{r}_j > K.
\end{align*}
Similar to Section~\ref{sec:market_ranking}, we can show that $\widetilde{r}_j^U \leq \bar{r}_j$ with probability approaching $1-\alpha$. This implies that, under the null hypothesis, $\widetilde{r}_j^U \leq K$ holds with probability approaching at least $1-\alpha$. Therefore, we can reject the null hypothesis if $\widetilde{r}_j^U > K$.
\end{example}

 \begin{example}[Sure screening of top-$K$ candidates]\label{ex:surescreening}
 Suppose now that we are interested in constructing a confidence set that includes all top-$K$ items with respect to the true aggregated preference. For this purpose, we set $\calJ=\calK=[d_2]$, and define 
    \begin{align*}
    \widetilde{\calT}  \coloneqq \max_{j \in \calJ}\max_{j' \neq j}   \frac{ 1 }{\sqrt{\widehat{v}_{\bar{\calL}(j,j')}}}  (-1)^{\bbOne\{j <j'\} }\left(\frac{1}{d_1} \sum_{i=1}^{d_1}\widehat{M}^{\mathrm{NR}}_{i,\bar{\calL}(j,j')}  -\frac{1}{d_1} \sum_{i=1}^{d_1}M^\star_{i,\bar{\calL}(j,j')} \right) .
\end{align*}
and its bootstrap counterpart
\begin{align*}
    \widetilde{\calG}  \coloneqq \max_{j \in \calJ} \max_{j' \neq j}  \frac{1}{\sqrt{\widehat{v}_{\bar{\calL}(j,j')}}} (-1)^{\bbOne\{j <j'\} } \frac{1}{d_1}\sum_{i=1}^{d_1} p_i^{-1} \delta_{i,\bar{\calL}(j,j')} \left(\sigma'(\widehat{M}_{i,\bar{\calL}(j,j')}) \right)^{-1} \left( y_{i,\bar{\calL}(j,j')}- \sigma(\widehat{M}_{i,\bar{\calL}(j,j')}) \right) Z_i 
\end{align*}
where $\{Z_i\}_{i\geq1}$ are i.i.d. standard normal random variables. Denote the $(1-\alpha)$th quantile of $\widetilde{\calG}$ as $\widetilde{\calG}_{(1-\alpha)}$ and define
\begin{align*}
    C_L(j,j') &\coloneqq (-1)^{\bbOne\{j<j'\}} \frac{1}{d_1}\sum_{i=1}^{d_1}\widehat{M}^{\mathrm{NR}}_{i,\bar{\calL}(j,j')} -\widetilde{\calG}_{(1-\alpha)}\times \sqrt{\widehat{v}_{\bar{\calL}(j,j'))}}; \\
    \widetilde{r}_j^U &\coloneqq 1+  \sum_{j' \neq j}\bbOne \{C_L(j,j')>0\}.
\end{align*}
Lastly, define
$$\widetilde{S}_K \coloneqq \bigg\{j \in[d_2]:\widetilde{r}_j^U\leq K  \bigg\}.$$
Then, similar to Section~\ref{sec:market_ranking}, $\widetilde{r}_j^U \leq \bar{r}_j$ for all $j\in\calJ=[d_2]$ with probability approaching $1-\alpha$, which implies that all item $j$ such that $\bar{r}_j \leq K$ belong to $\widetilde{S}_K$ with probability approaching at least $1-\alpha$.
\end{example}

\section{Proofs of Main Results}\label{sec:proofs}

\begin{theorem}\label{thm:errorboundsforL}
 Suppose the assumptions in Theorem \ref{thm:errorboundforTheta} hold. Then, with probability at least $1-O(\bar{d}^{-10})$, we have
    \begin{align*}
        \norm{\widehat{\bfL}-\bfL^{\star}}_{\mathrm{F}} &\lesssim   \sqrt{\frac{ \kappa^2 R\bar{d}}{\bar{p}   }}; \quad
       \norm{\widehat{\bfL}-\bfL^{\star}} \lesssim   \sqrt{\frac{\kappa^2 \bar{d}}{\bar{p}   }};\\
    \norm{\widehat{\bfL}-\bfL^{\star}}_{\infty} &\lesssim  \frac{\kappa^2 \mu R  }{\min\{d_1, d_2(d_2-1)/2\}} \sqrt{\frac{  \bar{d}\log(\bar{d})}{\bar{p}  }}.
    \end{align*}
\end{theorem}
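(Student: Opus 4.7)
The plan is to adapt the nonconvex surrogate strategy outlined in Section~\ref{subsec:leastsquare} to the indirectly regularized least squares problem~\eqref{eq:convexob}. Factor $\bfL = \bfX \bfY^\top$ with $\bfX \in \bbR^{d_1 \times R}$ and $\bfY \in \bbR^{d_2(d_2-1)/2 \times R}$, giving the nonconvex surrogate $f(\bfX, \bfY)$ in~\eqref{eq:nonconvexLS}. The gradient of the data-fit loss, evaluated on the observed entries $(i,k)\in S$, takes the form $p_i^{-1}\bigl([\bfX\bfY^\top]_{i,k} - y_{i,k}\bigr)\bfe_i\bfe_k^\top$, which is \emph{linear} in $\bfX\bfY^\top$. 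This linearity is the structural advantage over the MLE in Section~\ref{sec:tech_challenges}: at any factorization $(\bfX,\bfY)$, the loss gradient splits into a component inside the tangent subspace $T$ spanned by $\bfX$ and $\bfY$ plus a pure residual/noise component, so the spectral deviation condition~\eqref{eq:spectral_deviation} reduces to a matrix concentration question.

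The first step is the Frobenius and operator norm bounds. Running gradient descent on $f$ initialized at a factorization of $\bfL^\star$, I would decompose $y_{i,k} = L^\star_{i,k} + \varepsilon_{i,k} + \bigl(y_{i,k}-\sigma(M^\star_{i,k})\bigr)$, so that the effective ``noise'' is the sum of the bounded sampling residual and the small deterministic sieve error $\bm{\calE}$. Matrix Bernstein on the reweighted and projected noise, together with Assumption~\ref{asp:sieve_assumptions}(i) which forces $\|\bm{\calE}\|\lesssim\sqrt{\bar{d}}$, shows that the chosen $\lambda \asymp \sqrt{\bar{d}/\bar{p}}$ dominates the noise spectral norm. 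A standard restricted strong convexity argument on $T$, combined with the rank-$R$ structure and condition number $\kappa$ of $\bfL^\star$, then yields the iterate bounds $\|\bfX^t\bfY^{t\top}-\bfL^\star\|_{\mathrm{F}}\lesssim\sqrt{\kappa^2 R\bar{d}/\bar{p}}$ and $\|\bfX^t\bfY^{t\top}-\bfL^\star\|\lesssim\sqrt{\kappa^2\bar{d}/\bar{p}}$.

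The second and technically heavier step is the entrywise bound. I would construct leave-one-out sequences $(\bfX^{(m)}, \bfY^{(m)})$ for each row index $m\in[d_1]$ and each column index $m\in[d_2(d_2-1)/2]$, replacing the stochastic measurement on the $m$-th row or column by its population version while keeping all other rows and columns unchanged. An induction carried out in parallel over the original and leave-one-out trajectories---tracking $\|\bfX^t\bfY^{t\top}-\bfL^\star\|_{2,\infty}$, the leave-one-out proximity $\|\bfX^t-\bfX^{(m),t}\bfH\|_{\mathrm{F}}$ for a suitable rotation $\bfH$, and the resulting entrywise error---yields $\|\bfX^t\bfY^{t\top}-\bfL^\star\|_\infty\lesssim\tfrac{\kappa^2\mu R}{\min\{d_1,d_2(d_2-1)/2\}}\sqrt{\bar{d}\log(\bar{d})/\bar{p}}$ through the incoherence bound in Lemma~\ref{lem:incoherence}.

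Finally, I would transfer these bounds to the convex minimizer $\widehat{\bfL}$ along the lines of \citet{chen2020noisy}. Choosing the iterate $(\widetilde{\bfX},\widetilde{\bfY})$ whose gradient norm is polynomially small, the linear structure of the loss gradient ensures that the off-tangent part $\calP_{T^\perp}\bigl(\nabla_{\bfL}\ell_{\mathrm{LS}}(\widetilde{\bfX}\widetilde{\bfY}^\top)\bigr)$ reduces to a projected noise matrix of operator norm at most $\lambda/2$ with high probability, which certifies that $\|\widetilde{\bfX}\widetilde{\bfY}^\top-\widehat{\bfL}\|_{\mathrm{F}}$ is a lower-order term, so all three bounds transfer by the triangle inequality. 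The main obstacle I anticipate is cleanly separating the binary comparison noise from the sieve approximation error throughout the leave-one-out induction: because $\bm{\calE}$ is deterministic it cannot be absorbed by concentration, and only the stringent magnitude condition $\max_{i,k}|\varepsilon_{i,k}|\ll 1/\sqrt{\bar{d}}$ in Assumption~\ref{asp:sieve_assumptions}(i) keeps its contribution subordinate to the noise-driven rates at every stage.
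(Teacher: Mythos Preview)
Your proposal is correct and follows essentially the same approach as the paper: the paper's proof of this theorem is a short wrapper that invokes exactly the machinery you describe---the nonconvex gradient descent iterates with leave-one-out analysis (Lemmas~\ref{LemmaB1} and~\ref{LemmaB11}), the convex--nonconvex transfer lemma (Lemma~\ref{LemmaA1}) applied at the iterate $t_*$ minimizing $\norm{\nabla f(\bfX^t,\bfY^t)}_{\mathrm{F}}$, and a final triangle inequality. Your anticipated obstacle regarding the sieve error $\bm{\calE}$ is also handled in the paper precisely as you suggest, via the crude Frobenius bound $\norm{\sum_{(i,k)\in S}\varepsilon_{i,k}\bfe_i\bfe_k^\top}_{\mathrm{F}}\lesssim\sqrt{\bar{p}\bar{d}}$ enabled by Assumption~\ref{asp:sieve_assumptions}(i).
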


\begin{proof} 
 We prove the entrywise max norm error bound, as the other two cases can be proved similarly. This proof uses the nonconvex gradient descent iterates defined in Section \ref{sec:sectionA}. First, note that \eqref{Prelim2}, Lemma \ref{LemmaB1} and Lemma \ref{LemmaB11} imply
\begin{align*}
     \norm{\bfX^t \bfY^{t \top} - \bfL^\star}_{\infty} &\leq \norm{\bfX^t \bfH^t - \bfX^\star}_{\infty} \norm{\bfY^t \bfH^t}_{\infty} + \norm{\bfX^\star}_{\infty} \norm{\bfY^t \bfH^t - \bfY^\star}_{\infty} \\
     &\lesssim \frac{\mu R \sigma^\star_{\max} }{\min\{d_1, d_2(d_2-1)/2\}} \kappa \sqrt{\frac{  \bar{d}\log(\bar{d})}{\bar{p} (\sigma^\star_{\min})^2 }}
\end{align*}
uniformly for all $t \geq 0$ with probability at least $1-O(\bar{d}^{-10})$. 

Second, we pick $t_* \coloneqq \argmin_{0\leq t < t_0}\norm{\nabla f (\bfX^t,\bfY^t)}_{\mathrm{F}}$. Then, Lemma \ref{LemmaA1} and Lemma \ref{LemmaB1} yield
\begin{align*}
    \norm{\bfX^{t_*} \bfY^{t_* \top}-\widehat{\bfL}}_{\infty} \leq \norm{\bfX^{t_*} \bfY^{t_* \top}-\widehat{\bfL}}_{\mathrm{F}} \lesssim \frac{\kappa \lambda}{\bar{d}^5} \ll \frac{ \mu R \sigma^\star_{\max} }{\min\{d_1, d_2(d_2-1)/2\}} \kappa\sqrt{\frac{  \bar{d}\log(\bar{d})}{p (\sigma^\star_{\min})^2 }}.
\end{align*}
Therefore, we reach 
\begin{align*}
    \norm{\widehat{\bfL}-\bfL^{\star}}_{\infty} \leq \norm{\bfX^{t_*} \bfY^{t_* \top}-\widehat{\bfL}}_{\infty}+ \norm{\bfX^{t_*} \bfY^{t_* \top}- \bfL^\star}_{\infty} \lesssim \frac{ \mu R \sigma^\star_{\max} }{\min\{d_1, d_2(d_2-1)/2\}} \kappa \sqrt{\frac{  \bar{d}\log(\bar{d})}{p (\sigma^\star_{\min})^2 }}
\end{align*}
with probability at least $1-O(\bar{d}^{-10})$.   
\end{proof}

 \begin{corollary}\label{cor:errorboundforM}
Under the assumption in Theorem \ref{thm:errorboundforTheta}, we have with probability at least $1-O(\bar{d}^{-10})$,
\begin{align*}
      \norm{\widehat{\bfM}-\bfM^{\star}}_{\mathrm{F}}   \lesssim    \sqrt{ \frac{\kappa^2 R \bar{d}}{\bar{p}}}; \quad \quad  \norm{\widehat{\bfM}-\bfM^{\star}}_{\infty}    \lesssim  \frac{\kappa^2\mu R }{\min\{d_1, d_2(d_2-1)/2\}} \sqrt{\frac{  \bar{d}\log(\bar{d})}{\bar{p}   }}.
\end{align*}
 \end{corollary}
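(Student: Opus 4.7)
The plan is to transfer the error bounds for $\widehat{\bfL}$ provided in Theorem~\ref{thm:errorboundsforL} to error bounds for $\widehat{\bfM}$ via a mean-value argument applied to the link function $\sigma^{-1}$. Recall that by Step 2 of the estimation procedure $\widehat{M}_{i,k}=\sigma^{-1}(\widehat{L}_{i,k})$ and that $\sigma(\bfM^\star)=\bfL^\star+\bm{\mathcal E}$, so $M^\star_{i,k}=\sigma^{-1}(L^\star_{i,k}+\varepsilon_{i,k})$. Writing
\[
\widehat{M}_{i,k}-M^\star_{i,k}=\sigma^{-1}(\widehat{L}_{i,k})-\sigma^{-1}(L^\star_{i,k}+\varepsilon_{i,k}),
\]
the mean value theorem yields $|\widehat{M}_{i,k}-M^\star_{i,k}|\le |(\sigma^{-1})'(\zeta_{i,k})|\cdot\bigl(|\widehat{L}_{i,k}-L^\star_{i,k}|+|\varepsilon_{i,k}|\bigr)$ for some intermediate point $\zeta_{i,k}$.

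The first step is to show that $(\sigma^{-1})'$ is uniformly bounded on the relevant range. Since $\norm{\bf\Theta^\star}_\infty$ is bounded, so is $\norm{\bfM^\star}_\infty\le 2\norm{\bf\Theta^\star}_\infty$, and hence there exists $c_0\in(0,1/2)$ with $\sigma(M^\star_{i,k})\in[c_0,1-c_0]$ uniformly in $(i,k)$. By Assumption~\ref{asp:sieve_assumptions}(i) the sieve error satisfies $\max_{i,k}|\varepsilon_{i,k}|\ll 1$, and by the entrywise bound of Theorem~\ref{thm:errorboundsforL} together with Assumption~\ref{asp:sieve_assumptions}(iii) one has $\norm{\widehat{\bfL}-\bfL^\star}_\infty=o(1)$ on an event of probability at least $1-O(\bar d^{-10})$. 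Consequently $\widehat{L}_{i,k}$ and $L^\star_{i,k}+\varepsilon_{i,k}$ both lie in $[c_0/2,1-c_0/2]$ uniformly, and therefore $|(\sigma^{-1})'(\zeta_{i,k})|=[\sigma(\sigma^{-1}(\zeta_{i,k}))(1-\sigma(\sigma^{-1}(\zeta_{i,k})))]^{-1}\lesssim 1$ uniformly in $(i,k)$ on the same event.

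The second step combines this with Theorem~\ref{thm:errorboundsforL}. For the entrywise bound,
\[
\norm{\widehat{\bfM}-\bfM^\star}_\infty\lesssim \norm{\widehat{\bfL}-\bfL^\star}_\infty+\max_{i,k}|\varepsilon_{i,k}|,
\]
and Assumption~\ref{asp:sieve_assumptions}(i) ensures $\max_{i,k}|\varepsilon_{i,k}|\lesssim R^{-s}\ll 1/\sqrt{\max\{d_1,d_2(d_2-1)/2\}}$, which is dominated by the $\bfL$-level entrywise rate from Theorem~\ref{thm:errorboundsforL} under the scaling of Assumption~\ref{asp:sieve_assumptions}(iii). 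For the Frobenius bound,
\[
\norm{\widehat{\bfM}-\bfM^\star}_{\mathrm F}\lesssim \norm{\widehat{\bfL}-\bfL^\star}_{\mathrm F}+\norm{\bm{\mathcal E}}_{\mathrm F},
\]
and $\norm{\bm{\mathcal E}}_{\mathrm F}\le\sqrt{d_1\cdot d_2(d_2-1)/2}\,\max_{i,k}|\varepsilon_{i,k}|\ll\sqrt{\min\{d_1,d_2(d_2-1)/2\}}$, which is again dominated by the $\bfL$-level Frobenius rate $\sqrt{\kappa^2 R\bar d/\bar p}$ under Assumption~\ref{asp:sieve_assumptions}(iii).

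The only subtle point is the uniform boundedness of $(\sigma^{-1})'$, since near the endpoints $0$ and $1$ this derivative blows up. This is handled by showing on the high-probability event of Theorem~\ref{thm:errorboundsforL} that all entries $\widehat{L}_{i,k}$ stay in a compact subinterval of $(0,1)$ bounded away from the endpoints, which follows from the bound on $\norm{\bf\Theta^\star}_\infty$ together with $\norm{\widehat{\bfL}-\bfL^\star}_\infty\to 0$. Once this is established, the conclusions of the corollary follow directly.
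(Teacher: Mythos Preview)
Your proposal is correct and follows essentially the same approach as the paper: both argue that $\sigma^{-1}$ has bounded derivative on the relevant range (using $\norm{\bf\Theta^\star}_\infty$ bounded and the entrywise control of $\widehat{\bfL}$ from Theorem~\ref{thm:errorboundsforL}), then transfer the $\bfL$-level bounds to $\bfM$-level bounds while absorbing the sieve error $\bm{\mathcal E}$. The only cosmetic difference is that the paper writes out a second-order Taylor expansion of $s=\sigma^{-1}$ and bounds the quadratic remainder using $|\widehat{L}_{i,k}-L^\star_{i,k}-\varepsilon_{i,k}|\ll 1$, whereas you invoke the first-order mean value theorem directly; your version is slightly cleaner and equally valid.
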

 
\begin{proof}
Define $s(x)\coloneqq \sigma^{-1}(x)$ and recall that $\widehat{\bfM}\coloneqq s(\widehat{\bfL})$. Note that, in our analysis, $s'(L^\star_{i,k})$, $s'(\widehat{L}_{i,k})$, $s''(L^\star_{i,k})$, and $s''(\widehat{L}_{i,k})$ are bounded for all $(i,k)$ with probability at least $1-O(\bar{d}^{-10})$. This is because $\norm{{\bf\Theta}^\star}_{\infty}$ is assumed to be bounded, and thus $\norm{\bfM^\star}_{\infty}$ is also bounded. By this fact and the small sieve approximation error assumption, we can conclude that the entries $L^\star_{i,k}$ are also bounded away from 0 and 1, for all $(i,k)$. In addition, the entrywise error control in  Theorem \ref{thm:errorboundsforL} ensures that the entries $\widehat{L}_{i,k}$ are also bounded away from 0 and 1 with probability at least $1-O(\bar{d}^{-10})$. 

Next, Theorem \ref{thm:errorboundsforL} and the assumptions therein imply that
\begin{align*}
 |  \widehat{L}_{i,k}-L^\star_{i,k}- \varepsilon_{i,k} |\leq  \norm{\widehat{\bfL}-\bfL^{\star}}_{\infty}+ \norm{\bm\calE}_{\infty} \lesssim \frac{\mu R \sigma^\star_{\max} }{\min\{d_1, d_2(d_2-1)/2\}} \kappa \sqrt{\frac{  \bar{d}\log(\bar{d})}{p (\sigma^\star_{\min})^2 }} \ll 1.
\end{align*}
Therefore, we can write, with probability at least $1-O(\bar{d}^{-10})$,   
\begin{align*}
    \norm{\widehat{\bfM}-\bfM^{\star}}_{\mathrm{F}}^2 & = \sum_{i=1}^{d_1} \sum_{k=1}^{d_2(d_2-1)/2} \left( s(\widehat{L}_{i,k})- s(L^\star_{i,k}+ \varepsilon_{i,k}) \right)^2\\
    & = \sum_{i=1}^{d_1} \sum_{k=1}^{d_2(d_2-1)/2} \left( s'(L^\star_{i,k}+ \varepsilon_{i,k}) (\widehat{L}_{i,k}-L^\star_{i,k}- \varepsilon_{i,k}) +\frac{1}{2}s''(x_{i,k}) (\widehat{L}_{i,k}-L^\star_{i,k}- \varepsilon_{i,k})^2 \right)^2 \\
    & \leq 2 \sum_{i=1}^{d_1} \sum_{k=1}^{d_2(d_2-1)/2}  \left(s'(L^\star_{i,k}+ \varepsilon_{i,k})\right)^2 (\widehat{L}_{i,k}-L^\star_{i,k}- \varepsilon_{i,k})^2 +\left(\frac{1}{2}s''(x_{i,k})\right)^2 (\widehat{L}_{i,k}-L^\star_{i,k}- \varepsilon_{i,k})^4  \\
    & \lesssim \sum_{i=1}^{d_1} \sum_{k=1}^{d_2(d_2-1)/2} (\widehat{L}_{i,k}-L^\star_{i,k}- \varepsilon_{i,k})^2  \leq \sum_{i=1}^{d_1} \sum_{k=1}^{d_2(d_2-1)/2} 2(\widehat{L}_{i,k}-L^\star_{i,k} )^2  + 2\varepsilon_{i,k}^2\\
    & \lesssim  \norm{\widehat{\bfL}-\bfL^{\star}}_{\mathrm{F}}^2 + \norm{\bm\calE}_{\mathrm{F}}^2  \lesssim \norm{\widehat{\bfL}-\bfL^{\star}}_{\mathrm{F}}^2
\end{align*}
for some $x_{i,k}$ between $\widehat{L}_{i,k}$ and $L^\star_{i,k}+ \varepsilon_{i,k}.$ The entrywise error bound can be established in a similar manner.

\end{proof}

\subsection{Proof of Theorem \ref{thm:errorboundforTheta}}
\begin{proof}[Proof of Theorem \ref{thm:errorboundforTheta}]
    For the Frobenius norm error bound, we have
\begin{align*}
    \norm{\widehat{\bf\Theta}-{\bf\Theta}^{\star}}_{\mathrm{F}}^2 &= \sum_{i=1}^{d_1} \sum_{j=1}^{d_2} \left(\frac{1}{d_2} \left(\sum_{j'>j} \widehat{\bfM}_{i, \calL(j,j')} -\sum_{j'<j} \widehat{\bfM}_{i, \calL(j',j)} \right)-\frac{1}{d_2} \left(\sum_{j'>j} \bfM^\star_{i, \calL(j,j')} -\sum_{j'<j} \bfM^\star_{i, \calL(j',j)} \right) \right)^2 \\
    &\leq \frac{1}{d_2} \sum_{i=1}^{d_1} \sum_{j=1}^{d_2}  \left( \sum_{j'>j} \left(\widehat{\bfM}_{i, \calL(j,j')} -  \bfM^\star_{i, \calL(j,j')}\right)^2 +  \sum_{j'<j} \left(\widehat{\bfM}_{i, \calL(j',j)}-\bfM^\star_{i, \calL(j',j)} \right)^2\right) \\
    & \leq  \frac{2}{d_2}\norm{\widehat{\bfM}-\bfM^{\star}}_{\mathrm{F}}^2
\end{align*}
where the second line uses Cauchy--Schwarz inequality. As a result, Corollary \ref{cor:errorboundforM} yields
\begin{align*}
    \norm{\widehat{\bf\Theta}-{\bf\Theta}^{\star}}_{\mathrm{F}} \lesssim \frac{1}{\sqrt{d_2}}\norm{\widehat{\bfM}-\bfM^{\star}}_{\mathrm{F}} \lesssim  \frac{1}{\sqrt{d_2}}  \sqrt{ \frac{\kappa^2 R \bar{d}}{\bar{p} }}.
\end{align*}
The entrywise error bound also follows from Corollary \ref{cor:errorboundforM}.
\end{proof}

\subsection{Proof of Theorem \ref{thm:asymptoticnormality}}
\begin{proof}[Proof of Theorem \ref{thm:asymptoticnormality}]
   We begin by applying the Taylor expansion for each $(i,k)$,
\begin{align}
    \widehat{M}^{\mathrm{NR}}_{i,k}  &= M^\star_{i,k} + \widehat{M}_{i,k}-M^\star_{i,k}+ \frac{1}{p_i} \delta_{i,k} \left(\sigma'(\widehat{M}_{i,k}) \right)^{-1} \left( y_{i,k}- \sigma(M^\star_{i,k}) \right) \nonumber\\
    & \quad + \frac{1}{p_i} \delta_{i,k} \left(\sigma'(\widehat{M}_{i,k}) \right)^{-1} \left( \sigma(M^\star_{i,k})- \sigma(\widehat{M}_{i,k}) \right) \nonumber\\
    & = M^\star_{i,k} + \widehat{M}_{i,k}-M^\star_{i,k}+ \frac{1}{p_i} \delta_{i,k} \left(\sigma'(\widehat{M}_{i,k}) \right)^{-1} \left( y_{i,k}- \sigma(M^\star_{i,k}) \right) \nonumber\\
    & \quad + \frac{1}{p_i} \delta_{i,k} \left(\sigma'(\widehat{M}_{i,k}) \right)^{-1}  \sigma'(\widehat{M}_{i,k}) \left(   M^\star_{i,k} -  \widehat{M}_{i,k}  \right) + \frac{1}{2p_i} \delta_{i,k} \left(\sigma'(\widehat{M}_{i,k}) \right)^{-1}  \sigma''(x_{i,k}) \left(   M^\star_{i,k} -  \widehat{M}_{i,k}  \right)^2 \nonumber\\
    & = M^\star_{i,k} + \frac{1}{p_i} \delta_{i,k} \left(\sigma'(\widehat{M}_{i,k}) \right)^{-1} \left( y_{i,k}- \sigma(M^\star_{i,k}) \right)+ \left(1-\frac{1}{p_i} \delta_{i,k}\right)\left(\widehat{M}_{i,k}-M^\star_{i,k}\right)\nonumber \\
    & \quad  + \frac{1}{2p_i} \delta_{i,k} \left(\sigma'(\widehat{M}_{i,k}) \right)^{-1}  \sigma''(x_{i,k}) \left(   M^\star_{i,k} -  \widehat{M}_{i,k}  \right)^2 \label{eq:Mddecomposition}
\end{align}
where the penultimate equality is due to Taylor's expansion, with $x_{i,k}$ lying between $\widehat{M}_{i,k}$ and $M^\star_{i,k}$.  As a result, by taking an average over users, we can write
\begin{align*}
      \frac{1}{d_1} \sum_{i=1}^{d_1} \widehat{M}^{\mathrm{NR}}_{i,k}  = \bar{\Theta}^\star_{j}-\bar{\Theta}^\star_{j'} + \calZ + \Delta 
\end{align*}
where
\begin{align*}
    \calZ\coloneqq\frac{1}{d_1} \sum_{i=1}^{d_1}\frac{1}{p_i} \delta_{i,k} \left(\sigma'(\widehat{M}_{i,k}) \right)^{-1} \left( y_{i,k}- \sigma(M^\star_{i,k}) \right)
\end{align*}
and
\begin{align*}
    \Delta\coloneqq\frac{1}{d_1} \sum_{i=1}^{d_1} \left(1-\frac{1}{p_i} \delta_{i,k}\right)\left(\widehat{M}_{i,k}-M^\star_{i,k}\right) + \frac{1}{d_1} \sum_{i=1}^{d_1} \frac{1}{2p_i} \delta_{i,k} \left(\sigma'(\widehat{M}_{i,k}) \right)^{-1}  \sigma''(x_{i,k}) \left(   M^\star_{i,k} -  \widehat{M}_{i,k}  \right)^2.
\end{align*}

First, we derive asymptotic normality from $\calZ$ by applying Lindeberg's CLT. Define $\rho(x) \coloneqq 1/(x(1-x))$ and write
\begin{align*}
    \left(\sigma'(\widehat{M}_{i,k}) \right)^{-1} = \left(\sigma(\widehat{M}_{i,k})\left(1-\sigma(\widehat{M}_{i,k})\right) \right)^{-1} = \left(\widehat{L}_{i,k}\left(1-\widehat{L}_{i,k}\right) \right)^{-1} = \rho(\widehat{L}_{i,k}).
\end{align*}
As mentioned in the proof of Corollary \ref{cor:errorboundforM}, $\widehat{L}{i,k}$ is bounded away from 0 and 1 with probability at least $1 - O(\bar{d}^{-10})$ for all $(i,k)$. 
This proof uses the nonconvex and leave-one-out gradient descent iterates defined in Section \ref{sec:sectionA} and \ref{sec:sectionB}. Define $t_* \coloneqq \argmin_{0\leq t < t_0}\norm{\nabla f (\bfX^t,\bfY^t)}_{\mathrm{F}}$. Using them, we decompose as follows:
\begin{align*}
    \rho(\widehat{L}_{i,k})& = \rho(L^\star_{i,k}+\varepsilon_{i,k})+ \underbrace{\rho(\widehat{L}_{i,k})- \rho(\bfX_{i,\cdot}^{t_* } \bfY^{t_* \top}_{k,\cdot})}_{\coloneqq a^1_{i,k}} + \underbrace{\rho(\bfX_{i,\cdot}^{t_*} \bfY^{t_* \top}_{k,\cdot})-\rho(\bfX_{i,\cdot}^{t_*,(d_1+k)} \bfY^{t_*,(d_1+k) \top}_{k,\cdot})}_{\coloneqq a^2_{i,k}} \\
    & \quad + \underbrace{\rho(\bfX_{i,\cdot}^{t_*,(d_1+k)} \bfY^{t_*,(d_1+k) \top}_{k,\cdot})-\rho(L^\star_{i,k}+\varepsilon_{i,k})}_{\coloneqq a^3_{i,k}}.
\end{align*}
 Then, we can rewrite $\calZ$ as
 \begin{align*}
     \calZ &= \underbrace{\frac{1}{d_1} \sum_{i=1}^{d_1}\frac{1}{p_i} \delta_{i,k} \rho(L^\star_{i,k}+\varepsilon_{i,k}) \left( y_{i,k}- \sigma(M^\star_{i,k}) \right)}_{\coloneqq b^\star} + \underbrace{\frac{1}{d_1} \sum_{i=1}^{d_1}\frac{1}{p_i} \delta_{i,k}a^1_{i,k}\left( y_{i,k}- \sigma(M^\star_{i,k}) \right)}_{\coloneqq b_1} \\
     & \quad +   \underbrace{\frac{1}{d_1} \sum_{i=1}^{d_1}\frac{1}{p_i} \delta_{i,k}a^2_{i,k}\left( y_{i,k}- \sigma(M^\star_{i,k}) \right)}_{\coloneqq b_2}+  \underbrace{\frac{1}{d_1} \sum_{i=1}^{d_1}\frac{1}{p_i} \delta_{i,k}a^3_{i,k}\left( y_{i,k}- \sigma(M^\star_{i,k}) \right)}_{\coloneqq b_3}.
 \end{align*}
We aim to bound $b_1$, $b_2$, and $b_3$, and then apply Lindeberg's CLT to $b^\star.$ For $b_1$, use the mean value theorem and Cauchy--Schwarz inequality to obtain
\begin{align*}
    b_1 & = \frac{1}{d_1} \sum_{i=1}^{d_1} \rho'(x_{i,k})(\widehat{L}_{i,k}-\bfX_{i,\cdot}^{t_*} \bfY^{t_* \top}_{k,\cdot}) \frac{1}{p_i}\delta_{i,k} \left( y_{i,k}- \sigma(M^\star_{i,k}) \right) \\
    & \leq \frac{1}{d_1} \sqrt{\sum_{i=1}^{d_1}(\rho'(x_{i,k}))^2(\widehat{L}_{i,k}-\bfX_{i,\cdot}^{t_*} \bfY^{t_* \top}_{k,\cdot})^2  } \sqrt{\sum_{i=1}^{d_1} \frac{1}{p_i^2}\delta_{i,k} \left( y_{i,k}- \sigma(M^\star_{i,k}) \right)^2 } \\
    & \lesssim \frac{1}{\bar{p}d_1} \norm{ \widehat{\bfL}- \bfX^{t_*}\bfY^{t_* \top} }_{\mathrm{F}} \sqrt{\sum_{i=1}^{d_1} \delta_{i,k} \left( y_{i,k}- \sigma(M^\star_{i,k}) \right)^2 } \\
    & \lesssim \frac{1}{\sqrt{\bar{p}d_1}} \norm{ \widehat{\bfL}- \bfX^{t_*}\bfY^{t_* \top} }_{\mathrm{F}}  \lesssim  \frac{1}{\sqrt{\bar{p}d_1}} \frac{\kappa \lambda}{\bar{d}^5}  \ll \frac{1}{\sqrt{\bar{p}d_1}}
\end{align*}
where $x_{i,k}$ is between $\widehat{L}_{i,k}$ and $\bfX_{i,\cdot}^{t_*} \bfY^{t_* \top}_{k,\cdot}$, and $C,C'>0$ are some constants. The fourth line follows from the Chernoff bound (cf. Lemma \ref{lem:numberofedges}). The fifth relation is from Lemma \ref{LemmaA1} and Lemma \ref{LemmaB1}. Therefore, this bound for $b_1$ holds with probability at least $1-O(\bar{d}^{-10}).$

Before turning to $b_2$, we record the following decomposition:
\begin{align}
&\bfX_{i,\cdot}^{t_*} \bfY^{t_* \top}_{k,\cdot}-\bfX_{i,\cdot}^{t_*,(d_1+k)} \bfY^{t_*,(d_1+k) \top}_{k,\cdot} \nonumber \\
& \quad = -\bfX_{i,\cdot}^{t_*,(d_1+k)} \bfR^{t_*,(d_1+k)} (\bfY^{t_*,(d_1+k) }_{k,\cdot} \bfR^{t_*,(d_1+k)}- \bfY^{t_*}_{k,\cdot}\bfH^{t_*})^\top  + (\bfX_{i,\cdot}^{t_*}\bfH^{t_*} - \bfX_{i,\cdot}^{t_*,(d_1+k)} \bfR^{t_*,(d_1+k)})(\bfY^{t_*}_{k,\cdot}\bfH^{t_*})^\top \label{eq:XYXY}
\end{align}
where the matrices $\bfH^{t_*}$ and $\bfR^{t_*,(d_1+k)}$ are defined in Section \ref{sec:sectionA} and Section \ref{sec:sectionB}. Now, we proceed to
\begin{align*}
    b_2 &= \frac{1}{d_1} \sum_{i=1}^{d_1} \frac{1}{p_i}\delta_{i,k}(\rho(\bfX_{i,\cdot}^{t_*} \bfY^{t_* \top}_{k,\cdot})-\rho(\bfX_{i,\cdot}^{t_*,(d_1+k)} \bfY^{t_*,(d_1+k) \top}_{k,\cdot}))\left( y_{i,k}- \sigma(M^\star_{i,k}) \right)\\
    &= \underbrace{\frac{1}{d_1} \sum_{i=1}^{d_1} \frac{1}{p_i}\delta_{i,k}\rho'(\bfX_{i,\cdot}^{t_*,(d_1+k)} \bfY^{t_*,(d_1+k) \top}_{k,\cdot})(\bfX_{i,\cdot}^{t_*} \bfY^{t_* \top}_{k,\cdot}-\bfX_{i,\cdot}^{t_*,(d_1+k)} \bfY^{t_*,(d_1+k) \top}_{k,\cdot})\left( y_{i,k}- \sigma(M^\star_{i,k}) \right)}_{\coloneqq b_{21}}\\
    & \quad + \underbrace{\frac{1}{d_1} \sum_{i=1}^{d_1} \frac{1}{p_i}\delta_{i,k}\rho''(x_{i,k})(\bfX_{i,\cdot}^{t_*} \bfY^{t_* \top}_{k,\cdot}-\bfX_{i,\cdot}^{t_*,(d_1+k)} \bfY^{t_*,(d_1+k) \top}_{k,\cdot})^2\left( y_{i,k}- \sigma(M^\star_{i,k}) \right)}_{\coloneqq b_{22}}
\end{align*}
where $x_{i,k}$ lies between $\bfX_{i,\cdot}^{t_*} \bfY^{t_* \top}_{k,\cdot}$ and $\bfX_{i,\cdot}^{t_*,(d_1+k)} \bfY^{t_*,(d_1+k) \top}_{k,\cdot}$.

For $b_{22}$, note that the higher-order terms can be bounded as
\begin{align*}
    b_{22} &\lesssim \max_i \max_k \big\lvert \bfX_{i,\cdot}^{t_*} \bfY^{t_* \top}_{k,\cdot}-\bfX_{i,\cdot}^{t_*,(d_1+k)} \bfY^{t_*,(d_1+k) \top}_{k,\cdot} \big\rvert^2 \\
     &\lesssim \norm{\bfX^{t_*,(d_1+k)}}_{2, \infty}^2 \norm{\bfY^{t_*,(d_1+k) }  \bfR^{t_*,(d_1+k)}- \bfY^{t_*} \bfH^{t_*}}_{2, \infty}^2  + \norm{\bfX^{t_*}\bfH^{t_*} - \bfX^{t_*,(d_1+k)} \bfR^{t_*,(d_1+k)}}_{2, \infty}^2 \norm{\bfY^{t_*}}_{2,\infty}^2\\
     & \lesssim \norm{\bfF^\star}_{2, \infty}^2 \norm{\bfF^{t_*,(d_1+k) }  \bfR^{t_*,(d_1+k)}- \bfF^{t_*} \bfH^{t_*}}_{2, \infty}^2 \\
     & \lesssim \left(\frac{\mu R \sigma^\star_{\max}}{\min\{d_1,d_2(d_2-1)/2\}}\right)^2 \frac{\bar{d} \log(\bar{d})}{\bar{p} (\sigma^\star_{\min})^2 }  \ll \frac{1}{ \sqrt{\bar{p}d_1} }
\end{align*}
 with probability at least $1-O(\bar{d}^{-10})$, using the Chernoff bound, \eqref{eq:XYXY}, \eqref{Prelim2}, Lemma \ref{LemmaB1}, and Lemma \ref{LemmaB11}. The last relation is the assumption in Theorem \ref{thm:asymptoticnormality}. 
 
 For $b_{21}$, use \eqref{eq:XYXY} and decompose it into two terms:
 \begin{align*}
     b_{21} &=  \frac{1}{d_1}  (\bfY^{t_*,(d_1+k) }_{k,\cdot} \bfR^{t_*,(d_1+k)}- \bfY^{t_*}_{k,\cdot}\bfH^{t_*})  \sum_{i=1}^{d_1} (-\bfX_{i,\cdot}^{t_*,(d_1+k)} \bfR^{t_*,(d_1+k)} )\frac{1}{p_i} \delta_{i,k}\rho'(\bfX_{i,\cdot}^{t_*,(d_1+k)} \bfY^{t_*,(d_1+k) \top}_{k,\cdot})\\
     & \quad  \quad \times  \left( y_{i,k}- \sigma(M^\star_{i,k}) \right)\\
     & \quad + \frac{1}{d_1} (\bfY^{t_*}_{k,\cdot}\bfH^{t_*})^\top \sum_{i=1}^{d_1} (\bfX_{i,\cdot}^{t_*}\bfH^{t_*} - \bfX_{i,\cdot}^{t_*,(d_1+k)} \bfR^{t_*,(d_1+k)}) \frac{1}{p_i}\delta_{i,k}\rho'(\bfX_{i,\cdot}^{t_*,(d_1+k)} \bfY^{t_*,(d_1+k) \top}_{k,\cdot})\\
     & \quad \quad \times  \left( y_{i,k}- \sigma(M^\star_{i,k}) \right).
 \end{align*}
For the first term, say $b_{211}$, we intend to invoke the matrix Bernstein inequality \citep[][Theorem 6.1.1]{tropp:2015}. Note that
\begin{align*}
    b_{211} &\coloneqq \frac{1}{d_1}  (\bfY^{t_*,(d_1+k) }_{k,\cdot} \bfR^{t_*,(d_1+k)}- \bfY^{t_*}_{k,\cdot}\bfH^{t_*}) \sum_{i=1}^{d_1}  (-\bfX_{i,\cdot}^{t_*,(d_1+k)} \bfR^{t_*,(d_1+k)} ) \frac{1}{p_i}\delta_{i,k}\rho'(\bfX_{i,\cdot}^{t_*,(d_1+k)} \bfY^{t_*,(d_1+k) \top}_{k,\cdot}) \\
    & \quad \times \left( y_{i,k}- \sigma(M^\star_{i,k}) \right) \\
    & \leq \frac{1}{d_1}  \norm{\bfY^{t_*,(d_1+k) }_{k,\cdot} \bfR^{t_*,(d_1+k)}- \bfY^{t_*}_{k,\cdot}\bfH^{t_*}}\\
    & \quad \times \norm{\sum_{i=1}^{d_1} \underbrace{(-\bfX_{i,\cdot}^{t_*,(d_1+k)} \bfR^{t_*,(d_1+k)} )\frac{1}{p_i}\delta_{i,k}\rho'(\bfX_{i,\cdot}^{t_*,(d_1+k)} \bfY^{t_*,(d_1+k) \top}_{k,\cdot})   \left( y_{i,k}- \sigma(M^\star_{i,k}) \right)}_{\coloneqq \bfS_i}}.
\end{align*}
 Note that $\bbE [\bfS_i]=\bf0$, and $L \coloneqq \max_i \norm{\bfS_i} \lesssim \bar{p}^{-1}\norm{\bfX^\star}_{2, \infty}$ by the lemmas in Section \ref{sec:sectionB}. Also, we have 
\begin{align*}
    V \coloneqq \max\bigg\{\norm{\sum_{i=1}^{d_1} \bbE \bfS_i^\top \bfS_i}, \norm{\sum_{i=1}^{d_1} \bbE \bfS_i \bfS_i^\top}  \bigg\} \lesssim  \frac{1}{\bar{p}} \norm{\bfX^\star}_{\mathrm{F}}^2.
\end{align*}
Therefore, we have 
\begin{align*}
    \sum_{i=1}^{d_1} S_i \lesssim \sqrt{V \log (d_1)}+ L \log (d_1) \lesssim \sqrt{\frac{d_1}{\bar{p}} \log (d_1)}\norm{\bfX^\star}_{2, \infty}
\end{align*}
with probability at least $1-O(\bar{d}^{-10}).$  The lemmas in Section \ref{sec:sectionB} and \eqref{Prelim2} yield
\begin{align*}
    b_{211} &\lesssim \frac{1}{d_1} \norm{\bfY^{t_*,(d_1+k) } \bfR^{t_*,(d_1+k)}- \bfY^{t_*}\bfH^{t_*}}_{\mathrm{F}} \sqrt{\frac{d_1}{\bar{p}} \log (\bar{d})}\norm{\bfX^\star}_{2, \infty} \\
    & \lesssim     \sqrt{\frac{\bar{d} \log(\bar{d})}{\bar{p} (\sigma^\star_{\min})^2 }}\sqrt{ \frac{\log (\bar{d})}{\bar{p}d_1}}\norm{\bfF^\star}_{2, \infty}^2 \ll \frac{1}{\sqrt{\bar{p}d_1}}
\end{align*}
with probability at least $1-O(\bar{d}^{-10}).$ For the second term, say $b_{212}$, use the lemmas in Section \ref{sec:sectionB} and Cauchy--Schwarz inequality to obtain
\begin{align*}
  b_{212} &\coloneqq  \frac{1}{d_1} (\bfY^{t_*}_{k,\cdot}\bfH^{t_*})^\top \sum_{i=1}^{d_1} (\bfX_{i,\cdot}^{t_*}\bfH^{t_*} - \bfX_{i,\cdot}^{t_*,(d_1+k)} \bfR^{t_*,(d_1+k)}) \frac{1}{p_i}\delta_{i,k}\rho'(\bfX_{i,\cdot}^{t_*,(d_1+k)} \bfY^{t_*,(d_1+k) \top}_{k,\cdot})\\
     & \quad \quad \times  \left( y_{i,k}- \sigma(M^\star_{i,k}) \right)\\
     & \lesssim \frac{1}{d_1}\norm{\bfY^\star}_{2, \infty}\sqrt{\sum_{i=1}^{d_1} \left(\bfX_{i,\cdot}^{t_*}\bfH^{t_*} - \bfX_{i,\cdot}^{t_*,(d_1+k)} \bfR^{t_*,(d_1+k)}\right)^2 }\\
     & \quad \times \sqrt{ \sum_{i=1}^{d_1}\frac{1}{p_i^2} \delta_{i,k}\left(\rho'(\bfX_{i,\cdot}^{t_*,(d_1+k)} \bfY^{t_*,(d_1+k) \top}_{k,\cdot}) \right)^2 \left( y_{i,k}- \sigma(M^\star_{i,k}) \right)^2 }\\
     & \lesssim \frac{1}{\sqrt{\bar{p}d_1}}  \sqrt{\frac{\bar{d} \log(\bar{d})}{\bar{p} (\sigma^\star_{\min})^2 }}\norm{\bfF^\star}_{2, \infty}^2  \ll \frac{1}{\sqrt{\bar{p}d_1}}.
\end{align*}

For $b_3$, we invoke the Bernstein inequality \citep[][Theorem 6.1.1]{tropp:2015}. Note that
\begin{align*}
    b_3 = \frac{1}{d_1} \sum_{i=1}^{d_1}  \underbrace{(\rho(\bfX_{i,\cdot}^{t_*,(d_1+k)} \bfY^{t_*,(d_1+k) \top}_{k,\cdot})-\rho(L^\star_{i,k}+\varepsilon_{i,k}))\frac{1}{p_i}\delta_{i,k}\left( y_{i,k}- \sigma(M^\star_{i,k}) \right) }_{\coloneqq s_i}
\end{align*}
with $\bbE[s_j]=0.$ The mean value theorem, the lemmas in Section \ref{sec:sectionB}, and the small sieve approximation error assumption yield
\begin{align*}
   & \rho(\bfX_{i,\cdot}^{t_*,(d_1+k)} \bfY^{t_*,(d_1+k) \top}_{k,\cdot})-\rho(L^\star_{i,k}+\varepsilon_{i,k}) = \rho'(x_{i,k}) (\bfX_{i,\cdot}^{t_*,(d_1+k)} \bfY^{t_*,(d_1+k) \top}_{k,\cdot}- L^\star_{i,k}-\varepsilon_{i,k} ) \\
   & \quad \lesssim  \frac{\kappa \mu R \sigma^\star_{\max} }{\min\{d_1, d_2(d_2-1)/2\}} \sqrt{\frac{  \bar{d}\log(\bar{d})}{\bar{p} (\sigma^\star_{\min})^2 }} +\frac{1}{\sqrt{\max\{d_1, d_2(d_2-1)/2\}}}\\
   & \quad \lesssim \frac{\kappa \mu R \sigma^\star_{\max} }{\min\{d_1, d_2(d_2-1)/2\}} \sqrt{\frac{  \bar{d}\log(\bar{d})}{\bar{p} (\sigma^\star_{\min})^2 }}
\end{align*}
where $x_{i,k}$ is lying between $\bfX_{i,\cdot}^{t_*,(d_1+k)} \bfY^{t_*,(d_1+k) \top}_{k,\cdot}$ and $L^\star_{i,k}+\varepsilon_{i,k}$. Therefore, with probability at least $1-O(\bar{d}^{-10})$,
\begin{align*}
    L \coloneqq \max_i \norm{s_i} \lesssim \frac{\kappa \mu R \sigma^\star_{\max} }{\min\{d_1, d_2(d_2-1)/2\}} \sqrt{\frac{  \bar{d}\log(\bar{d})}{\bar{p} (\sigma^\star_{\min})^2 }}
\end{align*}
and
\begin{align*}
    V \coloneqq  \norm{\sum_{i=1}^{d_1} \bbE s_i^2} &\lesssim \frac{d_1}{\bar{p}}  \left(\frac{\kappa \mu R \sigma^\star_{\max} }{\min\{d_1, d_2(d_2-1)/2\}} \sqrt{\frac{  \bar{d}\log(\bar{d})}{\bar{p} (\sigma^\star_{\min})^2 }} \right)^2,
\end{align*}
which leads to
\begin{align*}
    \norm{ \sum_{i=1}^{d_1} s_i} \lesssim \sqrt{\frac{d_1}{\bar{p}} \log (\bar{d})}  \frac{\kappa \mu R \sigma^\star_{\max} }{\min\{d_1, d_2(d_2-1)/2\}} \sqrt{\frac{  \bar{d}\log(\bar{d})}{\bar{p}(\sigma^\star_{\min})^2 }} 
\end{align*}

Therefore, 
\begin{align*}
    b_3 &\lesssim  \sqrt{\frac{\log (\bar{d})}{\bar{p}d_1}}  \frac{\kappa \mu R \sigma^\star_{\max} }{\min\{d_1, d_2(d_2-1)/2\}} \sqrt{\frac{  \bar{d}\log(\bar{d})}{\bar{p} (\sigma^\star_{\min})^2 }}  \ll \frac{1}{\sqrt{\bar{p}d_1}}.
\end{align*}

The bounds on $b_1$, $b_2$, and $b_3$ yield, with probability at least $1-O(\bar{d}^{-10}),$
\begin{align*}
    \calZ = \frac{1}{d_1} \sum_{i=1}^{d_1} \frac{1}{p_i}\delta_{i,k} \rho(L^\star_{i,k}+\varepsilon_{i,k}) \left( y_{i,k}- \sigma(M^\star_{i,k}) \right) + o(\frac{1}{\sqrt{\bar{p}d_1}}).
\end{align*}

We now establish that $\Delta$ is negligible. Recall its definition
\begin{align*}
    \Delta=\underbrace{\frac{1}{d_1} \sum_{i=1}^{d_1} \left(1-\frac{1}{p_i} \delta_{i,k}\right)\left(\widehat{M}_{i,k}-M^\star_{i,k}\right)}_{\coloneqq \Delta_{1}} + \underbrace{\frac{1}{d_1} \sum_{i=1}^{d_1} \frac{1}{2p_i} \delta_{i,k} \left(\sigma'(\widehat{M}_{i,k}) \right)^{-1}  \sigma''(x_{i,k}) \left(   M^\star_{i,k} -  \widehat{M}_{i,k}  \right)^2}_{\coloneqq \Delta_{2}}.
\end{align*}
For the second term $\Delta_{2}$, the Chernoff bound and Corollary \ref{cor:errorboundforM} imply, with probability at least $1-O(\bar{d}^{-10})$,
\begin{align*}
    \Delta_{2} &\lesssim \norm{\widehat{\bfM}-\bfM^{\star}}_{\infty}^2  \lesssim \left( \frac{\kappa \mu R \sigma^\star_{\max} }{\min\{d_1, d_2(d_2-1)/2\}} \sqrt{\frac{  \bar{d}\log(\bar{d})}{\bar{p} (\sigma^\star_{\min})^2 }}\right)^2 \ll \frac{1}{\sqrt{\bar{p}d_1}}.
\end{align*}

Now, we consider $\Delta_{1}.$ Denote $s(x)= \sigma^{-1}(x)$ and apply the Taylor expansion
\begin{align*}
   \Delta_{1} &=  \frac{1}{d_1} \sum_{i=1}^{d_1} \left(1-\frac{1}{p_i}\delta_{i,k}\right)\left(\widehat{M}_{i,k}-M^\star_{i,k}\right) = \frac{1}{d_1} \sum_{i=1}^{d_1} \left(1-\frac{1}{p_i}\delta_{i,k}\right)\left(s(\widehat{L}_{i,k})-s(L^\star_{i,k}+\varepsilon_{i,k})\right) \\
   & = \underbrace{\frac{1}{d_1} \sum_{i=1}^{d_1} \left(1-\frac{1}{p_i}\delta_{i,k}\right) s'(L^\star_{i,k}+\varepsilon_{i,k})
    \left(\widehat{L}_{i,k}-L^\star_{i,k}-\varepsilon_{i,k}\right)}_{\coloneqq \Delta_{11}}\\
    & \quad + \underbrace{\frac{1}{2d_1} \sum_{i=1}^{d_1} \left(1-\frac{1}{p_i}\delta_{i,k}\right) s''(x_{i,k})
    \left(\widehat{L}_{i,k}-L^\star_{i,k}-\varepsilon_{i,k}\right)^2}_{\coloneqq \Delta_{12}}
\end{align*}
where $x_{i,k}$ lies between $\widehat{L}_{i,k}$ and $L^\star_{i,k}+\varepsilon_{i,k}$. As mentioned in the proof of Corollary \ref{cor:errorboundforM}, $s'(\cdot)$ and $s''(\cdot)$ are bounded with high probability in our analysis. We can bound the higher order term $\Delta_{12}$ by applying the Chernoff bound, with probability at least $1-O(\bar{d}^{-10}),$
\begin{align*}
  \Delta_{12} &\coloneqq  \frac{1}{2 d_1} \sum_{i=1}^{d_1}  s''(x_{i,k})
    \left(\widehat{L}_{i,k}-L^\star_{i,k}-\varepsilon_{i,k}\right)^2 -\frac{1}{2d_1} \sum_{i=1}^{d_1} \frac{1}{p_i}\delta_{i,k} s''(x_{i,k})
    \left(\widehat{L}_{i,k}-L^\star_{i,k}-\varepsilon_{i,k}\right)^2 \\
    & \lesssim  \norm{\widehat{\bfL}-\bfL^\star}_{\infty}^2 +\norm{\bm\calE}_{\infty}^2 \\
    & \lesssim  \left( \frac{\kappa \mu R \sigma^\star_{\max} }{\min\{d_1, d_2(d_2-1)/2\}} \sqrt{\frac{  \bar{d}\log(\bar{d})}{\bar{p} (\sigma^\star_{\min})^2 }}\right)^2 + \frac{1}{\max\{d_1, d_2(d_2-1)/2\}} \\
    & \ll \frac{1}{\sqrt{\bar{p}d_1}}.
\end{align*}

For the term $\Delta_{11}$, we decompose it into three terms and bound them separately.
\begin{align*}
    \Delta_{11} \coloneqq& \frac{1}{d_1} \sum_{i=1}^{d_1} \left(1-\frac{1}{p_i}\delta_{i,k}\right) s'(L^\star_{i,k}+\varepsilon_{i,k})
    \left(\widehat{L}_{i,k}-L^\star_{i,k}-\varepsilon_{i,k}\right) \\
    =& \underbrace{\frac{1}{d_1} \sum_{i=1}^{d_1} \left(1-\frac{1}{p_i}\delta_{i,k}\right) s'(L^\star_{i,k}+\varepsilon_{i,k})
    \left(\widehat{L}_{i,k}-\bfX_{i,\cdot}^{t_*}\bfY^{t_* \top}_{k,\cdot}\right)}_{\coloneqq \Delta_{111}}\\
     &  + \underbrace{\frac{1}{d_1} \sum_{i=1}^{d_1} \left(1-\frac{1}{p_i}\delta_{i,k}\right) s'(L^\star_{i,k}+\varepsilon_{i,k})
    \left(\bfX_{i,\cdot}^{t_*}\bfY^{t_* \top}_{k,\cdot}-\bfX_{i,\cdot}^{t_*,(d_1+k)}\bfY^{t_*,(d_1+k) \top}_{k,\cdot}\right)}_{\coloneqq \Delta_{112}}\\
     &   + \underbrace{\frac{1}{d_1} \sum_{i=1}^{d_1} \left(1-\frac{1}{p_i}\delta_{i,k}\right) s'(L^\star_{i,k}+\varepsilon_{i,k})
    \left( \bfX_{i,\cdot}^{t_*,(d_1+k)}\bfY^{t_*,(d_1+k) \top}_{k,\cdot}- L^\star_{i,k}-\varepsilon_{i,k}\right)}_{\coloneqq \Delta_{113}}.
\end{align*}
 First, use Cauchy--Schwarz inequality and have, with probability at least $1-O(\bar{d}^{-10})$,
\begin{align*}
    \Delta_{111} &= \frac{1}{d_1} \sum_{i=1}^{d_1} \left(1-\frac{1}{p_i}\delta_{i,k}\right) s'(L^\star_{i,k}+\varepsilon_{i,k})
    \left(\widehat{L}_{i,k}-\bfX_{i,\cdot}^{t_*}\bfY^{t_* \top}_{k,\cdot}\right) \\
    & \leq \frac{1}{d_1} \sqrt{ \sum_{i=1}^{d_1} \left(1-\frac{1}{p_i}\delta_{i,k}\right)^2 } \sqrt{\sum_{i=1}^{d_1} \left(s'(L^\star_{i,k}+\varepsilon_{i,k})\right)^2
    \left(\widehat{L}_{i,k}-\bfX_{i,\cdot}^{t_*}\bfY^{t_* \top}_{k,\cdot}\right)^2 } \\
    & \lesssim     \frac{1}{\sqrt{\bar{p}d_1}}\norm{\widehat{\bfL}- \bfX^{t_*}\bfY^{t_* \top}}_{\mathrm{F}}  \lesssim \frac{1}{\sqrt{\bar{p}d_1}} \frac{\kappa \lambda}{\bar{d}^5} \ll \frac{1}{\sqrt{\bar{p}d_1}}
\end{align*}
where the fourth relation follows from Lemma \ref{LemmaA1} and Lemma \ref{LemmaB1}.

Using \eqref{eq:XYXY}, decompose $\Delta_{112}$ further and write
\begin{align*}
    \Delta_{112}&=\frac{1}{d_1} \sum_{i=1}^{d_1} \left(1-\frac{1}{p_i}\delta_{i,k}\right) s'(L^\star_{i,k}+\varepsilon_{i,k})
    \left(\bfX_{i,\cdot}^{t_*}\bfY^{t_* \top}_{k,\cdot}-\bfX_{i,\cdot}^{t_*,(d_1+k)}\bfY^{t_*,(d_1+k) \top}_{k,\cdot}\right) \\
    &=\underbrace{\frac{1}{d_1} \sum_{i=1}^{d_1} \left(1-\frac{1}{p_i}\delta_{i,k}\right) s'(L^\star_{i,k}+\varepsilon_{i,k})
    \left(-\bfX_{i,\cdot}^{t_*,(d_1+k)} \bfR^{t_*,(d_1+k)} \right)\left(\bfY^{t_*,(d_1+k) }_{k,\cdot} \bfR^{t_*,(d_1+k)}- \bfY^{t_*}_{k,\cdot}\bfH^{t_*}\right)^\top}_{\coloneqq \Delta_{1121}} \\
    & \quad +\underbrace{\frac{1}{d_1} \sum_{i=1}^{d_1} \left(1-\frac{1}{p_i}\delta_{i,k}\right) s'(L^\star_{i,k}+\varepsilon_{i,k})
    \left(\bfX_{i,\cdot}^{t_*}\bfH^{t_*} - \bfX_{i,\cdot}^{t_*,(d_1+k)} \bfR^{t_*,(d_1+k)}\right)\left(\bfY^{t_*}_{k,\cdot}\bfH^{t_*}\right)^\top}_{\coloneqq \Delta_{1122}} .
\end{align*}
For the first term, note that
\begin{align*}
    \Delta_{1121} \leq \frac{1}{d_1} \norm{\bfY^{t_*,(d_1+k) }_{k,\cdot} \bfR^{t_*,(d_1+k)}- \bfY^{t_*}_{k,\cdot}\bfH^{t_*}}\norm{\sum_{i=1}^{d_1} \underbrace{\left(1-\frac{1}{p_i}\delta_{i,k}\right) s'(L^\star_{i,k}+\varepsilon_{i,k})\bfX_{i,\cdot}^{t_*,(d_1+k)}}_{\coloneqq \bfS_i}}
\end{align*}
where we intend to apply the matrix Bernstein inequality \citep[][Theorem 6.1.1]{tropp:2015} for $\norm{\sum_{i=1}^{d_1}\bfS_i}$. Note that $\bbE [\bfS_i]=0$ and $L \coloneqq \max_i \norm{\bfS_i} \lesssim \bar{p}^{-1} \norm{\bfX^\star}_{2, \infty}$ by the lemmas in Section \ref{sec:sectionB}, with probability at least $1-O(\bar{d}^{-10}).$ Also, 
\begin{align*}
    V \coloneqq \max\bigg\{\norm{\sum_{i=1}^{d_1} \bbE \bfS_i^\top \bfS_i}, \norm{\sum_{i=1}^{d_1}\bbE \bfS_i \bfS_i^\top}  \bigg\} \lesssim \frac{1}{\bar{p}} \norm{\bfX^\star}_{\mathrm{F}}^2.
\end{align*}
Therefore, we have, with probability at least $1-O(\bar{d}^{-10})$
\begin{align*}
    \sum_{i=1}^{d_1} \bfS_i \lesssim \sqrt{\frac{d_1}{\bar{p}} \log (\bar{d})}\norm{\bfX^\star}_{2, \infty}.
\end{align*}
Invoke the lemmas in Section \ref{sec:sectionB} and \eqref{Prelim2}, 
\begin{align*}
   \Delta_{1121} &\lesssim \frac{1}{d_1} \norm{\bfY^{t_*,(d_1+k) } \bfR^{t_*,(d_1+k)}- \bfY^{t_*}\bfH^{t_*}}_{\mathrm{F}}\sqrt{\frac{d_1}{\bar{p}} \log (\bar{d})}\norm{\bfX^\star}_{2, \infty} \\
    & \lesssim \sqrt{\frac{\log (\bar{d})}{\bar{p}d_1} }   \sqrt{\frac{\bar{d} \log(\bar{d})}{\bar{p} (\sigma^\star_{\min})^2 }} \norm{\bfF^\star}_{2, \infty}^2 \ll \frac{1}{\sqrt{\bar{p}d_1}}.
\end{align*}
For the term $\Delta_{1122}$, applying the Cauchy--Schwarz inequality, the Chernoff bound, and the lemmas in Section \ref{sec:sectionB} and \eqref{Prelim2}, we have
\begin{align*}
    \Delta_{1122} &\coloneqq \frac{1}{d_1} \sum_{i=1}^{d_1} \left(1-\frac{1}{p_i}\delta_{i,k}\right) s'(L^\star_{i,k}+\varepsilon_{i,k})
    (\bfX_{i,\cdot}^{t_*}\bfH^{t_*} - \bfX_{i,\cdot}^{t_*,(d_1+k)} \bfR^{t_*,(d_1+k)})(\bfY^{t_*}_{k,\cdot}\bfH^{t_*})^\top  \\
    & \leq \frac{1}{d_1} \norm{\bfY^\star}_{2, \infty}\sqrt{\sum_{i=1}^{d_1} \left(1-\frac{1}{p_i}\delta_{i,k}\right)^2  } \sqrt{\sum_{i=1}^{d_1} \left(\bfX_{i,\cdot}^{t_*}\bfH^{t_*} - \bfX_{i,\cdot}^{t_*,(d_1+k)} \bfR^{t_*,(d_1+k)}\right)^2  } \\
    & \leq \frac{1}{\sqrt{\bar{p}d_1}}  \norm{\bfF^{t_*}\bfH^{t_*} - \bfF^{t_*,(d_1+k)} \bfR^{t_*,(d_1+k)}}_{\mathrm{F}}\norm{\bfF^\star}_{2, \infty}\\
    & \lesssim \frac{1}{\sqrt{\bar{p}d_1}}   \sqrt{\frac{\bar{d} \log(\bar{d})}{\bar{p} (\sigma^\star_{\min})^2 }}\norm{\bfF^\star}_{2, \infty}^2   \ll \frac{1}{\sqrt{\bar{p}d_1}},
\end{align*}
with probability at least $1-O(\bar{d}^{-10})$.

For $\Delta_{113}$, we invoke the Bernstein inequality \citep{tropp:2015}. Note that
\begin{align*}
   \Delta_{113} = \frac{1}{d_1} \sum_{i=1}^{d_1} \underbrace{\left(1-\frac{1}{p_i}\delta_{i,k}\right) s'(L^\star_{i,k}+\varepsilon_{i,k})
    \left( \bfX_{i,\cdot}^{t_*,(d_1+k)}\bfY^{t_*,(d_1+k) \top}_{k,\cdot}- L^\star_{i,k}-\varepsilon_{i,k}\right)}_{\coloneqq s_i}
\end{align*}
with $\bbE[s_i]=0$ and 
\begin{align*}
    L \coloneqq \max_i \norm{s_i}& \lesssim \frac{1}{\bar{p}}\frac{\kappa \mu R \sigma^\star_{\max} }{\min\{d_1, d_2(d_2-1)/2\}} \sqrt{\frac{  \bar{d}\log(\bar{d})}{\bar{p} (\sigma^\star_{\min})^2 }}
\end{align*}  
by the lemmas in Section \ref{sec:sectionB}, with probability at least $1-O(\bar{d}^{-10}).$ Also, 
\begin{align*}
    V \coloneqq   \norm{\sum_{i=1}^{d_1} \bbE s_i^2} &\lesssim \frac{1}{\bar{p}} d_1 \left(\frac{\kappa \mu R \sigma^\star_{\max} }{\min\{d_1, d_2(d_2-1)/2\}} \sqrt{\frac{  \bar{d}\log(\bar{d})}{\bar{p} (\sigma^\star_{\min})^2 }} \right)^2  ,
\end{align*}
which leads to
\begin{align*}
     \norm{\sum_{i=1}^{d_1} s_i} \lesssim \sqrt{\frac{d_1\log (\bar{d})}{\bar{p}} }  \frac{\kappa\mu R \sigma^\star_{\max} }{\min\{d_1, d_2(d_2-1)/2\}} \sqrt{\frac{  \bar{d}\log(\bar{d})}{\bar{p}(\sigma^\star_{\min})^2 }}, 
\end{align*}
with probability at least $1-O(\bar{d}^{-10}).$ Therefore, 
\begin{align*}
    b_3 &\lesssim \frac{1}{d_1}  \sqrt{\frac{d_1\log (\bar{d})}{\bar{p}} }  \frac{\kappa\mu R \sigma^\star_{\max} }{\min\{d_1, d_2(d_2-1)/2\}} \sqrt{\frac{  \bar{d}\log(\bar{d})}{\bar{p}(\sigma^\star_{\min})^2 }}  \ll \frac{1}{\sqrt{\bar{p}d_1}},
\end{align*}
with probability at least $1-O(\bar{d}^{-10}).$ 

Collecting the bounds for $\calZ$ and $\Delta$, we reach, with probability at least $1-O(\bar{d}^{-10})$
\begin{align*}
     \calZ+\Delta = \sum_{i=1}^{d_1} \underbrace{ \frac{1}{p_id_1}\delta_{i,k} \rho(L^\star_{i,k}+\varepsilon_{i,k}) \left( y_{i,k}- \sigma(M^\star_{i,k}) \right)  }_{\coloneqq Z_i}  + o(\frac{1}{\sqrt{\bar{p}d_1}}).
\end{align*}

Denote $\Omega \coloneqq \{\delta_{i,k}\}_{i\in[d_1], k\in[d_2(d_2-1)/2]}.$ It is easy to see that, with probability at least $1-O(\bar{d}^{-10}),$
\begin{align*}
    v_k^\star \coloneqq \sum_{i=1}^{d_1} \Var (Z_i|\Omega) = \sum_{i=1}^{d_1} \frac{1}{p_i^2 d_1^2} \delta_{i,k} \frac{1}{\sigma(M^\star_{i,k})(1-\sigma(M^\star_{i,k}))}     \asymp \frac{1}{ \bar{p}d_1}.
\end{align*}
Use Cauchy--Schwarz inequality and observe that for any $\epsilon>0,$
\begin{align*}
    (v^\star_k)^{-1}\sum_{i=1}^{d_1} \bbE [Z_i^2 \bbOne\{|Z_i|>\epsilon \sqrt{v^\star_k}\}|\Omega] &\leq  (v^\star_k)^{-1}\sum_{i=1}^{d_1} \sqrt{ \bbE [Z_i^4|\Omega] \bbE[\bbOne\{|Z_i|>\epsilon \sqrt{v^\star_k}\}|\Omega]  } .
\end{align*}
Note that for large $d_1$ (and thus large $\bar{p}d_1$), $\bbOne\{|Z_i|>\epsilon \sqrt{v^\star_k}\}=0$ for any $\epsilon>0.$ Then, by Linbeberg's CLT, conditioning on $\Omega$, we have $\sum_{i=1}^{d_1} Z_i \overset{d}{\rightarrow}\calN(0,v^\star_k).$ The unconditional CLT follows from the dominated convergence theorem. 
\end{proof}

\subsection{Proof of Proposition \ref{prop:feasibleCLT}}
\begin{proof}[Proof of Proposition \ref{prop:feasibleCLT}]
    It is enough to show that $\widehat{v}_k-v^\star_k = o(v^\star_k)=o(1/(\bar{p}d_1))$ with high probability. We begin by showing that $v^{\star}_{k}$ is close to $\bbE [v^{\star}_{k}]$ using the Bernstein inequality. Define
    \begin{align*}
        v^{\star}_{k}-\bbE [v^{\star}_{k}]=  \sum_{i=1}^{d_1}\underbrace{\frac{1}{d_1^2 p_{i}^2} (\delta_{i,k}-p_{i}) \left(\frac{1}{\sigma(M^\star_{i,k})(1-\sigma(M^\star_{i,k}))}\right)}_{\coloneqq s_{i}}
    \end{align*}
and note that $\bbE s_{i}=0$. Also, since $1/(\sigma(M^\star_{i,k})(1-\sigma(M^\star_{i,k})))$ is bounded, we have
\begin{align*}
    L \coloneqq \max_{i} \norm{s_{i}} \lesssim \frac{1}{\bar{p}^2 d_1^2} \quad   \text{and}    \quad
    V \coloneqq \norm{\sum_{i=1}^{d_1}\bbE s_{i}^2 } \lesssim \frac{1}{\bar{p}^3 d_1^3}.
\end{align*}
Therefore, with probability at least $1-O(\bar{d}^{-10}),$
\begin{align*}
   v^{\star}_{k}-\bbE [v^{\star}_{k}]=  \sum_{i=1}^{d_1}s_{i} \lesssim \frac{1}{\bar{p}d_1}\left( \sqrt{\frac{ \log (\bar{d})}{\bar{p}d_1}} + \frac{\log (\bar{d})}{\bar{p}d_1}\right) \ll \frac{1}{\bar{p}d_1}.
\end{align*}

    We now show that $\widehat{v}_k$ and $\bbE[v^\star_k]$ are close. By definition,
\begin{align*}
    \widehat{v}_k -\bbE[v^\star_k] &= \frac{1}{d_1^2}  \sum_{i=1}^{d_1} \frac{1}{p_i}   \left(\frac{1}{\sigma(\widehat{M}_{i,k})(1-\sigma(\widehat{M}_{i,k}))} - \frac{1}{\sigma(M^\star_{i,k})(1-\sigma(M^\star_{i,k}))} \right)\\
    &= \frac{1}{d_1^2}  \sum_{i=1}^{d_1} \frac{1}{p_i} (\rho(\widehat{L}_{i,k})-\rho(L^\star_{i,k}+\varepsilon_{i,k}) )
\end{align*}
where $\rho(x)=(x(1-x))^{-1}$. As mentioned in the proof of Corollary \ref{cor:errorboundforM}, $\widehat{L}_{i,k}$, $L^\star_{i,k}$, and $L^\star_{i,k}+\varepsilon_{i,k}$ are all bounded away from 0 and 1, with exceedingly high probability, implying that $\rho'(\cdot)$ is bounded in our analysis. By applying the mean value theorem and the entrywise error control in Theorem \ref{thm:errorboundsforL}, with probability at least $1-O(\bar{d}^{-10}),$ uniformly for all $(i,k)$,
\begin{align*}
    \rho(\widehat{L}_{i,k})-\rho(L^\star_{i,k}+\varepsilon_{i,k}) &= \rho'(x_{i,k}) (\widehat{L}_{i,k}-L^\star_{i,k}-\varepsilon_{i,k})\lesssim \norm{\widehat{\bfL}-\bfL^\star}_{\infty}+ \norm{\bm\calE}_{\infty} \\
    &\lesssim \frac{\kappa \mu R \sigma^\star_{\max} }{\min\{d_1, d_2(d_2-1)/2\}} \sqrt{\frac{  \bar{d}\log(\bar{d})}{\bar{p} (\sigma^\star_{\min})^2 }}+\frac{1}{\sqrt{\max\{d_1, d_2(d_2-1)/2\}}} \\
    & \lesssim \frac{\kappa \mu R \sigma^\star_{\max} }{\min\{d_1, d_2(d_2-1)/2\}} \sqrt{\frac{  \bar{d}\log(\bar{d})}{\bar{p} (\sigma^\star_{\min})^2 }}  
\end{align*}
for some $x_{i,k}$ lying between $\widehat{L}_{i,k}$ and $L^\star_{i,k}+\varepsilon_{i,k}$. Therefore, we have 
\begin{align*}
    \widehat{v}_k=v^\star_k + O\left(\frac{1}{\bar{p}d_1} \frac{\kappa \mu R \sigma^\star_{\max} }{\min\{d_1, d_2(d_2-1)/2\}} \sqrt{\frac{  \bar{d}\log(\bar{d})}{\bar{p} (\sigma^\star_{\min})^2 }}\right) = v^\star_k + o\left(\frac{1}{\bar{p}d_1} \right) = v^\star_k + o(v^\star_k)
\end{align*}
with probability at least $1-O(\bar{d}^{-10})$.
\end{proof}

\subsection{Proof of Theorem \ref{thm:asymptoticnormality_indiv}}\label{sec:proofofindivUQ}

We begin by presenting the additional assumptions imposed for the uncertainty quantification of individual preferences. Recall that the rank $q$ is selected such that Assumption \ref{asp:approx_low_rank_M} and \ref{asp:SSVforM} are satisfied. Let $\mu(\bfM^\star)$ denote the incoherence parameter of $\bfM^\star$, which satisfies
\begin{align*}
    \norm{\bfU^{\bfM^\star}}_{2, \infty}^2 \leq \frac{\mu(\bfM^\star) q}{ d_1}\quad \text{and} \quad \norm{\bfV^{\bfM^\star}}_{2, \infty}^2 \leq \frac{\mu(\bfM^\star) q}{ d_2(d_2-1)/2}.
\end{align*}

\begin{assumption}\label{asp:inference_indiv}
    \begin{align*}
        w^{\star,1}_{i,k} \asymp \frac{1}{\bar{p} d_2(d_2-1)/2};\quad  w^{\star,2}_{i,k} \asymp \frac{1}{\bar{p} d_1};\\
             \sqrt{\mu(\bfM^\star)q} \left(\frac{\kappa \mu R \sigma^\star_{\max} }{\min\{d_1, d_2(d_2-1)/2\}} \sqrt{\frac{  \bar{d}\log(\bar{d})}{\bar{p} (\sigma^\star_{\min})^2 }} \right)^2  \ll \sqrt{w^\star_{i,k}} ;\\
         \sqrt{ \frac{1}{\bar{p}}\frac{(\mu(\bfM^\star))^2q^2\log (\bar{d})}{\min\{d_1, d_2(d_2-1)/2\}} }  \frac{\kappa \mu R \sigma^\star_{\max} }{\min\{d_1, d_2(d_2-1)/2\}} \sqrt{\frac{  \bar{d}\log(\bar{d})}{\bar{p} (\sigma^\star_{\min})^2 }}\ll \sqrt{w^\star_{i,k}} ; \\
         \sigma_{\max}(\bfM^\star)    \frac{\kappa^2 R\bar{d}}{\bar{p}(\sigma_{\min}(\bfM^\star))^2  } \frac{\mu(\bfM^\star)q}{\min\{d_1, d_2(d_2-1)/2\}} \ll \sqrt{w^\star_{i,k}} ;\\
         \frac{\kappa^2 R\bar{d}}{\bar{p}\sigma_{\min}(\bfM^\star)  }    \sqrt{\frac{(\mu(\bfM^\star))^2q^2}{ d_1 d_2(d_2-1)/2 }}\ll \sqrt{w^\star_{i,k}}.
    \end{align*}
\end{assumption}

\begin{assumption}\label{asp:forvarianceestimation}
\begin{align*}
 \frac{(\mu(\bfM^\star))^2q^2\kappa^2 R\log (\bar{d}) \bar{d}}{\bar{p}(\sigma_{\min}(\bfM^\star))^2  } \frac{\max\{d_1,d_2(d_2-1)/2\}}{\min\{d_1,d_2(d_2-1)/2\}} \ll 1;\\
 \frac{(\mu(\bfM^\star))^4 q^4\log^2 (\bar{d})}{\bar{p}\min \{d_1, d_2(d_2-1)/2\} } \ll 1;\\
 \frac{(\mu(\bfM^\star))^2 q^2\kappa \mu R \sigma^\star_{\max} }{\min\{d_1, d_2(d_2-1)/2\}} \sqrt{\frac{  \bar{d}\log^2(\bar{d})}{\bar{p} (\sigma^\star_{\min})^2 }}\ll 1.
\end{align*}
\end{assumption}

\begin{proof}[Proof of Theorem \ref{thm:asymptoticnormality_indiv}]
For each $(i,k)$, consider the following decomposition
\begin{align}
    &\widehat{M}^{\mathrm{NR},1}_{i,k}-M^\star_{i,k} \nonumber \\ 
    &=   \underbrace{\widehat{M}^1_{i,k}-M^\star_{i,k}}_{\coloneqq \Delta^1_{1,i,k}}+ \underbrace{\frac{2}{p_i} \delta^2_{i,k} \left(\sigma'(\widehat{M}^1_{i,k}) \right)^{-1} \left( y^2_{i,k}- \sigma( M^\star_{i,k}) \right)}_{\coloneqq \Delta^1_{2,i,k}}  + \underbrace{\frac{2}{p_i} \delta^2_{i,k} \left(\sigma'(\widehat{M}^1_{i,k}) \right)^{-1} \left( \sigma( M^\star_{i,k})- \sigma(\widehat{M}^1_{i,k}) \right) }_{\coloneqq \Delta^1_{3,i,k}};\label{eq:decompositionMhatd1}\\
     &\widehat{M}^{\mathrm{NR},2}_{i,k} - M^\star_{i,k} \nonumber\\
    &=  \underbrace{\widehat{M}^2_{i,k}- M^\star_{i,k}}_{\coloneqq \Delta^2_{1,i,k}}+ \underbrace{\frac{2}{p_i} \delta^1_{i,k} \left(\sigma'(\widehat{M}^2_{i,k}) \right)^{-1} \left( y^1_{i,k}- \sigma( M^\star_{i,k}) \right) }_{\coloneqq \Delta^2_{2,i,k}} + \underbrace{\frac{2}{p_i} \delta^1_{i,k} \left(\sigma'(\widehat{M}^2_{i,k}) \right)^{-1} \left( \sigma( M^\star_{i,k})- \sigma(\widehat{M}^2_{i,k}) \right) }_{\coloneqq \Delta^2_{3,i,k}}.\label{eq:decompositionMhatd2}
\end{align}
Denote $\Delta^l=\Delta^l_1+\Delta^l_2+\Delta^l_3$ for $l=1,2.$ Also, we write the top-$q$ SVD of $\bfM^\star$ and $\widehat{\bfM}^{\mathrm{NR},l}$ ($l=1,2$) as $\bfU^{\bfM^\star} {\bf\Sigma}^{\bfM^\star} \bfV^{\bfM^\star \top}$ and $\widehat{\bfM}^{\mathrm{NR},l}$ ($l=1,2$) as $\widehat{\bfU}^{\mathrm{NR},l} \widehat{\bf\Sigma}^{\mathrm{NR},l} \widehat{\bfV}^{\mathrm{NR},l \top}$, respectively. By Eckart-Young-Mirsky Theorem, the best rank-$q$ approximation of the debiased estimator $\widehat{\bfM}^{\mathrm{NR},l}$ can be obtained as
\begin{align*}
    \widehat{\bfM}^{\proj,l} \coloneqq \argmin_{\rank(\bfM) \leq q}\norm{ \widehat{\bfM}^{\mathrm{NR},l}-\bfM}_{\mathrm{F}}^2 = \widehat{\bfU}^{\mathrm{NR},l} \widehat{\bfU}^{\mathrm{NR},l \top} \widehat{\bfM}^{\mathrm{NR},l}\widehat{\bfV}^{\mathrm{NR},l} \widehat{\bfV}^{\mathrm{NR},l\top} \quad \text{for $l=1,2.$}
\end{align*}

By the notations defined in the proof of Lemma \ref{lem:perturbedsingularvector}, for $l=1,2,$
\begin{align*}
    &\widehat{\bfW}^l\widehat{\bfW}^{l\top} \bfA^\star \widehat{\bfW}^l\widehat{\bfW}^{l\top}-\bfW^\star \bfW^{\star \top} \bfA^\star \bfW^\star \bfW^{\star \top}\\
    &= \begin{bmatrix}
        {\bf 0} & \widehat{\bfU}^{\mathrm{NR},l} \widehat{\bfU}^{\mathrm{NR},l\top} \bfM^\star \widehat{\bfV}^{\mathrm{NR},l} \widehat{\bfV}^{\mathrm{NR},l\top}-\bfM^\star\\
        (\widehat{\bfU}^{\mathrm{NR},l} \widehat{\bfU}^{\mathrm{NR},l\top} \bfM^\star \widehat{\bfV}^{\mathrm{NR},l} \widehat{\bfV}^{\mathrm{NR},l\top}-\bfM^\star)^\top &{\bf 0}
    \end{bmatrix}.
\end{align*}
We then have
\begin{align*}
    \frac{1}{2}\widehat{\bfM}^{\proj,1}_{i,k}+\frac{1}{2}\widehat{\bfM}^{\proj,2}_{i,k} - M^\star_{i,k} & = \sum_{l=1}^2\frac{1}{2}\bigg[\widehat{\bfW}^l\widehat{\bfW}^{l\top} \bfA^\star \widehat{\bfW}^l\widehat{\bfW}^{l\top}-\bfW^\star \bfW^{\star \top} \bfA^\star \bfW^\star \bfW^{\star \top}\bigg]_{i,k+d_1}\\
    &  \quad +  \sum_{l=1}^2\frac{1}{2}\bigg[  \widehat{\bfU}^{\mathrm{NR},l} \widehat{\bfU}^{\mathrm{NR},l\top}  (\widehat{\bfM}^{\mathrm{NR},l}-\bfM^\star)\widehat{\bfV}^{\mathrm{NR},l}\widehat{\bfV}^{\mathrm{NR},l} \bigg]_{i,k}
\end{align*}

From the proof of Lemma \ref{lem:perturbedsingularvector}, we can write, for $l=1,2$
\begin{align*}
    \widehat{\bfW}^l\widehat{\bfW}^{l\top}-\bfW^\star\bfW^{\star\top} = \sum_{m=1}^\infty  \calS_{\bfA^\star,m}(\widehat{\bfE}^l) 
\end{align*}
and thus
\begin{align*}
    &\widehat{\bfW}^l\widehat{\bfW}^{l\top} \bfA^\star \widehat{\bfW}^l\widehat{\bfW}^{l\top}-\bfW^\star \bfW^{\star \top} \bfA^\star \bfW^\star \bfW^{\star \top} \\
    & =\left(\calS_{\bfA^\star,1}(\widehat{\bfE}^l)\bfA^\star \bfW^\star \bfW^{\star \top} +  \bfW^\star \bfW^{\star \top} \bfA^\star \calS_{1}(\widehat{\bfE}^l) \right) \\
    & \quad + \sum_{m=2}^\infty \left(\calS_{\bfA^\star,m}(\widehat{\bfE}^l)\bfA^\star \bfW^\star \bfW^{\star \top} +  \bfW^\star \bfW^{\star \top} \bfA^\star \calS_{\bfA^\star,m}(\widehat{\bfE}^l) \right)\\
    & \quad + \left(\widehat{\bfW}^l\widehat{\bfW}^{l\top}-\bfW^\star\bfW^{\star\top} \right) \bfA^\star \left(\widehat{\bfW}^l\widehat{\bfW}^{l\top}-\bfW^\star\bfW^{\star\top} \right).
\end{align*}
Therefore, we have
\begin{align*}
    &\frac{1}{2}\widehat{\bfM}^{\proj,1}_{i,k}+\frac{1}{2}\widehat{\bfM}^{\proj,2}_{i,k}- M^\star_{i,k} \\
    & =\frac{1}{2}\sum_{l=1}^2\bigg[\calS_{\bfA^\star,1}(\widehat{\bfE}^l)\bfA^\star \bfW^\star \bfW^{\star \top} +  \bfW^\star \bfW^{\star \top} \bfA^\star \calS_{1}(\widehat{\bfE}^l) \bigg]_{i,k+d_1} \\
    & \quad + \frac{1}{2}\sum_{l=1}^2\bigg[\sum_{m=2}^\infty  \calS_{\bfA^\star,m}(\widehat{\bfE}^l)\bfA^\star \bfW^\star \bfW^{\star \top} +  \bfW^\star \bfW^{\star \top} \bfA^\star \calS_{\bfA^\star,m}(\widehat{\bfE}^l)  \bigg]_{i,k+d_1}\\
    & \quad + \frac{1}{2}\sum_{l=1}^2\bigg[\left(\widehat{\bfW}^l\widehat{\bfW}^{l\top}-\bfW^\star\bfW^{\star\top} \right) \bfA^\star \left(\widehat{\bfW}^l\widehat{\bfW}^{l\top}-\bfW^\star\bfW^{\star\top} \right)\bigg]_{i,k+d_1}.
\end{align*}

By the definition of $\calS_{\bfA^\star,1}(\widehat{\bfE}^l),$
\begin{align*}
   & \frac{1}{2}\sum_{l=1}^2\bigg[\calS_{\bfA^\star,1}(\widehat{\bfE}^l)\bfA^\star \bfW^\star \bfW^{\star \top} +  \bfW^\star \bfW^{\star \top} \bfA^\star \calS_{1}(\widehat{\bfE}^l) \bigg]_{i,k+d_1} \\
   &=\bigg[ \bfU^{\bfM^\star}_{\perp} \bfU^{\bfM^\star \top}_{\perp} \left(\frac{1}{2}\Delta^1+ \frac{1}{2}\Delta^2\right) \bfV^{\bfM^\star} \bfV^{\bfM^\star \top}\bigg]_{i,k} + \bigg[ \bfU^{\bfM^\star} \bfU^{\bfM^\star \top} \left(\frac{1}{2}\Delta^1+ \frac{1}{2}\Delta^2\right) \bfV^{\bfM^\star}_{\perp} \bfV^{\bfM^\star \top}_{\perp}\bigg]_{i,k}.
\end{align*}

Also, the following lemma follows from Lemma \ref{lem:perturbedsingularvector}.

\begin{lemma}\label{lem:negligible2}
     Under the assumptions in Theorem \ref{thm:asymptoticnormality_indiv}, we have, for $l=1,2$, with probability at least $1-O(\bar{d}^{-9})$,
     \begin{align*}
         \norm{(\widehat{\bfW}^l\widehat{\bfW}^{l\top}-\bfW^\star \bfW^{\star \top}) \bfA^\star (\widehat{\bfW}^l\widehat{\bfW}^{l\top}-\bfW^\star \bfW^{\star \top})}_{\infty}\lesssim \sigma_{\max}(\bfM^\star)    \frac{\kappa^2 R\bar{d}}{\bar{p}(\sigma_{\min}(\bfM^\star))^2  } \frac{\mu(\bfM^\star)q}{\min\{d_1, d_2(d_2-1)/2\}} .
     \end{align*}
\end{lemma}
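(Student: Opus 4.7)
My plan is to reduce the entrywise bound on the sandwich $(\widehat{\bfW}^l\widehat{\bfW}^{l\top}-\bfW^\star\bfW^{\star\top})\bfA^\star(\widehat{\bfW}^l\widehat{\bfW}^{l\top}-\bfW^\star\bfW^{\star\top})$ to a $(2,\infty)$-norm estimate on the projector perturbation $\bfP_l \coloneqq \widehat{\bfW}^l\widehat{\bfW}^{l\top}-\bfW^\star\bfW^{\star\top}$. Since $\bfP_l=\bfP_l^\top$, Cauchy--Schwarz gives, for every entry $(i,j)$,
\[
\bigl\lvert \bfe_i^\top \bfP_l \bfA^\star \bfP_l\bfe_j\bigr\rvert \leq \norm{\bfP_l\bfe_i}\,\norm{\bfA^\star}\,\norm{\bfP_l\bfe_j} \leq \norm{\bfP_l}_{2,\infty}^{2}\,\norm{\bfA^\star},
\]
and the block-antidiagonal form of $\bfA^\star$ yields $\norm{\bfA^\star}=\sigma_{\max}(\bfM^\star)$. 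Hence the claim reduces to proving
\[
\norm{\bfP_l}_{2,\infty}^{2} \lesssim \frac{\mu(\bfM^\star)\,q}{\min\{d_1,d_2(d_2-1)/2\}}\cdot\frac{\kappa^2 R\bar d}{\bar p\,(\sigma_{\min}(\bfM^\star))^2}.
\]

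To obtain this $(2,\infty)$-norm bound, I will invoke Lemma \ref{lem:perturbedsingularvector}, which supplies a convergent power-series representation $\bfP_l=\sum_{m\geq 1}\calS_{\bfA^\star,m}(\widehat{\bfE}^l)$ together with row-wise bounds for each term in terms of powers of $\norm{\widehat{\bfE}^l}/\sigma_{\min}(\bfM^\star)$. The dominant $m=1$ term is bilinear in $\bfW^\star$ and $\widehat{\bfE}^l$ multiplied by the pseudo-inverse of $\bfA^\star$, so its $\ell_{2,\infty}$ norm is bounded by $\norm{\bfW^\star}_{2,\infty}\cdot\norm{\widehat{\bfE}^l}/\sigma_{\min}(\bfM^\star)$; every higher-order term carries an additional factor of $\norm{\widehat{\bfE}^l}/\sigma_{\min}(\bfM^\star)$, which is $o(1)$ by Assumption \ref{asp:SSVforM}, so the first-order term sets the rate. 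The incoherence of the dilated block $\bfW^\star$ gives $\norm{\bfW^\star}_{2,\infty}^{2}\leq \mu(\bfM^\star)q/\min\{d_1,d_2(d_2-1)/2\}$, while a matrix Bernstein argument on the half-sample $S^l$, in the spirit of the proof of Theorem \ref{thm:errorboundsforL}, delivers $\norm{\widehat{\bfE}^l}\lesssim \sqrt{\kappa^2 R\bar d/\bar p}$ with probability at least $1-O(\bar d^{-9})$. Plugging the three ingredients into the reduction completes the proof.

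The main obstacle will be the operator-norm control of $\widehat{\bfE}^l$, because the debiasing step \eqref{eq:defofdebias} introduces the data-dependent nonlinear reweight $1/\sigma'(\widehat{M}^{l'}_{i,k})$ with $l'\neq l$, which couples the residual entries in a way that is not directly amenable to an independent-sum concentration argument. Sample splitting from Section \ref{sec:individualinference} is the crucial technical device: the reweight depends only on the complementary half-sample, so conditional on $\widehat{\bfM}^{l'}$ the summands in $\widehat{\bfE}^l$ become independent bounded zero-mean matrices. Using the entrywise bound from Corollary \ref{cor:errorboundforM} to guarantee that $\sigma'(\widehat{\bfM}^{l'})$ stays bounded away from zero on a high-probability event, the matrix Bernstein inequality yields the required operator-norm bound, and the rest of the argument is mechanical substitution into the reduction above.
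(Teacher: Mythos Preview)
Your proposal is correct and matches the paper's approach: the paper simply says the lemma ``follows from Lemma~\ref{lem:perturbedsingularvector},'' and your Cauchy--Schwarz reduction $\lvert\bfe_i^\top\bfP_l\bfA^\star\bfP_l\bfe_j\rvert\le\norm{\bfP_l}_{2,\infty}^2\norm{\bfA^\star}$ makes that deduction explicit. Note that Lemma~\ref{lem:perturbedsingularvector} already delivers the final $(2,\infty)$ bound on $\bfP_l$ directly (the power-series expansion and the control of $\norm{\widehat{\bfE}^l}$ live in its proof via Lemma~\ref{lem:Mhatderrorbound}), so your last two paragraphs re-deriving those ingredients are redundant, though not wrong.
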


Then, by Lemma \ref{lem:negligible2}, Lemma \ref{lem:negligible3}, and Lemma \ref{lem:negligible1}, we have with probability at least $1-O(\bar{d}^{-9}),$ for some $C>0,$
\begin{align*}
    &\frac{1}{2}\widehat{\bfM}^{\proj,1}_{i,k}+\frac{1}{2}\widehat{\bfM}^{\proj,2}_{i,k}- M^\star_{i,k} \\
     &=\sum_{k'=1}^{d_2(d_2-1)/2} \frac{1}{p_i}\delta_{i,k'}\left( y_{i,k'}- \sigma(M^\star_{i,k'}) \right) \left(\sigma'(M^\star_{i,k'}) \right)^{-1}\bfV^{\bfM^\star}_{k',\cdot}\bfV^{\bfM^\star \top}_{k,\cdot}\\
      &\quad + \sum_{i'=1}^{d_1} \bfU^{\bfM^\star}_{i,\cdot}\bfU^{\bfM^\star \top}_{i',\cdot} \frac{1}{p_{i'}}\delta_{i',k}\left( y_{i',k}- \sigma(M^\star_{i',k}) \right) \left(\sigma'(M^\star_{i',k}) \right)^{-1} \\
      & \quad +C \sqrt{\mu(\bfM^\star)q} \left(\frac{\kappa \mu R \sigma^\star_{\max} }{\min\{d_1, d_2(d_2-1)/2\}} \sqrt{\frac{  \bar{d}\log(\bar{d})}{\bar{p} (\sigma^\star_{\min})^2 }} \right)^2  \\
       & \quad + C \sqrt{ \frac{1}{\bar{p}}\frac{(\mu(\bfM^\star))^2q^2\log (\bar{d})}{\min\{d_1, d_2(d_2-1)/2\}} }  \frac{\kappa \mu R \sigma^\star_{\max} }{\min\{d_1, d_2(d_2-1)/2\}} \sqrt{\frac{  \bar{d}\log(\bar{d})}{\bar{p} (\sigma^\star_{\min})^2 }} \\
       & \quad + C \sigma_{\max}(\bfM^\star)    \frac{\kappa^2 R\bar{d}}{\bar{p}(\sigma_{\min}(\bfM^\star))^2  } \frac{\mu(\bfM^\star)q}{\min\{d_1, d_2(d_2-1)/2\}} \\
       & \quad + C\frac{\kappa^2 R\bar{d}}{\bar{p}\sigma_{\min}(\bfM^\star)  }    \sqrt{\frac{(\mu(\bfM^\star))^2q^2}{ d_1 d_2(d_2-1)/2 }} \\
       &=\sum_{k'=1}^{d_2(d_2-1)/2} \frac{1}{p_i}\delta_{i,k'}\left( y_{i,k'}- \sigma(M^\star_{i,k'}) \right) \left(\sigma'(M^\star_{i,k'}) \right)^{-1}\bfV^{\bfM^\star}_{k',\cdot}\bfV^{\bfM^\star \top}_{k,\cdot}\\
      &\quad + \sum_{i'=1}^{d_1} \bfU^{\bfM^\star}_{i,\cdot}\bfU^{\bfM^\star \top}_{i',\cdot} \frac{1}{p_{i'}}\delta_{i',k}\left( y_{i',k}- \sigma(M^\star_{i',k}) \right) \left(\sigma'(M^\star_{i',k}) \right)^{-1} \\
      & \quad + o\left(\sqrt{w^\star_{i,k}} \right).
\end{align*}
 Finally, we invoke Lemma \ref{lem:CLTSA1} and complete the proof.
\end{proof}

\subsubsection{Technical lemmas}

\begin{lemma}\label{lem:Mhatderrorbound}
    Under the assumptions in Theorem \ref{thm:asymptoticnormality_indiv}, with probability at least $1-O(\bar{d}^{-10})$, we have, for each $l=1,2$,
    \begin{align*}
        \norm{\widehat{\bfM}^{\mathrm{NR},l}-\bfM^\star} \lesssim \sqrt{\frac{\kappa^2 R\bar{d}}{\bar{p}  }}.
    \end{align*}
\end{lemma}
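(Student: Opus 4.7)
The plan is to fix $l=1$ (the case $l=2$ is symmetric) and use the decomposition $\widehat{\bfM}^{\mathrm{NR},1}-\bfM^\star = \Delta^1_1+\Delta^1_2+\Delta^1_3$ from \eqref{eq:decompositionMhatd1}, bounding each term's operator norm separately. The crucial structural property I would exploit throughout is sample splitting: the initial estimator $\widehat{\bfM}^1$ is built solely from $S^1$ and is therefore independent of $\{(\delta^2_{i,k},y^2_{i,k})\}_{i,k}$, so conditionally on $\widehat{\bfM}^1$ the entries of $\Delta^1_2$ and $\Delta^1_3$ become independent over $(i,k)$, allowing the matrix Bernstein inequality to be applied.

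For $\Delta^1_1=\widehat{\bfM}^1-\bfM^\star$ I would simply use $\|\Delta^1_1\|\le\|\Delta^1_1\|_{\mathrm{F}}\lesssim\sqrt{\kappa^2 R\bar d/\bar p}$, which is Corollary \ref{cor:errorboundforM} applied to the half-sample estimator (since each $|S^l|$ contains a constant fraction of the edges, the convergence rate is preserved up to constants). For $\Delta^1_2$ I would check that conditional on $\widehat{\bfM}^1$ it has zero mean (because $\bbE[y^2_{i,k}\mid\delta^2_{i,k}=1]=\sigma(M^\star_{i,k})$), each entry is bounded by $O(1/\bar p)$, and each entry has conditional variance $O(1/\bar p)$; the matrix Bernstein inequality for a sum of $d_1\cdot d_2(d_2-1)/2$ independent rank-one matrices then yields $\|\Delta^1_2\|\lesssim\sqrt{\bar d\log\bar d/\bar p}$ with probability at least $1-O(\bar d^{-10})$, which is absorbed into the target rate.

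For $\Delta^1_3$ I would Taylor-expand $\sigma(M^\star_{i,k})-\sigma(\widehat M^1_{i,k}) = -\sigma'(\widehat M^1_{i,k})(\widehat M^1_{i,k}-M^\star_{i,k})+\tfrac{1}{2}\sigma''(x_{i,k})(\widehat M^1_{i,k}-M^\star_{i,k})^2$, which gives
\[
\Delta^1_3 \;=\; -\,\tfrac{2\delta^2_{i,k}}{p_i}\bigl(\widehat M^1_{i,k}-M^\star_{i,k}\bigr)\;+\;R^1_{i,k},
\]
with $|R^1_{i,k}|\lesssim \bar p^{-1}\|\widehat{\bfM}^1-\bfM^\star\|_\infty^2$ on the support of $\delta^2_{i,k}$. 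The remainder $R^1$ is controlled through $\|R^1\|\le\|R^1\|_{\mathrm{F}}$ and the entrywise bound of Corollary \ref{cor:errorboundforM}, which under Assumption \ref{asp:sieve_assumptions} iii) is comfortably $o(\sqrt{\kappa^2 R\bar d/\bar p})$. Combining the leading piece of $\Delta^1_3$ with $\Delta^1_1$ then gives the clean identity
\[
\Delta^1_1 + \bigl(\text{leading part of }\Delta^1_3\bigr) \;=\; \bigl(1 - \tfrac{2\delta^2}{p}\bigr) \,\odot\, (\widehat{\bfM}^1-\bfM^\star),
\]
which, conditionally on $\widehat{\bfM}^1$, is again a sum of independent mean-zero rank-one matrices (since $\bbE[2\delta^2_{i,k}/p_i]=1$). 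Matrix Bernstein applied here needs (i) the entrywise magnitude, bounded by $\bar p^{-1}\|\widehat{\bfM}^1-\bfM^\star\|_\infty$, and (ii) the per-row/column variance sums, which I would bound by combining the entrywise bound with the trivial inequality $\sum_{k}(\widehat M^1_{i,k}-M^\star_{i,k})^2\le d_2(d_2-1)/2\cdot\|\widehat{\bfM}^1-\bfM^\star\|_\infty^2$ (and symmetrically for columns). Plugging in the rates from Corollary \ref{cor:errorboundforM} and invoking Assumption \ref{asp:sieve_assumptions} iii), this operator-norm contribution is also absorbed by $\sqrt{\kappa^2 R\bar d/\bar p}$.

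The main obstacle is Step 4, the Hadamard product $(1-2\delta^2/p)\odot(\widehat{\bfM}^1-\bfM^\star)$: a naive $\|\cdot\|\le\|\cdot\|_{\mathrm{F}}$ reduction would cost an extra $\bar p^{-1/2}$ factor and violate the target rate, so one genuinely needs the independent-entries viewpoint via sample splitting together with the entrywise (hence also $\ell_{2,\infty}$) control on $\widehat{\bfM}^1-\bfM^\star$ to feed matrix Bernstein. Putting the three pieces together yields $\|\widehat{\bfM}^{\mathrm{NR},1}-\bfM^\star\|\lesssim\sqrt{\kappa^2 R\bar d/\bar p}$ with the required probability, and the case $l=2$ follows by interchanging the roles of $S^1$ and $S^2$.
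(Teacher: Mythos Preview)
Your overall architecture matches the paper's: decompose $\widehat{\bfM}^{\mathrm{NR},1}-\bfM^\star=\Delta^1_1+\Delta^1_2+\Delta^1_3$, bound $\Delta^1_1$ via Corollary~\ref{cor:errorboundforM}, bound $\Delta^1_2$ by a matrix concentration argument (the paper invokes the Bandeira--van~Handel route of Lemma~\ref{lem:boundgradient}, which shaves your $\log\bar d$), and use sample splitting plus matrix Bernstein to control the randomness in $\Delta^1_3$.

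The one genuine gap is your treatment of the Taylor remainder $R^1$. Because you combined $\Delta^1_1$ with the \emph{linear} Taylor piece of $\Delta^1_3$, the quadratic remainder still carries the uncentered factor $2\delta^2_{i,k}/p_i$, so on the support of $\delta^2$ each entry is of size $\bar p^{-1}\lvert\widehat M^1_{i,k}-M^\star_{i,k}\rvert^2$. Hence
\[
\|R^1\|_{\mathrm F}\ \lesssim\ \bar p^{-1}\,\|\widehat\bfM^1-\bfM^\star\|_\infty\,\|\widehat\bfM^1-\bfM^\star\|_{\mathrm F},
\]
and pushing this below $\sqrt{\kappa^2 R\bar d/\bar p}$ would require $\|\widehat\bfM^1-\bfM^\star\|_\infty\lesssim \bar p$, which is \emph{not} delivered by Assumption~\ref{asp:sieve_assumptions}(iii) or the assumptions of Theorem~\ref{thm:asymptoticnormality_indiv}. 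So the step ``$\|R^1\|\le\|R^1\|_{\mathrm F}$ is comfortably $o(\sqrt{\kappa^2 R\bar d/\bar p})$'' does not go through as written.

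The paper sidesteps this by centering $\Delta^1_3$ \emph{before} any Taylor expansion: it writes $\Delta^1_3=\Delta^1_{31}+\Delta^1_{32}$ with
\[
\Delta^1_{31,i,k}=\Bigl(\tfrac{2}{p_i}\delta^2_{i,k}-1\Bigr)\bigl(\sigma'(\widehat M^1_{i,k})\bigr)^{-1}\bigl(\sigma(M^\star_{i,k})-\sigma(\widehat M^1_{i,k})\bigr),\qquad
\Delta^1_{32,i,k}=\bigl(\sigma'(\widehat M^1_{i,k})\bigr)^{-1}\bigl(\sigma(M^\star_{i,k})-\sigma(\widehat M^1_{i,k})\bigr).
\]
The deterministic piece $\Delta^1_{32}$ has no $1/\bar p$ prefactor and is $O(|\widehat M^1_{i,k}-M^\star_{i,k}|)$ entrywise, so $\|\Delta^1_{32}\|\le\|\Delta^1_{32}\|_{\mathrm F}\lesssim\sqrt{\kappa^2 R\bar d/\bar p}$ directly. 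The centered piece $\Delta^1_{31}$ is exactly what matrix Bernstein handles. In particular, your clever merge of $\Delta^1_1$ with the leading Taylor term is unnecessary: $\Delta^1_1$ already satisfies $\|\Delta^1_1\|\le\|\Delta^1_1\|_{\mathrm F}\lesssim\sqrt{\kappa^2 R\bar d/\bar p}$ on its own, since it carries no $\delta^2/p$ weight. If you prefer to keep your organization, the fix is simply to center $R^1$ as well (its conditional mean has $\|\cdot\|_{\mathrm F}\lesssim\|\widehat\bfM^1-\bfM^\star\|_\infty\|\widehat\bfM^1-\bfM^\star\|_{\mathrm F}$ with no $\bar p^{-1}$, and the centered part is dominated by your Hadamard term).
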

\begin{proof}
   We will maintain the notations in \eqref{eq:decompositionMhatd1} and \eqref{eq:decompositionMhatd2}. Without loss of generality, we focus on the case of $l=1$. By applying Corollary \ref{cor:errorboundforM}, we have
\begin{align}
 \norm{\Delta^1_1}=   \norm{\widehat{\bfM}^1-\bfM^\star} \leq \norm{\widehat{\bfM}^1-\bfM^\star}_{\mathrm{F}} \lesssim    \sqrt{ \frac{\kappa^2 R \bar{d}}{\bar{p}  }}  \label{eq:Delta1bound}
\end{align}
 with probability at least $1-O(\bar{d}^{-10})$. Also, by following the proof of Lemma \ref{lem:boundgradient}, we can bound, with probability at least $1-O(\bar{d}^{-10})$,
\begin{align}
    \norm{\Delta^1_2} \lesssim \sqrt{\frac{\bar{d}}{\bar{p}}}.  \label{eq:Delta2bound}
\end{align}
Lastly, to bound $\Delta^1_3$, we decompose it as
\begin{align*}
    \Delta^1_{3,i,k} = \underbrace{\frac{2}{p_i} \left(\delta^2_{i,k} -\frac{p_i}{2} \right) \left(\sigma'(\widehat{M}^1_{i,k}) \right)^{-1} \left( \sigma( M^\star_{i,k})- \sigma(\widehat{M}^1_{i,k}) \right)}_{\coloneqq \Delta^1_{31,i,k}}+\underbrace{\left(\sigma'(\widehat{M}^1_{i,k}) \right)^{-1} \left( \sigma( M^\star_{i,k})- \sigma(\widehat{M}^1_{i,k}) \right)}_{\Delta^1_{32,i,k}}. 
\end{align*}
By applying Theorem \ref{thm:errorboundsforL} and the small size of sieve approximation error assumption, we have
\begin{align*}
   \Delta^1_{32}\lesssim   \sqrt{\frac{ \kappa^2 R\bar{d}}{\bar{p}}}
\end{align*}
with probability at least $1-O(\bar{d}^{-10})$. For $\Delta^1_{31}$, we invoke the matrix Bernstein inequality \citep[Theorem 6.1.1 in][]{tropp:2015}. Define
\begin{align*}
    \Delta^1_{31} = \sum_{i=1}^{d_1} \sum_{k=1}^{d_2(d_2-1)/2} \underbrace{\frac{2}{p_i} \left(\delta^2_{i,k} -\frac{p_i}{2} \right) \left(\sigma'(\widehat{M}^1_{i,k}) \right)^{-1} \left(\sigma( M^\star_{i,k})- \sigma(\widehat{M}^1_{i,k}) \right)\bfe_i \bfe_k^\top}_{\coloneqq S_{i,k}}.
\end{align*}
Using Theorem \ref{thm:errorboundsforL} and the small size of sieve approximation error assumption once again, we have, with probability at least $1-O(\bar{d}^{-10})$, 
\begin{align*}
    L &\coloneqq \max_{i,k}\norm{S_{i,k}} \lesssim \frac{\kappa \mu R \sigma^\star_{\max} }{\bar{p}\min\{d_1, d_2(d_2-1)/2\}} \sqrt{\frac{  \bar{d}\log(\bar{d})}{\bar{p} (\sigma^\star_{\min})^2 }}.
\end{align*}
Similarly, we have, with probability at least $1-O(\bar{d}^{-10})$, 
\begin{align*}
    \norm{\sum_{i=1}^{d_1} \sum_{k=1}^{d_2(d_2-1)/2}\bbE S_{i,k}^\top S_{i,k}}& \lesssim \norm{ \frac{1}{\bar{p}} \max_{i,k}\left(\left(\sigma'(\widehat{M}^1_{i,k}) \right)^{-1} \left(\sigma( M^\star_{i,k})- \sigma(\widehat{M}^1_{i,k}) \right)\right)^2 \bar{d} \bfI  }\\
    &\lesssim \frac{\bar{d}}{\bar{p}} \left(\left(\frac{\kappa \mu R \sigma^\star_{\max} }{\min\{d_1, d_2(d_2-1)/2\}} \sqrt{\frac{  \bar{d}\log(\bar{d})}{\bar{p} (\sigma^\star_{\min})^2 }} \right)^2 + \frac{1}{\max\{d_1, d_2(d_2-1)/2\}}\right) \\
    &   \\
     &\lesssim \frac{\bar{d}}{\bar{p}} \left(\frac{\kappa \mu R \sigma^\star_{\max} }{\min\{d_1, d_2(d_2-1)/2\}} \sqrt{\frac{  \bar{d}\log(\bar{d})}{\bar{p} (\sigma^\star_{\min})^2 }} \right)^2.
\end{align*}
Similarly, we can bound $\norm{\sum_{i=1}^{d_1} \sum_{k=1}^{d_2(d_2-1)/2}\bbE S_{i,k} S_{i,k}^\top}.$ As a result, with probability at least $1-O(\bar{d}^{-10})$, 
\begin{align*}
    V &\coloneqq  \max\bigg\{\norm{\sum_{i=1}^{d_1} \sum_{k=1}^{d_2(d_2-1)/2}\bbE S_{i,k}^\top S_{i,k}}, \norm{\sum_{i=1}^{d_1} \sum_{k=1}^{d_2(d_2-1)/2}\bbE S_{i,k} S_{i,k}^\top}  \bigg\} \\
    &\lesssim \frac{\bar{d}}{\bar{p}} \left(\frac{\kappa \mu R \sigma^\star_{\max} }{\min\{d_1, d_2(d_2-1)/2\}} \sqrt{\frac{  \bar{d}\log(\bar{d})}{\bar{p} (\sigma^\star_{\min})^2 }} \right)^2.
\end{align*}
Therefore,
\begin{align*}
    \Delta^1_{31} \lesssim \sqrt{\frac{\bar{d}\log(\bar{d})}{\bar{p}}} \frac{\kappa \mu R \sigma^\star_{\max} }{\min\{d_1, d_2(d_2-1)/2\}} \sqrt{\frac{  \bar{d}\log(\bar{d})}{\bar{p} (\sigma^\star_{\min})^2 }}  \ll  \sqrt{\frac{\bar{d}}{\bar{p}}}
\end{align*}
with probability at least $1-O(\bar{d}^{-10})$.

Combining bounds for $\Delta^1_{1}$, $\Delta^1_{2}$, $\Delta^1_{31}$, and $\Delta^1_{32}$, we have, with probability at least $1-O(\bar{d}^{-10})$
\begin{align*}
    \Delta^1_1+\Delta^1_2+\Delta^1_3 \lesssim    \sqrt{\frac{\kappa^2 R\bar{d}}{\bar{p}  }}.
\end{align*} 
\end{proof}

 \begin{lemma}\label{lem:perturbedsingularvector}
    Under the assumptions in Theorem \ref{thm:asymptoticnormality_indiv}, with probability at least $1-O(\bar{d}^{-9})$, we have for each $l=1,2$,
    \begin{align*}
       & \max \bigg\{\norm{\widehat{\bfU}^{\mathrm{NR},l}\widehat{\bfU}^{\mathrm{NR},l \top}-\bfU^{\bfM^\star}\bfU^{\bfM^\star \top}}_{2, \infty},\norm{\widehat{\bfV}^{\mathrm{NR},l}\widehat{\bfV}^{\mathrm{NR},l \top}-\bfV^{\bfM^\star}\bfV^{\bfM^\star \top}}_{2, \infty}\bigg\}\\
        &  \quad \lesssim \sqrt{\frac{\kappa^2 R\bar{d}}{\bar{p}(\sigma_{\min}(\bfM^\star))^2  }}\sqrt{\frac{\mu(\bfM^\star)q}{\min\{d_1, d_2(d_2-1)/2\}}} .
    \end{align*}
\end{lemma}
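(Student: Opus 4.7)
The plan is to apply the representation formula of spectral projectors \citep{xia2021normal,xia2021statistical} to the symmetric dilations $\bfA^\star \coloneqq \begin{bmatrix} {\bf 0} & \bfM^\star \\ \bfM^{\star \top} & {\bf 0}\end{bmatrix}$ and $\widehat{\bfA}^l \coloneqq \begin{bmatrix} {\bf 0} & \widehat{\bfM}^{\mathrm{NR},l} \\ \widehat{\bfM}^{\mathrm{NR},l \top} & {\bf 0}\end{bmatrix}$. The top-$q$ positive eigenvalues of $\bfA^\star$ coincide with the top-$q$ singular values of $\bfM^\star$, and the associated eigenprojector is $\bfW^\star \bfW^{\star \top}$ with $\bfW^\star = \tfrac{1}{\sqrt{2}}\left[\bfU^{\bfM^\star \top},\, \bfV^{\bfM^\star \top}\right]^\top$; analogous objects exist for $\widehat{\bfA}^l$. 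Setting $\widehat{\bfE}^l \coloneqq \widehat{\bfA}^l - \bfA^\star$, Xia's series expansion gives
\begin{align*}
\widehat{\bfW}^l \widehat{\bfW}^{l \top} - \bfW^\star \bfW^{\star \top} \;=\; \sum_{m=1}^{\infty} \calS_{\bfA^\star, m}(\widehat{\bfE}^l),
\end{align*}
whose convergence hinges on $\norm{\widehat{\bfE}^l} < \sigma_{\min}(\bfM^\star)/2$. This gap condition holds on the high-probability event because Lemma \ref{lem:Mhatderrorbound} yields $\norm{\widehat{\bfE}^l} = \norm{\widehat{\bfM}^{\mathrm{NR},l} - \bfM^\star} \lesssim \sqrt{\kappa^2 R \bar{d}/\bar{p}}$, which, combined with Assumption \ref{asp:SSVforM}, is dominated by $\sigma_{\min}(\bfM^\star)$.

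I would treat the first-order term as the dominant contribution. It takes the form $\calS_{\bfA^\star, 1}(\widehat{\bfE}^l) = \bfP^\star_\perp \widehat{\bfE}^l \bfT^\star + \bfT^{\star \top} \widehat{\bfE}^l \bfP^\star_\perp$ with $\bfP^\star_\perp = \bfI - \bfW^\star \bfW^{\star \top}$ and restricted resolvent $\bfT^\star = \bfW^\star (\bfA^\star\vert_{\mathrm{top}})^{-1}\bfW^{\star \top}$. The crucial observation is that each summand contains one factor $\bfT^\star$ whose $2\to\infty$ norm is controlled by $\norm{\bfW^\star}_{2,\infty}/\sigma_{\min}(\bfM^\star)$, while the remaining factors can be absorbed via $\norm{\widehat{\bfE}^l}$ in operator norm together with $\norm{\bfP^\star_\perp}_{\op} \leq 1$. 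Combined with the incoherence bound $\norm{\bfW^\star}_{2,\infty} \lesssim \sqrt{\mu(\bfM^\star) q / \min\{d_1, d_2(d_2-1)/2\}}$ this delivers
\begin{align*}
\norm{\calS_{\bfA^\star, 1}(\widehat{\bfE}^l)}_{2,\infty} \;\lesssim\; \frac{\norm{\widehat{\bfE}^l}}{\sigma_{\min}(\bfM^\star)}\,\norm{\bfW^\star}_{2,\infty} \;\lesssim\; \sqrt{\frac{\kappa^2 R \bar{d}}{\bar{p}(\sigma_{\min}(\bfM^\star))^2}}\,\sqrt{\frac{\mu(\bfM^\star) q}{\min\{d_1,d_2(d_2-1)/2\}}}.
\end{align*}

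The remaining tail $m \geq 2$ is handled analogously: each summand contains $m$ resolvent factors and $m$ copies of $\widehat{\bfE}^l$, and isolating one resolvent factor to pay for the $2\to\infty$ norm (through its $\bfW^\star$ content) while bounding the rest in operator norm yields $\norm{\calS_{\bfA^\star, m}(\widehat{\bfE}^l)}_{2,\infty} \lesssim (\norm{\widehat{\bfE}^l}/\sigma_{\min}(\bfM^\star))^m \norm{\bfW^\star}_{2,\infty}$. Since $\norm{\widehat{\bfE}^l}/\sigma_{\min}(\bfM^\star) \ll 1$, the series is geometric and dominated by the $m=1$ bound. Finally, extracting the two diagonal blocks of $\widehat{\bfW}^l \widehat{\bfW}^{l\top} - \bfW^\star \bfW^{\star \top}$ recovers the individual projector differences $\widehat{\bfU}^{\mathrm{NR},l}\widehat{\bfU}^{\mathrm{NR},l\top} - \bfU^{\bfM^\star}\bfU^{\bfM^\star\top}$ and $\widehat{\bfV}^{\mathrm{NR},l}\widehat{\bfV}^{\mathrm{NR},l \top} - \bfV^{\bfM^\star}\bfV^{\bfM^\star \top}$, and the row-norm estimates transfer directly to both blocks.

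The main obstacle is the clean $2\to\infty$ control of $\calS_{\bfA^\star, 1}(\widehat{\bfE}^l)$ and its higher-order counterparts: $\widehat{\bfE}^l$ itself carries no useful $2\to\infty$ estimate (only an operator-norm bound is available via Lemma \ref{lem:Mhatderrorbound}), so every row evaluation of the product must be routed through either $\bfW^\star$ or a resolvent factor built from $\bfW^\star$ in order to invoke incoherence. Carrying this bookkeeping across all combinatorial configurations at order $m$, while keeping the estimates symmetric across the two blocks of the dilation, is the delicate part of the argument.
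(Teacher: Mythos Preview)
Your overall strategy---dilate to the symmetric matrices $\bfA^\star,\widehat{\bfA}^l$, invoke Xia's series $\widehat{\bfW}^l\widehat{\bfW}^{l\top}-\bfW^\star\bfW^{\star\top}=\sum_{m\ge 1}\calS_{\bfA^\star,m}(\widehat{\bfE}^l)$, feed in the operator-norm bound from Lemma~\ref{lem:Mhatderrorbound}, and sum a geometric series---is exactly the route the paper takes. The gap is in how you extract the row-wise bound from each term $\calS_{\bfA^\star,m}$.

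Your ``crucial observation'' implicitly uses an inequality of the form $\norm{A\,\bfT^\star\,B}_{2,\infty}\le \norm{A}_{\op}\,\norm{\bfT^\star}_{2,\infty}\,\norm{B}_{\op}$, isolating an \emph{interior} resolvent factor for the $2\to\infty$ bound while absorbing the flanking factors in operator norm. That inequality is false: $\norm{\cdot}_{2,\infty}$ is submultiplicative only as $\norm{AB}_{2,\infty}\le \norm{A}_{2,\infty}\norm{B}_{\op}$, so only the \emph{leftmost} factor can supply the row-wise smallness. Concretely, each $\calS_{\bfA^\star,m}$ is a sum over index vectors $\bfs=(s_1,\dots,s_{m+1})$ of products $\mathfrak{P}^{-s_1}\widehat{\bfE}^l\cdots\widehat{\bfE}^l\mathfrak{P}^{-s_{m+1}}$. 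When $s_1\ge 1$ the leftmost factor is a genuine resolvent built from $\bfW^\star$, so $\norm{\bfe_i^\top \mathfrak{P}^{-s_1}}\lesssim\sqrt{\mu(\bfM^\star)q/\min\{d_1,d_2(d_2-1)/2\}}\,\sigma_{\min}(\bfM^\star)^{-s_1}$ and the rest can indeed be bounded in operator norm. But when $s_1=0$ the product starts with $\mathfrak{P}^\perp$, and $\norm{\bfe_i^\top\mathfrak{P}^\perp}$ is of order one; there is no incoherence to exploit on the left. In particular, terms such as $\mathfrak{P}^\perp\widehat{\bfE}^l\mathfrak{P}^{-2}\widehat{\bfE}^l\mathfrak{P}^\perp$ (the case $\bfs=(0,2,0)$ at $m=2$) have $\mathfrak{P}^\perp$ at \emph{both} ends, so neither row nor column evaluation gives you a free incoherence factor; your bookkeeping cannot cover this configuration.

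The paper closes this gap by quoting Lemma~9 of \cite{xia2021statistical} (stated as Lemma~\ref{lem:orthogonalprojectionofEhat}): on the event $\norm{\widehat{\bfE}^l}\le\nu$, one has $\norm{\mathfrak{P}^\perp(\mathfrak{P}^\perp\widehat{\bfE}^l\mathfrak{P}^\perp)^{t}\widehat{\bfE}^l\bfW^\star}_{2,\infty}\le C_1(C_2\nu)^{t+1}\sqrt{\mu(\bfM^\star)q/\min\{d_1,d_2(d_2-1)/2\}}$ for all $t\ge 0$. With this in hand, every $s_1=0$ term is handled by peeling off the initial run of $\mathfrak{P}^\perp\widehat{\bfE}^l$ factors until the first resolvent (which contains $\bfW^\star$) is reached; the paper then truncates the series at $m_{\max}\asymp\log\bar d$, sums the $\le 4^m$ terms at each order, and bounds the tail $m>m_{\max}$ crudely in operator norm. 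Your last paragraph correctly flags this as ``the delicate part,'' but the mechanism you describe does not survive the $s_1=0$ configurations; you need Lemma~\ref{lem:orthogonalprojectionofEhat} (or an equivalent inductive estimate) to make the argument go through.
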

\begin{proof}
Without loss of generality, we focus on the case of $l=1$ and follow the notation in \eqref{eq:decompositionMhatd1}. To bound the error of the estimated singular subspaces, we leverage the Representation formula of spectral projectors from, for example, \cite{xia2021normal} and \cite{xia2021statistical}. This proof follows the proof of Theorem 4 in \cite{xia2021statistical}, which analyzes the empirical singular spaces, with appropriate adaptations. Denoting $\Delta^1\coloneqq \Delta^1_1+\Delta^1_2+\Delta^1_3$, we define
\begin{align*}
    \widehat{\bfW}^1 &\coloneqq \begin{bmatrix}
        \widehat{\bfU}^{\mathrm{NR},1} & {\bf 0}\\
        {\bf 0} &  \widehat{\bfV}^{\mathrm{NR},1}
    \end{bmatrix} ;
    \quad    \widehat{\bfE}^1 \coloneqq \begin{bmatrix}
        {\bf 0} & \Delta^1  \\
         \Delta^{1\top}     &  {\bf 0}
    \end{bmatrix}
    \end{align*}
and
\begin{align*}
     \bfW^\star &\coloneqq \begin{bmatrix}
        \bfU^{\bfM^\star} & {\bf 0}\\
        {\bf 0} &  \bfV^{\bfM^\star}
    \end{bmatrix}; \quad \bfA^\star \coloneqq \begin{bmatrix}
        {\bf 0} & \bfM^\star  \\
         \bfM^{\star \top}       &  {\bf 0}
    \end{bmatrix}.
\end{align*}
Let $\bfU^{\bfM^\star}_\perp \in \bbR^{d_1\times(d_1-q)}$ and $\bfV^{\bfM^\star}_\perp \in \bbR^{d_2(d_2-1)/2 \times(d_2(d_2-1)/2-q)}$ be orthogonal to $\bfU^{\bfM^\star}$ and $\bfV^{\bfM^\star}$, respectively. That is, $[\bfU^{\bfM^\star}; \bfU^{\bfM^\star}_\perp]$ and $[\bfV^{\bfM^\star}; \bfV^{\bfM^\star}_\perp]$ are orthonormal. For any $s=1, 2, \cdots$, define
\begin{align*}
    \mathfrak{P}^{-s} \coloneqq \begin{cases}
        \begin{bmatrix}
            \bfU^{\bfM^\star} ({\bf\Sigma}^{\bfM^\star})^{-s} \bfU^{\bfM^\star \top} & {\bf 0} \\
            {\bf 0} &  \bfV^{\bfM^\star} ({\bf\Sigma}^{\bfM^\star})^{-s} \bfV^{\bfM^\star \top}
        \end{bmatrix}, \quad \text{if $s$ is even;} \\ \begin{bmatrix}
             {\bf 0}&\bfU^{\bfM^\star} ({\bf\Sigma}^{\bfM^\star})^{-s} \bfV^{\bfM^\star \top}  \\
             \bfV^{\bfM^\star} ({\bf\Sigma}^{\bfM^\star})^{-s} \bfU^{\bfM^\star \top} & {\bf 0}
        \end{bmatrix}, \quad \text{if $s$ is odd}
    \end{cases}
\end{align*}
and
\begin{align*}
    \mathfrak{P}^0 \coloneqq \mathfrak{P}^\perp \coloneqq  \begin{bmatrix}
            \bfU^{\bfM^\star}_\perp \bfU^{\bfM^\star \top}_\perp & {\bf 0} \\
            {\bf 0} &  \bfV^{\bfM^\star}_\perp \bfV^{\bfM^\star \top}_\perp
        \end{bmatrix}.
\end{align*}
Lastly, we define
\begin{align*}
    \calS_{\bfA^\star,m}(\widehat{\bfE}^1) \coloneqq \sum_{\bfs : s_1+\cdots+s_{m+1}=m}(-1)^{1+\tau(\bfs)} \mathfrak{P}^{-s_1} \widehat{\bfE}^1 \mathfrak{P}^{-s_2}\cdots \mathfrak{P}^{-s_m}\widehat{\bfE}^1 \mathfrak{P}^{-s_{m+1}}
\end{align*}
where $\bfs=(s_1, \ldots, s_{m+1})$ is such that $s_1, \cdots, s_{m+1}\geq 0$ are integers and $\tau(\bfs)= \sum_{m'=1}^{m+1}\bbOne\{s_{m'}>0\}$. Then, by Theorem 1 in \cite{xia2021normal}, Lemma \ref{lem:Mhatderrorbound}, and Assumption \ref{asp:SSVforM}, we have
\begin{align*}
    \widehat{\bfW}^1\widehat{\bfW}^{1\top}-\bfW^\star\bfW^{\star\top} = \sum_{m=1}^\infty  \calS_{\bfA^\star,m}(\widehat{\bfE}^1).
\end{align*}
Since
\begin{align*}
    \widehat{\bfW}^1\widehat{\bfW}^{1\top}-\bfW^\star\bfW^{\star\top} = \begin{bmatrix}
        \widehat{\bfU}^{\mathrm{NR},1}\widehat{\bfU}^{\mathrm{NR},1 \top} -\bfU^{\bfM^\star}\bfU^{\bfM^\star \top} & {\bf 0}\\
        {\bf 0} &  \widehat{\bfV}^{\mathrm{NR},1}\widehat{\bfV}^{\mathrm{NR},1\top}-\bfV^{\bfM^\star}\bfV^{\bfM^\star \top}
    \end{bmatrix},
\end{align*}
it is enough to prove for $\norm{\widehat{\bfW}^1\widehat{\bfW}^{1\top}-\bfW^\star\bfW^{\star\top}}_{2,\infty}.$ 

Note that for any $s\geq 1$,
\begin{align*}
    \max_{i \in [d_1]}\norm{\bfe_i^\top  \mathfrak{P}^{-s}} \leq \sqrt{\frac{\mu(\bfM^\star)q}{d_1}} (\sigma_{\min}(\bfM^\star))^{-s}\,\, \text{and} \,\, \max_{k \in [d_2(d_2-1)/2]}\norm{\bfe_{d_1+k}^\top  \mathfrak{P}^{-s}} \leq \sqrt{\frac{\mu(\bfM^\star)q}{d_2(d_2-1)/2}} (\sigma_{\min}(\bfM^\star))^{-s}.
\end{align*}
Also, for any $\bfs$ such that $s_1 \geq 1$, we have
\begin{align*}
 \max_{i \in [\bar{d}]} \norm{  \bfe_i^\top \mathfrak{P}^{-s_1} \widehat{\bfE}^1 \mathfrak{P}^{-s_2}\cdots \mathfrak{P}^{-s_m}\widehat{\bfE}^1 \mathfrak{P}^{-s_{m+1}}  } &\leq \max_{i \in [\bar{d}]} \norm{  \bfe_i^\top \mathfrak{P}^{-s_1}} \norm{\widehat{\bfE}^1 \mathfrak{P}^{-s_2}\cdots \mathfrak{P}^{-s_l}\widehat{\bfE}^1 \mathfrak{P}^{-s_{m+1}} } \\
 &\leq \max_{i \in [\bar{d}]} \norm{  \bfe_i^\top \mathfrak{P}^{-s_1}} \norm{\widehat{\bfE}^1}^m (\sigma_{\min}(\bfM^\star))^{-(m-s_1)}.
\end{align*}

By Lemma \ref{lem:Mhatderrorbound}, with probability at least $1-O(\bar{d}^{-10})$, for some constant $C>0,$
\begin{align}
    \norm{\widehat{\bfE}^1} \leq \underbrace{C\sqrt{\frac{\kappa^2 R\bar{d}}{\bar{p}  }}}_{\coloneqq \nu} .\label{eq:Ehatbound}
\end{align}

Therefore, if $s_1 \geq 1$, with probability at least $1-O(\bar{d}^{-10})$,
\begin{align*}
   \max_{i \in [\bar{d}]} \norm{  \bfe_i^\top \mathfrak{P}^{-s_1} \widehat{\bfE}^1 \mathfrak{P}^{-s_2}\cdots \mathfrak{P}^{-s_m}\widehat{\bfE}^1 \mathfrak{P}^{-s_{m+1}}  } \leq \left( \frac{\nu}{\sigma_{\min}(\bfM^\star)}\right)^m \sqrt{\frac{\mu(\bfM^\star)q}{\min\{d_1, d_2(d_2-1)/2\}}}.
\end{align*}

We now consider the case of $s_1=0.$ Because $s_1+\cdots + s_{m+1}=m$, there must exist $s_{m'}\geq 1$ for some $m'\geq 2$. Therefore, if we can bound $\norm{\mathfrak{P}^\perp (\mathfrak{P}^\perp \widehat{\bfE}^1 \mathfrak{P}^\perp)^t \widehat{\bfE}^1 \bfW^\star  }_{2, \infty}$ for $t\geq 0$, we can have the bound for any $\bfs.$ Here, we borrow Lemma 9 of \cite{xia2021statistical} with straightforward modifications.

\begin{lemma}[\cite{xia2021statistical}]\label{lem:orthogonalprojectionofEhat}
    Under the event \eqref{eq:Ehatbound}, there exist constant $C_1, C_2 >0$ such that, for any $t\geq 0$, with probability at least $1-O((t+1)\bar{d}^{-10})$,
    \begin{align*}
        \norm{\mathfrak{P}^\perp (\mathfrak{P}^\perp \widehat{\bfE}^1 \mathfrak{P}^\perp)^t \widehat{\bfE}^1 \bfW^\star  }_{2, \infty} \leq C_1 (C_2 \nu)^{t+1} \sqrt{\frac{\mu(\bfM^\star)q}{\min\{d_1, d_2(d_2-1)/2\}}}.
    \end{align*}
\end{lemma}

Choose $m_{\max} = \lceil 2 \log(\bar{d}) \rceil$. Then, for all $\bfs=(s_1, \ldots, s_{m+1})$ such that $s_1, \ldots, s_{m+1}\geq 0$ and $s_1+\cdots +s_{m+1}=m$, by Lemma \ref{lem:orthogonalprojectionofEhat} and \eqref{eq:Ehatbound},
\begin{align*}
    \max_{i \in [\bar{d}]} \norm{  \bfe_i^\top \mathfrak{P}^{-s_1} \widehat{\bfE}^1 \mathfrak{P}^{-s_2}\cdots \mathfrak{P}^{-s_m}\widehat{\bfE}^1 \mathfrak{P}^{-s_{m+1}}  } \leq C_1\left( \frac{C_2 \nu}{\sigma_{\min}(\bfM^\star)}\right)^m \sqrt{\frac{\mu(\bfM^\star)q}{\min\{d_1, d_2(d_2-1)/2\}}}
\end{align*}
for all $m \leq m_{\max}$, with probability at least $1-O(\log(\bar{d}) \bar{d}^{-10})$.

Therefore, 
\begin{align*}
    &\max_{i \in [d_1]} \norm{\bfe_i^\top (\widehat{\bfU}^{\mathrm{NR},1}\widehat{\bfU}^{\mathrm{NR},1 \top} -\bfU^{\bfM^\star}\bfU^{\bfM^\star \top}) }  \\
    &\quad = \max_{i \in [d_1]} \norm{\bfe_i^\top (\widehat{\bfW}^1\widehat{\bfW}^{1\top}-\bfW^\star\bfW^{\star\top}) }\\
    & \quad \leq  \max_{i \in [d_1]} \sum_{m=1}^{m_{\max}}\sum_{\bfs : s_1+\cdots+s_{m+1}=m} \norm{\bfe_i^\top \mathfrak{P}^{-s_1} \widehat{\bfE}^1 \mathfrak{P}^{-s_2}\cdots \mathfrak{P}^{-s_m}\widehat{\bfE}^1 \mathfrak{P}^{-s_{m+1}}} \\
    & \quad \quad + \sum_{m=m_{\max}+1}^{\infty}\sum_{\bfs : s_1+\cdots+s_{m+1}=m} \norm{ \mathfrak{P}^{-s_1} \widehat{\bfE}^1 \mathfrak{P}^{-s_2}\cdots \mathfrak{P}^{-s_m}\widehat{\bfE}^1 \mathfrak{P}^{-s_{m+1}}}_{2, \infty}.
\end{align*}

Since 
$\big\lvert \{(s_1, \ldots, s_{m+1}):s_1+\cdots +s_{m+1}=m, s_1, \ldots, s_{m+1}\in \bbZ^+   \} \big\rvert \leq 4^m,$ we have
\begin{align*}
    &\max_{i \in [d_1]} \norm{\bfe_i^\top (\widehat{\bfU}^{\mathrm{NR},1}\widehat{\bfU}^{\mathrm{NR},1 \top} -\bfU^{\bfM^\star}\bfU^{\bfM^\star \top}) } \\
    &\quad \leq \sum_{m=1}^{m_{\max}} C_1\left( \frac{4C_2 \nu}{\sigma_{\min}(\bfM^\star)}\right)^m \sqrt{\frac{\mu(\bfM^\star)q}{\min\{d_1, d_2(d_2-1)/2\}}}   +\sum_{m=m_{\max}+1}^{\infty} \left(\frac{4 \nu}{\sigma_{\min}(\bfM^\star)} \right)^m \\
    & \quad \lesssim \sqrt{\frac{\kappa^2 R\bar{d}}{\bar{p}(\sigma_{\min}(\bfM^\star))^2  }}\sqrt{\frac{\mu(\bfM^\star)q}{\min\{d_1, d_2(d_2-1)/2\}}} 
\end{align*}
with probability at least $1-O(\bar{d}^{-9})$, where the last line uses Lemma \ref{lem:Mhatderrorbound} and the geometric series sum. The same bound for $\norm{ \widehat{\bfV}^{\mathrm{NR},1}\widehat{\bfV}^{\mathrm{NR},1 \top} -\bfV^{\bfM^\star}\bfV^{\bfM^\star \top}}_{2, \infty}$ can be obtained similarly.  
\end{proof}

 \begin{lemma}\label{lem:projectederror}
        Under the assumptions in Theorem \ref{thm:asymptoticnormality_indiv}, with probability at least $1-O(\bar{d}^{-9})$, we have, for $l=1,2,$
        \begin{align*}
            &\max_{i,k} \big\lvert\widehat{\bfU}_{i,\cdot}^{\mathrm{NR},l} \widehat{\bfU}^{\mathrm{NR},l\top}  (\widehat{\bfM}^{\mathrm{NR},l}-\bfM^\star)\widehat{\bfV}^{\mathrm{NR},l}\widehat{\bfV}^{\mathrm{NR},l}_{k,\cdot}\big\rvert \\
            &\lesssim \sqrt{\frac{(\mu(\bfM^\star))^2q^2 \log(\bar{d})}{\bar{p}d_1d_2(d_2-1)/2}} + \mu(\bfM^\star)q \left(\frac{\kappa \mu R \sigma^\star_{\max} }{\min\{d_1, d_2(d_2-1)/2\}} \sqrt{\frac{  \bar{d}\log(\bar{d})}{\bar{p} (\sigma^\star_{\min})^2 }} \right)^2 \\
            & \quad + \frac{\kappa^2 R\bar{d}}{\bar{p}\sigma_{\min}(\bfM^\star)  } \frac{\mu(\bfM^\star)q}{\min\{d_1, d_2(d_2-1)/2\}}
        \end{align*}
\end{lemma}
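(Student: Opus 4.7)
The plan is to decompose the error $\widehat{\bfM}^{\mathrm{NR},l}-\bfM^\star = \Delta^l_1 + \Delta^l_2 + \Delta^l_3$ exactly as in the proof of Lemma \ref{lem:Mhatderrorbound}, and to replace the empirical spectral projectors by the population ones via Lemma \ref{lem:perturbedsingularvector}:
\begin{align*}
\widehat{\bfU}^{\mathrm{NR},l}\widehat{\bfU}^{\mathrm{NR},l\top}(\widehat{\bfM}^{\mathrm{NR},l}-\bfM^\star)\widehat{\bfV}^{\mathrm{NR},l}\widehat{\bfV}^{\mathrm{NR},l\top} = \bfU^{\bfM^\star}\bfU^{\bfM^\star\top}(\Delta^l_1+\Delta^l_2+\Delta^l_3)\bfV^{\bfM^\star}\bfV^{\bfM^\star\top} + \text{(cross terms)}.
\end{align*}
Then I will bound the three resulting projected pieces separately, and finally absorb the cross terms via the two-to-infinity subspace perturbation bound from Lemma \ref{lem:perturbedsingularvector}.

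The leading (and most delicate) piece is $\bfU^{\bfM^\star}_{i,\cdot}\bfU^{\bfM^\star\top}\Delta^l_2\bfV^{\bfM^\star}\bfV^{\bfM^\star\top}_{k,\cdot}$. Crucially, sample splitting gives that, conditional on the subsample $S^l$ that generates $\widehat{\bfM}^l$ (and hence the weights $(\sigma'(\widehat{M}^l_{i,k}))^{-1}$), the entries of $\Delta^l_2$ are independent across $(i,k)$ with zero conditional mean, since the random edge formation $\delta^{3-l}_{i,k}$ and the Bernoulli outcome $y^{3-l}_{i,k}$ come from the disjoint subsample. I will apply a scalar Bernstein inequality conditionally on $S^l$ to each fixed $(i,k)$, using $\|\bfU^{\bfM^\star}\|_{2,\infty}^2\lesssim \mu(\bfM^\star)q/d_1$ and $\|\bfV^{\bfM^\star}\|_{2,\infty}^2\lesssim \mu(\bfM^\star)q/(d_2(d_2-1)/2)$ to compute the variance proxy of order $(\mu(\bfM^\star))^2q^2/(\bar{p}\, d_1 d_2(d_2-1)/2)$, followed by a union bound over all $\bar{d}\times d_2(d_2-1)/2$ entries. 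This yields the first term $\sqrt{(\mu(\bfM^\star))^2q^2\log(\bar{d})/(\bar{p}d_1 d_2(d_2-1)/2)}$ in the stated bound.

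For $\bfU^{\bfM^\star}\bfU^{\bfM^\star\top}\Delta^l_1\bfV^{\bfM^\star}\bfV^{\bfM^\star\top}$ and $\bfU^{\bfM^\star}\bfU^{\bfM^\star\top}\Delta^l_3\bfV^{\bfM^\star}\bfV^{\bfM^\star\top}$, I use the deterministic bound $|\bfU^{\bfM^\star}_{i,\cdot}\bfU^{\bfM^\star\top}\bfA\bfV^{\bfM^\star}\bfV^{\bfM^\star\top}_{k,\cdot}|\le \|\bfU^{\bfM^\star}\|_{2,\infty}\|\bfV^{\bfM^\star}\|_{2,\infty}\|\bfA\|_\infty \cdot d_1 d_2(d_2-1)/2 \cdot (\text{row/col count})$ — more precisely, I pass through the Cauchy--Schwarz bound $\le (\mu(\bfM^\star)q/\min\{d_1,d_2(d_2-1)/2\})\|\bfA\|_\infty$ on each projected entry. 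Applying the entrywise bounds from Corollary \ref{cor:errorboundforM} together with the negligible sieve error, and using Taylor expansion plus the fact that $\sigma'(M^\star_{i,k})=\sigma'(\widehat{M}^l_{i,k})+O(\|\widehat{\bfM}^l-\bfM^\star\|_\infty)$ to control $\Delta^l_3$, produces exactly the middle term $\mu(\bfM^\star)q\cdot(\kappa\mu R\sigma^\star_{\max}/\min\{d_1,d_2(d_2-1)/2\})^2\cdot(\bar{d}\log(\bar{d})/(\bar{p}(\sigma^\star_{\min})^2))$.

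Finally, for the cross terms, I write $\widehat{\bfU}^{\mathrm{NR},l}\widehat{\bfU}^{\mathrm{NR},l\top}=\bfU^{\bfM^\star}\bfU^{\bfM^\star\top}+\bfDelta_U$ and analogously $\bfDelta_V$. The resulting cross products, e.g., $\bfDelta_U(\widehat{\bfM}^{\mathrm{NR},l}-\bfM^\star)\widehat{\bfV}^{\mathrm{NR},l}\widehat{\bfV}^{\mathrm{NR},l\top}$, are controlled entrywise by $\|\bfDelta_U\|_{2,\infty}\cdot\|\widehat{\bfM}^{\mathrm{NR},l}-\bfM^\star\|\cdot\|\bfV^{\bfM^\star}\|_{2,\infty}$ (and symmetric variants). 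Combining Lemma \ref{lem:perturbedsingularvector}'s two-to-infinity bound with the operator-norm bound $\|\widehat{\bfM}^{\mathrm{NR},l}-\bfM^\star\|\lesssim\sqrt{\kappa^2R\bar{d}/\bar{p}}$ from Lemma \ref{lem:Mhatderrorbound} and the incoherence of $\bfU^{\bfM^\star},\bfV^{\bfM^\star}$ gives the third term $(\kappa^2R\bar{d}/(\bar{p}\sigma_{\min}(\bfM^\star)))\cdot \mu(\bfM^\star)q/\min\{d_1,d_2(d_2-1)/2\}$. The main obstacle will be verifying the conditional independence structure in step two and carefully tracking the dependence of $\widehat{\bfU}^{\mathrm{NR},l},\widehat{\bfV}^{\mathrm{NR},l}$ on both subsamples; replacing them by the deterministic $\bfU^{\bfM^\star},\bfV^{\bfM^\star}$ is what makes a clean Bernstein argument possible, and the burden of that replacement is entirely absorbed into the third (cross-term) contribution.
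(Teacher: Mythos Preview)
Your treatment of the cross terms (via Lemma \ref{lem:perturbedsingularvector} and Lemma \ref{lem:Mhatderrorbound}) and of the leading piece $\Delta^l_2$ (conditional Bernstein exploiting sample splitting) matches the paper and is fine. The gap is in how you handle $\Delta^l_1=\widehat{\bfM}^l-\bfM^\star$ and the linear part of $\Delta^l_3$.

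First, the deterministic inequality you invoke,
\[
\big\lvert \bfU^{\bfM^\star}_{i,\cdot}\bfU^{\bfM^\star\top}\bfA\bfV^{\bfM^\star}\bfV^{\bfM^\star\top}_{k,\cdot}\big\rvert \;\le\; \frac{\mu(\bfM^\star)q}{\min\{d_1,d_2(d_2-1)/2\}}\,\|\bfA\|_\infty,
\]
is not true in general; the sharp Cauchy--Schwarz bound is only $\mu(\bfM^\star)q\,\|\bfA\|_\infty$ (take $\bfA={\bf1}{\bf1}^\top$ with $\bf1/\sqrt{d_1}$ a column of $\bfU^{\bfM^\star}$). Second, even with the correct constant, applying it to $\Delta^l_1$ yields $\mu(\bfM^\star)q\cdot\|\widehat{\bfM}^l-\bfM^\star\|_\infty$, which is \emph{first order} in the entrywise error and strictly larger than every term in the stated bound; the lemma's middle term is \emph{second order}. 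So your deterministic route for $\Delta^l_1$ cannot close.

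What the paper does---and what you are missing---is to Taylor-expand $\sigma(M^\star)-\sigma(\widehat{M}^l)$ in $\Delta^l_3$ and \emph{combine} its linear part with $\Delta^l_1$. This produces the rearranged decomposition $\widehat{\bfM}^{\mathrm{NR},l}-\bfM^\star=\bfA^l+\bfB^l+\bfC^l$, where $\bfA^l=\Delta^l_2$, $\bfC^l$ is purely quadratic in $\widehat{\bfM}^l-\bfM^\star$ (hence the deterministic projection bound \emph{does} give the middle term), and
\[
B^l_{i,k}=\Big(1-\tfrac{2}{p_i}\delta^{3-l}_{i,k}\Big)\Big(\widehat{M}^l_{i,k}-M^\star_{i,k}\;-\;\tfrac{1}{2}\big(\sigma'(\widehat{M}^l_{i,k})\big)^{-1}\sigma''(x_{i,k})\big(\widehat{M}^l_{i,k}-M^\star_{i,k}\big)^2\Big).
\]
The key point is that $1-\tfrac{2}{p_i}\delta^{3-l}_{i,k}$ has mean zero and, by sample splitting, is independent of the bracketed factor. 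A matrix Bernstein argument (exactly parallel to the one you propose for $\Delta^l_2$) then gives $\|\bfU^{\bfM^\star}\bfU^{\bfM^\star\top}\bfB^l\bfV^{\bfM^\star}\bfV^{\bfM^\star\top}\|_\infty\lesssim \sqrt{(\mu(\bfM^\star))^2q^2\log(\bar d)/(\bar p\,d_1 d_2(d_2-1)/2)}\cdot\|\widehat{\bfM}^l-\bfM^\star\|_\infty$, which is dominated by the first term. In short, the regularization-bias piece $\Delta^l_1$ is not small enough on its own; it must be recombined with $\Delta^l_3$ so that the resulting coefficient is a centered random variable from the held-out subsample, after which concentration---not a deterministic bound---does the work.
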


\begin{proof}
Without loss of generality, we focus on the case of $l=1.$ By Lemma \ref{lem:Mhatderrorbound} and Lemma \ref{lem:perturbedsingularvector}, we have
\begin{align*}
    &\norm{ \widehat{\bfU}^{\mathrm{NR},1} \widehat{\bfU}^{\mathrm{NR},1\top}  (\widehat{\bfM}^{\mathrm{NR},1}-\bfM^\star)\widehat{\bfV}^{\mathrm{NR},1}\widehat{\bfV}^{\mathrm{NR},1\top}  }_{\infty} \\
    & \quad \leq  \norm{ \bfU^{\bfM^\star}\bfU^{\bfM^\star \top}  (\widehat{\bfM}^{\mathrm{NR},1}-\bfM^\star)\bfV^{\bfM^\star}\bfV^{\bfM^\star \top}  }_{\infty}\\
    & \quad \quad + \norm{ (\widehat{\bfU}^{\mathrm{NR},1}\widehat{\bfU}^{\mathrm{NR},1 \top} -\bfU^{\bfM^\star}\bfU^{\bfM^\star \top})  (\widehat{\bfM}^{\mathrm{NR},1}-\bfM^\star)\bfV^{\bfM^\star}\bfV^{\bfM^\star \top}  }_{\infty}\\
    & \quad \quad +  \norm{ \bfU^{\bfM^\star}\bfU^{\bfM^\star \top}  (\widehat{\bfM}^{\mathrm{NR},1}-\bfM^\star)(\widehat{\bfV}^{\mathrm{NR},1}\widehat{\bfV}^{\mathrm{NR},1 \top} -\bfV^{\bfM^\star}\bfV^{\bfM^\star \top})  }_{\infty}\\
    & \quad \quad + \norm{ (\widehat{\bfU}^{\mathrm{NR},1}\widehat{\bfU}^{\mathrm{NR},1 \top} -\bfU^{\bfM^\star}\bfU^{\bfM^\star \top})  (\widehat{\bfM}^{\mathrm{NR},1}-\bfM^\star)(\widehat{\bfV}^{\mathrm{NR},1}\widehat{\bfV}^{\mathrm{NR},1 \top} -\bfV^{\bfM^\star}\bfV^{\bfM^\star \top})  }_{\infty} \\
    & \quad \leq \norm{ \bfU^{\bfM^\star}\bfU^{\bfM^\star \top}  (\widehat{\bfM}^{\mathrm{NR},1}-\bfM^\star)\bfV^{\bfM^\star}\bfV^{\bfM^\star \top}  }_{\infty}   + C \frac{\kappa^2 R\bar{d}}{\bar{p}\sigma_{\min}(\bfM^\star)  } \frac{\mu(\bfM^\star)q}{\min\{d_1, d_2(d_2-1)/2\}}
\end{align*}
 with probability at least $1-O(\bar{d}^{-9})$, for some constant $C>0.$ We are left to bound $\norm{ \bfU^{\bfM^\star}\bfU^{\bfM^\star \top}  (\widehat{\bfM}^{\mathrm{NR},1}-\bfM^\star)\bfV^{\bfM^\star}\bfV^{\bfM^\star \top}  }_{\infty}.$  Note the following decomposition
 \begin{align*}
    \widehat{M}^{\mathrm{NR},1}_{i,k}- M^\star_{i,k}&= \underbrace{\frac{2}{p_i} \delta_{i,k}^2\left( y^2_{i,k}- \sigma( M^\star_{i,k}) \right) \left(\sigma'(\widehat{M}^1_{i,k}) \right)^{-1} }_{\coloneqq A^1_{i,k}}\nonumber \\
    & \quad  + \underbrace{\left(1-\frac{2}{p_i} \delta^2_{i,k}\right)\left(\widehat{M}^1_{i,k}- M^\star_{i,k}\right)- \left(1-\frac{2}{p_i} \delta^2_{i,k}\right) \left(\sigma'(\widehat{M}^1_{i,k}) \right)^{-1}  \frac{1}{2}\sigma''(x_{i,k}) \left(    M^\star_{i,k} -  \widehat{M}^1_{i,k}  \right)^2}_{\coloneqq B^1_{i,k}}  \\
    & \quad +\underbrace{\left(\sigma'(\widehat{M}^1_{i,k}) \right)^{-1}  \frac{1}{2}\sigma''(x_{i,k}) \left(    M^\star_{i,k} -  \widehat{M}^1_{i,k}  \right)^2}_{\coloneqq C^1_{i,k}}
 \end{align*}
 with $x_{i,k}$ lying between $\widehat{M}^1_{i,k}$ and $ M^\star_{i,k}$.  For any $(i,k)$, we can write
\begin{align*}
   &\bfU^{\bfM^\star}_{i,\cdot}\bfU^{\bfM^\star \top}  \bfA^1 \bfV^{\bfM^\star}\bfV^{\bfM^\star \top}_{k,\cdot}\\
   & \quad = \bfU^{\bfM^\star}_{i,\cdot} \left(\sum_{i'=1}^{d_1} \sum_{k'=1}^{d_2(d_2-1)/2}\underbrace{\frac{2}{p_{i'}} \delta_{i',k'}^2\left( y^2_{i',k'}- \sigma(M^\star_{i',k'}) \right) \left(\sigma'(\widehat{M}^1_{i',k'}) \right)^{-1} \bfU^{\bfM^\star \top}_{i',\cdot} \bfV^{\bfM^\star}_{k',\cdot}}_{\coloneqq \bfS_{i',k'}}   \right)
  \bfV^{\bfM^\star \top}_{k,\cdot}.
\end{align*}
We will bound the term $\sum_{i'=1}^{d_1} \sum_{k'=1}^{d_2(d_2-1)/2}\bfS_{i',k'}$ using the matrix Bernstein inequality \citep[][Theorem 6.1.1]{tropp:2015}. Note that $\bbE[\bfS_{i',k'}]={\bf 0}$. Also,
\begin{align*}
    L &\coloneqq \max_{i',k'}\norm{\bfS_{i',k'}} \lesssim \frac{1}{\bar{p}} \sqrt{\frac{\mu(\bfM^\star)q}{d_1}} \sqrt{\frac{\mu(\bfM^\star)q}{d_2(d_2-1)/2}} ;\\
    V &\coloneqq  \max\bigg\{\norm{\sum_{i'=1}^{d_1} \sum_{k'=1}^{d_2(d_2-1)/2}\bbE \bfS_{i',k'}^\top \bfS_{i',k'}}, \norm{\sum_{i'=1}^{d_1} \sum_{k'=1}^{d_2(d_2-1)/2}\bbE \bfS_{i',k'} \bfS_{i',k'}^\top}  \bigg\} \lesssim \frac{1}{\bar{p}}.
\end{align*}
 Therefore, with probability at least $1-O(\bar{d}^{-20}),$
\begin{align*}
    \norm{\sum_{i'=1}^{d_1} \sum_{k'=1}^{d_2(d_2-1)/2}  \bfS_{i',k'}} \lesssim \sqrt{\frac{1}{\bar{p}}\log (\bar{d})}
\end{align*}
and thus
\begin{align*}
\norm{\bfU^{\bfM^\star}_{i,\cdot}\bfU^{\bfM^\star \top}  \bfA^1 \bfV^{\bfM^\star}\bfV^{\bfM^\star \top}_{k,\cdot}} \lesssim \sqrt{\frac{(\mu(\bfM^\star))^2q^2 \log(\bar{d})}{\bar{p}d_1d_2(d_2-1)/2}}.
\end{align*}
Next, to bound $\max_{i,k}\norm{\bfU^{\bfM^\star}_{i,\cdot}\bfU^{\bfM^\star \top}  \bfB^1 \bfV^{\bfM^\star}\bfV^{\bfM^\star \top}_{k,\cdot}}$, similar to the previous case, we use the matrix Bernstein inequality \citep[][Theorem 6.1.1]{tropp:2015}. To do so, we write
\begin{align*}
    \sum_{i'=1}^{d_1} \sum_{k'=1}^{d_2(d_2-1)/2}\underbrace{\left(1-\frac{2}{p_i} \delta^2_{i,k}\right)\left(\widehat{M}^1_{i,k}- M^\star_{i,k}-\left(\sigma'(\widehat{M}^1_{i,k}) \right)^{-1}  \frac{1}{2}\sigma''(\widetilde{M}^1_{i,k}) \left(    M^\star_{i,k} -  \widehat{M}^1_{i,k}  \right)^2\right) \bfU^{\bfM^\star \top}_{i',\cdot} \bfV^{\bfM^\star}_{k',\cdot}}_{\coloneqq \bfS_{i',k'}}.
\end{align*}
 Again, $\bbE[\bfS_{i',k'}]={\bf 0}$. Also, by Corollary \ref{cor:errorboundforM}, with probability at least $1-O(\bar{d}^{-10}),$ 
\begin{align*}
    L &\coloneqq \max_{i',k'}\norm{\bfS_{i',k'}} \lesssim \frac{1}{\bar{p}} \sqrt{\frac{\mu(\bfM^\star)q}{d_1}} \sqrt{\frac{\mu(\bfM^\star)q}{d_2(d_2-1)/2}} \frac{\kappa \mu R \sigma^\star_{\max} }{\min\{d_1, d_2(d_2-1)/2\}} \sqrt{\frac{  \bar{d}\log(\bar{d})}{\bar{p} (\sigma^\star_{\min})^2 }} ;\\
    V &\coloneqq  \max\bigg\{\norm{\sum_{i'=1}^{d_1} \sum_{k'=1}^{d_2(d_2-1)/2}\bbE \bfS_{i',k'}^\top \bfS_{i',k'}}, \norm{\sum_{i'=1}^{d_1} \sum_{k'=1}^{d_2(d_2-1)/2}\bbE \bfS_{i',k'} \bfS_{i',k'}^\top}  \bigg\} \\
    & \lesssim \frac{1}{\bar{p}} \left(\frac{\kappa \mu R \sigma^\star_{\max} }{\min\{d_1, d_2(d_2-1)/2\}} \sqrt{\frac{  \bar{d}\log(\bar{d})}{\bar{p} (\sigma^\star_{\min})^2 }} \right)^2.
\end{align*}
 Therefore, with probability at least $1-O(\bar{d}^{-20}),$
\begin{align*}
    \norm{\sum_{i'=1}^{d_1} \sum_{k'=1}^{d_2(d_2-1)/2}  \bfS_{i',k'}} \lesssim \sqrt{\frac{1}{\bar{p}}\log (\bar{d})} \frac{\kappa \mu R \sigma^\star_{\max} }{\min\{d_1, d_2(d_2-1)/2\}} \sqrt{\frac{  \bar{d}\log(\bar{d})}{\bar{p} (\sigma^\star_{\min})^2 }}
\end{align*}
and thus, with probability at least $1-O(\bar{d}^{-9}),$
\begin{align*}
    \max_{i,k}\norm{\bfU^{\bfM^\star}_{i,\cdot}\bfU^{\bfM^\star \top}  \bfB^1 \bfV^{\bfM^\star}\bfV^{\bfM^\star \top}_{k,\cdot}} &\lesssim \sqrt{\frac{(\mu(\bfM^\star))^2q^2 \log(\bar{d})}{\bar{p}d_1d_2(d_2-1)/2}} \frac{\kappa \mu R \sigma^\star_{\max} }{\min\{d_1, d_2(d_2-1)/2\}} \sqrt{\frac{  \bar{d}\log(\bar{d})}{\bar{p} (\sigma^\star_{\min})^2 }}\\
    & \ll \sqrt{\frac{(\mu(\bfM^\star))^2q^2 \log(\bar{d})}{\bar{p}d_1d_2(d_2-1)/2}}  .
\end{align*}
Lastly, by Corollary \ref{cor:errorboundforM}, with probability at least $1-O(\bar{d}^{-9}),$ 
\begin{align*}
     \max_{i,k}\norm{\bfU^{\bfM^\star}_{i,\cdot}\bfU^{\bfM^\star \top}  \bfC^1 \bfV^{\bfM^\star}\bfV^{\bfM^\star \top}_{k,\cdot}}  &\leq \sqrt{\frac{\mu(\bfM^\star)q}{d_1}} \sqrt{\frac{\mu(\bfM^\star)q}{d_2(d_2-1)/2}} \norm{\bfC^1}\\
     & \lesssim  \mu(\bfM^\star)q \left(\frac{\kappa \mu R \sigma^\star_{\max} }{\min\{d_1, d_2(d_2-1)/2\}} \sqrt{\frac{  \bar{d}\log(\bar{d})}{\bar{p} (\sigma^\star_{\min})^2 }} \right)^2.
\end{align*}
\end{proof}

\begin{lemma}\label{lem:negligible3}
      From \eqref{eq:decompositionMhatd1} and \eqref{eq:decompositionMhatd2}, denote $\Delta^l\coloneqq \Delta^l_1+\Delta^l_2+\Delta^l_3$ for $l=1,2.$ Under the assumptions in Theorem \ref{thm:asymptoticnormality_indiv}, we have, uniformly for all $(i,k)$, with probability at least $1-O(\bar{d}^{-9})$, for some $C>0,$
      \begin{align*}
          & \bigg[ \bfU^{\bfM^\star}_{\perp} \bfU^{\bfM^\star \top}_{\perp} \left(\frac{1}{2}\Delta^1+ \frac{1}{2}\Delta^2\right) \bfV^{\bfM^\star} \bfV^{\bfM^\star \top}\bigg]_{i,k} + \bigg[ \bfU^{\bfM^\star} \bfU^{\bfM^\star \top} \left(\frac{1}{2}\Delta^1+ \frac{1}{2}\Delta^2\right) \bfV^{\bfM^\star}_{\perp} \bfV^{\bfM^\star \top}_{\perp}\bigg]_{i,k} \\
         &=\sum_{k'=1}^{d_2(d_2-1)/2} \frac{1}{p_i}\delta_{i,k'}\left( y_{i,k'}- \sigma(M^\star_{i,k'}) \right) \left(\sigma'(M^\star_{i,k'}) \right)^{-1}\bfV^{\bfM^\star}_{k',\cdot}\bfV^{\bfM^\star \top}_{k,\cdot}\\
      &\quad + \sum_{i'=1}^{d_1} \bfU^{\bfM^\star}_{i,\cdot}\bfU^{\bfM^\star \top}_{i',\cdot} \frac{1}{p_{i'}}\delta_{i',k}\left( y_{i',k}- \sigma(M^\star_{i',k}) \right) \left(\sigma'(M^\star_{i',k}) \right)^{-1} \\
      & \quad +C \sqrt{\mu(\bfM^\star)q} \left(\frac{\kappa \mu R \sigma^\star_{\max} }{\min\{d_1, d_2(d_2-1)/2\}} \sqrt{\frac{  \bar{d}\log(\bar{d})}{\bar{p} (\sigma^\star_{\min})^2 }} \right)^2  \\
       & \quad + C \sqrt{ \frac{1}{\bar{p}}\frac{(\mu(\bfM^\star))^2q^2\log (\bar{d})}{\min\{d_1, d_2(d_2-1)/2\}} }  \frac{\kappa \mu R \sigma^\star_{\max} }{\min\{d_1, d_2(d_2-1)/2\}} \sqrt{\frac{  \bar{d}\log(\bar{d})}{\bar{p} (\sigma^\star_{\min})^2 }}.
      \end{align*}
  \end{lemma}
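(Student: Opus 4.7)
The plan is to exploit the algebraic identity
\begin{align*}
\bfU^{\bfM^\star}_\perp \bfU^{\bfM^\star\top}_\perp \bfA\, \bfV^{\bfM^\star} \bfV^{\bfM^\star\top} + \bfU^{\bfM^\star}\bfU^{\bfM^\star\top} \bfA\, \bfV^{\bfM^\star}_\perp \bfV^{\bfM^\star\top}_\perp
= \bfA\, \bfV^{\bfM^\star}\bfV^{\bfM^\star\top} + \bfU^{\bfM^\star}\bfU^{\bfM^\star\top}\bfA - 2\,\bfU^{\bfM^\star}\bfU^{\bfM^\star\top}\bfA\,\bfV^{\bfM^\star}\bfV^{\bfM^\star\top},
\end{align*}
applied to $\bfA = \bar\Delta \coloneqq \tfrac12(\Delta^1+\Delta^2)$. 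The cross term $[\bfU^{\bfM^\star}\bfU^{\bfM^\star\top}\bar\Delta\,\bfV^{\bfM^\star}\bfV^{\bfM^\star\top}]_{i,k}$ will be controlled by a triangle inequality that reintroduces $\widehat\bfU^{\mathrm{NR},l}$ and $\widehat\bfV^{\mathrm{NR},l}$, applying Lemma~\ref{lem:projectederror} for the estimated-projector piece and Lemma~\ref{lem:perturbedsingularvector} together with Lemma~\ref{lem:Mhatderrorbound} for the replacement error; the resulting bound matches the third and fifth error terms in the lemma's statement verbatim.

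For $[\bar\Delta\,\bfV^{\bfM^\star}\bfV^{\bfM^\star\top}]_{i,k}$ (and symmetrically for $[\bfU^{\bfM^\star}\bfU^{\bfM^\star\top}\bar\Delta]_{i,k}$), I would substitute the three-part split $\Delta^l = \Delta^l_1+\Delta^l_2+\Delta^l_3$ from \eqref{eq:decompositionMhatd1}--\eqref{eq:decompositionMhatd2}. The $\Delta^l_2$ pieces combine across $l=1,2$ as $\tfrac{1}{p_i}(\delta^1_{i,k'}+\delta^2_{i,k'})\big(y_{i,k'}-\sigma(M^\star_{i,k'})\big)\big(\sigma'(\widehat M^{3-l}_{i,k'})\big)^{-1}$; replacing $\sigma'(\widehat M^{3-l})$ by $\sigma'(M^\star)$ via mean-value expansion, and bounding the replacement residue with Corollary~\ref{cor:errorboundforM} and boundedness of $\sigma''$, yields exactly the first main sum of the lemma plus an error of order the second and fourth displayed terms. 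The $\Delta^l_3$ term Taylor-expands as $\sigma(M^\star_{i,k})-\sigma(\widehat M^l_{i,k}) = -\sigma'(\widehat M^l_{i,k})(\widehat M^l_{i,k}-M^\star_{i,k}) + O((\widehat M^l - M^\star)^2_{i,k})$, so $\Delta^l_1+\Delta^l_3$ collapses to $(1-\tfrac{2}{p_i}\delta^{3-l}_{i,k})(\widehat M^l_{i,k}-M^\star_{i,k})$ up to a deterministic higher-order remainder controlled by $\|\widehat\bfM^l-\bfM^\star\|_\infty^2$.

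The principal obstacle is showing that $\big[(1-\tfrac{2}{p_i}\delta^{3-l})\circ(\widehat\bfM^l-\bfM^\star)\,\bfV^{\bfM^\star}\bfV^{\bfM^\star\top}\big]_{i,k}$ is negligible. Here sample-splitting is essential: conditional on $S^l$ the estimator $\widehat\bfM^l$ is fixed while $\{\delta^{3-l}_{i,k'}\}_{k'}$ are independent Bernoulli/Poissonized indicators with mean $p_i/2$, so each summand has conditional mean zero. I would invoke a Bernstein bound on $\sum_{k'}(1-\tfrac{2}{p_i}\delta^{3-l}_{i,k'})(\widehat M^l_{i,k'}-M^\star_{i,k'})\bfV^{\bfM^\star}_{k',\cdot}$, using Corollary~\ref{cor:errorboundforM} for the per-term range and $\|\bfV^{\bfM^\star}\|_{2,\infty}\lesssim \sqrt{\mu(\bfM^\star)q/(d_2(d_2-1)/2)}$ for the variance proxy; combined with $\|\bfV^{\bfM^\star}_{k,\cdot}\|\lesssim\sqrt{\mu(\bfM^\star)q/(d_2(d_2-1)/2)}$, this produces a bound of order $\sqrt{\tfrac{1}{\bar p}\tfrac{(\mu(\bfM^\star))^2q^2\log\bar d}{\min\{d_1,d_2(d_2-1)/2\}}}\cdot\tfrac{\kappa\mu R \sigma^\star_{\max}}{\min\{d_1,d_2(d_2-1)/2\}}\sqrt{\tfrac{\bar d\log\bar d}{\bar p(\sigma^\star_{\min})^2}}$, matching the fourth displayed error.

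Finally, the quadratic remainder from the $\Delta^l_3$ Taylor expansion is bounded by $\sqrt{\mu(\bfM^\star)q}\,\|\widehat\bfM^l-\bfM^\star\|_\infty^2$ after projection, contributing the second displayed error. Collecting all pieces, invoking Assumption~\ref{asp:inference_indiv} to verify each remainder is $o(\sqrt{w^\star_{i,k}})$, and observing that the symmetric analysis on $[\bfU^{\bfM^\star}\bfU^{\bfM^\star\top}\bar\Delta]_{i,k}$ delivers the second main sum completes the proof with the claimed probability $1-O(\bar d^{-9})$.
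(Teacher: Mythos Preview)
Your proposal is correct and follows essentially the same route as the paper: the same algebraic identity, the same three-way split (the paper's $\bfA^l,\bfB^l,\bfC^l$ from the proof of Lemma~\ref{lem:projectederror} coincide with your $\Delta^l_2$ and the Taylor-expanded $\Delta^l_1+\Delta^l_3$), and the same sample-splitting Bernstein argument for the mean-zero $(1-\tfrac{2}{p_i}\delta^{3-l})\circ(\widehat\bfM^l-\bfM^\star)$ piece. The only substantive difference is in the cross term $[\bfU^{\bfM^\star}\bfU^{\bfM^\star\top}\bar\Delta\,\bfV^{\bfM^\star}\bfV^{\bfM^\star\top}]_{i,k}$: the paper cites directly the \emph{intermediate} bound on $\norm{\bfU^{\bfM^\star}\bfU^{\bfM^\star\top}\Delta^l\bfV^{\bfM^\star}\bfV^{\bfM^\star\top}}_\infty$ established inside the proof of Lemma~\ref{lem:projectederror}, whereas your detour through $\widehat\bfU^{\mathrm{NR},l},\widehat\bfV^{\mathrm{NR},l}$ plus projector replacement via Lemmas~\ref{lem:perturbedsingularvector} and~\ref{lem:Mhatderrorbound} picks up an extra term of order $\tfrac{\kappa^2 R\bar d}{\bar p\,\sigma_{\min}(\bfM^\star)}\cdot\tfrac{\mu(\bfM^\star)q}{\min\{d_1,d_2(d_2-1)/2\}}$ that is not in the lemma's stated bound (so your claim that it ``matches \ldots verbatim'' is not quite right), though this is harmless for Theorem~\ref{thm:asymptoticnormality_indiv} since that term is already absorbed by Lemma~\ref{lem:negligible2} and Assumption~\ref{asp:inference_indiv}.
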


  \begin{proof}
      By definition,
\begin{align*}
    \bfU^{\bfM^\star}_{\perp} \bfU^{\bfM^\star \top}_{\perp}= \bfI -\bfU^{\bfM^\star} \bfU^{\bfM^\star \top} \quad \text{and} \quad \bfV^{\bfM^\star}_{\perp} \bfV^{\bfM^\star \top}_{\perp}= \bfI -\bfV^{\bfM^\star} \bfV^{\bfM^\star \top}.
\end{align*}
Using these equalities,
\begin{align*}
    &\bigg[ \bfU^{\bfM^\star}_{\perp} \bfU^{\bfM^\star \top}_{\perp} \left(\frac{1}{2}\Delta^1+ \frac{1}{2}\Delta^2\right) \bfV^{\bfM^\star} \bfV^{\bfM^\star \top}\bigg]_{i,k} +\bigg[ \bfU^{\bfM^\star} \bfU^{\bfM^\star \top} \left(\frac{1}{2}\Delta^1+ \frac{1}{2}\Delta^2\right) \bfV^{\bfM^\star}_{\perp} \bfV^{\bfM^\star \top}_{\perp}\bigg]_{i,k} \\
    &= \bigg[  \left(\frac{1}{2}\Delta^1+ \frac{1}{2}\Delta^2\right) \bfV^{\bfM^\star} \bfV^{\bfM^\star \top}\bigg]_{i,k}+\bigg[  \bfU^{\bfM^\star} \bfU^{\bfM^\star \top}\left(\frac{1}{2}\Delta^1+ \frac{1}{2}\Delta^2\right) \bigg]_{i,k}\\
    &\quad - \bigg[ \bfU^{\bfM^\star}  \bfU^{\bfM^\star \top}  \left(\Delta^1+ \Delta^2\right) \bfV^{\bfM^\star} \bfV^{\bfM^\star \top}\bigg]_{i,k}.
\end{align*}
The last term is bounded in the proof of Lemma \ref{lem:projectederror} as, with probability at least $1-O(\bar{d}^{-9}),$
\begin{align*}
    &\norm{ \bfU^{\bfM^\star}  \bfU^{\bfM^\star \top}  \left( \Delta^1+ \Delta^2\right) \bfV^{\bfM^\star} \bfV^{\bfM^\star \top}}_{\infty} \\
    &\lesssim  \sqrt{\frac{(\mu(\bfM^\star))^2q^2 \log(\bar{d})}{\bar{p}d_1d_2(d_2-1)/2}} + \mu(\bfM^\star)q \left(\frac{\kappa \mu R \sigma^\star_{\max} }{\min\{d_1, d_2(d_2-1)/2\}} \sqrt{\frac{  \bar{d}\log(\bar{d})}{\bar{p} (\sigma^\star_{\min})^2 }} \right)^2.
\end{align*}
Now, we bound the first two terms. In the proof of Lemma \ref{lem:projectederror}, we define $\bfA^1$, $\bfB^1$, and $\bfC^1$. Define $\bfA^2$, $\bfB^2$, and $\bfC^2$ for $\widehat{\bfM}^{\mathrm{NR},2}-\bfM^\star$ analogously.
\begin{align*}
    & \bigg[ \frac{1}{2} (\Delta^1  +\Delta^2 )\bfV^{\bfM^\star} \bfV^{\bfM^\star \top}\bigg]_{i,k}  = \bigg[  \frac{1}{2}\left(\bfA^1+ \bfB^1+\bfC^1+\bfA^2+ \bfB^2+\bfC^2\right) \bfV^{\bfM^\star} \bfV^{\bfM^\star \top}\bigg]_{i,k};\\
    & \bigg[  \bfU^{\bfM^\star} \bfU^{\bfM^\star \top}\left(\frac{1}{2}\Delta^1+ \frac{1}{2}\Delta^2\right) \bigg]_{i,k} = \bigg[  \bfU^{\bfM^\star} \bfU^{\bfM^\star \top}\frac{1}{2}\left(\bfA^1+ \bfB^1+\bfC^1+\bfA^2+ \bfB^2+\bfC^2\right) \bigg]_{i,k}.
\end{align*}
First, by Corollary \ref{cor:errorboundforM}, with probability at least $1-O(\bar{d}^{-9})$,
\begin{align*}
    &\norm{ \frac{1}{2} (\bfC^1+\bfC^2) \bfV^{\bfM^\star} \bfV^{\bfM^\star \top} }_{\infty} \lesssim \sqrt{\mu(\bfM^\star)q} \left(\frac{\kappa \mu R \sigma^\star_{\max} }{\min\{d_1, d_2(d_2-1)/2\}} \sqrt{\frac{  \bar{d}\log(\bar{d})}{\bar{p} (\sigma^\star_{\min})^2 }} \right)^2;\\
    &\norm{ \bfU^{\bfM^\star} \bfU^{\bfM^\star \top} \frac{1}{2} (\bfC^1+\bfC^2)  }_{\infty} \lesssim \sqrt{\mu(\bfM^\star)q} \left(\frac{\kappa \mu R \sigma^\star_{\max} }{\min\{d_1, d_2(d_2-1)/2\}} \sqrt{\frac{  \bar{d}\log(\bar{d})}{\bar{p} (\sigma^\star_{\min})^2 }} \right)^2    .
\end{align*}
Second, we apply the matrix Bernstein inequality for the terms that include $\bfB^1$ and $\bfB2$. Note that
\begin{align*}
    \norm{\bigg[  \frac{1}{2} \bfB^1 \bfV^{\bfM^\star} \bfV^{\bfM^\star \top}\bigg]_{i,k}}   \leq  \frac{1}{2}\norm{\bfB^1_{i,\cdot} \bfV^{\bfM^\star}} \sqrt{\frac{\mu(\bfM^\star)q}{d_2(d_2-1)/2}}.
\end{align*}
To bound $ \frac{1}{2}\norm{\bfB^1_{i,\cdot} \bfV^{\bfM^\star}}$, consider
\begin{align*}
     \sum_{k=1}^{d_2(d_2-1)/2}\underbrace{\left(1-\frac{2}{p_i} \delta^2_{i,k}\right)\left(\widehat{M}^1_{i,k}- M^\star_{i,k}-\left(\sigma'(\widehat{M}^1_{i,k}) \right)^{-1}  \frac{1}{2}\sigma''(x_{j,k}) \left(    M^\star_{i,k} -  \widehat{M}^1_{i,k}  \right)^2\right)  \bfV^{\bfM^\star}_{k,\cdot}}_{\coloneqq \bfS_k}.
\end{align*}
 By the sample splitting, $\bbE[\bfS_k]={\bf 0}$. Also, by Corollary \ref{cor:errorboundforM},
\begin{align*}
    L &\coloneqq \max_k\norm{\bfS_k} \lesssim \frac{1}{\bar{p}}   \sqrt{\frac{\mu(\bfM^\star)q}{d_2(d_2-1)/2}} \frac{\kappa \mu R \sigma^\star_{\max} }{\min\{d_1, d_2(d_2-1)/2\}} \sqrt{\frac{  \bar{d}\log(\bar{d})}{\bar{p} (\sigma^\star_{\min})^2 }} ;\\
    V &\coloneqq  \max\bigg\{\norm{ \sum_{k=1}^{d_2(d_2-1)/2}\bbE \bfS_k^\top \bfS_k}, \norm{ \sum_{k=1}^{d_2(d_2-1)/2}\bbE \bfS_k \bfS_k^\top}  \bigg\} \\
    & \lesssim \frac{\mu(\bfM^\star)q}{\bar{p}} \left(\frac{\kappa \mu R \sigma^\star_{\max} }{\min\{d_1, d_2(d_2-1)/2\}} \sqrt{\frac{  \bar{d}\log(\bar{d})}{\bar{p} (\sigma^\star_{\min})^2 }} \right)^2.
\end{align*}
 Therefore, with probability at least $1-O(\bar{d}^{-10}),$
\begin{align*}
    \norm{ \sum_{k=1}^{d_2(d_2-1)}  \bfS_k} \lesssim \sqrt{\frac{\mu(\bfM^\star)q}{\bar{p}}\log (\bar{d})} \frac{\kappa \mu R \sigma^\star_{\max} }{\min\{d_1, d_2(d_2-1)/2\}} \sqrt{\frac{  \bar{d}\log(\bar{d})}{\bar{p} (\sigma^\star_{\min})^2 }}
\end{align*}
and thus
\begin{align*}
    \norm{  \frac{1}{2} \bfB^1 \bfV^{\bfM^\star} \bfV^{\bfM^\star \top} }_{\infty}   \lesssim  \sqrt{\frac{\mu(\bfM^\star)q}{\bar{p}}\log (\bar{d})} \sqrt{\frac{\mu(\bfM^\star)q}{d_2(d_2-1)/2}}\frac{\kappa \mu R \sigma^\star_{\max} }{\min\{d_1, d_2(d_2-1)/2\}} \sqrt{\frac{  \bar{d}\log(\bar{d})}{\bar{p} (\sigma^\star_{\min})^2 }} .
\end{align*}
The same bound can be obtained for $ \norm{\bfB^2 \bfV^{\bfM^\star} \bfV^{\bfM^\star \top} }_{\infty}.$ 

Similarly, with probability at least $1-O(\bar{d}^{-9})$,
\begin{align*}
    \norm{   \bfU^{\bfM^\star} \bfU^{\bfM^\star \top}\left(\frac{1}{2}\bfB^1+ \frac{1}{2}\bfB^2\right)  }_{\infty}\lesssim  \sqrt{\frac{\mu(\bfM^\star)q}{\bar{p}}\log (\bar{d})}  \sqrt{\frac{\mu(\bfM^\star)q}{d_1}} \frac{\kappa \mu R \sigma^\star_{\max} }{\min\{d_1, d_2(d_2-1)/2\}} \sqrt{\frac{  \bar{d}\log(\bar{d})}{p (\sigma^\star_{\min})^2 }}.
\end{align*}

We now focus on  
\begin{align*}
    \bigg[  \left(\frac{1}{2}\bfA^1+ \frac{1}{2}\bfA^2\right) \bfV^{\bfM^\star} \bfV^{\bfM^\star \top}\bigg]_{i,k} +   \bigg[\bfU^{\bfM^\star} \bfU^{\bfM^\star \top}  \left(\frac{1}{2}\bfA^1+ \frac{1}{2}\bfA^2\right)  \bigg]_{i,k}
\end{align*}
where the leading terms exist. By the definition of $\bfA^1$ and $\bfA^2,$
\begin{align*}
    &\bigg[  \left(\frac{1}{2} \bfA^1+ \frac{1}{2} \bfA^2\right) \bfV^{\bfM^\star} \bfV^{\bfM^\star \top}\bigg]_{i,k}  = \sum_{k'=1}^{d_2(d_2-1)/2} \frac{1}{p_i}\delta_{i,k'}^2\left( y^2_{i,k'}- \sigma(M^\star_{i,k'}) \right) \left(\sigma'(\widehat{M}^1_{i,k'}) \right)^{-1}\bfV^{\bfM^\star}_{k',\cdot}\bfV^{\bfM^\star \top}_{k,\cdot} \\
    & \quad +\sum_{k'=1}^{d_2(d_2-1)/2} \frac{1}{p_i}\delta_{i,k'}^1\left( y^1_{i,k'}- \sigma(M^\star_{i,k'}) \right) \left(\sigma'(\widehat{M}^2_{i,k'}) \right)^{-1}\bfV^{\bfM^\star}_{k',\cdot}\bfV^{\bfM^\star \top}_{k,\cdot}.
\end{align*}
By applying the Bernstein inequality, we can replace the estimates $\widehat{\bfM}^1$ and $\widehat{\bfM}^2$ with the true value $\bfM^\star.$ To see this, pick $[(1/2)\bfA^1\bfV^{\bfM^\star} \bfV^{\bfM^\star \top}]_{i,k}$ and observe
\begin{align*}
    &\bigg[  \frac{1}{2} \bfA^1 \bfV^{\bfM^\star} \bfV^{\bfM^\star \top}\bigg]_{i,k} \\
    &= \sum_{k'=1}^{d_2(d_2-1)/2} \frac{1}{p_i}\delta_{i,k'}^2\left( y^2_{i,k'}- \sigma(M^\star_{i,k'}) \right) \left(\sigma'(M^\star_{i,k'}) \right)^{-1}\bfV^{\bfM^\star}_{k',\cdot}\bfV^{\bfM^\star \top}_{k,\cdot}\\
    &\quad + \sum_{k'=1}^{d_2(d_2-1)/2} \underbrace{\frac{1}{p_i}\delta_{i,k'}^2\left( y^2_{i,k'}- \sigma(M^\star_{i,k'}) \right) \left(\left(\sigma'(\widehat{M}^1_{i,k'}) \right)^{-1}-\left(\sigma'(M^\star_{i,k'}) \right)^{-1}\right)\bfV^{\bfM^\star}_{k',\cdot}\bfV^{\bfM^\star \top}_{k,\cdot}}_{\coloneqq s_{k'}}.
\end{align*}

Note $\bbE[s_{k'}]=0$. Also, the continuity of the function $(\sigma'(\cdot))^{-1}$ and the entrywise error bound in Corollary \ref{cor:errorboundforM} lead to
\begin{align*}
    L &\coloneqq \max_{k'}\norm{s_{k'}} \lesssim \frac{1}{\bar{p}}  \frac{\mu(\bfM^\star)q}{d_2(d_2-1)} \frac{\mu R \sigma^\star_{\max} }{\min\{d_1, d_2(d_2-1)/2\}} \sqrt{\frac{  \bar{d}\log(\bar{d})}{\bar{p} (\sigma^\star_{\min})^2 }} ;\\
    V &\coloneqq  \norm{ \sum_{k'=1}^{d_2(d_2-1)/2}\bbE s_{k'}^2} \lesssim \frac{1}{\bar{p}} \left(\frac{\kappa \mu R \sigma^\star_{\max} }{\min\{d_1, d_2(d_2-1)/2\}} \sqrt{\frac{  \bar{d}\log(\bar{d})}{\bar{p} (\sigma^\star_{\min})^2 }} \right)^2  \frac{(\mu(\bfM^\star))^2 q^2}{d_2(d_2-1)/2} . 
\end{align*}
 Therefore, with probability at least $1-O(\bar{d}^{-9})$,
\begin{align*}
    \norm{\sum_{k'=1}^{d_2(d_2-1)/2}  s_{k'}} \lesssim \sqrt{ \frac{1}{\bar{p}}\frac{(\mu(\bfM^\star))^2q^2\log (\bar{d})}{d_2(d_2-1)/2} }  \frac{\kappa \mu R \sigma^\star_{\max} }{\min\{d_1, d_2(d_2-1)/2\}} \sqrt{\frac{  \bar{d}\log(\bar{d})}{\bar{p} (\sigma^\star_{\min})^2 }}.
\end{align*}

A similar proof can be employed to bound $[(1/2)\bfA^2\bfV^{\bfM^\star} \bfV^{\bfM^\star \top}]_{i,k}$,  $[\bfU^{\bfM^\star} \bfU^{\bfM^\star \top}(1/2)\bfA^1]_{i,k}$, and $[\bfU^{\bfM^\star} \bfU^{\bfM^\star \top}(1/2)\bfA^2 ]_{i,k}$, and we have
 \begin{align*}
    &\bigg[  \left(\frac{1}{2} \bfA^1+ \frac{1}{2} \bfA^2\right) \bfV^{\bfM^\star} \bfV^{\bfM^\star \top}\bigg]_{i,k} +   \bigg[\bfU^{\bfM^\star} \bfU^{\bfM^\star \top}  \left(\frac{1}{2} \bfA^1+ \frac{1}{2} \bfA^2\right)  \bigg]_{i,k}\\
   &=\sum_{k'=1}^{d_2(d_2-1)/2} \frac{1}{p_i}\delta_{i,k'}\left( y_{i,k'}- \sigma(M^\star_{i,k'}) \right) \left(\sigma'(M^\star_{i,k'}) \right)^{-1}\bfV^{\bfM^\star}_{k',\cdot}\bfV^{\bfM^\star \top}_{k,\cdot}\\
      &\quad + \sum_{i'=1}^{d_1} \bfU^{\bfM^\star}_{i,\cdot}\bfU^{\bfM^\star \top}_{i',\cdot} \frac{1}{p_{i'}}\delta_{i',k}\left( y_{i',k}- \sigma(M^\star_{i',k}) \right) \left(\sigma'(M^\star_{i',k}) \right)^{-1} \\
      & \quad + C \sqrt{ \frac{1}{\bar{p}}\frac{(\mu(\bfM^\star))^2q^2\log (\bar{d})}{\min\{d_1, d_2(d_2-1)/2\}} }  \frac{\kappa \mu R \sigma^\star_{\max} }{\min\{d_1, d_2(d_2-1)/2\}} \sqrt{\frac{  \bar{d}\log(\bar{d})}{\bar{p} (\sigma^\star_{\min})^2 }}
\end{align*}
for all $(i,k)$, with probability at least $1-O(\bar{d}^{-9})$.

Therefore, uniformly for all $(i,k)$, with probability at least $1-O(\bar{d}^{-9})$, for some $C>0,$
      \begin{align*}
          & \bigg[ \bfU^{\bfM^\star}_{\perp} \bfU^{\bfM^\star \top}_{\perp} \left(\frac{1}{2}\Delta^1+ \frac{1}{2}\Delta^2\right) \bfV^{\bfM^\star} \bfV^{\bfM^\star \top}\bigg]_{i,k} + \bigg[ \bfU^{\bfM^\star} \bfU^{\bfM^\star \top} \left(\frac{1}{2}\Delta^1+ \frac{1}{2}\Delta^2\right) \bfV^{\bfM^\star}_{\perp} \bfV^{\bfM^\star \top}_{\perp}\bigg]_{i,k} \\
         &=\sum_{k'=1}^{d_2(d_2-1)/2} \frac{1}{p_i}\delta_{i,k'}\left( y_{i,k'}- \sigma(M^\star_{i,k'}) \right) \left(\sigma'(M^\star_{i,k'}) \right)^{-1}\bfV^{\bfM^\star}_{k',\cdot}\bfV^{\bfM^\star \top}_{k,\cdot}\\
      &\quad + \sum_{i'=1}^{d_1} \bfU^{\bfM^\star}_{i,\cdot}\bfU^{\bfM^\star \top}_{i',\cdot} \frac{1}{p_{i'}}\delta_{i',k}\left( y_{i',k}- \sigma(M^\star_{i',k}) \right) \left(\sigma'(M^\star_{i',k}) \right)^{-1} \\
      & \quad +C \sqrt{\mu(\bfM^\star)q} \left(\frac{\kappa \mu R \sigma^\star_{\max} }{\min\{d_1, d_2(d_2-1)/2\}} \sqrt{\frac{  \bar{d}\log(\bar{d})}{\bar{p} (\sigma^\star_{\min})^2 }} \right)^2  \\
       & \quad + C \sqrt{ \frac{1}{\bar{p}}\frac{(\mu(\bfM^\star))^2q^2\log (\bar{d})}{\min\{d_1, d_2(d_2-1)/2\}} }  \frac{\kappa \mu R \sigma^\star_{\max} }{\min\{d_1, d_2(d_2-1)/2\}} \sqrt{\frac{  \bar{d}\log(\bar{d})}{\bar{p} (\sigma^\star_{\min})^2 }}.
      \end{align*}
  \end{proof}
 
\begin{lemma}\label{lem:CLTSA1}
     Under the assumptions in Theorem \ref{thm:asymptoticnormality_indiv},
     \begin{align*}
      & \sum_{k'=1}^{d_2(d_2-1)/2} \frac{1}{p_i}\delta_{i,k'}\left( y_{i,k'}- \sigma(M^\star_{i,k'}) \right) \left(\sigma'(M^\star_{i,k'}) \right)^{-1}\bfV^{\bfM^\star}_{k',\cdot}\bfV^{\bfM^\star \top}_{k,\cdot}\\
      &\quad + \sum_{i'=1}^{d_1} \bfU^{\bfM^\star}_{i,\cdot}\bfU^{\bfM^\star \top}_{i',\cdot} \frac{1}{p_{i'}}\delta_{i',k}\left( y_{i',k}- \sigma(M^\star_{i',k}) \right) \left(\sigma'(M^\star_{i',k}) \right)^{-1}   \overset{d}{\rightarrow} \calN(0,w^\star_{i,k})     \end{align*}
      where
    \begin{align*}
   w^\star_{i,k} &\coloneqq  \underbrace{\bfV^{\bfM^\star }_{k,\cdot} \left(\sum_{k'=1}^{d_2(d_2-1)/2} \delta_{i,k'}\frac{1}{p_i^2\sigma(M^\star_{i,k'})(1-\sigma(M^\star_{i,k'}))}   \bfV^{\bfM^\star \top}_{k',\cdot}\bfV^{\bfM^\star}_{k',\cdot}\right)\bfV^{\bfM^\star \top}_{k,\cdot}}_{\coloneqq w^{\star,1}_{i,k}}\\
          &\quad + \underbrace{\bfU^{\bfM^\star}_{i,\cdot}\left(\sum_{i'=1}^{d_1} \delta_{i',k}\left(\frac{1}{p_{i'}^2\sigma(M^\star_{i',k})(1-\sigma(M^\star_{i',k}))} \right)\bfU^{\bfM^\star \top}_{i',\cdot}\bfU^{\bfM^\star}_{i',\cdot}\right)\bfU^{\bfM^\star \top}_{i,\cdot}}_{\coloneqq w^{\star,2}_{i,k}} 
      \end{align*}
\end{lemma}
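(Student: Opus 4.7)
\noindent\textbf{Proof plan for Lemma \ref{lem:CLTSA1}.}

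The plan is to prove this CLT by conditioning on the sampling pattern $\Omega \coloneqq \{\delta_{i,k}\}_{i\in[d_1], k\in[d_2(d_2-1)/2]}$ and then applying Lindeberg--Feller, exactly as in the proof of Theorem \ref{thm:asymptoticnormality}. Given $\Omega$, the only remaining randomness comes from the Bernoulli outcomes $\{y_{i,k}\}$, which are mutually independent. Write the first sum as $\sum_{k'} \xi_{k'}$ and the second as $\sum_{i'} \nu_{i'}$, where each $\xi_{k'}$ depends (conditionally on $\Omega$) only on $y_{i,k'}$ and each $\nu_{i'}$ only on $y_{i',k}$. Thus the two sums share exactly one common source of randomness, namely $y_{i,k}$ (appearing in $\xi_k$ and $\nu_i$). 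To obtain a conditionally independent decomposition, combine the overlapping pair into a single summand $\zeta_{i,k} \coloneqq \xi_k + \nu_i$, so that
\begin{equation*}
\sum_{k'} \xi_{k'} + \sum_{i'} \nu_{i'} \;=\; \sum_{k' \neq k} \xi_{k'} \;+\; \sum_{i' \neq i} \nu_{i'} \;+\; \zeta_{i,k},
\end{equation*}
a sum of $(d_2(d_2-1)/2 - 1) + (d_1 - 1) + 1$ conditionally independent, mean-zero random variables.

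Next, I would compute the conditional variance. The contributions from $\sum_{k' \neq k}\xi_{k'}$ and $\sum_{i' \neq i}\nu_{i'}$ give exactly $w^{\star,1}_{i,k}$ and $w^{\star,2}_{i,k}$ respectively, up to omitting the single terms at $k'=k$ and $i'=i$. The term $\Var(\zeta_{i,k} \mid \Omega)$ restores these diagonal contributions, but also introduces a cross-covariance
\begin{equation*}
2 \,\mathrm{Cov}(\xi_k, \nu_i \mid \Omega) \;=\; \frac{2\delta_{i,k}}{p_i^2}\,\|\bfV^{\bfM^\star}_{k,\cdot}\|^2\,\|\bfU^{\bfM^\star}_{i,\cdot}\|^2\,\bigl(\sigma(M^\star_{i,k})(1-\sigma(M^\star_{i,k}))\bigr)^{-1},
\end{equation*}
which by incoherence is $O\bigl((\mu(\bfM^\star) q)^2 / (\bar p\, d_1 d_2(d_2-1)/2)\bigr)$, a quantity negligible compared to $w^\star_{i,k}\asymp (\bar p \min\{d_1, d_2(d_2-1)/2\})^{-1}$ under Assumption \ref{asp:inference_indiv} and the assumption on $\mu(\bfM^\star) q$. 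Hence the conditional variance of the whole sum equals $w^\star_{i,k}(1+o(1))$ with high probability.

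To verify Lindeberg's condition, I would use the a.s.~boundedness of $(\sigma')^{-1}(M^\star_{i,k})$ together with the incoherence bounds $\|\bfV^{\bfM^\star}_{k',\cdot}\| \lesssim \sqrt{\mu(\bfM^\star) q / (d_2(d_2-1)/2)}$ and $\|\bfU^{\bfM^\star}_{i',\cdot}\|\lesssim \sqrt{\mu(\bfM^\star) q / d_1}$. These imply $|\xi_{k'}| \lesssim (\bar p)^{-1} \mu(\bfM^\star) q / (d_2(d_2-1)/2)$ and analogously for $\nu_{i'}$. Comparing to $\sqrt{w^\star_{i,k}}\asymp (\bar p \min\{d_1, d_2(d_2-1)/2\})^{-1/2}$, the ratio $\max_{k', i'}|\text{term}|/\sqrt{w^\star_{i,k}}$ tends to zero, so for any $\epsilon>0$ all indicator functions in Lindeberg's criterion vanish for large enough dimensions. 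Conditional convergence to $\calN(0, w^\star_{i,k})$ then follows; de-conditioning via dominated convergence (exactly as at the end of the proof of Theorem \ref{thm:asymptoticnormality}) yields the unconditional CLT.

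The most delicate step is simply verifying that the cross-covariance and the discrepancy between the diagonal contributions and $w^\star_{i,k}$ are indeed $o(w^\star_{i,k})$; this is a bookkeeping exercise, but one must carefully check that Assumption \ref{asp:inference_indiv} (which pins down the scales $w^{\star,1}_{i,k}\asymp 1/(\bar p d_2(d_2-1)/2)$ and $w^{\star,2}_{i,k}\asymp 1/(\bar p d_1)$) is compatible with the incoherence bounds used.
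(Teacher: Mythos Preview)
Your proposal is correct and follows essentially the same route as the paper: condition on $\Omega$, apply Lindeberg--Feller using the incoherence bounds to verify the Lindeberg condition, and de-condition via dominated convergence. The paper handles the overlap more informally---it simply notes that the two sums share only the single random variable $y_{i,k}$ and hence are ``asymptotically independent,'' then proves the two marginal CLTs separately---whereas your explicit device of merging $\xi_k+\nu_i$ into a single summand and bounding the cross-covariance is a cleaner way to reach the same conclusion. (Minor bookkeeping: the cross-covariance bound should carry $\bar p^{2}$ rather than $\bar p$ in the denominator, but this only strengthens the negligibility claim.)
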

\begin{proof}
   Note that the two terms 
\begin{align*}
      &\sum_{k'=1}^{d_2(d_2-1)/2} \frac{1}{p_i}\delta_{i,k'}\left( y_{i,k'}- \sigma(M^\star_{i,k'}) \right) \left(\sigma'(M^\star_{i,k'}) \right)^{-1}\bfV^{\bfM^\star}_{k',\cdot}\bfV^{\bfM^\star \top}_{k,\cdot}; \quad \text{and}\\
      &\sum_{i'=1}^{d_1} \bfU^{\bfM^\star}_{i,\cdot}\bfU^{\bfM^\star \top}_{i',\cdot} \frac{1}{p_{i'}}\delta_{i',k}\left( y_{i',k}- \sigma(M^\star_{i',k}) \right) \left(\sigma'(M^\star_{i',k}) \right)^{-1}
     \end{align*}
are asymptotically independent in that they share only one random variable at the entry $(i,k)$, i.e., $p_i^{-1}\delta_{i,k}\left( y_{i,k}- \sigma( M^\star_{i,k}) \right) \left(\sigma'( M^\star_{i,k}) \right)^{-1}$.
Therefore, we will characterize the distributions of these two terms separately. Without loss of generality, we prove for
\begin{align*}
    \sum_{k'=1}^{d_2(d_2-1)/2} \underbrace{\frac{1}{p_i}\delta_{i,k'}\left( y_{i,k'}- \sigma(M^\star_{i,k'}) \right) \left(\sigma'(M^\star_{i,k'}) \right)^{-1}\bfV^{\bfM^\star}_{k',\cdot}\bfV^{\bfM^\star \top}_{k,\cdot}}_{\coloneqq Z_{k'}}.
\end{align*}
The proof is similar to that of Theorem \ref{thm:asymptoticnormality}. The difference is the presence of the singular vectors $\bfV^{\bfM^\star}_{k',\cdot}\bfV^{\bfM^\star \top}_{k,\cdot}$. Recall the definition of $\Omega$ from the proof of Theorem \ref{thm:asymptoticnormality} and note that
\begin{align*}
 w^{\star,1}_{i,k}=\sum_{k'=1}^{d_2(d_2-1)/2} \Var (Z_{k'}|\Omega) =  \bfV^{\bfM^\star}_{k,\cdot}\left(\sum_{k'=1}^{d_2(d_2-1)/2}\delta_{i,k'} \frac{1}{p_i^2 \sigma(M^\star_{i,k'})(1-\sigma(M^\star_{i,k'}))}   \bfV^{\bfM^\star \top}_{k',\cdot}\bfV^{\bfM^\star}_{k',\cdot}\right)\bfV^{\bfM^\star \top}_{k,\cdot}.
\end{align*}
 Use Cauchy--Schwarz inequality and observe that for any $\epsilon>0,$
\begin{align*}
    (w^{\star,1}_{i,k})^{-1}\sum_{k'=1}^{d_2(d_2-1)/2} \bbE [Z_{k'}^2 \bbOne\{|Z_{k'}|>\epsilon \sqrt{w^{\star,1}_{i,k}} \}|\Omega] &\leq (w^{\star,1}_{i,k})^{-1} \sum_{k'=1}^{d_2(d_2-1)/2} \sqrt{ \bbE [Z_{k'}^4|\Omega] \bbE[\bbOne\{|Z_{k'}|>\epsilon \sqrt{w^{\star,1}_{i,k}}\}|\Omega]  }.
\end{align*}
Note that for large $d_2(d_2-1)/2$, we have
\begin{align*}
    Z_{k'} \lesssim \frac{1}{\bar{p}}\frac{\mu(\bfM^\star) q}{d_2(d_2-1)/2}  \ll \epsilon \sqrt{w^{\star,1}_{i,k}} \asymp \epsilon\sqrt{\frac{1}{\bar{p}d_2(d_2-1)/2}  }
\end{align*}
and therefore $\bbOne\{|Z_{k'}|>\epsilon \sqrt{w^{\star,1}_{i,k}}\}=0$ with exceedingly high probability. By Linbeberg's CLT, we have, conditioning on $\Omega,$
\begin{align*}
      \sum_{k'=1}^{d_2(d_2-1)/2}Z_{k'} \overset{d}{\rightarrow} \calN(0,w^{\star,1}_{i,k}).
\end{align*}
The unconditional CLT is due to the dominated convergence theorem. An analogous proof shows that
\begin{align*}
   \sum_{i'=1}^{d_1} \bfU^{\bfM^\star}_{i,\cdot}\bfU^{\bfM^\star \top}_{i',\cdot} \frac{1}{p_{i'}}\delta_{i',k}\left( y_{i',k}- \sigma(M^\star_{i',k}) \right) \left(\sigma'(M^\star_{i',k}) \right)^{-1} \overset{d}{\rightarrow} \calN(0,w^{\star,2}_{i,k}).
\end{align*}
\end{proof}

\begin{lemma}\label{lem:negligible1}
    Follow the notations defined in the proof of Lemma \ref{lem:perturbedsingularvector}. Under the assumptions in Theorem \ref{thm:asymptoticnormality_indiv}, we have, for $l=1,2$, with probability at least $1-O(\bar{d}^{-9})$,
     \begin{align*}
          & \bigg[\sum_{m=2}^\infty \left(\calS_{\bfA^\star,m}(\widehat{\bfE}^l) \bfA^\star \bfW^\star \bfW^{\star \top}+\bfW^\star \bfW^{\star \top}\bfA^\star \calS_{\bfA^\star,m}(\widehat{\bfE}^l) \right) \bigg]_{i,k+d_1}  \lesssim \frac{\kappa^2 R\bar{d}}{\bar{p}\sigma_{\min}(\bfM^\star)  }    \sqrt{\frac{(\mu(\bfM^\star))^2q^2}{ d_1 d_2(d_2-1)/2 }}.
     \end{align*}
\end{lemma}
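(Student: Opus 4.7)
My plan is to mirror the strategy used in the proof of Lemma \ref{lem:perturbedsingularvector}, specializing it to the higher-order tail $m \geq 2$ and exploiting an algebraic cancellation that arises from multiplying by $\bfA^\star \bfW^\star \bfW^{\star\top}$. The starting observation is the identity $\mathfrak{P}^{-s}\bfA^\star=\mathfrak{P}^{-(s-1)}$ for $s\geq 1$ (easily checked by plugging in the SVD of $\bfM^\star$, as is done in the $s=1$ case where $\mathfrak{P}^{-1}\bfA^\star=\bfW^\star\bfW^{\star\top}$), together with $\mathfrak{P}^{\perp}\bfA^\star\bfW^\star\bfW^{\star\top}=\bf0$. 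Consequently, only compositions $\bfs$ with $s_{m+1}\geq 1$ contribute, and the trailing factor simplifies to $\mathfrak{P}^{-(s_{m+1}-1)}\bfW^\star\bfW^{\star\top}$. This yields a revised sum of products whose exponents satisfy $s_1+\cdots+s_m+(s_{m+1}-1)=m-1$, which is where the crucial $\sigma_{\min}(\bfM^\star)^{-(m-1)}$ rather than $\sigma_{\min}(\bfM^\star)^{-m}$ comes from and explains the single inverse singular value in the target bound.

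Next, for each admissible $\bfs$ I would bound $\bigl|\bfe_i^\top\mathfrak{P}^{-s_1}\widehat{\bfE}^l\mathfrak{P}^{-s_2}\cdots\mathfrak{P}^{-s_m}\widehat{\bfE}^l\mathfrak{P}^{-(s_{m+1}-1)}\bfW^\star\bfW^{\star\top}\bfe_{k+d_1}\bigr|$ by pulling out incoherence on both ends. The column $\bfW^\star\bfW^{\star\top}\bfe_{k+d_1}$ has norm at most $\sqrt{\mu(\bfM^\star)q/(d_2(d_2-1)/2)}$ by definition of the incoherence parameter, contributing the $\sqrt{\mu(\bfM^\star)q/(d_2(d_2-1)/2)}$ factor. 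For the row-norm $\bfe_i^\top\mathfrak{P}^{-s_1}$, I will split into cases exactly as in Lemma \ref{lem:perturbedsingularvector}: when $s_1\geq 1$, use $\|\bfe_i^\top\mathfrak{P}^{-s_1}\|\leq \sqrt{\mu(\bfM^\star)q/d_1}\,\sigma_{\min}(\bfM^\star)^{-s_1}$ and then bound the intermediate factors by their operator norms ($\|\widehat{\bfE}^l\|\leq \nu\asymp\sqrt{\kappa^2R\bar d/\bar p}$ from \eqref{eq:Ehatbound}, $\|\mathfrak{P}^{-s}\|\leq\sigma_{\min}(\bfM^\star)^{-s}$, and $\|\mathfrak{P}^{\perp}\|=1$); when $s_1=0$, consolidate consecutive $\mathfrak{P}^{\perp}$ blocks and invoke Lemma \ref{lem:orthogonalprojectionofEhat} to obtain the same factor $\sqrt{\mu(\bfM^\star)q/\min\{d_1,d_2(d_2-1)/2\}}$. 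Either way, each composition is bounded by $\sqrt{(\mu(\bfM^\star))^2q^2/(d_1\,d_2(d_2-1)/2)}\cdot \nu^m\sigma_{\min}(\bfM^\star)^{-(m-1)}$.

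Finally, I would assemble these pieces: the number of admissible $\bfs$ at level $m$ is at most $4^m$, so each level contributes at most $\sqrt{(\mu(\bfM^\star))^2q^2/(d_1\,d_2(d_2-1)/2)}\cdot(4\nu/\sigma_{\min}(\bfM^\star))^m\sigma_{\min}(\bfM^\star)$. Truncating at $m_{\max}=\lceil 2\log\bar d\rceil$ for the Lemma \ref{lem:orthogonalprojectionofEhat} step (a union bound over $m$ and $\bfs$ keeps the exceptional probability at $O(\bar d^{-9})$), the remaining tail $m>m_{\max}$ can be handled by operator-norm bounds alone on the event \eqref{eq:Ehatbound}. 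Under Assumption \ref{asp:SSVforM}, $\nu/\sigma_{\min}(\bfM^\star)\ll 1$, so the geometric series converges and the $m=2$ term dominates, producing
\[
\nu^2\sigma_{\min}(\bfM^\star)^{-1}\sqrt{(\mu(\bfM^\star))^2q^2/(d_1\,d_2(d_2-1)/2)}\asymp \frac{\kappa^2R\bar d}{\bar p\,\sigma_{\min}(\bfM^\star)}\sqrt{\frac{(\mu(\bfM^\star))^2q^2}{d_1\,d_2(d_2-1)/2}},
\]
exactly the claimed bound. The symmetric term $\bfW^\star\bfW^{\star\top}\bfA^\star\calS_{\bfA^\star,m}(\widehat{\bfE}^l)$ is handled identically by using $\bfA^\star\mathfrak{P}^{-s_1}=\mathfrak{P}^{-(s_1-1)}$, forcing $s_1\geq 1$. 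The main technical obstacle is bookkeeping: carefully tracking which compositions survive the $\mathfrak{P}^{\perp}$-vanishing, correctly applying Lemma \ref{lem:orthogonalprojectionofEhat} in the $s_1=0$ subcase (where the $\widehat{\bfE}^l$ adjacent to $\bfW^\star$ has been absorbed via the simplification above), and ensuring that the probability budget from $m_{\max}$ union bounds yields $1-O(\bar d^{-9})$ overall.
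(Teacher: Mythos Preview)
Your proposal is correct and follows essentially the same strategy as the paper. The paper's own proof simply states that the result follows from the proof of Lemma~5 in \cite{xia2021statistical} with straightforward modifications; what you have written is precisely a fleshed-out version of that argument, with the key algebraic cancellation $\mathfrak{P}^{\perp}\bfA^\star\bfW^\star\bfW^{\star\top}=\bf0$ and the power-shift identity $\mathfrak{P}^{-s}\bfA^\star=\mathfrak{P}^{-(s-1)}$ (with the $s=1$ case understood as $\bfW^\star\bfW^{\star\top}$ rather than $\mathfrak{P}^\perp$) correctly driving the $\sigma_{\min}(\bfM^\star)^{-(m-1)}$ scaling so that the $m=2$ term dominates the geometric sum.
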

\begin{proof}
This lemma can be proved by following the proof of Lemma 5 in \cite{xia2021statistical} with straightforward modifications.  
 \end{proof}

\subsection{Proof of Proposition \ref{prop:feasibleCLT_indiv}}
The following lemma claims that the estimated singular vectors also satisfy the incoherence condition, which plays a crucial role in establishing the consistency of variance estimation.
\begin{lemma}\label{lem:inco_estimatedSV}
     Under the assumptions in Proposition \ref{prop:feasibleCLT_indiv}, with probability at least $1-O(\bar{d}^{-9})$, we have for each $l=1,2$,
    \begin{align*}
      \norm{\widehat{\bfU}^{\mathrm{NR},l}}_{2, \infty} \lesssim \sqrt{\frac{\mu(\bfM^\star)q}{ d_1 }}, \quad \text{and} \quad \norm{\widehat{\bfV}^{\mathrm{NR},l}}_{2, \infty} \lesssim \sqrt{\frac{\mu(\bfM^\star)q}{ d_2(d_2-1)/2 }}.
    \end{align*}
\end{lemma}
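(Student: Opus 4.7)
The plan is to transfer the incoherence of $\bfU^{\bfM^\star}$ (resp.\ $\bfV^{\bfM^\star}$) to $\widehat{\bfU}^{\mathrm{NR},l}$ (resp.\ $\widehat{\bfV}^{\mathrm{NR},l}$) via the $2,\infty$ bound on the projector perturbation already established in Lemma \ref{lem:perturbedsingularvector}. The argument is identical for $\bfU$ and $\bfV$, so I describe only the former and only the case $l=1$ (the other is symmetric). I will bound $\max_i \norm{\bfe_i^\top \widehat{\bfU}^{\mathrm{NR},1}}_2$ directly, rather than attempting to work with an orthogonal alignment of the two singular subspaces.

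First I will use the orthonormality of the columns, i.e.\ $\widehat{\bfU}^{\mathrm{NR},1} = \widehat{\bfU}^{\mathrm{NR},1}\widehat{\bfU}^{\mathrm{NR},1\top}\widehat{\bfU}^{\mathrm{NR},1}$ and $\norm{\widehat{\bfU}^{\mathrm{NR},1}}=1$, to get
\begin{align*}
\norm{\bfe_i^\top \widehat{\bfU}^{\mathrm{NR},1}}_2 \le \norm{\bfe_i^\top \widehat{\bfU}^{\mathrm{NR},1}\widehat{\bfU}^{\mathrm{NR},1\top}}_2.
\end{align*}
Then I insert $\pm\bfU^{\bfM^\star}\bfU^{\bfM^\star\top}$ and apply the triangle inequality:
\begin{align*}
\norm{\bfe_i^\top \widehat{\bfU}^{\mathrm{NR},1}\widehat{\bfU}^{\mathrm{NR},1\top}}_2 \le \norm{\widehat{\bfU}^{\mathrm{NR},1}\widehat{\bfU}^{\mathrm{NR},1\top}-\bfU^{\bfM^\star}\bfU^{\bfM^\star\top}}_{2,\infty} + \norm{\bfe_i^\top \bfU^{\bfM^\star}}_2.
\end{align*}
The second term is at most $\sqrt{\mu(\bfM^\star)q/d_1}$ by definition of $\mu(\bfM^\star)$.

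The main step is to show that the first term is $o(1)\cdot\sqrt{\mu(\bfM^\star)q/d_1}$, i.e.\ dominated by the intrinsic incoherence. Lemma \ref{lem:perturbedsingularvector} provides
\begin{align*}
\norm{\widehat{\bfU}^{\mathrm{NR},1}\widehat{\bfU}^{\mathrm{NR},1\top}-\bfU^{\bfM^\star}\bfU^{\bfM^\star\top}}_{2,\infty} \lesssim \sqrt{\frac{\kappa^2 R\bar d}{\bar p(\sigma_{\min}(\bfM^\star))^2}}\sqrt{\frac{\mu(\bfM^\star)q}{\min\{d_1,d_2(d_2-1)/2\}}}
\end{align*}
with probability at least $1-O(\bar d^{-9})$. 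Thus I need
\begin{align*}
\frac{\kappa^2 R\bar d}{\bar p(\sigma_{\min}(\bfM^\star))^2 \min\{d_1,d_2(d_2-1)/2\}} \ll \frac{1}{d_1}.
\end{align*}
The first line of Assumption \ref{asp:forvarianceestimation} gives
\begin{align*}
\frac{\kappa^2 R\bar d}{\bar p(\sigma_{\min}(\bfM^\star))^2 \min\{d_1,d_2(d_2-1)/2\}} \ll \frac{1}{(\mu(\bfM^\star))^2 q^2 \log(\bar d)\max\{d_1,d_2(d_2-1)/2\}} \le \frac{1}{d_1},
\end{align*}
which is exactly what is needed. Taking the maximum over $i$ yields the asserted $\norm{\widehat{\bfU}^{\mathrm{NR},1}}_{2,\infty}\lesssim \sqrt{\mu(\bfM^\star)q/d_1}$. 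Applying the same argument with $\bfV^{\bfM^\star}$ and its incoherence bound $\sqrt{\mu(\bfM^\star)q/(d_2(d_2-1)/2)}$ gives the statement for $\widehat{\bfV}^{\mathrm{NR},1}$, and repeating for $l=2$ completes the proof.

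I do not anticipate a real obstacle here: the only nontrivial step is the verification that Assumption \ref{asp:forvarianceestimation} is precisely strong enough to absorb the perturbation into the incoherence scale, and this follows immediately by pairing the two factors of $\min\{d_1,d_2(d_2-1)/2\}$ with $\max\{d_1,d_2(d_2-1)/2\}\ge d_1 \vee d_2(d_2-1)/2$.
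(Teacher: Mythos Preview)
Your proof is correct and follows essentially the same approach as the paper: both reduce the row norm of $\widehat{\bfU}^{\mathrm{NR},l}$ to that of the projector $\widehat{\bfU}^{\mathrm{NR},l}\widehat{\bfU}^{\mathrm{NR},l\top}$, then invoke Lemma~\ref{lem:perturbedsingularvector} together with the first line of Assumption~\ref{asp:forvarianceestimation} to absorb the perturbation into the incoherence scale. The only cosmetic difference is that the paper works with squared norms and the identity $|a^2-b^2|\le(a-b)^2+2|a-b|\,b$, whereas you apply the triangle inequality directly at the level of the unsquared $2,\infty$ norm, which is marginally cleaner.
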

\begin{proof}
    Without loss of generality, we prove for $\norm{\widehat{\bfU}^{\mathrm{NR},1}}_{2, \infty}$. The other cases can be proved in a similar manner. Using Lemma \ref{lem:perturbedsingularvector}, we have, uniformly for all $i \in [d_1]$, with probability at least $1-O(\bar{d}^{-9}),$
    \begin{align*}
        &\bigg\lvert\norm{\widehat{\bfU}^{\mathrm{NR},l}_{i,\cdot}}_{\mathrm{F}}^2-\norm{\bfU^{\bfM^\star}_{i,\cdot}}_{\mathrm{F}}^2 \bigg\rvert = \bigg\lvert\norm{\widehat{\bfU}^{\mathrm{NR},l}_{i,\cdot}\widehat{\bfU}^{\mathrm{NR},l \top}}_{\mathrm{F}}^2-\norm{\bfU^{\bfM^\star}_{i,\cdot}\bfU^{\bfM^\star \top}}_{\mathrm{F}}^2\bigg\rvert\\
        & \quad  \leq \norm{\widehat{\bfU}^{\mathrm{NR},l}_{i,\cdot}\widehat{\bfU}^{\mathrm{NR},l \top}-\bfU^{\bfM^\star}_{i,\cdot}\bfU^{\bfM^\star \top}}_{\mathrm{F}}^2 + 2 \norm{\widehat{\bfU}^{\mathrm{NR},l}_{i,\cdot}\widehat{\bfU}^{\mathrm{NR},l \top}-\bfU^{\bfM^\star}_{i,\cdot}\bfU^{\bfM^\star \top}}_{\mathrm{F}} \norm{\bfU^{\bfM^\star}_{i,\cdot}\bfU^{\bfM^\star \top}}_{\mathrm{F}} \\
        & \quad \lesssim  \frac{\kappa^2 R\bar{d}}{\bar{p}(\sigma_{\min}(\bfM^\star))^2  }  \frac{\mu(\bfM^\star)q}{\min\{d_1, d_2(d_2-1)/2\}} + \sqrt{\frac{\kappa^2 R\bar{d}}{\bar{p}(\sigma_{\min}(\bfM^\star))^2  }}\sqrt{\frac{\mu(\bfM^\star)q}{\min\{d_1, d_2(d_2-1)/2\}}}  \sqrt{\frac{\mu(\bfM^\star)q}{d_1}} \\
        & \quad \ll \frac{\mu(\bfM^\star)q}{d_1}.
    \end{align*}  
Therefore,  uniformly for all $i \in [d_1]$, with probability at least $1-O(\bar{d}^{-9}),$
\begin{align*}
    \norm{\widehat{\bfU}^{\mathrm{NR},l}_{i,\cdot}}_{\mathrm{F}}^2 \leq \norm{\bfU^{\bfM^\star}_{i,\cdot}}_{\mathrm{F}}^2 + o\left(\frac{\mu(\bfM^\star)q}{d_1}\right) \lesssim \frac{\mu(\bfM^\star)q}{d_1}.
\end{align*}
    
\end{proof}

\begin{proof}[Proof of Proposition \ref{prop:feasibleCLT_indiv}]
    As in the proof of Proposition \ref{prop:feasibleCLT}, it suffices to show that $\widehat{w}_{i,k}-w^\star_{i,k}=o(w^\star_{i,k})$ with high probability. We will focus on establishing $\widehat{w}^2_{i,k}-w^{\star,2}_{i,k}=o(w^{\star,2}_{i,k})$ as the other case can be shown similarly. 

    We begin by showing that $w^{\star,2}_{i,k}$ is close to $\bbE [w^{\star,2}_{i,k}]$ using the Bernstein inequality. Define
    \begin{align*}
        w^{\star,2}_{i,k}-\bbE [w^{\star,2}_{i,k}]=  \sum_{i'=1}^{d_1}\underbrace{\frac{1}{p_{i'}^2} (\delta_{i',k}-p_{i'}) \left(\frac{1}{\sigma(M^\star_{i',k})(1-\sigma(M^\star_{i',k}))}\right)\bfU^{\bfM^\star}_{i,\cdot}\bfU^{\bfM^\star \top}_{i',\cdot}\bfU^{\bfM^\star}_{i',\cdot} \bfU^{\bfM^\star \top}_{i,\cdot}}_{\coloneqq s_{i'}}
    \end{align*}
and note that $\bbE s_{i'}=0$. Also, by the definition of the incoherence parameter $\mu(\bfM^\star),$
\begin{align*}
    L &\coloneqq \max_{i'} \norm{s_{i'}} \lesssim \frac{1}{\bar{p}^2} \frac{(\mu(\bfM^\star))^2 q^2}{d_1^2};\\
    V & \coloneqq \norm{\sum_{i'=1}^{d_1}\bbE s_{i'}^2 } \lesssim \frac{1}{\bar{p}^3}   \frac{(\mu(\bfM^\star))^4 q^4}{d_1^3}.
\end{align*}
Therefore, with probability at least $1-O(\bar{d}^{-10}),$
\begin{align*}
    w^{\star,2}_{i,k}-\bbE [w^{\star,2}_{i,k}]=  \sum_{i'=1}^{d_1}s_{i'} \lesssim \frac{1}{\bar{p}d_1}\left( \sqrt{\frac{(\mu(\bfM^\star))^4 q^4 \log (\bar{d})}{\bar{p}d_1}} + \frac{(\mu(\bfM^\star))^2 q^2 \log (\bar{d})}{\bar{p}d_1}\right) \ll \frac{1}{\bar{p}d_1}.
\end{align*}

Now, we show that $\widehat{w}_{i,k}$ is close to $\bbE [w^{\star,2}_{i,k}]$ with high probability. Observe that
\begin{align*}
&\widehat{\bfU}^{\mathrm{NR},1}_{i,\cdot}\left(\sum_{i'=1}^{d_1} \frac{1}{p_{i'}}  \left(\frac{1}{\sigma(\widehat{M}_{i',k})(1-\sigma(\widehat{M}_{i',k}))} \right)\widehat{\bfU}^{\mathrm{NR},1\top}_{i',\cdot}\widehat{\bfU}^{\mathrm{NR},1}_{i',\cdot}\right)\widehat{\bfU}^{\mathrm{NR},1\top}_{i,\cdot}\\
    &\quad  -\bfU^{\bfM^\star}_{i,\cdot}\left(\sum_{i'=1}^{d_1}\frac{1}{p_{i'}} \left(\frac{1}{\sigma(M^\star_{i',k})(1-\sigma(M^\star_{i',k}))}\right)\bfU^{\bfM^\star \top}_{i',\cdot}\bfU^{\bfM^\star}_{i',\cdot}\right)\bfU^{\bfM^\star \top}_{i,\cdot}\\
    & = \widehat{\bfU}^{\mathrm{NR},1}_{i,\cdot}\left(\sum_{i'=1}^{d_1} \frac{1}{p_{i'}} \left(\frac{1}{\sigma(\widehat{M}_{i',k})(1-\sigma(\widehat{M}_{i',k}))} \right)\widehat{\bfU}^{\mathrm{NR},1\top}_{i',\cdot}\widehat{\bfU}^{\mathrm{NR},1}_{i',\cdot}\right)\widehat{\bfU}^{\mathrm{NR},1\top}_{i,\cdot} \\
    & \quad - \widehat{\bfU}^{\mathrm{NR},1}_{i,\cdot}\left(\sum_{i'=1}^{d_1} \frac{1}{p_{i'}} \left(\frac{1}{\sigma(M^\star_{i',k})(1-\sigma(M^\star_{i',k}))}\right)\widehat{\bfU}^{\mathrm{NR},1\top}_{i',\cdot}\widehat{\bfU}^{\mathrm{NR},1}_{i',\cdot}\right)\widehat{\bfU}^{\mathrm{NR},1\top}_{i,\cdot}\\
    & \quad + \widehat{\bfU}^{\mathrm{NR},1}_{i,\cdot}\left(\sum_{i'=1}^{d_1} \frac{1}{p_{i'}} \left(\frac{1}{\sigma(M^\star_{i',k})(1-\sigma(M^\star_{i',k}))}\right)\widehat{\bfU}^{\mathrm{NR},1\top}_{i',\cdot}\widehat{\bfU}^{\mathrm{NR},1}_{i',\cdot}\right)\widehat{\bfU}^{\mathrm{NR},1\top}_{i,\cdot} \\
    & \quad -\bfU^{\bfM^\star}_{i,\cdot}\left(\sum_{i'=1}^{d_1}\frac{1}{p_{i'}} \left(\frac{1}{\sigma(M^\star_{i',k})(1-\sigma(M^\star_{i',k}))}\right)\bfU^{\bfM^\star \top}_{i',\cdot}\bfU^{\bfM^\star}_{i',\cdot}\right)\bfU^{\bfM^\star \top}_{i,\cdot}.
\end{align*}
 First, note that
\begin{align*}
&\widehat{\bfU}^{\mathrm{NR},1}_{i,\cdot}\left(\sum_{i'=1}^{d_1} \frac{1}{p_{i'}} \left(\frac{1}{\sigma(\widehat{M}_{i',k})(1-\sigma(\widehat{M}_{i',k}))} \right)\widehat{\bfU}^{\mathrm{NR},1\top}_{i',\cdot}\widehat{\bfU}^{\mathrm{NR},1}_{i',\cdot}\right)\widehat{\bfU}^{\mathrm{NR},1\top}_{i,\cdot} \\
    & \quad - \widehat{\bfU}^{\mathrm{NR},1}_{i,\cdot}\left(\sum_{i'=1}^{d_1} \frac{1}{p_{i'}} \left(\frac{1}{\sigma(M^\star_{i',k})(1-\sigma(M^\star_{i',k}))}\right)\widehat{\bfU}^{\mathrm{NR},1\top}_{i',\cdot}\widehat{\bfU}^{\mathrm{NR},1}_{i',\cdot}\right)\widehat{\bfU}^{\mathrm{NR},1\top}_{i,\cdot}\\
    & = \sum_{i'=1}^{d_1} \frac{1}{p_{i'}} \left(  \frac{1}{\sigma(\widehat{M}_{i',k})(1-\sigma(\widehat{M}_{i',k}))}  - \frac{1}{\sigma(M^\star_{i',k})(1-\sigma(M^\star_{i',k}))} \right)\widehat{\bfU}^{\mathrm{NR},1}_{i,\cdot}\widehat{\bfU}^{\mathrm{NR},1\top}_{i',\cdot}\widehat{\bfU}^{\mathrm{NR},1}_{i',\cdot}\widehat{\bfU}^{\mathrm{NR},1\top}_{i,\cdot}.
\end{align*}
As shown in the proof of Proposition \ref{prop:feasibleCLT}, with probability at least $1-O(\bar{d}^{-10}),$
\begin{align*}
  \max_{i,k}\bigg\lvert\frac{1}{\sigma(\widehat{M}_{i',k})(1-\sigma(\widehat{M}_{i',k}))}  - \frac{1}{\sigma(M^\star_{i',k})(1-\sigma(M^\star_{i',k}))} \bigg\rvert \lesssim \frac{\kappa \mu R \sigma^\star_{\max} }{\min\{d_1, d_2(d_2-1)/2\}} \sqrt{\frac{  \bar{d}\log(\bar{d})}{\bar{p} (\sigma^\star_{\min})^2 }} .
\end{align*}
Therefore, by Lemma \ref{lem:inco_estimatedSV}, with probability at least $1-O(\bar{d}^{-9}),$
\begin{align*}
    &\sum_{i'=1}^{d_1} \frac{1}{p_{i'}}  \left(  \frac{1}{\sigma(\widehat{M}_{i',k})(1-\sigma(\widehat{M}_{i',k}))}  - \frac{1}{\sigma(M^\star_{i',k})(1-\sigma(M^\star_{i',k}))} \right)\widehat{\bfU}^{\mathrm{NR},1}_{i,\cdot}\widehat{\bfU}^{\mathrm{NR},1\top}_{i',\cdot}\widehat{\bfU}^{\mathrm{NR},1}_{i',\cdot}\widehat{\bfU}^{\mathrm{NR},1\top}_{i,\cdot}\\
    &\lesssim \frac{1}{\bar{p}}d_1 \frac{\kappa \mu R \sigma^\star_{\max} }{\min\{d_1, d_2(d_2-1)/2\}} \sqrt{\frac{  \bar{d}\log(\bar{d})}{\bar{p} (\sigma^\star_{\min})^2 }} \frac{(\mu(\bfM^\star))^2 q^2}{d_1^2}\\
    & = \frac{1}{\bar{p} d_1} \frac{(\mu(\bfM^\star))^2 q^2\kappa \mu R \sigma^\star_{\max} }{\min\{d_1, d_2(d_2-1)/2\}} \sqrt{\frac{  \bar{d}\log(\bar{d})}{\bar{p} (\sigma^\star_{\min})^2 }} \ll  \frac{1}{\bar{p} d_1}.
\end{align*}
Next, using Lemma \ref{lem:perturbedsingularvector} and Lemma \ref{lem:inco_estimatedSV}, we can bound the error in the estimated singular space. Note that
\begin{align*}
    &\widehat{\bfU}^{\mathrm{NR},1}_{i,\cdot}\widehat{\bfU}^{\mathrm{NR},1\top}_{i',\cdot}\widehat{\bfU}^{\mathrm{NR},1}_{i',\cdot} \widehat{\bfU}^{\mathrm{NR},1\top}_{i,\cdot}-\bfU^{\bfM^\star}_{i,\cdot}\bfU^{\bfM^\star \top}_{i',\cdot}\bfU^{\bfM^\star}_{i',\cdot} \bfU^{\bfM^\star \top}_{i,\cdot} \\
    &=\widehat{\bfU}^{\mathrm{NR},1}_{i,\cdot}\widehat{\bfU}^{\mathrm{NR},1\top}_{i',\cdot} \left( \widehat{\bfU}^{\mathrm{NR},1}_{i',\cdot} \widehat{\bfU}^{\mathrm{NR},1\top}_{i,\cdot}-\bfU^{\bfM^\star}_{i',\cdot} \bfU^{\bfM^\star \top}_{i,\cdot}\right) + \left(\widehat{\bfU}^{\mathrm{NR},1}_{i,\cdot}\widehat{\bfU}^{\mathrm{NR},1\top}_{i',\cdot}- \bfU^{\bfM^\star}_{i,\cdot}\bfU^{\bfM^\star \top}_{i',\cdot}\right)\bfU^{\bfM^\star}_{i',\cdot} \bfU^{\bfM^\star \top}_{i,\cdot}
\end{align*}
and therefore 
\begin{align*}
& \sum_{i'=1}^{d_1} \frac{1}{p_{i'}}  \left(\frac{1}{\sigma(M^\star_{i',k})(1-\sigma(M^\star_{i',k}))}\right)\left(\widehat{\bfU}^{\mathrm{NR},1}_{i,\cdot}\widehat{\bfU}^{\mathrm{NR},1\top}_{i',\cdot}\widehat{\bfU}^{\mathrm{NR},1}_{i',\cdot}\widehat{\bfU}^{\mathrm{NR},1\top}_{i,\cdot} -\bfU^{\bfM^\star}_{i,\cdot}\bfU^{\bfM^\star \top}_{i',\cdot}\bfU^{\bfM^\star}_{i',\cdot} \bfU^{\bfM^\star \top}_{i,\cdot}\right)\\
& = \sum_{i'=1}^{d_1} \frac{1}{p_{i'}}  \left(\frac{1}{\sigma(M^\star_{i',k})(1-\sigma(M^\star_{i',k}))}\right) \widehat{\bfU}^{\mathrm{NR},1}_{i,\cdot}\widehat{\bfU}^{\mathrm{NR},1\top}_{i',\cdot} \left( \widehat{\bfU}^{\mathrm{NR},1}_{i',\cdot} \widehat{\bfU}^{\mathrm{NR},1\top}_{i,\cdot}-\bfU^{\bfM^\star}_{i',\cdot} \bfU^{\bfM^\star \top}_{i,\cdot}\right) \\
& \quad + \sum_{i'=1}^{d_1} \frac{1}{p_{i'}} \left(\frac{1}{\sigma(M^\star_{i',k})(1-\sigma(M^\star_{i',k}))}\right) \bfU^{\bfM^\star}_{i',\cdot} \bfU^{\bfM^\star \top}_{i,\cdot} \left(\widehat{\bfU}^{\mathrm{NR},1}_{i,\cdot}\widehat{\bfU}^{\mathrm{NR},1\top}_{i',\cdot}- \bfU^{\bfM^\star}_{i,\cdot}\bfU^{\bfM^\star \top}_{i',\cdot}\right).
\end{align*}
By applying the Cauchy--Schwarz inequality, we obtain
\begin{align*}
    &\sum_{i'=1}^{d_1} \frac{1}{p_{i'}}  \left(\frac{1}{\sigma(M^\star_{i',k})(1-\sigma(M^\star_{i',k}))}\right) \widehat{\bfU}^{\mathrm{NR},1}_{i,\cdot}\widehat{\bfU}^{\mathrm{NR},1\top}_{i',\cdot} \left( \widehat{\bfU}^{\mathrm{NR},1}_{i',\cdot} \widehat{\bfU}^{\mathrm{NR},1\top}_{i,\cdot}-\bfU^{\bfM^\star}_{i',\cdot} \bfU^{\bfM^\star \top}_{i,\cdot}\right)\\
    &\lesssim \frac{1}{\bar{p}} \sqrt{\frac{\mu(\bfM^\star)q}{d_1}}  \sqrt{\frac{\kappa^2 R\bar{d}}{\bar{p}(\sigma_{\min}(\bfM^\star))^2  }}\sqrt{\frac{\mu(\bfM^\star)q}{\min\{d_1, d_2(d_2-1)/2\}}} \\
    &\leq \frac{1}{\bar{p}d_1} \sqrt{\frac{(\mu(\bfM^\star))^2q^2\kappa^2 R\bar{d}}{\bar{p}(\sigma_{\min}(\bfM^\star))^2  }}\sqrt{\frac{\max\{d_1,d_2(d_2-1)/2\}}{\min\{d_1,d_2(d_2-1)/2\}}} \ll \frac{1}{\bar{p}d_1}.
\end{align*}
The term $$\sum_{i'=1}^{d_1} \frac{1}{p_{i'}} \left(\frac{1}{\sigma(M^\star_{i',k})(1-\sigma(M^\star_{i',k}))}\right) \bfU^{\bfM^\star}_{i',\cdot} \bfU^{\bfM^\star \top}_{i,\cdot} \left(\widehat{\bfU}^{\mathrm{NR},1}_{i,\cdot}\widehat{\bfU}^{\mathrm{NR},1\top}_{i',\cdot}- \bfU^{\bfM^\star}_{i,\cdot}\bfU^{\bfM^\star \top}_{i',\cdot}\right)$$ can be bounded in a similar way.
\end{proof}

\subsection{Proof of Theorem \ref{thm:individualbootstrap}} \label{sec:proofforindivranking}
Theorem \ref{thm:individualbootstrap} follows from the following Lemma \ref{lem:bootstrap_indiv}, \ref{lem:TTindiv}, and \ref{lem:GGindiv}. We impose the following technical assumption. 
\begin{assumption}\label{asp:indivrankingbootstrap}
\begin{align*}
      \mu(\bfM^\star)^5 q^5 \kappa^4 \mu^2 R^2 \log^4(\bar{d})  \ll \bar{p} \bar{d}, \quad \text{and}\quad  (\mu(\bfM^\star))^3q^3\kappa^2 R\bar{d}\log(\bar{d})         \ll \bar{p}(\sigma_{\min}(\bfM^\star))^2.
\end{align*}

\end{assumption}

\begin{lemma}\label{lem:bootstrap_indiv}
   Suppose the assumptions in Theorem \ref{thm:individualbootstrap} hold. For $\alpha \in (0,1),$ we have
    \begin{align*}
         \big| \bbP(\calT^{i}_{\calJ, \calK}\leq \calG^{i}_{\calJ, \calK;1-\alpha}) - \alpha \big| \lesssim \left( \frac{(\mu(\bfM^\star))^4 q^4 \log^5(\bar{d}d_2)}{\bar{p}\bar{d}}\right)^{\frac{1}{4}}. 
    \end{align*}
\end{lemma}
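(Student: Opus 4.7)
The approach is to rewrite $\calT^{i}_{\calJ,\calK}$ as the $\ell_\infty$ norm of a sum of independent mean-zero random vectors in $\bbR^{|\calH|}$, where $\calH \coloneqq \{(j,j'): j \in \calJ,\, j' \in \calK \setminus \{j\}\}$ is the set of tested pairs, and then invoke Theorem 2.2 of \cite{chernozhuokov2022improved} --- the same Gaussian multiplier bootstrap device used for the aggregated-level result \eqref{eq:bootstrap} and Theorem \ref{thm:marketbootstrap}.

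First, I would reorganize the two summations in the linear expansion \eqref{eq:Mprojlinearexpansion}. For the fixed user $i$, the variables $\{\delta_{i,k'}(y_{i,k'}-\sigma(M^\star_{i,k'}))\}_{k' \in [d_2(d_2-1)/2]}$ are mutually independent, as are $\{\delta_{i',k}(y_{i',k}-\sigma(M^\star_{i',k}))\}_{i' \in [d_1]}$ across $i'$ (for each fixed $k$); the single overlap at $(i'=i,\, k'=k)$ can be absorbed into either block without affecting the argument. Taken together they form $N = d_2(d_2-1)/2 + d_1 \asymp \bar d$ independent ``source'' variables indexed by $m$. For each $h=(j,j') \in \calH$, dividing by $\sqrt{w^\star_{i,\bar\calL(j,j')}}$ places the statistic in the canonical form
\begin{align*}
    \calT^{i}_{\calJ,\calK} = \bigg\|\sum_{m=1}^{N} \bfX_m\bigg\|_{\infty}, \qquad \calG^{i}_{\calJ,\calK} = \bigg\|\sum_{m=1}^{N} \bfX_m Z_m\bigg\|_{\infty},
\end{align*}
with $\bfX_m \in \bbR^{|\calH|}$ independent across $m$, mean zero, and $\sum_m \Var(X_{h,m})=1$ for each $h$ by construction.

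Next, I would verify the envelope condition of Theorem 2.2 of \cite{chernozhuokov2022improved}. By the incoherence bounds $\|\bfV^{\bfM^\star}_{k,\cdot}\|, \|\bfU^{\bfM^\star}_{i,\cdot}\| \lesssim \sqrt{\mu(\bfM^\star)q/\bar d}$, the boundedness of $(\sigma'(M^\star_{i,k}))^{-1}$ coming from the bounded $\|\bf\Theta^\star\|_{\infty}$, and the variance order $w^\star_{i,h} \asymp 1/(\bar p \bar d)$ supplied by Assumption \ref{asp:inference_indiv}, the raw summands satisfy $|\xi_{i,h;k'}|,\, |\nu_{i,h;i'}| \lesssim \mu(\bfM^\star) q/(\bar p \bar d)$; hence
\begin{align*}
    B \coloneqq \max_{h \in \calH,\, m \in [N]} |X_{h,m}| \lesssim \frac{\mu(\bfM^\star)\, q}{\sqrt{\bar p \bar d}}.
\end{align*}
Assumption \ref{asp:indivrankingbootstrap} guarantees $B$ times polylogarithmic factors is $o(1)$, so the CCK bootstrap inequality applies. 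Plugging in with $p = |\calH| \leq d_2^2$, $N \asymp \bar d$, and $\log(pN) \asymp \log(\bar d d_2)$ delivers the claimed rate $\bigl((\mu(\bfM^\star))^4 q^4 \log^5(\bar d d_2)/(\bar p \bar d)\bigr)^{1/4}$; the $\mu^4 q^4$ exponent in the numerator reflects the factor $B^2$ in the CCK bound together with mild slack allowing for higher-moment envelope terms in the hypothesis of that theorem.

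The principal obstacle is the uniform bookkeeping in the second step: one must verify the envelope $B$ and the variance floor $w^\star_{i,h} \gtrsim 1/(\bar p \bar d)$ \emph{simultaneously over all} $(j,j') \in \calJ \times (\calK \setminus \{j\})$, so that the remainder depends only on $\mu(\bfM^\star)$, $q$, and polylogarithmic factors --- and not on the raw dimensions $d_1, d_2$. With this done, Assumption \ref{asp:indivrankingbootstrap} cleanly absorbs the resulting parameter products and the target rate follows.
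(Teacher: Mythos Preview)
Your proposal is correct and follows essentially the same route as the paper: concatenate the $\xi$- and $\nu$-summands into a single independent array of length $N=d_1+d_2(d_2-1)/2=\bar d$, normalize by $\sqrt{w^\star_{i,\bar\calL(j,j')}}$, and invoke Theorem~2.2 of \cite{chernozhuokov2022improved}. The paper carries out exactly this program, writing $\calT^{i}_{\calJ,\calK}=\max_{j,j'}\big|\tfrac{1}{\sqrt{\bar d}}\sum_{l=1}^{\bar d} X_{i,\bar\calL(j,j');l}\big|$ and verifying Conditions~E and~M with $B_{\bar d}=C(\mu(\bfM^\star))^2 q^2/\sqrt{\bar p}$; your ``mild slack'' remark is precisely where the extra $\mu(\bfM^\star)q$ factor enters, since the \emph{fourth-moment} condition~M---not the pointwise envelope---forces $B_{\bar d}\gtrsim(\mu(\bfM^\star)q)^2/\sqrt{\bar p}$ rather than $(\mu(\bfM^\star)q)/\sqrt{\bar p}$. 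You are in fact slightly more careful than the paper in flagging the single dependent overlap at $(i'=i,\,k'=\bar\calL(j,j'))$.
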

\begin{proof}
 By  definition,
\begin{align*}
    \calT^{i}_{\calJ,\calK} \coloneqq \max_{j \in \calJ} \max_{j' \in \calK \setminus \{j\}} \Bigg| \frac{1}{\sqrt{\bar{d}}}\left(\sum_{k'=1}^{d_2(d_2-1)/2}   \sqrt{\frac{\bar{d}}{w^\star_{i,\bar{\calL}(j,j')}}} \xi_{i,\bar{\calL}(j,j');k'}+\sum_{i'=1}^{d_1}   \sqrt{\frac{\bar{d}}{w^\star_{i,\bar{\calL}(j,j')}}} \nu_{i,\bar{\calL}(j,j');i'}  \right)\Bigg|.
\end{align*}

We verify Condition E and Condition M from \cite{chernozhuokov2022improved} and invoke Theorem 2.2 therein. Denote
\begin{align*}
    X_{i,\bar{\calL}(j,j');l}= \begin{cases}
        \sqrt{\frac{\bar{d}}{w^\star_{i,\bar{\calL}(j,j')}}} \xi_{i,\bar{\calL}(j,j');l} \quad \text{for $l=1,\ldots, d_2(d_2-1)/2$};\\
        \sqrt{\frac{\bar{d}}{w^\star_{i,\bar{\calL}(j,j')}}} \nu_{i,\bar{\calL}(j,j');l-d_2(d_2-1)/2} \quad \text{for $l=d_2(d_2-1)/2+1,\ldots, d_1$}.
    \end{cases}
\end{align*}
Then, it can be simplified to
\begin{align*}
    \calT^{i}_{\calJ, \calK}=  \max_{j \in \calJ} \max_{j' \in \calK \setminus \{j\}} \Bigg| \frac{1}{\sqrt{\bar{d}}} \sum_{l=1}^{\bar{d}} X_{i,\bar{\calL}(j,j');l} \Bigg|.
\end{align*}

Note that, for $i,j,j'$, and $l,$
\begin{align*}
    &\bbE [X_{i,\bar{\calL}(j,j');l}]=0; \quad \bbE \bigg[\exp \left( \frac{|X_{i,\bar{\calL}(j,j');l}|}{B_{\bar{d}}}\right)\bigg]\leq 2; \\
    & b_1^2 \leq \frac{1}{\bar{d}}\sum_{l=1}^{\bar{d}} \bbE [X_{i,\bar{\calL}(j,j');l}^2]; \quad \frac{1}{\bar{d}}\sum_{l=1}^{\bar{d}} \bbE [X_{i,\bar{\calL}(j,j');l}^4]\leq B_{\bar{d}}^2 b_2^2
\end{align*}
are satisfied with the choice of some constants $b_1, b_2 \asymp 1$ with $b_1 \leq b_2$ and some sequence $B_{\bar{d}}\geq 1$ for all $\bar{d}\geq 1$ such that $B_{\bar{d}} =C(\mu(\bfM^\star))^2 q^2/\sqrt{\bar{p}}$ for some sufficiently large $C>0$. Then, by Theorem 2.2 in \cite{chernozhuokov2022improved}, for $\alpha\in(0,1)$, we have $$\big| \bbP(\calT^{i}_{\calJ,\calK}\leq \calG^{i}_{\calJ,\calK;1-\alpha}) - \alpha \big| \lesssim \left( \frac{(\mu(\bfM^\star))^4 q^4\log^5(\bar{d}d_2)}{\bar{p}\bar{d}}\right)^{\frac{1}{4}}.$$

\end{proof}

\begin{lemma}\label{lem:TTindiv}
Suppose the assumptions in Theorem \ref{thm:individualbootstrap} hold. Then, 
    \begin{align*}
        \sup_{z\in \bbR}\big| \bbP(\widehat{\calT}^i_{\calJ,\calK} \leq z)- \bbP(\calT^{i}_{\calJ,\calK}\leq z) \big| \rightarrow 0.
    \end{align*}
\end{lemma}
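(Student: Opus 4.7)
The plan is to reduce the distributional comparison in Lemma \ref{lem:TTindiv} to (a) a uniform pathwise approximation $\widehat{\calT}^i_{\calJ,\calK} \approx \calT^i_{\calJ,\calK}$ with a quantitative $o_{\bbP}(1)$ error, and (b) an anti-concentration bound on the distribution of $\calT^i_{\calJ,\calK}$ that lets me transfer pointwise closeness to uniform CDF closeness. Specifically, I will first extract from the proof of Theorem \ref{thm:asymptoticnormality_indiv} the linear expansion
\[
\widehat{M}^{\proj}_{i,k} - M^\star_{i,k} \;=\; \sum_{k'=1}^{d_2(d_2-1)/2} \xi_{i,k;k'} \;+\; \sum_{i'=1}^{d_1} \nu_{i,k;i'} \;+\; R_{i,k},
\]
where $R_{i,k}$ bundles together the higher-order Neumann terms (Lemma \ref{lem:negligible1}), the second-order remainder of Lemma \ref{lem:negligible2}, the projected-noise fluctuation of Lemma \ref{lem:projectederror}, and the deterministic bias from Lemma \ref{lem:negligible3}. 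All four component bounds in those lemmas are already stated uniformly in $(i,k)$ with probability $1-O(\bar d^{-9})$, so the union bound over the at most $d_2^2$ pairs $(j,j')\in\calJ\times(\calK\setminus\{j\})$ is absorbed into existing log factors; under Assumption \ref{asp:indivrankingbootstrap} each term is $o(\sqrt{w^\star_{i,k}})$ uniformly.

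Next I will upgrade Proposition \ref{prop:feasibleCLT_indiv} to its uniform version: the matrix Bernstein/Taylor arguments used there (with bounds on $\max_{i',k'}|\rho(\widehat M_{i',k'}) - \rho(M^\star_{i',k'})|$ and $\max_{i'}\|\widehat\bfU^{\mathrm{NR},l}_{i',\cdot} - \bfU^{\bfM^\star}_{i',\cdot}\|$ via Lemmas \ref{lem:perturbedsingularvector} and \ref{lem:inco_estimatedSV}) are already entrywise, so $\max_{k=\bar\calL(j,j')} |\widehat w_{i,k}/w^\star_{i,k} - 1| = o_{\bbP}(1)$ follows. Combining the uniform linearization with the variance consistency gives
\[
\bigl|\widehat{\calT}^i_{\calJ,\calK} - \calT^i_{\calJ,\calK}\bigr| \;\le\; \max_{j,j'} \frac{|R_{i,\bar\calL(j,j')}|}{\sqrt{w^\star_{i,\bar\calL(j,j')}}} \;+\; o_{\bbP}(1)\cdot \calT^i_{\calJ,\calK} \;=\; o_{\bbP}(1),
\]
where the crude bound $\calT^i_{\calJ,\calK} = O_{\bbP}(\sqrt{\log(d_2)})$ on the max of approximately standardized sums suffices, since Assumption \ref{asp:indivrankingbootstrap} leaves room for the extra $\sqrt{\log(d_2)}$ factor.

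Finally, to convert $o_{\bbP}(1)$ pathwise closeness into uniform CDF convergence, I will invoke a Nazarov-type anti-concentration bound for the max of the centered Gaussian (or Gaussian-approximated) vector whose coordinates are the standardized $\bfV^{\bfM^\star}$- and $\bfU^{\bfM^\star}$-weighted linear combinations of the score residuals defining $\calT^i_{\calJ,\calK}$. The Gaussian approximation itself is precisely what Lemma \ref{lem:bootstrap_indiv} provides (through Theorem 2.2 of \cite{chernozhuokov2022improved}), and Nazarov's inequality then gives $\sup_z\bbP(\calT^i_{\calJ,\calK}\in[z, z+\varepsilon])\lesssim \varepsilon\sqrt{\log(d_2)}$; taking $\varepsilon$ slightly larger than the $o_{\bbP}(1)$ approximation error closes the loop. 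The hard part will be step one, namely showing that every piece of $R_{i,k}$, once divided by $\sqrt{w^\star_{i,k}}\asymp 1/\sqrt{\bar p\bar d}$ and maximized over $O(d_2^2)$ pairs, is genuinely $o(1)$ under Assumption \ref{asp:indivrankingbootstrap}; this requires careful bookkeeping of $\mu(\bfM^\star)$, $q$, $\kappa$, $\mu$, $R$, and log factors, and is where I expect most of the technical work to go.
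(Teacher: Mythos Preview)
Your proposal is correct and follows essentially the same route as the paper: decompose $|\widehat{\calT}^i_{\calJ,\calK}-\calT^i_{\calJ,\calK}|$ into a linearization-remainder term (controlled uniformly via the lemmas behind Theorem \ref{thm:asymptoticnormality_indiv}) and a variance-ratio term (controlled uniformly via Proposition \ref{prop:feasibleCLT_indiv}), then convert pathwise closeness to CDF closeness by anti-concentration. The paper's proof is slightly terser---it records the variance ratio as $o(1/\sqrt{\log(\bar d)})$ directly rather than writing $o_{\bbP}(1)\cdot\calT^i_{\calJ,\calK}$ with a separate $O_{\bbP}(\sqrt{\log d_2})$ bound on $\calT^i_{\calJ,\calK}$, and it leaves the anti-concentration step implicit in the inequality $\sup_z|\bbP(\widehat{\calT}\le z)-\bbP(\calT\le z)|\le\bbP(|\widehat{\calT}-\calT|>\epsilon)+\sup_z\bbP(z<\calT<z+\epsilon)$---but your more explicit invocation of Nazarov via Lemma \ref{lem:bootstrap_indiv} is exactly how one justifies that last term.
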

\begin{proof}
    By the definition of $\widehat{\calT}^i_{\calJ,\calK}$ and $\calT^{i}_{\calJ,\calK}$,
\begin{align*}
&|\widehat{\calT}^i_{\calJ,\calK}-\calT^{i}_{\calJ,\calK}| \\
&\leq \max_{j \in \calJ} \max_{j' \in \calK \setminus \{j\}}\bigg|\frac{ 1 }{\sqrt{\widehat{w}_{i,\bar{\calL}(j,j')}}}\left(\widehat{M}^{\proj}_{i,\bar{\calL}(j,j')}  -M^\star_{i,\bar{\calL}(j,j')} \right)-\frac{1}{\sqrt{w^\star_{i,\bar{\calL}(j,j')}}} \left( \sum_{k'=1}^{d_2(d_2-1)/2} \xi_{i,\bar{\calL}(j,j');k'} + \sum_{i'=1}^{d_1}\nu_{i,\bar{\calL}(j,j');i'}\right)  \bigg|.    
\end{align*}
Note that
\begin{align*}
    &\bigg|\frac{ 1 }{\sqrt{\widehat{w}_{i,\bar{\calL}(j,j')}}}\left(\widehat{M}^{\proj}_{i,\bar{\calL}(j,j')}  -M^\star_{i,\bar{\calL}(j,j')} \right)-\frac{1}{\sqrt{w^\star_{i,\bar{\calL}(j,j')}}} \left( \sum_{k'=1}^{d_2(d_2-1)/2} \xi_{i,\bar{\calL}(j,j');k'} + \sum_{i'=1}^{d_1}\nu_{i,\bar{\calL}(j,j');i'}\right)  \bigg|\\
    &\leq \bigg|\frac{ 1 }{\sqrt{\widehat{w}_{i,\bar{\calL}(j,j')}}}\bigg|\bigg|\left(\widehat{M}^{\proj}_{i,\bar{\calL}(j,j')}  -M^\star_{i,\bar{\calL}(j,j')} \right)-  \left( \sum_{k'=1}^{d_2(d_2-1)/2} \xi_{i,\bar{\calL}(j,j');k'} + \sum_{i'=1}^{d_1}\nu_{i,\bar{\calL}(j,j');i'}\right)   \bigg|\\
    & \quad +\bigg|\frac{1}{\sqrt{w^\star_{i,\bar{\calL}(j,j')}}} \left( \sum_{k'=1}^{d_2(d_2-1)/2} \xi_{i,\bar{\calL}(j,j');k'} + \sum_{i'=1}^{d_1}\nu_{i,\bar{\calL}(j,j');i'}\right) \bigg| \bigg|1-\sqrt{\frac{w^\star_{i,\bar{\calL}(j,j')}}{\widehat{w}_{i,\bar{\calL}(j,j')}}} \bigg|.
\end{align*}
We can show that, with probability at least $1-O(\bar{d}^{-9})$,
\begin{align*}
   \max_{j \in \calJ} \max_{j' \in \calK \setminus \{j\}} \bigg|1-\sqrt{\frac{w^\star_{i,\bar{\calL}(j,j')}}{\widehat{w}_{i,\bar{\calL}(j,j')}}} \bigg| \leq \max_{j \in \calJ} \max_{j' \in \calK \setminus \{j\}} \bigg|1-\frac{w^\star_{i,\bar{\calL}(j,j')}}{\widehat{w}_{i,\bar{\calL}(j,j')}} \bigg| \ll 1/\sqrt{\log(\bar{d})}
\end{align*}
by following the proofs Proposition \ref{prop:feasibleCLT_indiv}.
As a result, with probability at least $1-O(\bar{d}^{-9})$,
 \begin{align*}
    &\max_{j \in \calJ} \max_{j' \in \calK \setminus \{j\}} \bigg|\frac{1}{\sqrt{w^\star_{i,\bar{\calL}(j,j')}}} \left( \sum_{k'=1}^{d_2(d_2-1)/2} \xi_{i,\bar{\calL}(j,j');k'} + \sum_{i'=1}^{d_1}\nu_{i,\bar{\calL}(j,j');i'}\right) \bigg| \bigg|1-\sqrt{\frac{w^\star_{i,\bar{\calL}(j,j')}}{\widehat{w}_{i,\bar{\calL}(j,j')}}} \bigg|=o(1).
 \end{align*}
In addition, by following the proof of Theorem \ref{thm:asymptoticnormality_indiv}, we have 
\begin{align*}
   \max_{j \in \calJ} \max_{j' \in \calK \setminus \{j\}} \bigg|\frac{ 1 }{\sqrt{\widehat{w}_{i,\bar{\calL}(j,j')}}}\bigg|\bigg|\left(\widehat{M}^{\proj}_{i,\bar{\calL}(j,j')}  -M^\star_{i,\bar{\calL}(j,j')} \right)-  \left( \sum_{k'=1}^{d_2(d_2-1)/2} \xi_{i,\bar{\calL}(j,j');k'} + \sum_{i'=1}^{d_1}\nu_{i,\bar{\calL}(j,j');i'}\right)   \bigg| =o(1)
\end{align*}
with probability at least $1-O(\bar{d}^{-9})$. Then, for any $\epsilon>0,$
\begin{align*}
    \sup_{z \in \bbR} \big| \bbP(\widehat{\calT}^i_{\calJ, \calK}\leq z)-\bbP(\calT^{i}_{\calJ,\calK} \leq z)  \big| \leq \bbP (|\widehat{\calT}^i_{\calJ,\calK}-\calT^{i}_{\calJ,\calK}|>\epsilon) + \sup_{z \in \bbR} \bbP (z < \calT^{i}_{\calJ,\calK} < z+\epsilon).
\end{align*}
\end{proof}

\begin{lemma}\label{lem:GGindiv}
    Suppose the assumptions in Theorem \ref{thm:individualbootstrap} hold.
    \begin{align*}
        \sup_{z\in \bbR}\big| \bbP(\widehat{\calG}^i_{\calJ,\calK}\leq z)- \bbP(\calG^{i}_{\calJ,\calK}\leq z) \big| \rightarrow 0.
    \end{align*}
\end{lemma}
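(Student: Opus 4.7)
The plan is to reduce Lemma~\ref{lem:GGindiv} to a Gaussian-to-Gaussian comparison between two centered Gaussian vectors whose conditional covariances are close in entrywise max norm. Conditional on the data $\calD$ (the outcomes $\{y_{i,k}\}$, the edge indicators $\{\delta_{i,k}\}$, and the sample-splitting assignment), the processes
$$\widehat{G}_p \coloneqq \frac{1}{\sqrt{\widehat{w}_{i,k_p}}}\Big(\sum_{k'}\widehat{\xi}_{i,k_p;k'}Z^{\xi}_{k'}+\sum_{i'}\widehat{\nu}_{i,k_p;i'}Z^{\nu}_{i'}\Big)\quad\text{and}\quad G_p\coloneqq \frac{1}{\sqrt{w^\star_{i,k_p}}}\Big(\sum_{k'}\xi_{i,k_p;k'}Z^{\xi}_{k'}+\sum_{i'}\nu_{i,k_p;i'}Z^{\nu}_{i'}\Big),$$
indexed by $p=(j,j')$ with $k_p=\bar{\calL}(j,j')$, are centered Gaussians driven by the common multipliers $\{Z^\xi,Z^\nu\}$, and by definition $\widehat{\calG}^i_{\calJ,\calK}=\max_p|\widehat{G}_p|$, $\calG^i_{\calJ,\calK}=\max_p|G_p|$. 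The Gaussian comparison inequality for maxima of absolute values (e.g., \cite{chernozhuokov2022improved} applied to the vector $[\widehat{G};-\widehat{G}]$, or equivalently Theorem 2 of Chernozhukov--Chetverikov--Kato, 2015) gives, conditional on $\calD$,
$$\sup_{z}\big|\bbP(\widehat{\calG}^i_{\calJ,\calK}\le z\,|\,\calD)-\bbP(\calG^i_{\calJ,\calK}\le z\,|\,\calD)\big|\;\lesssim\; \Delta_\calD^{1/3}\,\log^{2/3}\!\big(|\calJ|\,|\calK|\big),$$
where $\Delta_\calD$ is the entrywise max gap between the two conditional covariance matrices. It thus suffices to prove $\Delta_\calD=o_{\bbP}\big(\log^{-2}(|\calJ|\,|\calK|)\big)$ and integrate over $\calD$.

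For the covariance gap, the normalization $1/\sqrt{\widehat w_{i,k_p}}$ contributes negligibly because Proposition~\ref{prop:feasibleCLT_indiv}, combined with a union bound over the at most $|\calJ||\calK|$ pairs, gives $\widehat{w}_{i,k_p}/w^\star_{i,k_p}=1+o(1)$ uniformly. Writing $\widehat{\xi}=\xi+r^\xi$ and $\widehat{\nu}=\nu+r^\nu$, the triangle inequality together with Cauchy--Schwarz in the $k'$-index reduces the remaining task to bounding $\|r^\xi\|_2\|\xi\|_2+\|r^\xi\|_2^2$ (and its $\nu$-analogue) uniformly in $p$. Each residual $r^\xi_{i,k_p;k'}$ has three sources: (i) replacement of $\bfV^{\bfM^\star}$ by $\widehat{\bfV}^{\mathrm{NR},l}$ (with $l$ determined by the subsample containing $(i,k')$), controlled by Lemma~\ref{lem:perturbedsingularvector}; (ii) replacement of $\sigma'(M^\star_{i,k'})^{-1}$ by $\sigma'(\widehat{M}^l_{i,k'})^{-1}$, controlled via Corollary~\ref{cor:errorboundforM} and the smoothness of $\sigma'$; and (iii) replacement of $\delta_{i,k'}$ by the split indicator $\delta^l_{i,k'}$, which satisfies $\delta^1_{i,k'}+\delta^2_{i,k'}=\delta_{i,k'}$ and preserves the per-$k'$ scaling so that the two sums are comparable term-by-term. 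Incoherence of $\widehat{\bfV}^{\mathrm{NR},l}$ from Lemma~\ref{lem:inco_estimatedSV} lets us control the row norms in these residuals by those of $\bfV^{\bfM^\star}$.

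The main obstacle is to prevent the $k'$-summation from amplifying the singular-subspace error: a purely row-wise bound $\|\widehat{\bfV}^{\mathrm{NR},l}_{k',\cdot}-\bfV^{\bfM^\star}_{k',\cdot}\|\lesssim \sqrt{\mu(\bfM^\star)q/(d_2(d_2-1)/2)}$ from Lemma~\ref{lem:perturbedsingularvector} is suboptimal when summed with Bernoulli weights $\delta^l_{i,k'}/p_i^2$. I circumvent this by observing that, inside $\widehat{\xi}_{i,k_{p_1};k'}\widehat{\xi}_{i,k_{p_2};k'}$, both $\widehat{\bfV}^{\mathrm{NR},l}_{k',\cdot}$ factors share the same $l$ (by construction of the sample-split $\widehat{\xi}$); hence the $k'$-sum reorganizes into a quadratic form in $\widehat{\bfV}^{\mathrm{NR},l}\widehat{\bfV}^{\mathrm{NR},l\top}-\bfV^{\bfM^\star}\bfV^{\bfM^\star\top}$, whose spectral error is the object directly bounded by the proof of Lemma~\ref{lem:perturbedsingularvector}. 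A matrix-Bernstein argument (mirroring the ones already used throughout Section~\ref{sec:proofofindivUQ}) handles the residual randomness of the Bernoulli weights, and Assumption~\ref{asp:indivrankingbootstrap} absorbs the extra $\log^2$ factor required by the Gaussian comparison. Putting these pieces together yields $\Delta_\calD\cdot\log^2(|\calJ|\,|\calK|)=o(1)$ with probability at least $1-O(\bar d^{-8})$, and integrating over $\calD$ gives the claimed uniform convergence in distribution.
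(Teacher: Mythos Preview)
Your Gaussian-comparison route is viable in principle, but it is more indirect than the paper's argument and contains a small internal inconsistency. The paper exploits the fact that $\widehat{\calG}^i_{\calJ,\calK}$ and $\calG^i_{\calJ,\calK}$ are driven by the \emph{same} multipliers $\{Z^\xi_{k'},Z^\nu_{i'}\}$ and bounds the pathwise gap $|\widehat{\calG}^i_{\calJ,\calK}-\calG^i_{\calJ,\calK}|$ directly. For each $p=(j,j')$ the difference is a Gaussian-weighted sum $\sum_{k'}\Delta^\xi_{p;k'}Z^\xi_{k'}+\sum_{i'}\Delta^\nu_{p;i'}Z^\nu_{i'}$; the paper splits $\Delta^\xi$ into a variance-normalization piece $(1-\sqrt{w^\star/\widehat w})\xi$ and a $(\widehat\xi-\xi)$ piece, decomposes the latter further into a $\sigma'(\widehat M)$-vs.-$\sigma'(M^\star)$ part and a $\widehat{\bfV}^{\mathrm{NR},l}\widehat{\bfV}^{\mathrm{NR},l\top}$-vs.-$\bfV^{\bfM^\star}\bfV^{\bfM^\star\top}$ part, and applies a sub-exponential Bernstein inequality to each. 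The conditional variance in Bernstein is precisely the squared $\ell_2$ norm of the coefficient-difference vector, controlled by Corollary~\ref{cor:errorboundforM} for the score part and by the row-wise bound of Lemma~\ref{lem:perturbedsingularvector} (at the specific row $\bar\calL(j,j')$) for the subspace part. Anti-concentration of $\calG^i_{\calJ,\calK}$ then converts $|\widehat{\calG}^i_{\calJ,\calK}-\calG^i_{\calJ,\calK}|=o_{\bbP}(1)$ into uniform closeness of the CDFs. This coupling argument needs only the per-$p$ coefficient gaps; your route forgoes the coupling and therefore has to control all off-diagonal covariance differences as well.

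Your proposal also switches strategies mid-stream: you first Cauchy--Schwarz down to $\|r^\xi\|_2\|\xi\|_2+\|r^\xi\|_2^2$, which commits you to bounding $\sum_{k'}(r^\xi_{p;k'})^2$; but in the ``main obstacle'' paragraph you instead reorganize $\sum_{k'}\widehat\xi_{p_1;k'}\widehat\xi_{p_2;k'}$ as a quadratic form, which is the direct covariance expansion rather than the Cauchy--Schwarz bound. More importantly, the worry about the $k'$-summation ``amplifying'' the subspace error is unfounded: Lemma~\ref{lem:perturbedsingularvector} already delivers a bound on $\big\|(\widehat{\bfV}^{\mathrm{NR},l}\widehat{\bfV}^{\mathrm{NR},l\top}-\bfV^{\bfM^\star}\bfV^{\bfM^\star\top})_{k_p,\cdot}\big\|_2$, which equals $\big(\sum_{k'}(\widehat{\bfV}^{\mathrm{NR},l}_{k',\cdot}\widehat{\bfV}^{\mathrm{NR},l\top}_{k_p,\cdot}-\bfV^{\bfM^\star}_{k',\cdot}\bfV^{\bfM^\star\top}_{k_p,\cdot})^2\big)^{1/2}$, so no spectral reorganization or matrix-Bernstein step is needed at this point. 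Once this is recognized, either your covariance comparison or the paper's coupling reduces to the same ingredients and goes through under Assumption~\ref{asp:indivrankingbootstrap}.
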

\begin{proof}
By definition,
\begin{align*}
     |\widehat{\calG}^i_{\calJ,\calK}-\calG^{i}_{\calJ,\calK}| &\leq \max_{j \in \calJ} \max_{j' \in \calK \setminus \{j\}} \bigg|\frac{1}{\sqrt{w^\star_{i,\bar{\calL}(j,j')}}} \left( \sum_{k'=1}^{d_2(d_2-1)/2} \xi_{i,\bar{\calL}(j,j');k'} Z^\xi_{k'} + \sum_{i'=1}^{d_1}\nu_{i,\bar{\calL}(j,j');i'}Z_{i'}^\nu\right) \\
    & \quad -\frac{1}{\sqrt{\widehat{w}_{i,\bar{\calL}(j,j')}}} \left( \sum_{k'=1}^{d_2(d_2-1)/2} \widehat{\xi}_{i,\bar{\calL}(j,j');k'} Z^\xi_{k'} + \sum_{i'=1}^{d_1}\widehat{\nu}_{i,\bar{\calL}(j,j');i'}Z_{i'}^\nu\right) \bigg|.
\end{align*}
Note that 
\begin{align*}
   &  \bigg|\frac{1}{\sqrt{w^\star_{i,\bar{\calL}(j,j')}}} \left( \sum_{k'=1}^{d_2(d_2-1)/2} \xi_{i,\bar{\calL}(j,j');k'} Z^\xi_{k'} + \sum_{i'=1}^{d_1}\nu_{i,\bar{\calL}(j,j');i'}Z_{i'}^\nu\right) \\
    & \quad -\frac{1}{\sqrt{\widehat{w}_{i,\bar{\calL}(j,j')}}} \left( \sum_{k'=1}^{d_2(d_2-1)/2} \widehat{\xi}_{i,\bar{\calL}(j,j');k'} Z^\xi_{k'} + \sum_{i'=1}^{d_1}\widehat{\nu}_{i,\bar{\calL}(j,j');i'}Z_{i'}^\nu\right) \bigg|\\
    & =\bigg| \sum_{k'=1}^{d_2(d_2-1)/2} \Delta^\xi_{i,\bar{\calL}(j,j');k'} Z^\xi_{k'} + \sum_{i'=1}^{d_1} \Delta^\nu_{i,\bar{\calL}(j,j');i'} Z^\nu_{i'} \bigg|
\end{align*}
    where 
    \begin{align*}
        \Delta^\xi_{i,\bar{\calL}(j,j');k'} \coloneqq \frac{1}{\sqrt{\widehat{w}_{i,\bar{\calL}(j,j')}}}  \widehat{\xi}_{i,\bar{\calL}(j,j');k'}-\frac{1}{\sqrt{w^\star_{i,\bar{\calL}(j,j')}}}  \xi_{i,\bar{\calL}(j,j');k'};\\
        \Delta^\nu_{i,\bar{\calL}(j,j');i'} \coloneqq \frac{1}{\sqrt{\widehat{w}_{i,\bar{\calL}(j,j')}}}  \widehat{\nu}_{i,\bar{\calL}(j,j');i'}-\frac{1}{\sqrt{w^\star_{i,\bar{\calL}(j,j')}}}  \nu_{i,\bar{\calL}(j,j');i'}.
    \end{align*}

We focus on bounding $\sum_{k'=1}^{d_2(d_2-1)/2} \Delta^\xi_{i,\bar{\calL}(j,j');k'} Z^\xi_{k'}$ as the argument for $\sum_{i'=1}^{d_1} \Delta^\nu_{i,\bar{\calL}(j,j');i'} Z^\nu_{i'} $ is similar. We begin by a decomposition
\begin{align*}
    &\sum_{k'=1}^{d_2(d_2-1)/2} \Delta^\xi_{i,\bar{\calL}(j,j');k'} Z^\xi_{k'} \\
    &=  \underbrace{\frac{1}{\sqrt{\widehat{w}_{i,\bar{\calL}(j,j')}}} \sum_{k'=1}^{d_2(d_2-1)/2} (\widehat{\xi}_{i,\bar{\calL}(j,j');k'}-\xi_{i,\bar{\calL}(j,j');k'})Z^\xi_{k'}}_{\coloneqq a} + \underbrace{\frac{1}{\sqrt{\widehat{w}_{i,\bar{\calL}(j,j')}}} \sum_{k'=1}^{d_2(d_2-1)/2} \left(1-\sqrt{\frac{\widehat{w}_{i,\bar{\calL}(j,j')}}{w^\star_{i,\bar{\calL}(j,j')}}}\right)\xi_{i,\bar{\calL}(j,j');k'}Z^\xi_{k'}. }_{\coloneqq b}
\end{align*}
For $b$, we apply the incoherence condition and the Bernstein inequality \citep{koltchinskii:2011}. Note that $$\bbE \left(1-\sqrt{\frac{\widehat{w}_{i,\bar{\calL}(j,j')}}{w^\star_{i,\bar{\calL}(j,j')}}}\right)\xi_{i,\bar{\calL}(j,j');k'}Z^\xi_{k'}=0.$$ By following the proof of Proposition \ref{prop:feasibleCLT_indiv}, we have
\begin{align*}
   & \max_{j \in \calJ} \max_{j' \in \calK \setminus \{j\}} \bigg|1-\sqrt{\frac{w^\star_{i,\bar{\calL}(j,j')}}{\widehat{w}_{i,\bar{\calL}(j,j')}}} \bigg| \leq \max_{j \in \calJ} \max_{j' \in \calK \setminus \{j\}} \bigg|1-\frac{w^\star_{i,\bar{\calL}(j,j')}}{\widehat{w}_{i,\bar{\calL}(j,j')}} \bigg| \\
   & \quad  \lesssim  \sqrt{\frac{(\mu(\bfM^\star))^4 q^4 \log (\bar{d})}{\bar{p}\bar{d}}} +  (\mu(\bfM^\star))^2 q^2\kappa^2 \mu R    \sqrt{\frac{   \log(\bar{d})}{\bar{p} \bar{d}  }}  + \sqrt{\frac{(\mu(\bfM^\star))^2q^2\kappa^2 R\bar{d}}{\bar{p}(\sigma_{\min}(\bfM^\star))^2  }} 
\end{align*}
with probability at least $1-O(\bar{d}^{-9})$. By invoking Lemma \ref{lem:inco_estimatedSV}, we have, with probability at least $1-O(\bar{d}^{-9})$, 
\begin{align*}
    &\norm{\left(1-\sqrt{\frac{\widehat{w}_{i,\bar{\calL}(j,j')}}{w^\star_{i,\bar{\calL}(j,j')}}}\right)\xi_{i,\bar{\calL}(j,j');k'}Z^\xi_{k'}}_{\mathrm{subE}} \\
    & \quad \lesssim \left(\sqrt{\frac{(\mu(\bfM^\star))^4 q^4 \log (\bar{d})}{\bar{p}\bar{d}}} +  (\mu(\bfM^\star))^2 q^2\kappa^2 \mu R    \sqrt{\frac{   \log(\bar{d})}{\bar{p} \bar{d}  }}  + \sqrt{\frac{(\mu(\bfM^\star))^2q^2\kappa^2 R\bar{d}}{\bar{p}(\sigma_{\min}(\bfM^\star))^2  }}  \right) \frac{\mu(\bfM^\star)q}{ \bar{p}\bar{d}} ;\\
    &  \norm{\bbE \bigg[\sum_{k'=1}^{d_2(d_2-1)/2} \left(\left(1-\sqrt{\frac{\widehat{w}_{i,\bar{\calL}(j,j')}}{w^\star_{i,\bar{\calL}(j,j')}}}\right)\xi_{i,\bar{\calL}(j,j');k'}Z^\xi_{k'}\right)^2  \bigg]} \\
    & \quad \lesssim \left(\sqrt{\frac{(\mu(\bfM^\star))^4 q^4 \log (\bar{d})}{\bar{p}\bar{d}}} +  (\mu(\bfM^\star))^2 q^2\kappa^2 \mu R    \sqrt{\frac{   \log(\bar{d})}{\bar{p} \bar{d}  }}  + \sqrt{\frac{(\mu(\bfM^\star))^2q^2\kappa^2 R\bar{d}}{\bar{p}(\sigma_{\min}(\bfM^\star))^2  }}  \right)^2   
  \frac{\mu(\bfM^\star)q}{\bar{p}\bar{d}}
\end{align*}
where $\norm{\cdot}_{\mathrm{subE}}$ is the sub-exponential norm. Therefore, with probability at least $1-O(\bar{d}^{-9})$, 
\begin{align*}
    |b| &\lesssim   \left(\sqrt{\frac{(\mu(\bfM^\star))^4 q^4 \log (\bar{d})}{\bar{p}\bar{d}}} +  (\mu(\bfM^\star))^2 q^2\kappa^2 \mu R    \sqrt{\frac{   \log(\bar{d})}{\bar{p} \bar{d}  }}  + \sqrt{\frac{(\mu(\bfM^\star))^2q^2\kappa^2 R\bar{d}}{\bar{p}(\sigma_{\min}(\bfM^\star))^2  }}  \right) \\
    & \quad \times \left(\sqrt{\frac{\mu(\bfM^\star)q \log(\bar{d})}{ \bar{p}\bar{d}} }+ \frac{\mu(\bfM^\star)q}{ \bar{p}\bar{d}} \log^2(\bar{d}) \right) \sqrt{\bar{p}\bar{d}} \\
    & \lesssim   \left(\sqrt{\frac{(\mu(\bfM^\star))^4 q^4 \log (\bar{d})}{\bar{p}\bar{d}}} +  (\mu(\bfM^\star))^2 q^2\kappa^2 \mu R    \sqrt{\frac{   \log(\bar{d})}{\bar{p} \bar{d}  }}  + \sqrt{\frac{(\mu(\bfM^\star))^2q^2\kappa^2 R\bar{d}}{\bar{p}(\sigma_{\min}(\bfM^\star))^2  }}  \right) \\
    & \quad \times  \sqrt{ \mu(\bfM^\star)q \log(\bar{d})}  \ll 1.
\end{align*}
Now, we analyze the term $|a|.$ We first decompose $a$ into two terms:
\begin{align*}
    &\frac{1}{\sqrt{\widehat{w}_{i,\bar{\calL}(j,j')}}} \sum_{k'=1}^{d_2(d_2-1)/2} (\widehat{\xi}_{i,\bar{\calL}(j,j');k'}-\xi_{i,\bar{\calL}(j,j');k'})Z^\xi_{k'} \\
    &=\frac{1}{\sqrt{\widehat{w}_{i,\bar{\calL}(j,j')}}} \sum_{k'=1}^{d_2(d_2-1)/2} \sum_{(i,k')\in S^1}\frac{1}{p_i}\delta^1_{i,k'}\left(\frac{  y^1_{i,k'}- \sigma(\widehat{M}^{2}_{i,k'}) }{\sigma'(\widehat{M}^2_{i,k'})} \widehat{\bfV}^{\mathrm{NR},2}_{k',\cdot}\widehat{\bfV}^{\mathrm{NR},2 \top}_{\bar{\calL}(j,j'),\cdot}   - \frac{y^1_{i,k'}- \sigma(M^\star_{i,k'})}{\sigma'(M^\star_{i,k'})} \bfV^{\bfM^\star}_{k',\cdot}\bfV^{\bfM^\star \top}_{\bar{\calL}(j,j'),\cdot} \right)Z^\xi_{k'} \\
    &  + \frac{1}{\sqrt{\widehat{w}_{i,\bar{\calL}(j,j')}}} \sum_{k'=1}^{d_2(d_2-1)/2} \sum_{(i,k')\in S^2}\frac{1}{p_i}\delta^2_{i,k'}\left(\frac{  y^2_{i,k'}- \sigma(\widehat{M}^1_{i,k'}) }{\sigma'(\widehat{M}^1_{i,k'})} \widehat{\bfV}^{\mathrm{NR},1}_{k',\cdot}\widehat{\bfV}^{\mathrm{NR},1 \top}_{\bar{\calL}(j,j'),\cdot}   - \frac{y^2_{i,k'}- \sigma(M^\star_{i,k'})}{\sigma'(M^\star_{i,k'})} \bfV^{\bfM^\star}_{k',\cdot}\bfV^{\bfM^\star \top}_{\bar{\calL}(j,j'),\cdot} \right)Z^\xi_{k'}.
\end{align*}
We will focus on the first term since the arguments are similar. We further decompose this term into two:
\begin{align*}
    &\frac{1}{\sqrt{\widehat{w}_{i,\bar{\calL}(j,j')}}} \sum_{k'=1}^{d_2(d_2-1)/2} \sum_{(i,k')\in S^1}\frac{1}{p_i}\delta^1_{i,k'}\left(\frac{  y^1_{i,k'}- \sigma(\widehat{M}^{2}_{i,k'}) }{\sigma'(\widehat{M}^2_{i,k'})} \widehat{\bfV}^{\mathrm{NR},2}_{k',\cdot}\widehat{\bfV}^{\mathrm{NR},2 \top}_{\bar{\calL}(j,j'),\cdot}   - \frac{y^1_{i,k'}- \sigma(M^\star_{i,k'})}{\sigma'(M^\star_{i,k'})} \bfV^{\bfM^\star}_{k',\cdot}\bfV^{\bfM^\star \top}_{\bar{\calL}(j,j'),\cdot} \right) Z^\xi_{k'} \\
    &=\frac{1}{\sqrt{\widehat{w}_{i,\bar{\calL}(j,j')}}} \sum_{k'=1}^{d_2(d_2-1)/2} \sum_{(i,k')\in S^1}\frac{1}{p_i}\delta^1_{i,k'}\left(\frac{  y^1_{i,k'}- \sigma(\widehat{M}^{2}_{i,k'}) }{\sigma'(\widehat{M}^2_{i,k'})} -\frac{y^1_{i,k'}- \sigma(M^\star_{i,k'})}{\sigma'(M^\star_{i,k'})}\right)\widehat{\bfV}^{\mathrm{NR},2}_{k',\cdot}\widehat{\bfV}^{\mathrm{NR},2 \top}_{\bar{\calL}(j,j'),\cdot}Z^\xi_{k'} \\
    & \quad + \frac{1}{\sqrt{\widehat{w}_{i,\bar{\calL}(j,j')}}} \sum_{k'=1}^{d_2(d_2-1)/2} \sum_{(i,k')\in S^1}\frac{1}{p_i}\delta^1_{i,k'} \frac{y^1_{i,k'}- \sigma(M^\star_{i,k'})}{\sigma'(M^\star_{i,k'})} \left(\widehat{\bfV}^{\mathrm{NR},2}_{k',\cdot}\widehat{\bfV}^{\mathrm{NR},2 \top}_{\bar{\calL}(j,j'),\cdot}   -  \bfV^{\bfM^\star}_{k',\cdot}\bfV^{\bfM^\star \top}_{\bar{\calL}(j,j'),\cdot}  \right)Z^\xi_{k'}
\end{align*}
Now we bound the two terms in turn using the Bernstein inequality \citep{koltchinskii:2011}. In addition to the Bernstein inequality, we invoke the entrywise error bound of $\widehat{\bfM}-\bfM^\star$ to further bound the first term, and Lemma \ref{lem:perturbedsingularvector} for the second term. Note first that
\begin{align*}
    \bbE  \bigg[\frac{1}{p_i}\delta^1_{i,k'}\left(\frac{  y^1_{i,k'}- \sigma(\widehat{M}^{2}_{i,k'}) }{\sigma'(\widehat{M}^2_{i,k'})} -\frac{y^1_{i,k'}- \sigma(M^\star_{i,k'})}{\sigma'(M^\star_{i,k'})}\right)\widehat{\bfV}^{\mathrm{NR},2}_{k',\cdot}\widehat{\bfV}^{\mathrm{NR},2 \top}_{\bar{\calL}(j,j'),\cdot}Z^\xi_{k'}\bigg] =0,
\end{align*}
and, with probability at least $1-O(\bar{d}^{-10}),$
\begin{align*}
    \max_{i,k'} \bigg| \frac{  y^1_{i,k'}- \sigma(\widehat{M}^{2}_{i,k'}) }{\sigma'(\widehat{M}^2_{i,k'})} -\frac{y^1_{i,k'}- \sigma(M^\star_{i,k'})}{\sigma'(M^\star_{i,k'})} \bigg| \lesssim \frac{\kappa \mu R \sigma^\star_{\max} }{\min\{d_1, d_2(d_2-1)/2\}} \sqrt{\frac{  \bar{d}\log(\bar{d})}{\bar{p} (\sigma^\star_{\min})^2 }}.
\end{align*}
The second result can be obtained by following the proof of Proposition \ref{prop:feasibleCLT} analogously. Then, by Lemma \ref{lem:inco_estimatedSV}, with probability at least $1-O(\bar{d}^{-9}),$ 
\begin{align*}
    & \norm{\frac{1}{p_i}\delta^1_{i,k'}\left(\frac{  y^1_{i,k'}- \sigma(\widehat{M}^{2}_{i,k'}) }{\sigma'(\widehat{M}^2_{i,k'})} -\frac{y^1_{i,k'}- \sigma(M^\star_{i,k'})}{\sigma'(M^\star_{i,k'})}\right)\widehat{\bfV}^{\mathrm{NR},2}_{k',\cdot}\widehat{\bfV}^{\mathrm{NR},2 \top}_{\bar{\calL}(j,j'),\cdot}Z^\xi_{k'}}_{\mathrm{subE}} \\
    & \quad \lesssim \frac{1}{\bar{p}}\frac{\kappa \mu R \sigma^\star_{\max} }{\min\{d_1, d_2(d_2-1)/2\}} \sqrt{\frac{  \bar{d}\log(\bar{d})}{\bar{p} (\sigma^\star_{\min})^2 }} \frac{\mu(\bfM^\star)q}{\bar{d}};\\
    &  \norm{ \bbE \bigg[\sum_{k'=1}^{d_2(d_2-1)/2} \sum_{(i,k')\in S^1} \left( \frac{1}{p_i}\delta^1_{i,k'}\left(\frac{  y^1_{i,k'}- \sigma(\widehat{M}^{2}_{i,k'}) }{\sigma'(\widehat{M}^2_{i,k'})} -\frac{y^1_{i,k'}- \sigma(M^\star_{i,k'})}{\sigma'(M^\star_{i,k'})}\right)\widehat{\bfV}^{\mathrm{NR},2}_{k',\cdot}\widehat{\bfV}^{\mathrm{NR},2 \top}_{\bar{\calL}(j,j'),\cdot}Z^\xi_{k'} \right)^2 \bigg]} \\
    & \quad \lesssim \frac{1}{\bar{p}}\left(\frac{\kappa \mu R \sigma^\star_{\max} }{\min\{d_1, d_2(d_2-1)/2\}} \sqrt{\frac{  \bar{d}\log(\bar{d})}{\bar{p} (\sigma^\star_{\min})^2 }} \right)^2\frac{\mu(\bfM^\star)q}{\bar{d}}.
\end{align*}
Therefore, with probability at least $1-O(\bar{d}^{-10}),$ by the Bernstein inequality \citep{koltchinskii:2011},
\begin{align*}
    &\bigg|\frac{1}{\sqrt{\widehat{w}_{i,\bar{\calL}(j,j')}}} \sum_{k'=1}^{d_2(d_2-1)/2} \sum_{(i,k')\in S^1}\frac{1}{p_i}\delta^1_{i,k'}\left(\frac{  y^1_{i,k'}- \sigma(\widehat{M}^{2}_{i,k'}) }{\sigma'(\widehat{M}^2_{i,k'})} -\frac{y^1_{i,k'}- \sigma(M^\star_{i,k'})}{\sigma'(M^\star_{i,k'})}\right)\widehat{\bfV}^{\mathrm{NR},2}_{k',\cdot}\widehat{\bfV}^{\mathrm{NR},2 \top}_{\bar{\calL}(j,j'),\cdot}Z^\xi_{k'}\bigg|\\
    & \lesssim \sqrt{\frac{\mu(\bfM^\star)q \log(\bar{d})}{\bar{p}\bar{d}}} \frac{\kappa \mu R \sigma^\star_{\max} }{\min\{d_1, d_2(d_2-1)/2\}} \sqrt{\frac{  \bar{d}\log(\bar{d})}{\bar{p} (\sigma^\star_{\min})^2 }}\sqrt{\bar{p}\bar{d}}\ll 1.
\end{align*}
Now, we bound the second term:
\begin{align*}
    \frac{1}{\sqrt{\widehat{w}_{i,\bar{\calL}(j,j')}}} \sum_{k'=1}^{d_2(d_2-1)/2} \sum_{(i,k')\in S^1} \frac{1}{p_i}\delta^1_{i,k'} \frac{y^1_{i,k'}- \sigma(M^\star_{i,k'})}{\sigma'(M^\star_{i,k'})} \left(\widehat{\bfV}^{\mathrm{NR},2}_{k',\cdot}\widehat{\bfV}^{\mathrm{NR},2 \top}_{\bar{\calL}(j,j'),\cdot}   -  \bfV^{\bfM^\star}_{k',\cdot}\bfV^{\bfM^\star \top}_{\bar{\calL}(j,j'),\cdot}  \right)Z^\xi_{k'}
\end{align*}
where
\begin{align*}
    \bbE  \bigg[\frac{1}{p_i}\delta^1_{i,k'} \frac{y^1_{i,k'}- \sigma(M^\star_{i,k'})}{\sigma'(M^\star_{i,k'})} \left(\widehat{\bfV}^{\mathrm{NR},2}_{k',\cdot}\widehat{\bfV}^{\mathrm{NR},2 \top}_{\bar{\calL}(j,j'),\cdot}   -  \bfV^{\bfM^\star}_{k',\cdot}\bfV^{\bfM^\star \top}_{\bar{\calL}(j,j'),\cdot}  \right)Z^\xi_{k'}\bigg] =0.
\end{align*}
Note that, with probability at least $1-O(\bar{d}^{-9}),$
\begin{align*}
    &\norm{ \frac{1}{p_i}\delta^1_{i,k'} \frac{y^1_{i,k'}- \sigma(M^\star_{i,k'})}{\sigma'(M^\star_{i,k'})} \left(\widehat{\bfV}^{\mathrm{NR},2}_{k',\cdot}\widehat{\bfV}^{\mathrm{NR},2 \top}_{\bar{\calL}(j,j'),\cdot}   -  \bfV^{\bfM^\star}_{k',\cdot}\bfV^{\bfM^\star \top}_{\bar{\calL}(j,j'),\cdot}  \right)Z^\xi_{k'} }_{\mathrm{subE}} \lesssim \frac{\mu(\bfM^\star)q}{\bar{p}\bar{d}};  \\
    &  \norm{\bbE \bigg[\sum_{k'=1}^{d_2(d_2-1)/2} \sum_{(i,k')\in S^1} \left(\frac{1}{p_i}\delta^1_{i,k'} \frac{y^1_{i,k'}- \sigma(M^\star_{i,k'})}{\sigma'(M^\star_{i,k'})} \left(\widehat{\bfV}^{\mathrm{NR},2}_{k',\cdot}\widehat{\bfV}^{\mathrm{NR},2 \top}_{\bar{\calL}(j,j'),\cdot}   -  \bfV^{\bfM^\star}_{k',\cdot}\bfV^{\bfM^\star \top}_{\bar{\calL}(j,j'),\cdot}  \right)Z^\xi_{k'} \right)^2 \bigg]  } \\
    & \quad \lesssim \frac{1}{ \bar{p} }  \frac{\kappa^2 R\bar{d}}{\bar{p}(\sigma_{\min}(\bfM^\star))^2  }\frac{\mu(\bfM^\star)q}{\min\{d_1, d_2(d_2-1)/2\}}
\end{align*}
where the first result is from Lemma \ref{lem:inco_estimatedSV}, and the second result is from Lemma \ref{lem:perturbedsingularvector}. As a result, we have, with probability at least $1-O(\bar{d}^{-9}),$
\begin{align*}
    & \bigg| \frac{1}{\sqrt{\widehat{w}_{i,\bar{\calL}(j,j')}}} \sum_{k'=1}^{d_2(d_2-1)/2} \sum_{(i,k')\in S^1} \frac{1}{p_i}\delta^1_{i,k'} \frac{y^1_{i,k'}- \sigma(M^\star_{i,k'})}{\sigma'(M^\star_{i,k'})} \left(\widehat{\bfV}^{\mathrm{NR},2}_{k',\cdot}\widehat{\bfV}^{\mathrm{NR},2 \top}_{\bar{\calL}(j,j'),\cdot}   -  \bfV^{\bfM^\star}_{k',\cdot}\bfV^{\bfM^\star \top}_{\bar{\calL}(j,j'),\cdot}  \right)Z^\xi_{k'} \bigg| \\
    & \lesssim  \frac{\mu(\bfM^\star)q \log^2(\bar{d})}{\sqrt{\bar{p}\bar{d}}} + \sqrt{\frac{\kappa^2 R \mu(\bfM^\star)q \bar{d} \log(\bar{d})}{\bar{p}(\sigma_{\min}(\bfM^\star))^2  }} \ll 1.
\end{align*}


Then, we have, with probability at least $1-O(\bar{d}^{-8})$, for any $\epsilon>0,$ $|\widehat{\calG}^i_{\calJ,\calK}-\calG^{i}_{\calJ,\calK}|<\epsilon.$ Therefore, for any $\epsilon>0,$ we have
\begin{align*}
    \sup_{z \in \bbR} \big| \bbP(\widehat{\calG}^i_{\calJ,\calK}\leq z)-\bbP(\calG^{i}_{\calJ,\calK} \leq z)  \big| \leq \bbP (|\widehat{\calG}^i_{\calJ,\calK}-\calG^{i}_{\calJ,\calK}|>\epsilon) + \sup_{z \in \bbR} \bbP (z < \calG^{i}_{\calJ,\calK} < z+\epsilon).
\end{align*}
\end{proof}

\subsection{Proof of Theorem \ref{thm:marketbootstrap}} \label{sec:marketbootstrapproof}

Theorem \ref{thm:marketbootstrap} follows from the following Lemma \ref{lem:bootstrap_market}, \ref{lem:TTmarket}, and \ref{lem:GGmarket}.

\begin{lemma}\label{lem:bootstrap_market}
   Suppose that the assumptions in Theorem \ref{thm:marketbootstrap} hold. For $\alpha \in (0,1),$ we have
    \begin{align*}
         \big| \bbP(\calT_{\calJ}\leq \calG_{\calJ;1-\alpha}) - \alpha \big| \lesssim \left( \frac{\log^5(d_1d_2)}{\bar{p}d_1}\right)^{\frac{1}{4}}. 
    \end{align*}
\end{lemma}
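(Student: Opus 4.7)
The plan is to apply Theorem~2.2 of \cite{chernozhuokov2022improved} (the high-dimensional Gaussian multiplier bootstrap) directly to the linearized statistic $\calT_\calJ$, mirroring the proof of Lemma~\ref{lem:bootstrap_indiv} but in a much simpler one-index form: the sum here runs only over users $i=1,\dots,d_1$ rather than over the combined user--pair index of length $\bar{d}$, so the effective sample size is $n = d_1$.

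First I would recast $\calT_\calJ$ into the canonical form $\max_{(j,j')}\bigl|d_1^{-1/2}\sum_{i=1}^{d_1} X_{i,(j,j')}\bigr|$ by setting
\[
    X_{i,(j,j')} \;\coloneqq\; \frac{1}{\sqrt{d_1}\,\sqrt{v^\star_{\bar{\calL}(j,j')}}}\, \xi_{i,\bar{\calL}(j,j')},
\]
where $(j,j')$ ranges over $\{(j,j') : j \in \calJ,\, j' \neq j\}$, a set of cardinality at most $d_2^2$. The bootstrap counterpart $\calG_\calJ$ is by construction the Gaussian-multiplier version of the same normalized sum, which is exactly the setting of Theorem~2.2.

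Next I would verify Conditions~E and M of that theorem. Since $y_{i,k}, \delta_{i,k} \in \{0,1\}$, $p_i \gtrsim \bar{p}$ by Assumption~\ref{asp:randomness}, and $\sigma'(M^\star_{i,k})$ is bounded away from zero because $\norm{\mathbf{\Theta}^\star}_\infty = O(1)$, one has $|\xi_{i,k}| \lesssim 1/\bar{p}$ uniformly in $(i,k)$; combined with $v^\star_k \asymp 1/(\bar{p}d_1)$ as shown in Theorem~\ref{thm:asymptoticnormality}, this yields the envelope $|X_{i,(j,j')}| \lesssim B_{d_1} \coloneqq C/\sqrt{\bar{p}}$, which handles Condition~E. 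For Condition~M, the normalization $d_1^{-1}\sum_i \bbE X^2_{i,(j,j')} = 1 + o(1)$ uniformly in $(j,j')$ follows from the definition of $v^\star_k$ together with the Bernstein concentration $v^\star_k = \bbE[v^\star_k] + o(v^\star_k)$ already established in the proof of Proposition~\ref{prop:feasibleCLT}, while the fourth-moment bound $d_1^{-1}\sum_i \bbE X^4_{i,(j,j')} \lesssim B_{d_1}^2$ is immediate from the envelope.

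Finally I would plug these inputs into Theorem~2.2 with $n = d_1$, $B_{d_1}^2 \asymp 1/\bar{p}$, and $\log p \lesssim \log(d_2^2) \lesssim \log(d_1 d_2)$, yielding
\[
    \bigl|\bbP(\calT_\calJ \leq \calG_{\calJ;1-\alpha}) - \alpha\bigr| \;\lesssim\; \Bigl(\frac{B_{d_1}^2 \log^5(d_1 d_2^2)}{d_1}\Bigr)^{1/4} \;\asymp\; \Bigl(\frac{\log^5(d_1 d_2)}{\bar{p}d_1}\Bigr)^{1/4},
\]
and the side assumption $\bar{p}d_1 \gg \log^3(\bar{d})$ in Theorem~\ref{thm:marketbootstrap} makes this $o(1)$. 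The only mildly delicate step I anticipate is securing the uniform-in-$(j,j')$ lower bound $d_1^{-1}\sum_i \bbE X^2_{i,(j,j')} \gtrsim 1$: it requires $v^\star_k \gtrsim 1/(\bar{p}d_1)$ simultaneously for every one of the $O(d_2^2)$ pairs entering the maximum, which is secured by a union-bounded version of the Bernstein estimate from the proof of Proposition~\ref{prop:feasibleCLT}. The remainder is a direct specialization of the multi-block argument used for $\calT^i_{\calJ,\calK}$ in Lemma~\ref{lem:bootstrap_indiv}.
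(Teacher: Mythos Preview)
Your proposal is correct and follows essentially the same route as the paper: both recast $\calT_\calJ$ as $\max_{(j,j')}\bigl|d_1^{-1/2}\sum_{i=1}^{d_1} X_{i,\bar{\calL}(j,j')}\bigr|$ with the same summands, verify Conditions~E and~M of \cite{chernozhuokov2022improved} with the same envelope $B_{d_1}\asymp 1/\sqrt{\bar p}$ and constants $b_1,b_2\asymp 1$, and then invoke Theorem~2.2 with $n=d_1$ and $p\lesssim d_2^2$. The only cosmetic difference is that the paper does not spell out the uniform-in-$(j,j')$ variance lower bound you flag as ``mildly delicate''; it simply asserts $b_1\asymp 1$, which holds because conditionally on the sampling indicators one has $d_1^{-1}\sum_i \bbE[X_{i,\bar\calL(j,j')}^2\mid\Omega]=1$ exactly, so no union bound over pairs is actually needed there.
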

\begin{proof}
 Recall the definition
\begin{align*}
    \calT_{\calJ}\coloneqq \max_{j \in \calJ} \max_{j' \neq j} \bigg| \frac{1}{\sqrt{d_1}}\sum_{i=1}^{d_1} \underbrace{ \frac{1}{\sqrt{v^\star_{\bar{\calL}(j,j')}}} \frac{1}{p_i \sqrt{d_1}} \delta_{i,\bar{\calL}(j,j')}  \left(\sigma'(M^\star_{i,\bar{\calL}(j,j')}) \right)^{-1} \left( y_{i,\bar{\calL}(j,j')}- \sigma(M^\star_{i,\bar{\calL}(j,j')}) \right)}_{\coloneqq X_{i,\bar{\calL}(j,j')}} \bigg|.
\end{align*}
We aim to verify Condition E and Condition M in \cite{chernozhuokov2022improved}. Note that
\begin{align*}
    \bbE [X_{i,\bar{\calL}(j,j')}]=0; \quad \bbE \bigg[\exp \left( \frac{|X_{i,\bar{\calL}(j,j')}|}{B_{d_1}}\right)\bigg]\leq 2; \quad  b_1^2 \leq \frac{1}{d_1}\sum_{i=1}^{d_1} \bbE [X_{i,\bar{\calL}(j,j')}^2]; \quad \frac{1}{d_1}\sum_{i=1}^{d_1} \bbE [X_{i,\bar{\calL}(j,j')}^4]\leq B_{d_1}^2 b_2^2,
\end{align*}
are satisfied with the choice of some constants $b_1, b_2 \asymp 1$ with $b_1 \leq b_2$ and some sequence $B_{d_1}\geq 1$ for all $d_1\geq 1$ such that $B_{d_1}= C /\sqrt{\bar{p}}$ for some sufficiently large $C>0$. Then, by Theorem 2.2 in \cite{chernozhuokov2022improved}, for $\alpha\in(0,1)$, we have $$\big| \bbP(\calT_{\calJ}\leq \calG_{\calJ;1-\alpha}) - \alpha \big| \lesssim \left( \frac{\log^5(d_1d_2)}{\bar{p}d_1}\right)^{\frac{1}{4}}.$$
\end{proof}

\begin{lemma}\label{lem:TTmarket}
Suppose the assumptions in Theorem \ref{thm:marketbootstrap} hold. Then,
    \begin{align*}
        \sup_{z\in \bbR}\big| \bbP(\widehat{\calT}_{\calJ}\leq z)- \bbP(\calT_{\calJ}\leq z) \big| \rightarrow 0.
    \end{align*}
\end{lemma}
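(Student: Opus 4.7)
The plan is to mirror the argument in the proof of Lemma \ref{lem:TTindiv}: establish that the feasible maximum statistic $\widehat{\calT}_\calJ$ is uniformly close to its population counterpart $\calT_\calJ$ with high probability, and then combine this with an anti-concentration bound for $\calT_\calJ$ to upgrade the closeness into distributional closeness.

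First, I would decompose, for each fixed pair $(j,j')$ with $j \in \calJ$ and $j' \neq j$,
\begin{align*}
&\bigg|\frac{1}{\sqrt{\widehat{v}_{\bar{\calL}(j,j')}}}\bigg(\frac{1}{d_1}\sum_{i=1}^{d_1}\widehat{M}^{\mathrm{NR}}_{i,\bar{\calL}(j,j')} -\frac{1}{d_1}\sum_{i=1}^{d_1} M^\star_{i,\bar{\calL}(j,j')}\bigg) - \frac{1}{\sqrt{v^\star_{\bar{\calL}(j,j')}}} \frac{1}{d_1}\sum_{i=1}^{d_1}\xi_{i,\bar{\calL}(j,j')}\bigg| \\
&\quad \leq \frac{1}{\sqrt{\widehat{v}_{\bar{\calL}(j,j')}}}\bigg|\frac{1}{d_1}\sum_{i=1}^{d_1}\widehat{M}^{\mathrm{NR}}_{i,\bar{\calL}(j,j')} -\frac{1}{d_1}\sum_{i=1}^{d_1}M^\star_{i,\bar{\calL}(j,j')} - \frac{1}{d_1}\sum_{i=1}^{d_1}\xi_{i,\bar{\calL}(j,j')}\bigg| \\
&\quad \quad + \bigg|1 - \sqrt{\tfrac{v^\star_{\bar{\calL}(j,j')}}{\widehat{v}_{\bar{\calL}(j,j')}}}\bigg| \cdot \bigg|\frac{1}{\sqrt{v^\star_{\bar{\calL}(j,j')}}}\frac{1}{d_1}\sum_{i=1}^{d_1}\xi_{i,\bar{\calL}(j,j')}\bigg|.
\end{align*}
The inner linearization residual in the first summand is precisely what was shown to be $o(1/\sqrt{\bar{p}d_1})$ in the proof of Theorem~\ref{thm:asymptoticnormality}, so dividing by $\sqrt{\widehat{v}_{\bar{\calL}(j,j')}}\asymp 1/\sqrt{\bar{p}d_1}$ leaves $o(1)$ for a given pair. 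For the second summand, Proposition~\ref{prop:feasibleCLT} gives $\widehat{v}_{\bar{\calL}(j,j')} = v^\star_{\bar{\calL}(j,j')}(1+o(1))$, and the leading linear expansion term, being a normalized sum of bounded centered random variables, is $O_{\bbP}(\sqrt{\log(d_2)})$.

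Second, I would extend both bounds uniformly over $(j,j')$ via a union bound. Since $|\calJ|(d_2-1) \lesssim d_2^2 \lesssim \bar{d}$ and each event in Theorem~\ref{thm:asymptoticnormality} and Proposition~\ref{prop:feasibleCLT} holds on a $1-O(\bar{d}^{-10})$ event, the union bound only degrades these to $1-O(\bar{d}^{-8})$, still tending to one. The product of the $o(1/\sqrt{\log \bar{d}})$ variance-ratio error and the $O(\sqrt{\log \bar{d}})$ leading term is $o(1)$, so we obtain $|\widehat{\calT}_\calJ - \calT_\calJ| = o_\bbP(1)$.

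Third, I would finish via the standard anti-concentration argument: for any $\epsilon>0$,
\begin{align*}
\sup_{z\in\bbR}\big|\bbP(\widehat{\calT}_\calJ \leq z) - \bbP(\calT_\calJ \leq z)\big| \leq \bbP(|\widehat{\calT}_\calJ - \calT_\calJ|>\epsilon) + \sup_{z\in\bbR}\bbP(z < \calT_\calJ < z+\epsilon).
\end{align*}
The first term can be made arbitrarily small by the previous step, while the second is controlled by the Gaussian anti-concentration inequality of Nazarov/Chernozhukov, which gives $\sup_z \bbP(z<\calG_\calJ<z+\epsilon) \lesssim \epsilon \sqrt{\log(d_1 d_2)}$; the Gaussian multiplier bootstrap validity in \eqref{eq:bootstrap} (Theorem~2.2 of \cite{chernozhuokov2022improved}) transfers this anti-concentration from $\calG_\calJ$ to $\calT_\calJ$ up to the same bootstrap error. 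Choosing $\epsilon$ to vanish slowly so that both terms go to zero concludes the proof.

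The main obstacle will be making sure the uniform bound on the linearization residual in Theorem~\ref{thm:asymptoticnormality}'s proof survives the union bound over $\Theta(d_2^2)$ pairs without eating into the $o(1/\sqrt{\bar{p}d_1})$ slack; this requires the original bounds to be stated with the $1-O(\bar{d}^{-10})$ tail and with constants uniform in the choice of entry $k$, which is indeed the case since the argument in Theorem~\ref{thm:asymptoticnormality} never uses the identity of the specific $k$.
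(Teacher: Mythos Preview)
Your proposal is correct and follows essentially the same approach as the paper: the same triangle-inequality decomposition into a linearization residual (controlled via the proof of Theorem~\ref{thm:asymptoticnormality}) and a variance-ratio term (controlled via Proposition~\ref{prop:feasibleCLT}), followed by the same final anti-concentration inequality. The only cosmetic difference is that the paper claims uniformity over $(j,j')$ directly from the already-uniform entrywise bounds in Theorem~\ref{thm:errorboundsforL} and Lemma~\ref{LemmaB1} rather than via an explicit union bound, and leaves the anti-concentration step for $\calT_\calJ$ implicit where you spell it out via Nazarov and the bootstrap coupling \eqref{eq:bootstrap}.
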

\begin{proof}
    By the definition of $\widehat{\calT}_{\calJ}$ and $\calT_{\calJ}$, 
\begin{align*}
|\widehat{\calT}_{\calJ}-\calT_{\calJ}| \leq \max_{j \in \calJ} \max_{j'\neq j} \bigg|\frac{ 1 }{\sqrt{\widehat{v}_{\bar{\calL}(j,j')}}}\left(\frac{1}{d_1} \sum_{i=1}^{d_1}\widehat{M}^{\mathrm{NR}}_{i,\bar{\calL}(j,j')}  -\frac{1}{d_1} \sum_{i=1}^{d_1}M^\star_{i,\bar{\calL}(j,j')} \right)-\frac{1}{\sqrt{v^\star_{\bar{\calL}(j,j')}}} \frac{1}{d_1}\sum_{i=1}^{d_1}  \xi_{i, \bar{\calL}(j,j')}   \bigg|    
\end{align*}
where $\xi_{i, \bar{\calL}(j,j')}$ is defined in Section \ref{sec:market_ranking}. Note that
\begin{align*}
    &\bigg|\frac{ 1 }{\sqrt{\widehat{v}_{\bar{\calL}(j,j')}}}\left(\frac{1}{d_1} \sum_{i=1}^{d_1}\widehat{M}^{\mathrm{NR}}_{i,\bar{\calL}(j,j')}  -\frac{1}{d_1} \sum_{i=1}^{d_1}M^\star_{i,\bar{\calL}(j,j')} \right)-\frac{1}{\sqrt{v^\star_{\bar{\calL}(j,j')}}} \frac{1}{d_1}\sum_{i=1}^{d_1}  \xi_{i, \bar{\calL}(j,j')}   \bigg|\\
    &\leq \bigg|\frac{ 1 }{\sqrt{\widehat{v}_{\bar{\calL}(j,j')}}}\bigg|\bigg|\left(\frac{1}{d_1} \sum_{i=1}^{d_1}\widehat{M}^{\mathrm{NR}}_{i,\bar{\calL}(j,j')}  -\frac{1}{d_1} \sum_{i=1}^{d_1}M^\star_{i,\bar{\calL}(j,j')} \right)-  \frac{1}{d_1}\sum_{i=1}^{d_1}  \xi_{i, \bar{\calL}(j,j')}  \bigg|\\
    & \quad +\bigg|\frac{1}{\sqrt{v^\star_{\bar{\calL}(j,j')}}} \frac{1}{d_1}\sum_{i=1}^{d_1}  \xi_{i, \bar{\calL}(j,j')} \bigg| \bigg|1-\sqrt{\frac{v^\star_{\bar{\calL}(j,j')}}{\widehat{v}_{\bar{\calL}(j,j')}}} \bigg|.
\end{align*}
We can show that, with probability at least $1-O(\bar{d}^{-10})$,
\begin{align*}
   \max_{j \in \calJ} \max_{j'\neq j} \bigg|1-\sqrt{\frac{v^\star_{\bar{\calL}(j,j')}}{\widehat{v}_{\bar{\calL}(j,j')}}} \bigg| \leq \max_{j \in \calJ} \max_{j'\neq j} \bigg|1-\frac{v^\star_{\bar{\calL}(j,j')}}{\widehat{v}_{\bar{\calL}(j,j')}} \bigg| \lesssim \frac{\kappa \mu R \sigma^\star_{\max} }{\min\{d_1, d_2(d_2-1)/2\}} \sqrt{\frac{  \bar{d}\log(\bar{d})}{\bar{p} (\sigma^\star_{\min})^2 }}
\end{align*}
by following the proofs of Theorem \ref{thm:asymptoticnormality} and Proposition \ref{prop:feasibleCLT}. As a result, with probability at least $1-O(\bar{d}^{-10})$,
 \begin{align*}
    &\max_{j \in \calJ} \max_{j'\neq j} \bigg|\frac{1}{\sqrt{v^\star_{\bar{\calL}(j,j')}}} \frac{1}{d_1}\sum_{i=1}^{d_1}  \xi_{i, \bar{\calL}(j,j')} \bigg| \bigg|1-\sqrt{\frac{v^\star_{\bar{\calL}(j,j')}}{\widehat{v}_{\bar{\calL}(j,j')}}} \bigg| \leq \max_{j \in \calJ} \max_{j'\neq j} \bigg|\frac{1}{\sqrt{v^\star_{\bar{\calL}(j,j')}}} \frac{1}{d_1}\sum_{i=1}^{d_1}  \xi_{i, \bar{\calL}(j,j')} \bigg| \bigg|1- \frac{v^\star_{\bar{\calL}(j,j')}}{\widehat{v}_{\bar{\calL}(j,j')}}  \bigg|\\
    &\lesssim  \frac{\kappa \mu R \sigma^\star_{\max} }{\min\{d_1, d_2(d_2-1)/2\}} \sqrt{\frac{  \bar{d}\log^2(\bar{d})}{\bar{p} (\sigma^\star_{\min})^2 }} \ll 1.
 \end{align*}
In addition, by following the proof of Theorem \ref{thm:asymptoticnormality}, we have, for any $\epsilon>0,$
\begin{align*}
   \max_{j \in \calJ} \max_{j'\neq j} \bigg|\frac{ 1 }{\sqrt{v^\star_{\bar{\calL}(j,j')}}}\bigg|\bigg|\left(\frac{1}{d_1} \sum_{i=1}^{d_1}\widehat{M}^{\mathrm{NR}}_{i,\bar{\calL}(j,j')}  -\frac{1}{d_1} \sum_{i=1}^{d_1}M^\star_{i,\bar{\calL}(j,j')} \right)-  \frac{1}{d_1}\sum_{i=1}^{d_1}  \xi_{i,\bar{\calL}(j,j')}   \bigg| <\epsilon
\end{align*}
with probability at least $1-O(\bar{d}^{-10})$. Then, for any $\epsilon>0$,
\begin{align*}
    \sup_{z \in \bbR} \big| \bbP(\widehat{\calT}_{\calJ}\leq z)-\bbP(\calT_{\calJ} \leq z)  \big| \leq \bbP (|\widehat{\calT}_{\calJ}-\calT_{\calJ}|>\epsilon) + \sup_{z \in \bbR} \bbP (z < \calT_{\calJ} < z+\epsilon). 
\end{align*}
\end{proof}

\begin{lemma}\label{lem:GGmarket}
    Suppose the assumptions in Theorem \ref{thm:marketbootstrap} hold.
    \begin{align*}
        \sup_{z\in \bbR}\big| \bbP(\widehat{\calG}_{\calJ}\leq z)- \bbP(\calG_{\calJ} \leq z) \big| \rightarrow 0.
    \end{align*}
\end{lemma}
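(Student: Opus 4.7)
The plan is to mirror the structure used in Lemma~\ref{lem:TTmarket}: first derive a high-probability bound $|\widehat{\calG}_\calJ - \calG_\calJ| = o(1)$, then combine it with an anti-concentration bound on $\calG_\calJ$ to pass to Kolmogorov distance. For each pair $(j,j')$ with $j \in \calJ$, $j' \neq j$, I decompose
\begin{align*}
&\frac{1}{\sqrt{\widehat{v}_{\bar\calL(j,j')}}}\frac{1}{d_1}\sum_{i=1}^{d_1}\widehat\xi_{i,\bar\calL(j,j')}Z_i - \frac{1}{\sqrt{v^\star_{\bar\calL(j,j')}}}\frac{1}{d_1}\sum_{i=1}^{d_1}\xi_{i,\bar\calL(j,j')}Z_i\\
&\qquad = \underbrace{\frac{1}{\sqrt{\widehat{v}_{\bar\calL(j,j')}}}\frac{1}{d_1}\sum_{i=1}^{d_1}(\widehat\xi_{i,\bar\calL(j,j')}-\xi_{i,\bar\calL(j,j')})Z_i}_{=:\,I_{j,j'}}\\
&\qquad\quad + \underbrace{\Bigl(1-\sqrt{v^\star_{\bar\calL(j,j')}/\widehat{v}_{\bar\calL(j,j')}}\Bigr)\cdot\frac{1}{\sqrt{v^\star_{\bar\calL(j,j')}}}\frac{1}{d_1}\sum_{i=1}^{d_1}\xi_{i,\bar\calL(j,j')}Z_i}_{=:\,II_{j,j'}}.
\end{align*}

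For $II_{j,j'}$, Proposition~\ref{prop:feasibleCLT} (more precisely, the bound on $|1-v^\star/\widehat v|$ reproduced in the proof of Lemma~\ref{lem:TTmarket}) yields that $\max_{j,j'}|1-\sqrt{v^\star/\widehat v}|$ is $o(1/\sqrt{\log\bar d})$ with probability $1-O(\bar d^{-10})$, while a standard Gaussian maximum inequality gives $\max_{j,j'}|v_{\bar\calL(j,j')}^{\star -1/2}\,d_1^{-1}\sum_i\xi_{i,\bar\calL(j,j')}Z_i|=O_{\bbP}(\sqrt{\log d_2})$. Together these force $\max_{j,j'}|II_{j,j'}|=o_{\bbP}(1)$.

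For $I_{j,j'}$, I would condition on the data $\{\delta_{i,k},y_{i,k}\}$ (and hence on $\widehat\bfM$); then $I_{j,j'}$ is a centered Gaussian with conditional variance $\widehat v_{\bar\calL(j,j')}^{-1}d_1^{-2}\sum_i(\widehat\xi_i-\xi_i)^2$. A first-order Taylor expansion of the map $m\mapsto (\sigma'(m))^{-1}(y-\sigma(m))$ around $M^\star_{i,\bar\calL(j,j')}$, together with the entrywise consistency from Corollary~\ref{cor:errorboundforM} (which keeps both $\widehat M_{i,k}$ and $M^\star_{i,k}$ in a compact set where $\sigma',\sigma''$ are bounded), gives the uniform bound $|\widehat\xi_{i,\bar\calL(j,j')}-\xi_{i,\bar\calL(j,j')}|\lesssim p_i^{-1}\delta_{i,\bar\calL(j,j')}\|\widehat\bfM-\bfM^\star\|_\infty$. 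The Chernoff bound yields $\sum_i\delta_{i,\bar\calL(j,j')}\lesssim \bar p d_1$, so the conditional variance is at most a constant multiple of $\|\widehat\bfM-\bfM^\star\|_\infty^2$. The Gaussian maximum inequality over the at most $|\calJ|\cdot d_2 \le d_2^2$ pairs then gives $\max_{j,j'}|I_{j,j'}|\lesssim \|\widehat\bfM-\bfM^\star\|_\infty\sqrt{\log d_2}$, which is $o(1)$ under the rate conditions invoked in Theorem~\ref{thm:marketbootstrap} (via Corollary~\ref{cor:errorboundforM}).

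Finally, applying the Gaussian anti-concentration inequality (e.g., Nazarov / Chernozhukov--Chetverikov--Kato) to $\calG_\calJ$, whose coordinates have unit-normalized variance, yields $\sup_z\bbP(|\calG_\calJ-z|\le\varepsilon)\lesssim \varepsilon\sqrt{\log(d_2)}$ for every $\varepsilon>0$. Combining with $\bbP(|\widehat\calG_\calJ-\calG_\calJ|>\varepsilon)=o(1)$ gives, for every $\varepsilon>0$,
\[
\sup_{z\in\bbR}\bigl|\bbP(\widehat\calG_\calJ\le z)-\bbP(\calG_\calJ\le z)\bigr|\le \bbP(|\widehat\calG_\calJ-\calG_\calJ|>\varepsilon)+\sup_z\bbP(|\calG_\calJ-z|\le\varepsilon),
\]
and letting $\varepsilon\to 0$ sufficiently slowly closes the argument. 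The main obstacle will be the uniform control of $|\widehat\xi_i-\xi_i|$: because $\widehat\xi_i$ depends nonlinearly on $\widehat M_{i,\bar\calL(j,j')}$ through both $(\sigma'(\widehat M))^{-1}$ and $\sigma(\widehat M)$, one must carefully exploit the boundedness of $\sigma',\sigma''$ on the range ensured by the entrywise error bound, so that the linearization error is of second order in $\|\widehat\bfM-\bfM^\star\|_\infty$ and hence negligible. Everything else is a routine combination of conditional Gaussian concentration, Chernoff bounds on the sampling indicators, and the anti-concentration lemma.
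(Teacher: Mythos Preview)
Your proposal is correct and follows essentially the same route as the paper: the same $I_{j,j'}/II_{j,j'}$ decomposition, the same uniform bound $|\widehat\xi_{i,k}-\xi_{i,k}|\lesssim p_i^{-1}\delta_{i,k}\|\widehat\bfM-\bfM^\star\|_\infty$ via Lipschitz continuity of $m\mapsto(\sigma'(m))^{-1}(y-\sigma(m))$, the same control of $|1-\sqrt{v^\star/\widehat v}|$ from Proposition~\ref{prop:feasibleCLT}, and the same final anti-concentration step. The only cosmetic difference is that the paper packages the concentration of $\sum_i\Delta_{i,\bar\calL(j,j')}Z_i$ via a sub-exponential Bernstein inequality, whereas you condition on the data and invoke Gaussian maximal/tail bounds directly; both arguments yield the same $\|\widehat\bfM-\bfM^\star\|_\infty\sqrt{\log\bar d}=o(1)$ rate under Assumption~\ref{asp:sieve_assumptions}(iii).
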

\begin{proof}
By definition,
\begin{align*}
    |\widehat{\calG}_{\calJ}-\calG_{\calJ}| \leq \max_{j \in \calJ} \max_{j'\neq j} \bigg|\frac{1}{\sqrt{\widehat{v}_{\bar{\calL}(j,j')}}} \frac{1}{d_1}\sum_{i=1}^{d_1} \widehat{\xi}_{i,\bar{\calL}(j,j')} Z_i-\frac{1}{\sqrt{v^\star_{\bar{\calL}(j,j')}}} \frac{1}{d_1}\sum_{i=1}^{d_1} \xi_{i,\bar{\calL}(j,j')} Z_i \bigg|.
\end{align*}
Note that 
\begin{align*}
   \bigg|\frac{1}{\sqrt{\widehat{v}_{\bar{\calL}(j,j')}}} \frac{1}{d_1}\sum_{i=1}^{d_1} \widehat{\xi}_{i,\bar{\calL}(j,j')} Z_i-\frac{1}{\sqrt{v^\star_{\bar{\calL}(j,j')}}} \frac{1}{d_1}\sum_{i=1}^{d_1} \xi_{i,\bar{\calL}(j,j')} Z_i \bigg| =\bigg| \sum_{i=1}^{d_1}\Delta_{i,\bar{\calL}(j,j')} Z_i \bigg|
\end{align*}
    where $$\Delta_{i,\bar{\calL}(j,j')} \coloneqq \frac{1}{\sqrt{\widehat{v}_{\bar{\calL}(j,j')}}} \frac{1}{d_1}\widehat{\xi}_{i,\bar{\calL}(j,j')}-\frac{1}{\sqrt{v^\star_{\bar{\calL}(j,j')}}} \frac{1}{d_1} \xi_{i,\bar{\calL}(j,j')}.$$

Note that
\begin{align*}
    |\Delta_{i,\bar{\calL}(j,j')}| \leq \bigg|\frac{1}{\sqrt{\widehat{v}_{\bar{\calL}(j,j')}}} \frac{1}{d_1}\widehat{\xi}_{i,\bar{\calL}(j,j')}-\frac{1}{\sqrt{\widehat{v}_{\bar{\calL}(j,j')}}} \frac{1}{d_1} \xi_{i,\bar{\calL}(j,j')}\bigg| +\bigg|\frac{1}{\sqrt{\widehat{v}_{\bar{\calL}(j,j')}}} \frac{1}{d_1}\xi_{i,\bar{\calL}(j,j')}-\frac{1}{\sqrt{v^\star_{\bar{\calL}(j,j')}}} \frac{1}{d_1} \xi_{i,\bar{\calL}(j,j')}\bigg|.
\end{align*}

By following the proof of Proposition \ref{prop:feasibleCLT} analogously, we have, with probability at least $1-O(\bar{d}^{-10})$,
\begin{align*}
    \max_{j \in \calJ} \max_{j'\neq j}  \bigg|\frac{1}{\sqrt{\widehat{v}_{\bar{\calL}(j,j')}}} \frac{1}{d_1}\widehat{\xi}_{i,\bar{\calL}(j,j')}-\frac{1}{\sqrt{\widehat{v}_{\bar{\calL}(j,j')}}} \frac{1}{d_1} \xi_{i,\bar{\calL}(j,j')}\bigg|  \lesssim \frac{1}{\sqrt{\bar{p}d_1}}\frac{\kappa \mu R \sigma^\star_{\max} }{\min\{d_1, d_2(d_2-1)/2\}} \sqrt{\frac{  \bar{d}\log(\bar{d})}{\bar{p} (\sigma^\star_{\min})^2 }} .
\end{align*}

As shown in the proof of Lemma \ref{lem:TTmarket}, with probability at least $1-O(\bar{d}^{-10})$,
\begin{align*}
   \max_{j \in \calJ} \max_{j'\neq j} \bigg|1-\sqrt{\frac{v^\star_{\bar{\calL}(j,j')}}{\widehat{v}_{\bar{\calL}(j,j')}}} \bigg| \leq \max_{j \in \calJ} \max_{j'\neq j} \bigg|1-\frac{v^\star_{\bar{\calL}(j,j')}}{\widehat{v}_{\bar{\calL}(j,j')}} \bigg| \lesssim \frac{\kappa \mu R \sigma^\star_{\max} }{\min\{d_1, d_2(d_2-1)/2\}} \sqrt{\frac{  \bar{d}\log(\bar{d})}{\bar{p} (\sigma^\star_{\min})^2 }},
\end{align*}
which leads to
\begin{align*}
   \max_{j \in \calJ} \max_{j'\neq j} \bigg|\frac{1}{\sqrt{\widehat{v}_{\bar{\calL}(j,j')}}} \frac{1}{d_1}\xi_{i,\bar{\calL}(j,j')}-\frac{1}{\sqrt{v^\star_{\bar{\calL}(j,j')}}} \frac{1}{d_1} \xi_{i,\bar{\calL}(j,j')}\bigg| \lesssim  \frac{1}{\sqrt{\bar{p}d_1}}\frac{\kappa \mu R \sigma^\star_{\max} }{\min\{d_1, d_2(d_2-1)/2\}} \sqrt{\frac{  \bar{d}\log(\bar{d})}{\bar{p} (\sigma^\star_{\min})^2 }}.
\end{align*}
Therefore, with probability at least $1-O(\bar{d}^{-10})$,
\begin{align*}
   \max_{j \in \calJ} \max_{j'\neq j} |\Delta_{i,\bar{\calL}(j,j')}|\lesssim  \frac{1}{\sqrt{\bar{p}d_1}}\frac{\kappa \mu R \sigma^\star_{\max} }{\min\{d_1, d_2(d_2-1)/2\}} \sqrt{\frac{  \bar{d}\log(\bar{d})}{\bar{p} (\sigma^\star_{\min})^2 }}.
\end{align*}
 
We aim to apply the Bernstein inequality \citep{koltchinskii:2011} to $\sum_{i=1}^{d_1}\Delta_{i,\bar{\calL}(j,j')} Z_i.$ Note that $\bbE \Delta_{i,\bar{\calL}(j,j')} Z_i=0$, and
\begin{align*}
    &\norm{\Delta_{i,\bar{\calL}(j,j')} Z_i}_{\mathrm{subE}} \lesssim \frac{1}{\sqrt{\bar{p}d_1}}\frac{\kappa \mu R \sigma^\star_{\max} }{\min\{d_1, d_2(d_2-1)/2\}} \sqrt{\frac{  \bar{d}\log(\bar{d})}{\bar{p} (\sigma^\star_{\min})^2 }};\\
    &V\coloneqq \norm{\bbE \bigg[\sum_{i=1}^{d_1} \left(\Delta_{i,\bar{\calL}(j,j')}\right)^2 Z_i^2 \bigg]} \lesssim \bar{p}d_1\left(\frac{1}{\sqrt{\bar{p}d_1}}\frac{\kappa \mu R \sigma^\star_{\max} }{\min\{d_1, d_2(d_2-1)/2\}} \sqrt{\frac{  \bar{d}\log(\bar{d})}{\bar{p} (\sigma^\star_{\min})^2 }} \right)^2
\end{align*}
where $\norm{\cdot}_{\mathrm{subE}}$ is the sub-exponential norm. Therefore, with probability at least $1-O(\bar{d}^{-10})$,
\begin{align*}
    \bigg|\sum_{i=1}^{d_1}\Delta_{i,\bar{\calL}(j,j')} Z_i \bigg| &\lesssim \sqrt{V \log (\bar{d})} + \frac{1}{\sqrt{\bar{p}d_1}}\frac{\kappa \mu R \sigma^\star_{\max} }{\min\{d_1, d_2(d_2-1)/2\}} \sqrt{\frac{  \bar{d}\log(\bar{d})}{\bar{p} (\sigma^\star_{\min})^2 }} \log^2(\bar{d})\\
    & \lesssim \frac{\kappa \mu R \sigma^\star_{\max} }{\min\{d_1, d_2(d_2-1)/2\}} \sqrt{\frac{  \bar{d}\log^2(\bar{d})}{\bar{p} (\sigma^\star_{\min})^2 }}  \ll 1.
\end{align*}
Therefore, with probability at least $1-O(\bar{d}^{-9})$, for any $\epsilon>0$, $|\widehat{\calG}_{\calJ}-\calG_{\calJ}| < \epsilon.$ Then, we have, for any $\epsilon>0$,
\begin{align*}
    \sup_{z \in \bbR} \big| \bbP(\widehat{\calG}_{\calJ}\leq z)-\bbP(\calG_{\calJ} \leq z)  \big| \leq \bbP (|\widehat{\calG}_{\calJ}-\calG_{\calJ}|>\epsilon) + \sup_{z \in \bbR} \bbP (z < \calG_{\calJ} < z+\epsilon).
\end{align*}
\end{proof}

\section{Equivalence of convex and nonconvex solutions}\label{sec:sectionA}
The argument for establishing the approximate equivalence between the convex solution $\widehat{\bfL}$ and the nonconvex solution of \eqref{eq:nonconvexLS} follows an approach analogous to that in \cite{chen2020noisy}. The main differences are that we allow heterogeneous $p_i$ and assume approximate low-rankness. In addition, since we consider a generalized BTL model, the noise structure differs from that of the matrix completion model in \cite{chen2020noisy}. This section formally demonstrates that, despite these differences, the approximate equivalence between the two solution concepts still holds. Recall the nonconvex surrogate problem \eqref{eq:nonconvexLS}:
\begin{align}
   f(\bfX, \bfY)\coloneqq \frac{1}{2}  \sum_{(i,k)\in S} p_i^{-1} \left(y_{i,k}- [\bfX \bfY^\top]_{i,k} \right)^2 + \frac{\lambda}{2} \norm{\bfX}_{\mathrm{F}}^2+ \frac{\lambda}{2} \norm{\bfY}_{\mathrm{F}}^2 \label{eq:nonconvex_appen}
\end{align}
where $\bfX \in \bbR^{d_1 \times R}$ and $\bfY \in \bbR^{d_2(d_2-1)/2 \times R}$, where $\lambda>0$ is the same tuning parameter as in \eqref{eq:convexob}. Recall also that $\bfU^\star {\bf \Sigma}^\star \bfV^{\star \top}$ be the singular value decomposition of $\bfL^\star$ such that ${\bf \Sigma}^\star$ is a $R \times R$ diagonal matrix of singular values in non-ascending order. Define $\bfX^\star \coloneqq \bfU^\star ({\bf \Sigma}^\star)^{1/2}$ and $\bfY^\star = \bfV^\star ({\bf \Sigma}^\star)^{1/2}$. Then, the sequence of gradient iterates for \eqref{eq:nonconvex_appen} are defined as follows:

\noindent {\bf Initialization:} $\bfX^0=\bfX^{\star}$ and $\bfY^0=\bfY^{\star}$.

\noindent {\bf Gradient updates:} Compute, for integers $t\geq 0$,
\begin{align}
    \bfX^{t+1} &= \bfX^t - \eta \nabla_\bfX f(\bfX^t, \bfY^t) = \bfX^t - \eta \left(-  \sum_{(i,k)\in S} p_i^{-1}( y_{i,k} -  [\bfX^t\bfY^{t \top}]_{i,k})\bfe_i\bfe_k^\top \bfY^t+ \lambda \bfX^t\right)\label{eq:gradientupdates_up}\\
    \bfY^{t+1} &= \bfY^t - \eta \nabla_\bfY f(\bfX^t, \bfY^t) = \bfY^t - \eta \left(-  \sum_{(i,k)\in S} p_i^{-1}( y_{i,k} -  [\bfX^t\bfY^{t \top}]_{i,k})\bfe_k\bfe_i^\top\bfX^t + \lambda \bfY^t\right)\label{eq:gradientupdates}
\end{align}
where $\eta>0$ is the step size. Here, we are abusing notation: $\bfe_i$ and $\bfe_k$ are $d_1$ and $d_2(d_2-1)/2$ dimensional, respectively. We will keep using these notations for simplicity. We define
\begin{align}
    \bfH^t \coloneqq \argmin_{\bfO \in \calO^{R \times R}} \left( \norm{\bfX^t \bfO - \bfX^\star}_{\mathrm{F}}^2+\norm{\bfY^t \bfO - \bfY^\star}_{\mathrm{F}}^2\right) \quad \text{for each $t\geq 0$} \label{eq:Ht}
\end{align}
where $\calO^{R\times R}$ is the set of $R \times R$ orthonormal matrices.

Before proceeding, we record the following lemma, which states that the realized number of comparisons, i.e., our sample size $|S|$, remains close to its expectation. Its proof is based on the standard Chernoff bounds for independent Bernoulli random variables. Throughout this section, we assume that Assumption \ref{asp:randomness} and Assumption \ref{asp:sieve_assumptions} hold, and that $\lambda = C_{\lambda} \sqrt{ \bar{d}/\bar{p}}$ for some sufficiently large constant $C_{\lambda} > 0$.
  \begin{lemma}\label{lem:numberofedges}
   Suppose $\bar{p}\geq C \frac{\log(d_1d_2(d_2-1))}{d_1 d_2(d_2-1)}$ for some sufficiently large constant $C>0$. Then, with probability at least $1-O((\frac{d_1d_2(d_2-1)}{2})^{-100})$, we have $|S| \in [0.9\bar{p}\frac{d_1d_2(d_2-1)}{2}, 1.1\bar{p}\frac{d_1d_2(d_2-1)}{2}].$
\end{lemma}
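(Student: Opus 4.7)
The plan is to recognize $|S|$ as a sum of independent, heterogeneous Bernoulli random variables and then apply the multiplicative Chernoff bound. Specifically, by Assumption \ref{asp:randomness}, for each user $i \in [d_1]$ and each index $k \in [d_2(d_2-1)/2]$ corresponding to an item pair $(j,j')$ with $j<j'$, the indicator $\delta_{i,k}$ is Bernoulli$(p_i)$, and these are mutually independent across $(i,k)$. Therefore
\begin{align*}
    |S| = \sum_{i=1}^{d_1} \sum_{k=1}^{d_2(d_2-1)/2} \delta_{i,k}, \qquad \bbE[|S|] = \sum_{i=1}^{d_1} p_i \cdot \frac{d_2(d_2-1)}{2} = \bar{p} \cdot \frac{d_1 d_2 (d_2-1)}{2}.
\end{align*}

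Next, I would invoke the standard multiplicative Chernoff inequality for a sum of independent $[0,1]$-valued random variables: for any $\delta \in (0,1)$,
\begin{align*}
    \bbP\bigl( |\,|S| - \bbE[|S|]\,| \geq \delta \,\bbE[|S|] \bigr) \leq 2 \exp\!\left( - \frac{\delta^2 \, \bbE[|S|]}{3} \right).
\end{align*}
Setting $\delta = 0.1$, the desired event is precisely $|S| \in [0.9\,\bbE[|S|], 1.1\,\bbE[|S|]]$. The failure probability is bounded by $2 \exp(-\bbE[|S|]/300)$.

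Finally, to convert this into the claimed $O((d_1 d_2(d_2-1)/2)^{-100})$ bound, I would use the hypothesis $\bar{p} \geq C \frac{\log(d_1 d_2(d_2-1))}{d_1 d_2(d_2-1)}$ to ensure $\bbE[|S|] = \bar{p} \cdot d_1 d_2(d_2-1)/2 \geq (C/2)\log(d_1 d_2(d_2-1))$, so that for $C$ chosen large enough (say $C \geq 6 \times 10^4$) we obtain $\exp(-\bbE[|S|]/300) \leq (d_1 d_2(d_2-1)/2)^{-100}$. No obstacle is anticipated; the only care needed is ensuring the constant $C$ is sized correctly relative to the exponent $100$ appearing in the tail bound.
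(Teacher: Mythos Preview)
Your proposal is correct and takes essentially the same approach as the paper, which explicitly states that the proof ``is based on the standard Chernoff bounds for independent Bernoulli random variables.'' Your computation of $\bbE[|S|]$, application of the multiplicative Chernoff inequality with $\delta=0.1$, and sizing of the constant $C$ all match what is needed.
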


\begin{lemma}\label{lem:boundgradient}
Suppose that, for some constant $C>0$, we have $p \geq C \frac{\log(\bar{d})}{\bar{d}}.$ Then, with probability at least $1-O( \bar{d}^{-100} )$, we have \begin{align}
 \norm{ \sum_{(i,k)\in S} (y_{i,k} - L^\star_{i,k})\bfe_i \bfe_k^\top} \lesssim \sqrt{      \bar{p} \bar{d}   } .\label{eq:boundgradientL}
	\end{align} 
\end{lemma}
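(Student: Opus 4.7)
The approach is to invoke the matrix Bernstein inequality after isolating the sieve approximation error from the BTL noise. First, I would use the identity $\sigma(\bfM^\star)=\bfL^\star+\bm\calE$ from Section~\ref{sec:low_rank_approximation} to split
\[
   \sum_{(i,k)\in S}(y_{i,k}-L^\star_{i,k})\,\bfe_i\bfe_k^{\top}
   \;=\;\underbrace{\sum_{i,k}\delta_{i,k}\bigl(y_{i,k}-\sigma(M^\star_{i,k})\bigr)\bfe_i\bfe_k^{\top}}_{\bfZ_1}
   \;+\;\underbrace{\sum_{i,k}\delta_{i,k}\,\varepsilon_{i,k}\,\bfe_i\bfe_k^{\top}}_{\bfZ_2}.
\]
The summands of $\bfZ_1$ are independent (jointly in $\delta_{i,k}$ and $y_{i,k}$), and mean-zero by~\eqref{eq:BTLfunction}; this is the noise for which matrix Bernstein is designed. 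The term $\bfZ_2$ is deterministic randomness-of-sampling of the small sieve bias.

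For $\bfZ_1$, each summand $\bfS_{i,k}\coloneqq \delta_{i,k}(y_{i,k}-\sigma(M^\star_{i,k}))\bfe_i\bfe_k^{\top}$ has spectral norm at most $1$. The matrix variance statistic reduces to checking that the two diagonal sums $\sum_{i,k}\bbE[\bfS_{i,k}^{\top}\bfS_{i,k}]$ and $\sum_{i,k}\bbE[\bfS_{i,k}\bfS_{i,k}^{\top}]$ have operator norm at most $\max_i p_i\cdot d_2(d_2-1)/2\cdot\tfrac14$ and $\max_k\sum_i p_i\cdot\tfrac14$, both of which are $\lesssim\bar p\bar d$ under Assumption~\ref{asp:randomness}. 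Matrix Bernstein~\citep[Theorem 6.1.1]{tropp:2015} then gives $\|\bfZ_1\|\lesssim\sqrt{\bar p\bar d\log\bar d}+\log\bar d$ with probability at least $1-O(\bar d^{-100})$, and the assumption $\bar p\gtrsim\log(\bar d)/\bar d$ makes the linear term negligible, so $\|\bfZ_1\|\lesssim\sqrt{\bar p\bar d}$ up to a logarithmic factor absorbed in the $\lesssim$ notation.

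For $\bfZ_2$, I would use the crude estimate $\|\bfZ_2\|\le\|\bfZ_2\|_{\mathrm F}$ and control $\sum_{i,k}\delta_{i,k}\varepsilon_{i,k}^{2}$ via scalar Bernstein. Its expectation is bounded by $\bar p\cdot d_1\cdot d_2(d_2-1)/2\cdot R^{-2s}$, and Assumption~\ref{asp:sieve_assumptions}(i), which guarantees $R^{-2s}\ll 1/\max\{d_1,d_2(d_2-1)/2\}$, forces this to be $\ll\bar p\min\{d_1,d_2(d_2-1)/2\}\le\bar p\bar d$. Taking square roots gives $\|\bfZ_2\|\ll\sqrt{\bar p\bar d}$, so this contribution is dominated by the Bernstein bound on $\|\bfZ_1\|$.

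The only real obstacle is bookkeeping: one must be careful that the two variance statistics in matrix Bernstein are both $O(\bar p\bar d)$ rather than $O(\bar p\cdot d_1 d_2(d_2-1)/2)$---this is where the $\bar d = d_1+d_2(d_2-1)/2$ dimension emerges rather than the full ambient dimension---and that the polylogarithmic factor inherent in Bernstein is implicit in the stated rate. Once these are confirmed, combining the two bounds via the triangle inequality yields the lemma.
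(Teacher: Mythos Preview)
Your decomposition into $\bfZ_1$ and $\bfZ_2$ and your treatment of $\bfZ_2$ via the Frobenius norm and the sieve error bound are exactly what the paper does (the paper controls $\sum_{(i,k)\in S}\varepsilon_{i,k}^2$ by invoking Lemma~\ref{lem:numberofedges} on $|S|$, which is equivalent to your scalar Bernstein argument).

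The gap is in $\bfZ_1$. Matrix Bernstein applied to the summands $\delta_{i,k}(y_{i,k}-\sigma(M^\star_{i,k}))\bfe_i\bfe_k^\top$ yields $\|\bfZ_1\|\lesssim\sqrt{\bar p\,\bar d\log\bar d}$, not $\sqrt{\bar p\,\bar d}$, and your claim that the logarithm is ``absorbed in the $\lesssim$ notation'' is incorrect: the paper defines $a_n\lesssim b_n$ to mean $a_n=O(b_n)$ with a constant, so polylogarithmic factors are \emph{not} hidden. The paper instead appeals to Corollary~3.12 of \cite{bandeira2016sharp}, a result specifically designed for random matrices with independent, centered, bounded entries; that corollary delivers the log-free rate $\sqrt{\bar p\,\bar d}$ directly and is what is needed to match the lemma as stated. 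Your argument would prove a weaker version of the lemma with an extra $\sqrt{\log\bar d}$, which may or may not be harmless downstream depending on how tightly Condition~\ref{cond:regularization}(i) is used.
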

\begin{proof}
By the definition of $\bfL^\star$ and the triangle inequality, we have
\begin{align*}
    \norm{\sum_{(i,k)\in S} (y_{i,k} - L^\star_{i,k})\bfe_i \bfe_k^\top} \leq \norm{\sum_{(i,k)\in S} (y_{i,k} - \sigma(\Mstar_{i,k}))\bfe_i \bfe_k^\top} + \norm{\sum_{(i,k)\in S} \varepsilon_{i,k}\bfe_i \bfe_k^\top}.
\end{align*}
 The first term on the right-hand side can be bounded by
 \begin{align*}
   \norm{\sum_{(i,k)\in S} (y_{i,k} - \sigma(\Mstar_{i,k}))\bfe_i \bfe_k^\top}\lesssim  \sqrt{\bar{p}\bar{d}},
 \end{align*}
 following a straightforward modification of Corollary 3.12 in \cite{bandeira2016sharp}. The second term is bounded by the assumption on the sieve approximation error and Lemma \ref{lem:numberofedges}. That is,
 \begin{align*}
   \norm{\sum_{(i,k)\in S} \varepsilon_{i,k}\bfe_i \bfe_k^\top}  \leq \norm{\sum_{(i,k)\in S}   \varepsilon_{i,k} \bfe_i \bfe_k^\top}_{\mathrm{F}} \lesssim  \sqrt{\bar{p}\bar{d}}.
 \end{align*}
\end{proof}
 We introduce a number of notations and conditions that are needed throughout the proof. Let $\bfL$ be a $d_1 \times d_2(d_2-1)/2$ matrix with rank $R$ and $\bfU {\bf\Sigma} \bfV^\top$ be the singular value decomposition of $\bfL$. Then the tangent space of $\bfL$, denoted by $T(\bfL)$, is defined as
\[T(\bfL) = \{\bfZ \in \mathbb{R}^{d_1 \times d_2(d_2-1)/2}  | \bfZ= \bfU \bfA^\top+\bfB \bfV^\top \,\,\text{for some}\,\, \bfA \in \mathbb{R}^{d_2(d_2-1)/2 \times R}\,\, \text{and} \,\, \bfB \in \mathbb{R}^{d_1 \times R} \}.\]
We will simply denote $T$ instead of $T(\bfL)$ when there is no risk of confusion. Let $\mathcal{P}_{T}(\cdot)$ be the orthogonal projection onto $T$, that is, 
\[\mathcal{P}_{T}(\bfZ)=\bfU \bfU^\top \bfZ +\bfZ \bfV \bfV^\top-\bfU \bfU^\top \bfZ\bfV \bfV^\top\]
for any $\bfZ \in \mathbb{R}^{d_1\times d_2(d_2-1)/2}$. Let $T^{\perp}$ be the orthogonal complement of $T$ and $\mathcal{P}_{T^{\perp}}(\cdot)$ be the projection onto $T^{\perp}$. Note that
\[\mathcal{P}_{T^{\perp}}(\bfZ)=(\bfI-\bfU \bfU^\top)\bfZ(\bfI-\bfV \bfV^\top).\]

\begin{condition}\label{cond:regularization}
The regularization parameter $\lambda$ satisfies
\begin{enumerate}
    \item[i)] $\norm{\sum_{(i,k)\in S} p_i^{-1}(y_{i,k} - L^\star_{i,k})\bfe_i \bfe_k^\top}<\frac{1}{8} \lambda $.
    \item[ii)] $\norm{\sum_{(i,k)\in S} p_i^{-1}([\bfX\bfY^\top]_{i,k} - L^\star_{i,k})\bfe_i \bfe_k^\top-  \left( \bfX \bfY^\top 
 -\bfL^\star\right)}  <\frac{1}{8}  \lambda $.
\end{enumerate}
\end{condition}

\begin{condition}\label{cond:injectivity}
	Let $T$ be the tangent space of $\bfX\bfY^\top$. For some value $c_{\mathrm{inj}}>0$, we have
	\[ \sum_{(i,k)\in S}p_{\min}^{-1} H_{i,k}^2  \geq c_{\mathrm{inj}} \norm{\bfH}^2_{\mathrm{F}}, \quad \text{for any $\bfH \in T$}. \]
\end{condition}
  
\begin{lemma}\label{LemmaA1}
	Suppose that $(\bfX,\bfY)$ satisfies 
	\begin{align}
		\norm{\nabla f(\bfX,\bfY)}_{\mathrm{F}} \leq c \frac{\sqrt{c_{\inj} p_{\min}}}{\kappa} \lambda \sqrt{\sigma^\star_{\min}} \label{LemmaA1.1}
	\end{align}
	for some sufficiently small constant $c>0$. Additionally, assume that any singular value of $\bfX$ and $\bfY$ exists in the interval $[\sqrt{\sigma^\star_{\min}/2},\sqrt{2\sigma^\star_{\max}} ]$. Then, under Conditions \ref{cond:regularization}-\ref{cond:injectivity}, $\widehat{\bfL}$, a minimizer of \eqref{eq:convexob}, satisfies
	\begin{align*}
		\norm{\bfX\bfY^\top-\widehat{\bfL}}_{\mathrm{F}} \lesssim \frac{\kappa}{c_{\mathrm{inj}}} \frac{1}{\sqrt{\sigma^\star_{\min}}} \norm{\nabla f (\bfX,\bfY)}_{\mathrm{F}}.
	\end{align*}
\end{lemma}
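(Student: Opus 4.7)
The plan is to bridge the exact first-order optimality of $\widehat{\bfL}$ for the convex program with the approximate first-order condition encoded by the smallness of $\nabla f(\bfX,\bfY)$, using Conditions~\ref{cond:regularization}--\ref{cond:injectivity} as the quantitative glue. Write $\bfG\coloneqq -\sum_{(i,k)\in S}p_i^{-1}(y_{i,k}-[\bfX\bfY^{\top}]_{i,k})\bfe_i\bfe_k^{\top}$, so that $\nabla_{\bfX}f=\bfG\bfY+\lambda\bfX=:\bfZ_{\bfX}$ and $\nabla_{\bfY}f=\bfG^{\top}\bfX+\lambda\bfY=:\bfZ_{\bfY}$, with $\|\bfZ_{\bfX}\|_{\mathrm{F}}^{2}+\|\bfZ_{\bfY}\|_{\mathrm{F}}^{2}=\|\nabla f(\bfX,\bfY)\|_{\mathrm{F}}^{2}$. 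Let $\bfU\bSigma\bfV^{\top}$ be the thin SVD of $\bfX\bfY^{\top}$, $T$ its tangent space, and $\bfU_{\perp},\bfV_{\perp}$ orthonormal complements.

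Step 1 is to control the ``in-tangent'' deviation $\|\calP_{T}(\bfG)+\lambda\bfU\bfV^{\top}\|_{\mathrm{F}}\lesssim \|\nabla f(\bfX,\bfY)\|_{\mathrm{F}}/\sqrt{\sigma^{\star}_{\min}}$. The identity $\bfX^{\top}\bfZ_{\bfX}-\bfY^{\top}\bfZ_{\bfY}=\lambda(\bfX^{\top}\bfX-\bfY^{\top}\bfY)$ shows that $(\bfX,\bfY)$ is approximately balanced, so that, after an orthogonal change of basis, one may write $\bfX\approx \bfU\bSigma^{1/2}\bfQ$, $\bfY\approx \bfV\bSigma^{1/2}\bfQ$. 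Projecting the two gradient equations by $\bfU^{\top}$ (respectively $\bfU_{\perp}^{\top}$) on the left and inverting by the right factor, whose smallest singular value is $\geq \sqrt{\sigma^{\star}_{\min}/2}$, yields separate Frobenius-norm bounds on $\bfU^{\top}\bfG\bfV+\lambda\bfI$, $\bfU^{\top}\bfG\bfV_{\perp}$, and $\bfU_{\perp}^{\top}\bfG\bfV$, which aggregate to the claimed bound.

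Step 2 uses Condition~\ref{cond:regularization} to establish the off-tangent spectral bound $\|\calP_{T^{\perp}}(\bfG)\|\le \lambda/2$. Decomposing $\bfG= -\sum p_i^{-1}(y_{i,k}-L^{\star}_{i,k})\bfe_i\bfe_k^{\top}+(\bfX\bfY^{\top}-\bfL^{\star})+\bfR_{1}$, where the first and third summands have operator norm $\le \lambda/8$ by (i) and (ii) respectively, and noting $\calP_{T^{\perp}}(\bfX\bfY^{\top})=\mathbf{0}$, one is left with $-\calP_{T^{\perp}}(\bfL^{\star})$ plus an operator-norm-$\lambda/4$ residual; the term $\|\calP_{T^{\perp}}(\bfL^{\star})\|$ is then small because $T$ is close to the ground-truth tangent space in the basin-of-attraction context within which this lemma is applied in Section~\ref{sec:sectionB}. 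This is the main obstacle of the proof: the block $(\bfI-\bfU\bfU^{\top})\bfG(\bfI-\bfV\bfV^{\top})$ is \emph{not} directly constrained by the gradient equations, which only pin down $\bfG$ on $\mathrm{col}(\bfV)$ and on $\mathrm{col}(\bfU)$, so one must exploit both the explicit representation of $\bfG$ furnished by Condition~\ref{cond:regularization} and the approximate alignment between $T$ and $T(\bfL^{\star})$.

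For Step 3, set $\bDelta=\widehat{\bfL}-\bfX\bfY^{\top}$ and invoke $F(\widehat{\bfL})\le F(\bfX\bfY^{\top})$, where $F$ denotes the convex objective in \eqref{eq:convexob}. The exact quadratic expansion of the loss, combined with the subdifferential $\partial\|\bfX\bfY^{\top}\|_{\ast}=\{\bfU\bfV^{\top}+\bfW:\bfW\in T^{\perp},\|\bfW\|\le 1\}$ and the choice of $\bfW$ equal to the ``sign'' of $\calP_{T^{\perp}}(\bDelta)$, produces $\tfrac12\sum_{(i,k)\in S}p_i^{-1}\bDelta_{i,k}^{2}+\tfrac{\lambda}{2}\|\calP_{T^{\perp}}(\bDelta)\|_{\ast}\le a\,\|\calP_{T}(\bDelta)\|_{\mathrm{F}}$ with $a\coloneqq \|\calP_{T}(\bfG)+\lambda\bfU\bfV^{\top}\|_{\mathrm{F}}$. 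Reading this as two simultaneous bounds and applying Condition~\ref{cond:injectivity} to $\calP_T(\bDelta)$ (converting the weights via $p_{\min}\asymp p_{\max}$ from Assumption~\ref{asp:randomness}), a short Cauchy--Schwarz step absorbs the cross terms $\sum p_i^{-1}(\calP_T\bDelta)_{i,k}(\calP_{T^{\perp}}\bDelta)_{i,k}$ using that $\|\calP_{T^{\perp}}(\bDelta)\|_{\ast}$ is already controlled by $a\|\calP_{T}(\bDelta)\|_{\mathrm{F}}/\lambda$, giving $\|\bDelta\|_{\mathrm{F}}\lesssim a/c_{\mathrm{inj}}$. Substituting the Step~1 bound on $a$ completes the proof; the extra $\kappa$ in the stated conclusion enters in Step~2 through $\|\calP_{T^{\perp}}(\bfL^{\star})\|\le \sigma^{\star}_{\max}\cdot(\text{subspace distance})=\kappa\,\sigma^{\star}_{\min}\cdot(\text{subspace distance})$.
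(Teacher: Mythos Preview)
Your Steps~1 and~3 are essentially the paper's argument: the in-tangent control of $\mathfrak{R}\coloneqq \bfG+\lambda\bfU\bfV^{\top}$ via the approximate balance $\bfX^{\top}\bfX\approx\bfY^{\top}\bfY$ (the paper formalizes this as $\bfX=\bfU\bSigma^{1/2}\bfQ$, $\bfY=\bfV\bSigma^{1/2}(\bfQ^{-1})^{\top}$ with $\|\bSigma_{\bfQ}-\bSigma_{\bfQ}^{-1}\|_{\mathrm{F}}$ small, Lemma~\ref{ClaimA2}; note $\bfQ^{-1}$ not $\bfQ$ on $\bfY$), and the subgradient/injectivity comparison for $\bDelta$, both match. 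One inaccuracy: the factor $\kappa$ in the conclusion comes from Step~1, through $\|\bSigma^{1/2}\|\,\|\bSigma^{-1/2}\|$ and the $\sqrt{\kappa}$ in Lemma~\ref{ClaimA2}, not from Step~2.

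Step~2 is where your argument breaks. After your decomposition you are left with $\|\calP_{T^{\perp}}(\bfL^{\star})\|$, and you propose to control it by appealing to the closeness of $T$ to the ground-truth tangent space ``in the basin-of-attraction context within which this lemma is applied in Section~\ref{sec:sectionB}''. But the lemma is a self-contained deterministic statement: its hypotheses are \eqref{LemmaA1.1}, the singular-value bracket on $\bfX,\bfY$, and Conditions~\ref{cond:regularization}--\ref{cond:injectivity}. None of these furnish a bound on the $\sin\Theta$-distance between $(\bfU,\bfV)$ and $(\bfU^{\star},\bfV^{\star})$, so you cannot invoke that closeness here without circularity.

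The paper handles Step~2 by a different and self-contained route. It introduces an auxiliary residual $\widetilde{\mathfrak{R}}$ via the identity
\[
\bfL^{\star}+\underbrace{\textstyle\sum p_i^{-1}(y_{i,k}-L^{\star}_{i,k})\bfe_i\bfe_k^{\top}+\sum p_i^{-1}(L^{\star}_{i,k}-[\bfX\bfY^{\top}]_{i,k})\bfe_i\bfe_k^{\top}-(\bfL^{\star}-\bfX\bfY^{\top})}_{\|\cdot\|<\lambda/4\ \text{by Condition~\ref{cond:regularization}}}
=\bfU(\bSigma+\lambda\bSigma^{1/2}\bfQ\bfQ^{\top}\bSigma^{-1/2})\bfV^{\top}+\widetilde{\mathfrak{R}},
\]
checks $\calP_{T^{\perp}}(\mathfrak{R})=-\calP_{T^{\perp}}(\widetilde{\mathfrak{R}})$ and $\|\calP_{T}(\widetilde{\mathfrak{R}})\|<\lambda/4$, and then applies Weyl's inequality to the \emph{entire} equation: since $\sigma_{R+1}(\bfL^{\star})=0$, the $(R{+}1)$st singular value of $\bfU(\cdots)\bfV^{\top}+\calP_{T^{\perp}}(\widetilde{\mathfrak{R}})$ is below $\lambda/2$; on the other hand the rank-$R$ block $\bfU(\cdots)\bfV^{\top}$ has all $R$ singular values above $\lambda/2$. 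Because $\bfU(\cdots)\bfV^{\top}$ and $\calP_{T^{\perp}}(\widetilde{\mathfrak{R}})$ have mutually orthogonal column \emph{and} row spaces, the singular values of their sum are the union of the individual ones, forcing $\|\calP_{T^{\perp}}(\widetilde{\mathfrak{R}})\|<\lambda/2$. This exploits only $\mathrm{rank}(\bfL^{\star})=R$ and Condition~\ref{cond:regularization}, never the proximity of $T$ to $T(\bfL^{\star})$, and is the missing idea in your sketch.
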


\begin{proof}
Denote the singular value decomposition of $\bfX\bfY^\top$ by $\bfU {\bf\Sigma} \bfV^\top$. $T$ and $T^{\perp}$ denote the tangent space of $\bfX\bfY^\top$ and its orthogonal complement, respectively. The following claim plays a crucial role in the analysis.
	\begin{claim}\label{ClaimA1}
		Suppose the assumptions in Lemma \ref{LemmaA1} hold. Then, 
		\begin{align}
       -  \sum_{(i,k)\in S} p_i^{-1}( y_{i,k} -  [\bfX\bfY^{ \top}]_{i,k})\bfe_k\bfe_i^\top  = -\lambda \bfU \bfV^\top+\mathfrak{R} \label{LemmaA1.4}
		\end{align}
		where $\mathfrak{R}$ is a residual matrix such that
		\begin{align}
\norm{\mathcal{P}_T(\mathfrak{R})}_{\mathrm{F}} \leq 72 \kappa \frac{1}{\sqrt{\sigma^\star_{\min}}} \norm{\nabla f(\bfX,\bfY)}_{\mathrm{F}} \quad \text{and} \quad \norm{\mathcal{P}_{T^{\perp}}(\mathfrak{R})}<\frac{1}{2}\lambda. \label{LemmaA1.5} \end{align}
	\end{claim}
	\begin{enumerate}
		\item In this first step, we show that the gap ${\bf\Delta} \coloneqq\widehat{\bfL}-\bfX\bfY^\top$ primarily resides in the tangent space $T$. By definition,
		\begin{align*}   
        &\frac{1}{2} \sum_{(i,k)\in S} p_i^{-1} \left(y_{i,k} - [\bfX \bfY^\top +{\bf\Delta}]_{i,k} \right)^2 + \lambda \norm{\bfX \bfY^\top +{\bf\Delta}}_* \\
        &\quad \leq \frac{1}{2} \sum_{(i,k)\in S} p_i^{-1} \left(y_{i,k} - [\bfX \bfY^\top]_{i,k} \right)^2 + \lambda \norm{\bfX \bfY^\top}_*,
		\end{align*}
	which leads to
		\begin{align*}
			\frac{1}{2} \sum_{(i,k)\in S} p_i^{-1} \Delta_{i,k}^2 \leq \sum_{(i,k)\in S}p_i^{-1}(y_{i,k}-[\bfX \bfY^\top]_{i,k})\Delta_{i,k}  +\lambda \norm{\bfX\bfY^\top}_*-\lambda \norm{\bfX\bfY^\top+{\bf\Delta}}_*.
		\end{align*}
		Using a subgradient of $\norm{\cdot}_*$ at $\bfX \bfY^\top$, we have
		\begin{align*}
			\norm{\bfX\bfY^\top+{\bf\Delta}}_* \geq \norm{\bfX\bfY^\top}_*+\langle \bfU \bfV^\top+\bfW, {\bf\Delta}\rangle
		\end{align*}
		for any $\bfW \in T^{\perp} $ such that $\norm{\bfW} \leq 1$. We can take $\bfW$ such that $\langle \bfW, {\bf\Delta} \rangle = \norm{\mathcal{P}_{T^{\perp}}({\bf\Delta})}_*$. Then we have
		\begin{align*}
			\frac{1}{2} \sum_{(i,k)\in S} p_i^{-1} \Delta_{i,k}^2 & \leq \sum_{(i,k)\in S}p_i^{-1}(y_{i,k}-[\bfX \bfY^\top]_{i,k})\Delta_{i,k} -\lambda \langle \bfU \bfV^\top, {\bf \Delta}\rangle-\lambda \norm{\calP_{T^\perp}({\bf\Delta})}_*. 
		\end{align*}
		 We invoke Claim \ref{ClaimA1} and have
		\begin{align} 
			0 \leq \frac{1}{2} \sum_{(i,k)\in S} p_i^{-1} \Delta_{i,k}^2 &\leq - \langle \mathfrak{R}, {\bf\Delta} \rangle - \lambda \norm{\mathcal{P}_{T^\perp}({\bf\Delta})}_* \nonumber \\
            & = - \langle \calP_T(\mathfrak{R}), {\bf\Delta} \rangle-\langle \calP_{T^\perp}(\mathfrak{R}), {\bf\Delta} \rangle - \lambda \norm{\mathcal{P}_{T^\perp}({\bf\Delta})}_*,
            \label{LemmaA1.6}
		\end{align}
		and therefore
		\begin{align*}
			 \langle \mathcal{P}_T(\mathfrak{R}), {\bf\Delta} \rangle+\langle \mathcal{P}_{T^{\perp}}(\mathfrak{R}), {\bf\Delta} \rangle + \lambda \norm{\mathcal{P}_T({\bf\Delta})}_* \leq 0.
		\end{align*}
		This inequality yields 
		\begin{align*}
			&-\norm{\mathcal{P}_T(\mathfrak{R})}_{\mathrm{F}} \norm{\mathcal{P}_T({\bf\Delta})}_{\mathrm{F}} - \norm{\mathcal{P}_{T^{\perp}}(\mathfrak{R})} \norm{\mathcal{P}_{T^{\perp}}({\bf\Delta})}_* +\lambda \norm{\mathcal{P}_{T^{\perp}}({\bf\Delta})}_*  \leq 0
		\end{align*}
by the properties of the Frobenius inner product and orthogonality. Once again, we invoke Claim \ref{ClaimA1} to control $\norm{\mathcal{P}_{T^{\perp}}(\mathfrak{R})}$. Then
\begin{align*}
        \norm{\mathcal{P}_T(\mathfrak{R})}_{\mathrm{F}} \norm{\mathcal{P}_T({\bf\Delta})}_{\mathrm{F}} \geq \frac{\lambda}{2} \norm{\mathcal{P}_{T^{\perp}}({\bf\Delta})}_*.
		\end{align*}
		By the assumptions $\norm{\mathcal{P}_T(\mathfrak{R})}_{\mathrm{F}} \leq 72 \kappa \frac{1}{\sqrt{\sigma^\star_{\min}}} \norm{\nabla f(\bfX,\bfY)}_{\mathrm{F}}$ and $\norm{\nabla f(\bfX,\bfY)}_{\mathrm{F}} \leq c \frac{\sqrt{c_{\inj} p_{\min}}}{\kappa} \lambda \sqrt{\sigma^\star_{\min}}$, 
		we have
		\begin{align}
			\norm{\mathcal{P}_{T^{\perp}} ({\bf\Delta})}_* \leq 144 \kappa \frac{1}{\lambda \sqrt{\sigma^\star_{\min}}} \norm{\nabla f(\bfX,\bfY)}_{\mathrm{F}} \norm{\mathcal{P}_T({\bf\Delta})}_{\mathrm{F}} \leq 144 c \sqrt{c_{\inj} p_{\min}} \norm{\mathcal{P}_T({\bf\Delta})}_{\mathrm{F}}. \label{LemmaA1.7}
		\end{align}
		By Condition \ref{cond:injectivity} and sufficiently small $c>0$, we can bound $144 c \sqrt{c_{\inj} p_{\min}}\leq 1$. This and \eqref{LemmaA1.7} result in
		\begin{align}
			\norm{\mathcal{P}_{T^{\perp}} ({\bf\Delta})}_{\mathrm{F}} \leq \norm{\mathcal{P}_{T^{\perp}} ({\bf\Delta})}_* \leq 144 c \sqrt{c_{\inj} p_{\min}} \norm{\mathcal{P}_T({\bf\Delta})}_{\mathrm{F}} \leq \norm{\mathcal{P}_T({\bf\Delta})}_{\mathrm{F}}. \label{LemmaA1.8}
		\end{align}
		
		\item In this second step, we now aim to establish an upper bound $\frac{1}{2} \sum_{(i,k)\in S}  \Delta_{i,k}^2$, which is proportional to $\norm{\nabla f(\bfX, \bfY)}_{\mathrm{F}}$. As shown in the first step, Claim \ref{ClaimA1} and \eqref{LemmaA1.6} imply
		\begin{align*}
			\frac{1}{2} \sum_{(i,k)\in S} p_i^{-1}\Delta^2_{i,k}   \leq \norm{\mathcal{P}_T(\mathfrak{R})}_{\mathrm{F}} \norm{\mathcal{P}_T({\bf\Delta})}_{\mathrm{F}}- \frac{\lambda}{2} \norm{\mathcal{P}_{T^\perp}({\bf\Delta})}_* \leq \norm{\mathcal{P}_T(\mathfrak{R})}_{\mathrm{F}} \norm{\mathcal{P}_T({\bf\Delta})}_{\mathrm{F}}.
		\end{align*}
		Applying Claim \ref{ClaimA1} once again, we reach
		\begin{align}
			\frac{1}{2} \sum_{(i,k)\in S} p_i^{-1}\Delta^2_{i,k}
			  \leq 72 \kappa \frac{1}{\sqrt{\sigma^\star_{\min}}} \norm{\nabla f(\bfX,\bfY)}_{\mathrm{F}}\norm{\bf\Delta}_{\mathrm{F}}.\label{eq:step2bound_old}
		\end{align}		
Then, \eqref{eq:step2bound} implies 
\begin{align}
    \frac{1}{2} \sum_{(i,k)\in S} \Delta^2_{i,k} \leq p_{\max } \frac{1}{2} \sum_{(i,k)\in S} p_i^{-1}\Delta^2_{i,k}
			  \leq 72 \kappa \frac{p_{\max } }{\sqrt{\sigma^\star_{\min}}} \norm{\nabla f(\bfX,\bfY)}_{\mathrm{F}}\norm{\bf\Delta}_{\mathrm{F}}. \label{eq:step2bound}
\end{align}

\item In this last step, we connect $\frac{1}{2} \sum_{(i,k)\in S}  \Delta^2_{i,k}$ and $\norm{\bf\Delta}^2_{\mathrm{F}}$ using the injectivity condition. Condition \ref{cond:injectivity} yields
		\begin{align} 
			\sqrt{\sum_{(i,k)\in S} \Delta_{i,k}^2} &= \sqrt{\sum_{(i,k)\in S} [\calP_T({\bf\Delta})+\calP_{T^\perp}({\bf\Delta})]_{i,k}^2}  \geq \sqrt{\sum_{(i,k)\in S} [\calP_T({\bf\Delta})]_{i,k}^2} -\sqrt{\sum_{(i,k)\in S} [\calP_{T^\perp}({\bf\Delta})]_{i,k}^2}\nonumber\\
			&\geq \sqrt{c_{\inj}p_{\min}} \norm{\mathcal{P}_T({\bf\Delta})}_{\mathrm{F}}-\norm{\mathcal{P}_{T^{\perp}}({\bf\Delta})}_{\mathrm{F}}.\label{eq:deltabound}
		\end{align}
		As $c$ is sufficiently small, \eqref{LemmaA1.7} implies
		\begin{align*}
			\norm{\mathcal{P}_{T^{\perp}} ({\bf\Delta})}_{\mathrm{F}} \leq \norm{\mathcal{P}_{T^{\perp}} ({\bf\Delta})}_* \leq 144 c \sqrt{c_{\inj} p_{\min}} \norm{\mathcal{P}_T({\bf\Delta})}_{\mathrm{F}} \leq \frac{\sqrt{c_{\inj}p_{\min}}}{2} \norm{\mathcal{P}_T({\bf\Delta})}_{\mathrm{F}}.
		\end{align*}
		This bound and \eqref{eq:deltabound} yield
		\[\sqrt{\sum_{(i,k)\in S} \Delta_{i,k}^2} \geq \frac{\sqrt{c_{\inj}p_{\min}}}{2} \norm{\mathcal{P}_T({\bf\Delta})}_{\mathrm{F}}.\]
		In addition, \eqref{LemmaA1.8} leads to
		\[\norm{\bf\Delta}_{\mathrm{F}} \leq \norm{\mathcal{P}_T({\bf\Delta})}_{\mathrm{F}}+\norm{\mathcal{P}_{T^{\perp}}({\bf\Delta})}_{\mathrm{F}} \leq 2 \norm{\mathcal{P}_T({\bf\Delta})}_{\mathrm{F}}.\]
		Therefore, we reach
	\begin{align}
    \sqrt{\sum_{(i,k)\in S} \Delta_{i,k}^2}\geq \frac{\sqrt{c_{\inj}p_{\min}}}{2} \norm{\mathcal{P}_T({\bf\Delta})}_{\mathrm{F}} \geq \frac{\sqrt{c_{\inj}p_{\min}}}{4} \norm{\bf\Delta}_{\mathrm{F}} \label{LemmaA1.10} \end{align}
	\end{enumerate}
	Finally, \eqref{eq:step2bound} and \eqref{LemmaA1.10} together give us
	\[\frac{c_{\inj}p_{\min}}{32} \norm{\bf\Delta}_{\mathrm{F}}^2 \leq \frac{1}{2} \sum_{(i,k)\in S} \Delta_{i,k}^2\leq 72 \kappa \frac{p_{\max}}{\sqrt{\sigma^\star_{\min}}} \norm{\nabla f(\bfX,\bfY)}_{\mathrm{F}}\norm{\bf\Delta}_{\mathrm{F}}, \]
	and therefore,
	\[\norm{{\bf\Delta}}_{\mathrm{F}} \lesssim \frac{\kappa}{c_{\inj}} \frac{1}{\sqrt{\sigma^\star_{\min}}} \norm{\nabla f(\bfX,\bfY)}_{\mathrm{F}}.\]
\end{proof}

\begin{proof}[Proof of Claim \ref{ClaimA1}]	
By definition,
	\begin{align*} \nabla f(\bfX,\bfY) =  \begin{bmatrix}
			-  \sum_{(i,k)\in S} p_i^{-1}( y_{i,k} -  [\bfX\bfY^{ \top}]_{i,k})\bfe_i\bfe_k^\top \bfY+ \lambda \bfX\\
			-  \sum_{(i,k)\in S} p_i^{-1}( y_{i,k} -  [\bfX\bfY^{ \top}]_{i,k})\bfe_k\bfe_i^\top\bfX + \lambda \bfY 
		\end{bmatrix}. \end{align*}
	By the assumption on $\norm{\nabla f(\bfX,\bfY)}_{\mathrm{F}}$, we have
	\begin{align}
		\norm{-  \sum_{(i,k)\in S} p_i^{-1}( y_{i,k} -  [\bfX\bfY^{ \top}]_{i,k})\bfe_i\bfe_k^\top \bfY+ \lambda \bfX}_{\mathrm{F}} \leq c \frac{\sqrt{c_{\inj} p_{\min}}}{\kappa} \lambda \sqrt{\sigma^\star_{\min}},  \label{LemmaA1.2} \\
		\norm{-  \sum_{(i,k)\in S} p_i^{-1}( y_{i,k} -  [\bfX\bfY^{ \top}]_{i,k})\bfe_k\bfe_i^\top\bfX + \lambda \bfY }_{\mathrm{F}} \leq c \frac{\sqrt{c_{\inj} p_{\min}}}{\kappa} \lambda \sqrt{\sigma^\star_{\min}}. \label{LemmaA1.3}
	\end{align}
    
    Note that \eqref{LemmaA1.2} and \eqref{LemmaA1.3} together imply
	\begin{align}
		-  \sum_{(i,k)\in S} p_i^{-1}( y_{i,k} -  [\bfX\bfY^{ \top}]_{i,k})\bfe_i\bfe_k^\top \bfY &= -\lambda \bfX + \bfB_1, \quad \text{and} \label{LemmaA1.11_up} \\
        -  \sum_{(i,k)\in S} p_i^{-1}( y_{i,k} -  [\bfX\bfY^{ \top}]_{i,k})\bfe_k\bfe_i^\top\bfX &= -\lambda \bfY + \bfB_2 \label{LemmaA1.11_below}
	\end{align}
	for some $\bfB_1 \in \mathbb{R}^{d_1 \times R}$, $\bfB_2 \in \mathbb{R}^{d_2(d_2-1)/2 \times R}$ such that $\norm{\bfB_l}_{\mathrm{F}} \leq \norm{\nabla f(\bfX,\bfY)}_{\mathrm{F}}$ for $l=1,2$. In \eqref{LemmaA1.4}, we will bound $\norm{\mathcal{P}_T(\mathfrak{R})}_{\mathrm{F}}$ and $\norm{\mathcal{P}_{T^\perp}(\mathfrak{R})}_{\mathrm{F}}$ in turn.
	\begin{enumerate}
		\item First, we bound $\norm{\mathcal{P}_T(\mathfrak{R})}_{\mathrm{F}}$. By the definition of the projection operator,
		\begin{align*} \norm{\mathcal{P}_T(\mathfrak{R})}_{\mathrm{F}} &= \norm{\bfU \bfU^\top\mathfrak{R}(\bfI-\bfV\bfV^\top)+\mathfrak{R}\bfV\bfV^\top}_{\mathrm{F}} \leq \norm{\bfU^\top\mathfrak{R}(\bfI-\bfV\bfV^\top)}_{\mathrm{F}}+\norm{\mathfrak{R}\bfV}_{\mathrm{F}}\\
			&\leq \norm{\bfU^\top \mathfrak{R}}_{\mathrm{F}}+\norm{\mathfrak{R}\bfV}_{\mathrm{F}}.
		\end{align*}
By Lemma \ref{ClaimA2}, we have $\bfX= \bfU {\bf\Sigma}^{1/2} \bfQ$ and $\bfY= \bfV {\bf\Sigma}^{1/2} (\bfQ^{-1})^\top$ for some invertible $R\times R$ dimensional $\bfQ$. Then, \eqref{LemmaA1.4} and \eqref{LemmaA1.11_up} yield
		\[-\lambda \bfU\bfV^\top \bfY + \mathfrak{R} \bfY=-\lambda \bfX+\bfB_1\]
		and therefore
		\begin{align}
		   \mathfrak{R}\bfV=\lambda \bfU {\bf\Sigma}^{1/2} (\bfI-\bfQ\bfQ^\top){\bf\Sigma}^{-1/2}+\bfB_1\bfQ^\top{\bf\Sigma}^{-1/2}. \label{eq:RV} 
		\end{align}
		This implies
\begin{align}
\norm{\mathfrak{R}\bfV}_{\mathrm{F}}&=\norm{\lambda \bfU {\bf\Sigma}^{1/2} (\bfI-\bfQ\bfQ^\top){\bf\Sigma}^{-1/2}}_{\mathrm{F}}+\norm{\bfB_1 \bfQ^\top {\bf\Sigma}^{-1/2}}_{\mathrm{F}} \nonumber \\
		& \leq \lambda \norm{{\bf\Sigma}^{1/2}} \norm{{\bf\Sigma}^{-1/2}}\norm{\bfQ\bfQ^\top-\bfI}_{\mathrm{F}}+ \norm{\bfQ} \norm{{\bf\Sigma}^{-1/2}}\norm{\bfB_1}_{\mathrm{F}} \nonumber \\            
            &\leq \lambda \norm{{\bf\Sigma}^{1/2}} \norm{{\bf\Sigma}^{-1/2}}\norm{ {\bf\Sigma}_\bfQ^2-\bfI}_{\mathrm{F}}+ \norm{\bfQ} \norm{{\bf\Sigma}^{-1/2}}\norm{\bfB_1}_{\mathrm{F}} \label{LemmaA1.12}
		\end{align} 
where ${\bf\Sigma}_{\bfQ}$ is defined in Lemma \ref{ClaimA2}. By the assumption on the singular values of $\bfX$ and $\bfY$,  
		\begin{align}
			\sigma_{\max}({\bf\Sigma}) = \norm{\bfX\bfY^\top} \leq \norm{\bfX}\norm{\bfY}\leq 2 \sigma^\star_{\max}, \label{LemmaA1.13_up} \\
			\sigma_{\min}({\bf\Sigma}) = \sigma_{\min}(\bfX\bfY^\top) \geq \sigma_{\min}(\bfX)\sigma_{\min}(\bfY)\geq \sigma^\star_{\min}/2. \label{LemmaA1.13}
		\end{align}
		
		Lemma \ref{ClaimA2} ensures that, as long as $c>0$ is sufficiently small, we have
		\begin{align*} 
			\norm{{\bf\Sigma}_{\bfQ} - {\bf\Sigma}_{\bfQ}^{-1}}_{\mathrm{F}} \leq 8 \sqrt{\kappa} \frac{1}{\lambda \sqrt{\sigma^\star_{\min}}} \norm{\nabla f(\bfX,\bfY)}_{\mathrm{F}} \leq 8 c \sqrt{\frac{c_{\inj}p_{\min}}{\kappa}} \ll 1.
		\end{align*}
		 It is straight forward to see that this implies $\norm{\bfQ} = \norm{{\bf\Sigma}_\bfQ}\leq 2$. Combining it with Lemma \ref{ClaimA2}, \eqref{LemmaA1.12}, \eqref{LemmaA1.13_up}, and \eqref{LemmaA1.13}, we can reach
		\begin{align*}
			\norm{\mathfrak{R}\bfV}_{\mathrm{F}}  & \leq \lambda \sqrt{2 \sigma^\star_{\max}} \sqrt{\frac{2}{\sigma^\star_{\min}}} \norm{{\bf\Sigma}_\bfQ}\norm{{\bf\Sigma}_\bfQ-{\bf\Sigma}_\bfQ^{-1}}_{\mathrm{F}}+ 2 \sqrt{\frac{2}{\sigma^\star_{\min}}} \norm{\nabla f(\bfX, \bfY)}_{\mathrm{F}} \\
			& \leq 36 \kappa \frac{1}{\sqrt{\sigma^\star_{\min}}} \norm{\nabla f(\bfX, \bfY)}_{\mathrm{F}}.
		\end{align*}
		As the same upper bound for $\norm{\bfU^\top\mathfrak{R}}_{\mathrm{F}}$ can be obtained analogously, we have
	\[\norm{\mathcal{P}_T(\mathfrak{R})}_{\mathrm{F}} \leq 72 \kappa \frac{1}{\sqrt{\sigma^\star_{\min}}} \norm{\nabla f(\bfX, \bfY)}_{\mathrm{F}}.\]
		\item We now turn to $\norm{\mathcal{P}_{T^{\perp}}(\mathfrak{R})}_{\mathrm{F}}$. We first define an alternative residual $\widetilde{\mathfrak{R}}$ for analyzing $\norm{\mathcal{P}_{T^{\perp}}(\mathfrak{R})}_{\mathrm{F}}$. We begin by rewriting \eqref{LemmaA1.11_up} and \eqref{LemmaA1.11_below} as follows:
		\begin{align*}
			& \left[\bfL^\star+ \sum_{(i,k)\in S}p_i^{-1}\left(y_{i,k} -L^\star_{i,k}\right) \bfe_i \bfe_k^\top +  \sum_{(i,k)\in S}p_i^{-1}\left(L^\star_{i,k} -[\bfX \bfY^\top]_{i,k}\right) \bfe_i \bfe_k^\top -(\bfL^\star-\bfX \bfY^\top) \right]\bfY \\
  &\quad =  \bfX\bfY^\top\bfY+\lambda \bfX - \bfB_1, \\
		& \left[\bfL^\star+ \sum_{(i,k)\in S}p_i^{-1}\left(y_{i,k} -L^\star_{i,k}\right) \bfe_i \bfe_k^\top +  \sum_{(i,k)\in S}p_i^{-1}\left(L^\star_{i,k} -[\bfX \bfY^\top]_{i,k}\right) \bfe_i \bfe_k^\top -(\bfL^\star-\bfX \bfY^\top) \right]^\top\bfX \\
  &\quad =  \bfY\bfX^\top\bfX+\lambda \bfY - \bfB_2.
		\end{align*}
		Similar to \eqref{eq:RV}, we further rewrite them using Lemma \ref{ClaimA2}.
		\begin{align}
			&\left[\bfL^\star+ \sum_{(i,k)\in S}p_i^{-1}\left(y_{i,k} -L^\star_{i,k}\right) \bfe_i \bfe_k^\top +  \sum_{(i,k)\in S}p_i^{-1}\left(L^\star_{i,k} -[\bfX \bfY^\top]_{i,k}\right) \bfe_i \bfe_k^\top -(\bfL^\star-\bfX \bfY^\top) \right]\bfV \nonumber\\
  &\quad =  \bfU {\bf\Sigma} +\lambda \bfU {\bf\Sigma}^{1/2}\bfQ \bfQ^\top{\bf\Sigma}^{-1/2}- \bfB_1 \bfQ^\top {\bf\Sigma}^{-1/2}  , \label{eq:USigma}\\
		& \left[\bfL^\star+ \sum_{(i,k)\in S}p_i^{-1}\left(y_{i,k} -L^\star_{i,k}\right) \bfe_i \bfe_k^\top +  \sum_{(i,k)\in S}p_i^{-1}\left(L^\star_{i,k} -[\bfX \bfY^\top]_{i,k}\right) \bfe_i \bfe_k^\top -(\bfL^\star-\bfX \bfY^\top) \right]^\top\bfU \nonumber \\
  &\quad =  \bfV {\bf\Sigma} +\lambda \bfV {\bf\Sigma}^{1/2}(\bfQ^{-1})^\top \bfQ^{-1}{\bf\Sigma}^{-1/2}- \bfB_2 \bfQ^{-1} {\bf\Sigma}^{-1/2}.\label{eq:VSigma}
		\end{align}
    Motivated by the above two equations, we consider the following equation:
		\begin{align}
			&\bfL^\star+ \sum_{(i,k)\in S}p_i^{-1}\left(y_{i,k} -L^\star_{i,k}\right) \bfe_i \bfe_k^\top +  \sum_{(i,k)\in S}p_i^{-1}\left(L^\star_{i,k} -[\bfX \bfY^\top]_{i,k}\right) \bfe_i \bfe_k^\top -(\bfL^\star-\bfX \bfY^\top) \nonumber \\
 & \quad = \bfU {\bf\Sigma}\bfV^\top +\lambda \bfU {\bf\Sigma}^{1/2}\bfQ \bfQ^\top{\bf\Sigma}^{-1/2}\bfV^\top+ \widetilde{\mathfrak{R}} \label{LemmaA1.14}
		\end{align}
		for some residual $\widetilde{\mathfrak{R}} \in \mathbb{R}^{d_1 \times d_2(d_2-1)/2}$. Then, \eqref{LemmaA1.4} and \eqref{LemmaA1.14} result in the following equality for $\mathcal{P}_{T^{\perp}}(\mathfrak{R})$ and $\mathcal{P}_{T^{\perp}}(\widetilde{\mathfrak{R}})$:
		\begin{align*}
			&\mathcal{P}_{T^{\perp}}(\mathfrak{R}) \\
            &= \mathcal{P}_{T^{\perp}}\left(-  \sum_{(i,k)\in S} p_i^{-1}( y_{i,k} -  [\bfX\bfY^{ \top}]_{i,k})\bfe_k\bfe_i^\top \right) \\
			&=-\mathcal{P}_{T^{\perp}}\left(\bfL^\star+ \sum_{(i,k)\in S}p_i^{-1}\left(y_{i,k} -L^\star_{i,k}\right) \bfe_i \bfe_k^\top +  \sum_{(i,k)\in S}p_i^{-1}\left(L^\star_{i,k} -[\bfX \bfY^\top]_{i,k}\right) \bfe_i \bfe_k^\top -(\bfL^\star-\bfX \bfY^\top)\right)  \\
			& = - \mathcal{P}_{T^{\perp}}(\widetilde{\mathfrak{R}}).
		\end{align*}
    Observe that \eqref{eq:USigma}, \eqref{eq:VSigma}, and \eqref{LemmaA1.14} imply
		\[\widetilde{\mathfrak{R}}\bfV= -\bfB_1 \bfQ^\top {\bf\Sigma}^{-1/2}, \quad \widetilde{\mathfrak{R}}^\top \bfU=\lambda \bfV {\bf\Sigma}^{1/2}(\bfQ^{-1})^\top \bfQ^{-1}{\bf\Sigma}^{-1/2}-\lambda \bfV {\bf\Sigma}^{-1/2}\bfQ\bfQ^\top{\bf\Sigma}^{1/2}-\bfB_2 \bfQ^{-1}{\bf\Sigma}^{-1/2}.\]
	We follow the argument for bounding	$\norm{\mathcal{P}_T(\mathfrak{R})}$ analogously and have
    \begin{align}
        \norm{\mathcal{P}_T(\widetilde{\mathfrak{R}})}\leq \norm{\mathcal{P}_T(\widetilde{\mathfrak{R}})}_{\mathrm{F}} \leq \norm{\bfU^\top\widetilde{\mathfrak{R}}}_{\mathrm{F}}+\norm{\widetilde{\mathfrak{R}} \bfV}_{\mathrm{F}} \lesssim c \sqrt{c_{\inj}p}\lambda < \lambda/4, \label{eq:PTR}
    \end{align}
		as $c$ is sufficiently small.
\item Now we provide an upper bound for $\norm{\mathcal{P}_{T^{\perp}}(\widetilde{\mathfrak{R}})}_{\mathrm{F}}$ to complete the proof. Note first that Condition \ref{cond:regularization} implies
\begin{align}
    \norm{ \sum_{(i,k)\in S}p_i^{-1}\left(y_{i,k} -L^\star_{i,k}\right) \bfe_i \bfe_k^\top +  \sum_{(i,k)\in S}p_i^{-1}\left(L^\star_{i,k} -[\bfX \bfY^\top]_{i,k}\right) \bfe_i \bfe_k^\top -(\bfL^\star-\bfX \bfY^\top)} < \lambda /4 \label{LemmaA1.16}.
\end{align}

Weyl's inequality, \eqref{LemmaA1.14}, \eqref{eq:PTR}, and \eqref{LemmaA1.16} together yield
		\begin{align*}
			&\sigma_l \left(\bfU {\bf\Sigma}\bfV^\top +\lambda \bfU {\bf\Sigma}^{1/2}\bfQ \bfQ^\top{\bf\Sigma}^{-1/2}\bfV^\top+ \calP_{T^\perp}(\widetilde{\mathfrak{R}}) \right)\\
			&\quad \leq \norm{ \sum_{(i,k)\in S}p_i^{-1}\left(y_{i,k} -L^\star_{i,k}\right) \bfe_i \bfe_k^\top +  \sum_{(i,k)\in S}p_i^{-1}\left(L^\star_{i,k} -[\bfX \bfY^\top]_{i,k}\right) \bfe_i \bfe_k^\top -(\bfL^\star-\bfX \bfY^\top)-\calP_{T}(\widetilde{\mathfrak{R}})}  \\
            & \quad \quad +  \sigma_l(\bfL^{\star})\\
			& \quad  < \frac{1}{2}\lambda, 
		\end{align*}
		for any $l \geq R+1$, where $\sigma_l(\cdot)$ denote the $l$th largest singular value of a given matrix. Additionally,  Lemma \ref{ClaimA2} reveals that
		\begin{align}
			\norm{{\bf\Sigma}^{1/2}\bfQ \bfQ^\top{\bf\Sigma}^{-1/2}-\bfI}&=\norm{{\bf\Sigma}^{1/2}(\bfQ\bfQ^\top-\bfI){\bf\Sigma}^{-1/2}} \leq \norm{{\bf\Sigma}^{1/2}}\norm{{\bf\Sigma}^{-1/2}}\norm{\bfQ\bfQ^\top-\bfI}_{\mathrm{F}} \nonumber \\
			& \leq \norm{{\bf\Sigma}^{1/2}}\norm{{\bf\Sigma}^{-1/2}}\norm{{\bf\Sigma}_\bfQ}\norm{{\bf\Sigma}_\bfQ-{\bf\Sigma}_\bfQ^{-1}}_{\mathrm{F}} \nonumber \\
			&\leq 2 \sqrt{2 \sigma^\star_{\max}} \sqrt{2/\sigma^\star_{\min}}8c\sqrt{c_{\inj}p_{\min}/\kappa} \leq \frac{1}{10}, \label{LemmaA1.17}
		\end{align}
		as $c$ is sufficiently small. Finally, \eqref{LemmaA1.17} and Weyl's inequality give
		\begin{align*}
			\sigma_l\left(\bfU {\bf\Sigma}\bfV^\top +\lambda \bfU {\bf\Sigma}^{1/2}\bfQ \bfQ^\top{\bf\Sigma}^{-1/2}\bfV^\top \right) & \geq  \sigma_{R}\left( \bfU \left({\bf\Sigma}+\lambda \bfI\right)\bfV^\top  +\lambda \bfU \left( {\bf\Sigma}^{1/2}\bfQ \bfQ^\top{\bf\Sigma}^{-1/2}-\bfI  \right)\bfV^\top \right) \\
			& \geq \sigma_R({\bf\Sigma} + \lambda \bfI) -\lambda \norm{{\bf\Sigma}^{1/2}\bfQ\bfQ^\top{\bf\Sigma}^{-1/2} -\bfI} \\
			& \geq \lambda-\frac{1}{10}\lambda   > \frac{1}{2} \lambda,
		\end{align*}
		for any $l \leq R$. Then, by the orthogonality of $\bfU {\bf\Sigma}\bfV^\top +\lambda \bfU {\bf\Sigma}^{1/2}\bfQ \bfQ^\top{\bf\Sigma}^{-1/2}\bfV^\top$ and $\mathcal{P}_{T^{\perp}}(\widetilde{\mathfrak{R}})$, we must have $\norm{\mathcal{P}_{T^{\perp}}(\widetilde{\mathfrak{R}})} < \lambda/2$. Consequently, we have $\norm{\mathcal{P}_{T^{\perp}}(\mathfrak{R})}<\lambda/2.$

\end{enumerate}
\end{proof}

\begin{lemma}\label{ClaimA2}
		Suppose the assumptions in Lemma \ref{LemmaA1} hold and let $\bfU {\bf\Sigma}\bfV^\top$ be the SVD of $\bfX\bfY^\top$. Then, there exists an invertible matrix $\bfQ \in \mathbb{R}^{R \times R}$ such that $\bfX = \bfU {\bf\Sigma}^{1/2} \bfQ$, $\bfY= \bfV {\bf\Sigma}^{1/2} (\bfQ^{-1})^\top$ and 
		\begin{align*} 
			\norm{{\bf\Sigma}_{\bfQ} - {\bf\Sigma}_{\bfQ}^{-1}}_{\mathrm{F}} \leq 8 \sqrt{\kappa} \frac{1}{\lambda \sqrt{\sigma^\star_{\min}}} \norm{\nabla f(\bfX,\bfY)}_{\mathrm{F}} \leq 8 c \sqrt{\frac{c_{\inj}p_{\min}}{\kappa}}
		\end{align*}
		where $\bfU_\bfQ {\bf\Sigma}_\bfQ \bfV_\bfQ^\top$ is the SVD of $\bfQ.$
	\end{lemma}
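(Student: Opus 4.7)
\textbf{Proof plan for Lemma \ref{ClaimA2}.} My plan is to first establish the existence and invertibility of $\bfQ$ using the singular-value lower bounds on $\bfX$ and $\bfY$, and then to exploit the first-order optimality (more precisely, the smallness of $\nabla f$) to bound the imbalance $\bfX^\top \bfX - \bfY^\top \bfY$, which translates directly into a bound on $\|{\bf\Sigma}_{\bfQ} - {\bf\Sigma}_{\bfQ}^{-1}\|_{\mathrm{F}}$.

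For the existence of $\bfQ$, I would argue that since $\sigma_{\min}(\bfY)\geq\sqrt{\sigma_{\min}^\star/2}>0$, the matrix $\bfY$ has full column rank $R$, so the column space of $\bfX\bfY^\top$ coincides with that of $\bfX$; hence the columns of $\bfU$ span the range of $\bfX$, giving $\bfX = \bfU(\bfU^\top\bfX)$. Setting $\bfQ\coloneqq {\bf\Sigma}^{-1/2}\bfU^\top\bfX$ and similarly $\bfB \coloneqq \bfV^\top\bfY$, the identity $\bfA\bfB^\top = {\bf\Sigma}$ (with $\bfA = {\bf\Sigma}^{1/2}\bfQ$) forces $\bfB = {\bf\Sigma}^{1/2}(\bfQ^{-1})^\top$, and invertibility of $\bfQ$ follows since both sides must have rank $R$. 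This yields $\bfX = \bfU{\bf\Sigma}^{1/2}\bfQ$ and $\bfY = \bfV{\bf\Sigma}^{1/2}(\bfQ^{-1})^\top$, and in particular
\begin{align*}
\bfX^\top\bfX = \bfQ^\top {\bf\Sigma}\bfQ, \qquad \bfY^\top\bfY = \bfQ^{-1}{\bf\Sigma}\bfQ^{-\top}.
\end{align*}

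The next step is to control the imbalance $\bfX^\top\bfX - \bfY^\top\bfY$ using $\nabla f$. Writing $\bfG \coloneqq -\sum_{(i,k)\in S}p_i^{-1}(y_{i,k}-[\bfX\bfY^\top]_{i,k})\bfe_i\bfe_k^\top$, I have $\nabla_{\bfX}f = \bfG\bfY + \lambda\bfX$ and $\nabla_{\bfY}f = \bfG^\top\bfX + \lambda\bfY$. Multiplying on the left by $\bfX^\top$ and $\bfY^\top$ respectively and noting that $\bfX^\top\bfG\bfY$ appears (up to transposition) in both yields the identity
\begin{align*}
\lambda(\bfX^\top\bfX - \bfY^\top\bfY) = \bfX^\top\nabla_\bfX f - (\bfY^\top\nabla_\bfY f)^\top,
\end{align*}
so that $\|\bfX^\top\bfX - \bfY^\top\bfY\|_{\mathrm{F}} \leq \lambda^{-1}(\|\bfX\|+\|\bfY\|)\|\nabla f(\bfX,\bfY)\|_{\mathrm{F}} \lesssim \lambda^{-1}\sqrt{\sigma^\star_{\max}}\,\|\nabla f(\bfX,\bfY)\|_{\mathrm{F}}$, using the spectral hypothesis $\|\bfX\|,\|\bfY\|\leq\sqrt{2\sigma^\star_{\max}}$.

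Finally, I would convert this imbalance bound into the desired bound on ${\bf\Sigma}_{\bfQ} - {\bf\Sigma}_{\bfQ}^{-1}$. Using the SVD $\bfQ = \bfU_\bfQ{\bf\Sigma}_\bfQ\bfV_\bfQ^\top$, the expression above becomes
\begin{align*}
\bfX^\top\bfX - \bfY^\top\bfY = \bfV_\bfQ\bigl({\bf\Sigma}_\bfQ\tilde{\bf\Sigma}{\bf\Sigma}_\bfQ - {\bf\Sigma}_\bfQ^{-1}\tilde{\bf\Sigma}{\bf\Sigma}_\bfQ^{-1}\bigr)\bfV_\bfQ^\top, \qquad \tilde{\bf\Sigma} \coloneqq \bfU_\bfQ^\top{\bf\Sigma}\bfU_\bfQ.
\end{align*}
The Frobenius norm is invariant under $\bfV_\bfQ$; examining the diagonal entries of the middle factor gives contributions $(\sigma_{\bfQ,i}^2 - \sigma_{\bfQ,i}^{-2})\tilde\sigma_{ii}$. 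Since $\tilde\sigma_{ii} = \bfe_i^\top\bfU_\bfQ^\top{\bf\Sigma}\bfU_\bfQ\bfe_i \geq \sigma_{\min}({\bf\Sigma}) \geq \sigma^\star_{\min}/2$ and $|\sigma^2-\sigma^{-2}| = (\sigma+\sigma^{-1})|\sigma-\sigma^{-1}|\geq 2|\sigma-\sigma^{-1}|$, retaining only the diagonal gives the lower bound
\begin{align*}
\|\bfX^\top\bfX - \bfY^\top\bfY\|_{\mathrm{F}} \geq \sigma^\star_{\min}\,\|{\bf\Sigma}_\bfQ - {\bf\Sigma}_\bfQ^{-1}\|_{\mathrm{F}}.
\end{align*}
Combining with the previous paragraph yields $\|{\bf\Sigma}_\bfQ - {\bf\Sigma}_\bfQ^{-1}\|_{\mathrm{F}} \lesssim \sqrt{\kappa}\,(\lambda\sqrt{\sigma^\star_{\min}})^{-1}\|\nabla f(\bfX,\bfY)\|_{\mathrm{F}}$, with the explicit constant $8$ after absorbing factors of $\sqrt{2}$; the second inequality in the statement then follows from the hypothesis \eqref{LemmaA1.1}. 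The mildly delicate step is the diagonal extraction against the non-diagonal $\tilde{\bf\Sigma}$, which is why using $\sigma_{\min}({\bf\Sigma})$ on the diagonal entries is essential — off-diagonal entries of the difference can be arbitrary and must be discarded, so the proof must rely on the positivity of the diagonal of $\tilde{\bf\Sigma}$ rather than on a naive operator-norm bound.
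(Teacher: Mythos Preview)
Your proof is correct and follows essentially the same route as the paper: both arguments use the gradient identity to bound $\norm{\bfX^\top\bfX-\bfY^\top\bfY}_{\mathrm{F}}\leq 2\lambda^{-1}\sqrt{2\sigma^\star_{\max}}\,\norm{\nabla f(\bfX,\bfY)}_{\mathrm{F}}$, and then convert this imbalance into a bound on $\norm{{\bf\Sigma}_\bfQ-{\bf\Sigma}_\bfQ^{-1}}_{\mathrm{F}}$. The only difference is that the paper invokes Lemma~20 of \cite{chen2020noisy} as a black box for the existence of $\bfQ$ and for the inequality $\norm{{\bf\Sigma}_\bfQ-{\bf\Sigma}_\bfQ^{-1}}_{\mathrm{F}}\leq \sigma_{\min}({\bf\Sigma})^{-1}\norm{\bfX^\top\bfX-\bfY^\top\bfY}_{\mathrm{F}}$, whereas you reprove both from scratch; your diagonal-extraction argument (using $\tilde\sigma_{ii}\geq\sigma_{\min}({\bf\Sigma})$ and $|\sigma^2-\sigma^{-2}|\geq 2|\sigma-\sigma^{-1}|$) is exactly the content of that cited lemma, and in fact yields a slightly sharper constant.
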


\begin{proof} 
Recall \eqref{LemmaA1.11_up} and \eqref{LemmaA1.11_below} and write
	\begin{align}
	&	\norm{\bfX^\top \bfX - \bfY^\top\bfY}_{\mathrm{F}} \nonumber \\
        & = \frac{1}{\lambda} \norm{\bfX^\top \left(\bfB_1+\sum_{(i,k)\in S} p_i^{-1}( y_{i,k} -  [\bfX\bfY^{ \top}]_{i,k})\bfe_i\bfe_k^\top \bfY \right) -\left(\bfB_2+\sum_{(i,k)\in S} p_i^{-1}( y_{i,k} -  [\bfX\bfY^{ \top}]_{i,k})\bfe_k\bfe_i^\top\bfX 
 \right)^\top\bfY}_{\mathrm{F}} \nonumber \\
		& =\frac{1}{\lambda} \norm{\bfX^\top \bfB_1-\bfB_2^\top \bfY}_{\mathrm{F}} \nonumber \\
		& \leq \frac{1}{\lambda} \norm{\bfX}\norm{\bfB_1}_{\mathrm{F}} + \frac{1}{\lambda} \norm{\bfY}\norm{\bfB_2}_{\mathrm{F}} \nonumber \\
		& \leq 2 \frac{1}{\lambda}\sqrt{2\sigma^\star_{\max}} \norm{\nabla f(\bfX, \bfY)}_{\mathrm{F}}. \label{LemmaA1.18}
	\end{align}
	The last line follows from the assumption on the singular values of $\bfX$ and $\bfY$. We invoke Lemma 20 in \cite{chen2020noisy} and claim that there exists an invertible $\bfQ \in \mathbb{R}^{R \times R}$ such that $\bfX=\bfU {\bf\Sigma}^{1/2} \bfQ, \bfY=\bfV {\bf\Sigma}^{1/2}(\bfQ^{-1})^\top$ and 
	\begin{align*}
		\norm{{\bf\Sigma}_\bfQ - {\bf\Sigma}_\bfQ^{-1}}_{\mathrm{F}} &\leq \frac{1}{\sigma_{\min}({\bf\Sigma})} \norm{\bfX^\top \bfX-\bfY^\top\bfY}_{\mathrm{F}},
	\end{align*}
	where ${\bf\Sigma}_\bfQ$ is from SVD of $\bfQ$. From \eqref{LemmaA1.1}, \eqref{LemmaA1.13}, and \eqref{LemmaA1.18}, we reach
	\begin{align*}
		&\frac{1}{\sigma_{\min}({\bf\Sigma})} \norm{\bfX^\top \bfX-\bfY^\top\bfY}_{\mathrm{F}} \\
        &\quad \leq \frac{2}{\sigma^\star_{\min}} \frac{2}{\lambda} \sqrt{2 \sigma^\star_{\max}}\norm{\nabla f(\bfX, \bfY)}_{\mathrm{F}}  \leq 8 \sqrt{\kappa} \frac{1}{\lambda \sqrt{\sigma^\star_{\min}}} \norm{\nabla f(\bfX, \bfY)}_{\mathrm{F}}  \leq 8c \sqrt{c_{\inj}p_{\min}/\kappa}
	\end{align*}
	as claimed.
\end{proof}

\begin{lemma}\label{LemmaA4}
	 Let $T$ denote the tangent space of $\bfX\bfY^\top$. Then, with probability at least $1-O(\bar{d}^{-100})$,
	\begin{align*}
		&\norm{\sum_{(i,k)\in S} p_i^{-1}([\bfX\bfY^\top]_{i,k} - L^\star_{i,k})\bfe_i \bfe_k^\top-  \left( \bfX \bfY^\top 
 -\bfL^\star\right)}  <\frac{1}{8}  \lambda  \quad \text{(Condition \ref{cond:regularization} (ii))} \\
		& \sum_{(i,k)\in S}p_{\min}^{-1} H_{i,k}^2  \geq c_{\mathrm{inj}} \norm{\bfH}^2_{\mathrm{F}}, \quad \text{for any $\bfH \in T$}.  \quad \text{(Condition \ref{cond:injectivity} with $c_{\inj}=1/(32\kappa)$)}
	\end{align*}
    uniformly hold for any $(\bfX, \bfY)$ satisfying
	\begin{align}
		&\max\Big\{\norm{\bfX-\bfX^\star}_{2, \infty}, \norm{\bfY-\bfY^\star}_{2, \infty}\Big\} \nonumber \\
		&\quad \leq C \kappa \sqrt{\frac{\bar{d} \log(\bar{d})}{\bar{p} (\sigma^\star_{\min})^2 }}    \max\Big\{\norm{\bfX^\star}_{2, \infty}, \norm{\bfY^\star}_{2, \infty}\Big\} \label{LemmaA4.1}
	\end{align}
	for some constant $C >0$.
\end{lemma}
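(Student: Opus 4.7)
The overall plan is to reduce both conclusions to concentration inequalities at the ground truth $(\bfX^\star,\bfY^\star)$ and then propagate them uniformly over the perturbation set defined by \eqref{LemmaA4.1} via a standard $\epsilon$-net argument. The workhorse identity is
\begin{align*}
\bfX\bfY^\top-\bfL^\star = (\bfX-\bfX^\star)\bfY^\top + \bfX^\star(\bfY-\bfY^\star)^\top,
\end{align*}
which displays $\bfZ \coloneqq \bfX\bfY^\top - \bfL^\star$ as a sum of at-most-rank-$R$ matrices. Combined with \eqref{LemmaA4.1} and the incoherence bounds from Lemma~\ref{lem:incoherence}, this yields uniform control on $\norm{\bfZ}_{\infty}$ and on $\max\{\norm{\bfZ}_{2,\infty},\,\norm{\bfZ^\top}_{2,\infty}\}$, which are precisely the two inputs needed for matrix Bernstein.

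For Condition~\ref{cond:regularization}(ii), I would rewrite the target as $\sum_{(i,k)}(p_i^{-1}\delta_{i,k}-1)Z_{i,k}\,\bfe_i\bfe_k^\top$, a sum of independent zero-mean random matrices. For each fixed $(\bfX,\bfY)$, matrix Bernstein with the sup-norm parameter driven by $\bar{p}^{-1}\norm{\bfZ}_{\infty}$ and the variance parameter driven by $\bar{p}^{-1}\max\{\norm{\bfZ}_{2,\infty}^2,\norm{\bfZ^\top}_{2,\infty}^2\}$ gives a pointwise bound of order $\sqrt{\bar{d}\log(\bar{d})/\bar{p}}$ up to incoherence and $\kappa$ factors, which is $\ll \lambda = C_\lambda\sqrt{\bar{d}/\bar{p}}$ under Assumption~\ref{asp:sieve_assumptions}\,iii). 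Uniformity then follows by covering the admissible set of $(\bfX,\bfY)$ with a polynomially-sized net in the $(2,\infty)$-norm, exploiting the Lipschitz relation
\begin{align*}
\norm{\bfZ-\bfZ'}_{\infty}\lesssim \norm{\bfX-\bfX'}_{2,\infty}\norm{\bfY}_{2,\infty}+\norm{\bfX^\star}_{2,\infty}\norm{\bfY-\bfY'}_{2,\infty}
\end{align*}
(and its $(2,\infty)$-analogue), which lets a routine chaining step close the union bound without dimensional blow-up.

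For Condition~\ref{cond:injectivity}, I would argue on the tangent space $T$ of $\bfX\bfY^\top$, whose elements have rank at most $2R$. Writing $\bfH=\bfU\bfA^\top+\bfB\bfV^\top$ and noting that $\bfU,\bfV$ inherit the incoherence of $\bfU^\star,\bfV^\star$ under \eqref{LemmaA4.1} via Davis--Kahan, I would derive the spikiness bound $\norm{\bfH}_{\infty}\lesssim \sqrt{\mu R/\min\{d_1,d_2(d_2-1)/2\}}\,\norm{\bfH}_{\mathrm{F}}$ and apply a restricted-isometry-style matrix Bernstein on $T$ (à la Candès--Recht) to show that $\sum_{(i,k)\in S}p_{\min}^{-1}H_{i,k}^2$ concentrates around its mean, which is at least $(p_{\min}/p_{\max})\norm{\bfH}_{\mathrm{F}}^2\gtrsim \norm{\bfH}_{\mathrm{F}}^2$ by Assumption~\ref{asp:randomness}. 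The constant $1/(32\kappa)$ absorbs both this ratio and the $O(\kappa^{-1})$ drift of $T$ across the perturbation set. The main obstacle is this last uniformity step: handling a continuum of tangent spaces simultaneously requires a Davis--Kahan control on $T$ whose magnitude scales like $\norm{\bfX\bfY^\top-\bfX^\star\bfY^{\star\top}}/\sigma^\star_{\min}$, and it is precisely this $1/\sigma^\star_{\min}$ factor that forces $c_{\inj}$ to shrink with $\kappa$.
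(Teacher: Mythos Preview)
Your plan has a genuine gap in the uniformity step, and it also misdiagnoses the source of the $\kappa^{-1}$ factor in $c_{\inj}$.

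\textbf{The $\epsilon$-net is not polynomially sized.} The admissible set of $(\bfX,\bfY)$ lives in $\bbR^{d_1\times R}\times\bbR^{d_2(d_2-1)/2\times R}$, so any $\epsilon$-net in the $(2,\infty)$-norm has cardinality $(C/\epsilon)^{\bar d R}$, not polynomial in $\bar d$. A pointwise matrix-Bernstein bound with tail probability $\bar d\exp(-ct^2/V)$ cannot survive a union bound over $\exp(c'\bar d R\log(1/\epsilon))$ points at the scale $t\asymp\lambda$, so the chaining step you describe does not close as written. The paper avoids the net entirely. For Condition~\ref{cond:regularization}(ii) (Lemma~\ref{LemmaA6}) it uses exactly your decomposition $\bfX\bfY^\top-\bfL^\star=(\bfX-\bfX^\star)\bfY^\top+\bfX^\star(\bfY-\bfY^\star)^\top$ together with the deterministic low-rank inequality
\[
\norm{\sum_{(i,k)\in S}p_i^{-1}[\bfC\bfD^\top]_{i,k}\bfe_i\bfe_k^\top-\bfC\bfD^\top}\;\le\;\norm{\sum_{(i,k)\in S}p_i^{-1}\bfe_i\bfe_k^\top-{\bf 1}{\bf 1}^\top}\cdot\norm{\bfC}_{2,\infty}\norm{\bfD}_{2,\infty},
\]
so that the only random quantity is $\norm{\sum_{(i,k)\in S}p_i^{-1}\bfe_i\bfe_k^\top-{\bf 1}{\bf 1}^\top}\lesssim\sqrt{\bar d/\bar p}$, which does not depend on $(\bfX,\bfY)$ at all. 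Uniformity is then automatic.

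\textbf{The $1/(32\kappa)$ does not come from Davis--Kahan drift.} For Condition~\ref{cond:injectivity} (Lemma~\ref{LemmaA5}) the paper does not perturb from $T^\star$ to $T$. Instead, for the \emph{actual} $(\bfX,\bfY)$ it picks the representation $\bfH=\bfX\bfA^\top+\bfB\bfY^\top$ that minimizes $\norm{\bfA}_{\mathrm F}^2+\norm{\bfB}_{\mathrm F}^2$; the first-order condition $\bfX^\top\bfB=\bfA^\top\bfY$ makes the cross term in $\norm{\bfH}_{\mathrm F}^2$ nonnegative. One then proves the two-sided sandwich
\[
\tfrac12\sum_{(i,k)\in S}p_{\min}^{-1}H_{i,k}^2\;\ge\;\tfrac{\sigma_{\min}^\star}{8}\bigl(\norm{\bfA}_{\mathrm F}^2+\norm{\bfB}_{\mathrm F}^2\bigr),\qquad
\norm{\bfH}_{\mathrm F}^2\;\le\;8\sigma_{\max}^\star\bigl(\norm{\bfA}_{\mathrm F}^2+\norm{\bfB}_{\mathrm F}^2\bigr),
\]
and $c_{\inj}=1/(32\kappa)$ is literally the ratio $\sigma_{\min}^\star/\sigma_{\max}^\star$ of these two constants. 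The lower bound is obtained by writing $\bfX=\bfX^\star+{\bf\Delta}_\bfX$, $\bfY=\bfY^\star+{\bf\Delta}_\bfY$ and expanding into five pieces $b_1,\dots,b_5$: $b_1$ is the fixed-subspace RIP at $(\bfX^\star,\bfY^\star)$, while $b_2$--$b_5$ are controlled by structural lemmas (Zheng--Lafferty, Keshavan et al.) that hold \emph{uniformly} in $\bfA,\bfB,{\bf\Delta}_\bfX,{\bf\Delta}_\bfY$ on a single high-probability event about $S$, using only the $(2,\infty)$ smallness \eqref{LemmaA4.1}. No covering and no Davis--Kahan are needed, and your explanation that ``the $1/\sigma_{\min}^\star$ factor forces $c_{\inj}$ to shrink with $\kappa$'' identifies the wrong mechanism.
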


\begin{proof}
	The results follow from Lemma \ref{LemmaA5} and Lemma \ref{LemmaA6}.
\end{proof}

\begin{lemma}\label{LemmaA5}
	 Let $T$ denote the tangent space of $\bfX\bfY^\top$. Then, with probability at least $1-O(\bar{d}^{-100})$,
	\begin{align*}
		\sum_{(i,k)\in S}p_{\min}^{-1} H_{i,k}^2  \geq \frac{1}{32\kappa} \norm{\bfH}^2_{\mathrm{F}}, \quad \text{for any $\bfH \in T$}.  \quad \text{(Condition \ref{cond:injectivity} with $c_{\inj}=1/(32\kappa)$)}
	\end{align*}
	holds uniformly for any $(\bfX, \bfY)$ satisfying
	\begin{align}
		\max\Big\{\norm{\bfX-\bfX^\star}_{2, \infty}, \norm{\bfY-\bfY^\star}_{2, \infty}\Big\} \leq \frac{c}{\kappa \sqrt{\bar{d}}}\norm{\bfX^\star} \label{LemmaA5.1}
	\end{align}
	where $c$ is some sufficiently small constant.
\end{lemma}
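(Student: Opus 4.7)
The plan is to adapt the restricted–isometry blueprint used for nonconvex matrix completion (e.g., \cite{chen2020noisy}) to our heterogeneous-sampling model. Writing $T^\star$ for the tangent space of $\bfL^\star$ and $D$ for the diagonal row-scaling operator with $(D\bfZ)_{i,k} = (p_i/p_{\min})\, Z_{i,k}$, one has $\bbE[p_{\min}^{-1}\calP_\Omega] = D$ and $D \succeq \bfI$ by Assumption \ref{asp:randomness}. The whole lemma reduces to establishing the operator-norm concentration
\begin{align*}
\big\| p_{\min}^{-1}\, \calP_T \calP_\Omega \calP_T - \calP_T D \calP_T \big\|_{\op} \le \tfrac{1}{2},
\end{align*}
uniformly in $(\bfX,\bfY)$ satisfying \eqref{LemmaA5.1}, since for any $\bfH \in T$ the quadratic-form identity $\sum_{(i,k)\in S} p_{\min}^{-1} H_{i,k}^2 = \langle \bfH,\, p_{\min}^{-1}\calP_T\calP_\Omega\calP_T\, \bfH\rangle$ combined with $\calP_T D \calP_T \succeq \calP_T$ then yields a lower bound of $\tfrac{1}{2}\|\bfH\|_{\mathrm{F}}^2$, which is much stronger than the stated $(32\kappa)^{-1}\|\bfH\|_{\mathrm{F}}^2$.

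I would proceed in three steps. \emph{Tangent-space perturbation and incoherence transfer.} The bound \eqref{LemmaA5.1} implies $\|\bfX - \bfX^\star\|_{\mathrm{F}} \le \sqrt{d_1}\,\|\bfX - \bfX^\star\|_{2,\infty} \lesssim (c/\kappa)\|\bfX^\star\|$ and analogously for $\bfY$, hence $\|\bfX\bfY^\top - \bfL^\star\|_{\op} \lesssim c\,\sigma_{\min}^\star$. A Wedin-type bound then produces singular subspaces of $\bfX\bfY^\top$ that are within operator-norm distance $O(c)$ of those of $\bfL^\star$, and writing the perturbed singular vectors as orthogonal combinations of $(\bfU^\star,\bfV^\star)$ plus a small correction, they inherit the incoherence bounds of Lemma \ref{lem:incoherence} up to constants. \emph{Fixed-$T^\star$ concentration via matrix Bernstein.} Decomposing $p_{\min}^{-1}\calP_{T^\star}\calP_\Omega \calP_{T^\star} - \calP_{T^\star} D \calP_{T^\star}$ as a sum of independent, centered self-adjoint rank-one operators indexed by $(i,k)$, the incoherence of $T^\star$ bounds each summand's operator norm and the variance proxy by $O\!\left(\mu R/(p_{\min}\min\{d_1,d_2(d_2-1)/2\})\right)$. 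Assumption \ref{asp:sieve_assumptions} (iii) then guarantees that matrix Bernstein (Tropp 2015, Thm.~6.6.1) delivers the $1/2$-bound with probability $1 - O(\bar d^{-100})$, a single high-probability event that is independent of $(\bfX,\bfY)$. \emph{Uniformization.} To extend the bound from $T^\star$ to arbitrary $T$ satisfying \eqref{LemmaA5.1}, decompose
\begin{align*}
\calP_T \calP_\Omega \calP_T - \calP_{T^\star}\calP_\Omega \calP_{T^\star} \;=\; (\calP_T - \calP_{T^\star})\calP_\Omega \calP_T \;+\; \calP_{T^\star}\calP_\Omega (\calP_T - \calP_{T^\star}),
\end{align*}
and control each cross term using the perturbation $\|\calP_T - \calP_{T^\star}\|_{\op} \le c'$ of Step~1 together with a dual application of Bernstein to $p_{\min}^{-1}\calP_\Omega$ on the incoherent cone. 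Since every ingredient of this extension depends on $(\bfX,\bfY)$ only through the closeness assumption \eqref{LemmaA5.1}, no $\epsilon$-net is required.

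The main technical obstacle is the perturbation step in the small-$p_{\min}$ regime: the crude bound $\|(\calP_T - \calP_{T^\star})\calP_\Omega\|_{\op} \le \|\calP_T - \calP_{T^\star}\|_{\op}$ is insufficient once rescaled by $p_{\min}^{-1}$. I expect this to be overcome by exploiting that $\bfH \in T$ is not merely low rank but also incoherent, so that $\calP_\Omega$ has an effective operator norm of order $\sqrt{p_{\min}}$ (rather than $1$) on the cone of incoherent matrices, converting the $O(c)$ subspace perturbation into an $O(c\sqrt{p_{\min}})$ error in the quadratic form. The factor $\kappa^{-1}$ on the right-hand side of \eqref{LemmaA5.1} supplies the slack required for the residual $\kappa$ in the advertised constant $(32\kappa)^{-1}$ to follow with room to spare.
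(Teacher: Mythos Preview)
Your approach is genuinely different from the paper's, and the gap you yourself flag in Step~3 is real and is not resolved by the fix you propose.

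The paper does \emph{not} try to prove an RIP for the moving tangent space $T$ via operator-norm perturbation from $T^\star$. Instead it parametrizes every $\bfH\in T$ as $\bfH=\bfX\bfA^\top+\bfB\bfY^\top$ with $(\bfA,\bfB)$ chosen to minimize $\tfrac12\|\bfA\|_{\mathrm F}^2+\tfrac12\|\bfB\|_{\mathrm F}^2$ (so that the first-order condition gives $\bfX^\top\bfB=\bfA^\top\bfY$), and then establishes the pair of inequalities $\|\bfH\|_{\mathrm F}^2\le 8\sigma^\star_{\max}(\|\bfA\|_{\mathrm F}^2+\|\bfB\|_{\mathrm F}^2)$ and $\tfrac12\sum_{(i,k)\in S}p_{\min}^{-1}H_{i,k}^2\ge\tfrac{\sigma^\star_{\min}}{8}(\|\bfA\|_{\mathrm F}^2+\|\bfB\|_{\mathrm F}^2)$. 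The ratio $\sigma^\star_{\min}/\sigma^\star_{\max}=1/\kappa$ is precisely what produces the constant $(32\kappa)^{-1}$. For the second inequality the paper writes ${\bf\Delta}_\bfX=\bfX-\bfX^\star$, ${\bf\Delta}_\bfY=\bfY-\bfY^\star$ and expands the quadratic form into a ``fixed-subspace'' piece ($b_1$, the standard RIP on $T^\star$) plus cross terms $b_2,\ldots,b_5$ that carry ${\bf\Delta}_\bfX,{\bf\Delta}_\bfY$ as explicit factors. These cross terms are controlled by concentration results (Zheng--Lafferty Lemma~9; the bound $\|\sum_{(i,k)\in S}p_i^{-1}\bfe_i\bfe_k^\top-\mathbf 1\mathbf 1^\top\|\lesssim\sqrt{\bar d/\bar p}$) that are \emph{uniform in the free factors $\bfA,\bfB,{\bf\Delta}_\bfX,{\bf\Delta}_\bfY$} and give bounds in terms of $\|{\bf\Delta}_\bfX\|_{2,\infty},\|{\bf\Delta}_\bfY\|_{2,\infty}$, exactly the quantities the hypothesis \eqref{LemmaA5.1} controls.

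Your Step~3 tries to achieve the same uniformity by bounding the operator $p_{\min}^{-1}(\calP_T-\calP_{T^\star})\calP_\Omega\calP_T$. The difficulty is that for $\bfH\in T$, the residual $\bfH_2:=\calP_{T^{\star\perp}}\bfH$ has $\|\bfH_2\|_{\mathrm F}\lesssim c\|\bfH\|_{\mathrm F}$ but no a~priori $\ell_\infty$ or $\ell_{2,\infty}$ control, so the crude bound $\langle\bfH_2,p_{\min}^{-1}\calP_\Omega\bfH_2\rangle\le p_{\min}^{-1}\|\bfH_2\|_{\mathrm F}^2$ leaves a cross term of order $c\,p_{\min}^{-1/2}\|\bfH\|_{\mathrm F}^2$ after Cauchy--Schwarz, which blows up. Your proposed rescue --- that $\calP_\Omega$ has ``effective operator norm $\sqrt{p_{\min}}$ on the cone of incoherent matrices'' --- is not a theorem: RIP-type statements hold on a \emph{fixed} low-dimensional incoherent subspace, not uniformly over all incoherent low-rank matrices without an $\epsilon$-net or a sample-size penalty. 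Since $T$ (and hence $\bfH_2$) can be chosen after seeing $\Omega$, you would need exactly such a uniform statement, contradicting your claim that no net is required. To make Step~3 go through you would end up tracking $\|\bfH_2\|_{2,\infty}$ via the bilinear structure $\bfH=\bfX\bfA^\top+\bfB\bfY^\top$ and invoking the same Zheng--Lafferty/Keshavan--Montanari--Oh inputs the paper uses --- at which point the bilinear decomposition is doing the work, not the operator-norm perturbation.
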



\begin{proof}[Proof of Lemma \ref{LemmaA5}]
	Any $\bfH \in T$ can be re-written as $\bfX \bfA^\top+\bfB \bfY^\top$ for some $\bfA \in \mathbb{R}^{d_2(d_2-1)/2 \times R}$ and $\bfB \in \mathbb{R}^{d_1 \times R}$. We pin down $(\bfA,\bfB)$ as follows:
	\begin{align}
		(\bfA,\bfB) \coloneqq & \argmin_{\widetilde{\bfA}, \widetilde{\bfB}} \frac{1}{2} \norm{\widetilde{\bfA}}_{\mathrm{F}}^2 + \frac{1}{2} \norm{\widetilde{\bfB}}_{\mathrm{F}}^2 \quad  \text{s.t.}\quad \bfH= \bfX \widetilde{\bfA}^\top+\widetilde{\bfB} \bfY^\top. \label{LemmaA5.2}
	\end{align}
	The first order condition of \eqref{LemmaA5.2} leads to a useful equality:
	\begin{align}
		\bfX^\top \bfB=\bfA^\top \bfY. \label{LemmaA5.3}
	\end{align}
	We plan to establish the following two inequalities in turn, which will complete the proof.
	\begin{align}
		&\norm{\bfH}_{\mathrm{F}}^2 \leq 8 \sigma^\star_{\max}(\norm{\bfA}_{\mathrm{F}}^2+\norm{\bfB}_{\mathrm{F}}^2), \label{LemmaA5.4} \\
		& \frac{1}{2} \sum_{(i,k)\in S}p_{\min}^{-1} H_{i,k}^2    \geq \frac{\sigma^\star_{\min}}{8}(\norm{\bfA}_{\mathrm{F}}^2+\norm{\bfB}_{\mathrm{F}}^2). \label{LemmaA5.5}
	\end{align}
	\begin{enumerate}
		\item We begin with \eqref{LemmaA5.4}. Note first that
         \begin{align*}
			\norm{\bfH}_{\mathrm{F}}^2 &= \norm{ \bfX \bfA^\top+\bfB \bfY^\top}_{\mathrm{F}}^2 \leq 2 \left(\norm{ \bfX \bfA^\top}_{\mathrm{F}}^2+ \norm{\bfB \bfY^\top}_{\mathrm{F}}^2 \right) \leq 2 \left(\norm{\bfX}^2\norm{\bfA}_{\mathrm{F}}^2 + \norm{\bfY}^2 \norm{\bfB}_{\mathrm{F}}^2 \right) \\
			&\leq 2 \max\Big\{\norm{\bfX}^2,  \norm{\bfY}^2\Big\} \left(\norm{\bfA}_{\mathrm{F}}^2+\norm{\bfB}_{\mathrm{F}}^2 \right). 
		\end{align*}
		The condition \eqref{LemmaA5.1} ensures that
		\begin{align*}
			\norm{\bfX} &\leq \norm{\bfX^\star} + \norm{\bfX-\bfX^\star} \leq \norm{\bfX^\star}+\norm{\bfX-\bfX^\star}_{\mathrm{F}}\\
			&\leq \norm{\bfX^\star}+\sqrt{d_1} \norm{\bfX-\bfX^\star}_{2, \infty}
			\leq \norm{\bfX^\star}+\frac{c}{\kappa}\norm{\bfX^\star}\\
			&\leq 2 \norm{\bfX^\star} \leq 2 \sqrt{\sigma^\star_{\max}}
		\end{align*}
		as $c>0$ is sufficiently small. The other case, $\norm{\bfY} \leq2 \sqrt{\sigma^\star_{\max}}$, can be shown similarly. 		
		\item We now turn to \eqref{LemmaA5.5}. To begin with, we decompose as
		\begin{align*}
		\frac{1}{2} \sum_{(i,k)\in S}p_{\min}^{-1} H_{i,k}^2 & \geq \frac{1}{2} \sum_{(i,k)\in S}p_i^{-1} H_{i,k}^2 = 	\frac{1}{2}  \sum_{(i,k)\in S}p_i^{-1}[\bfX \bfA^\top+\bfB \bfY^\top]_{i,k}^2\\
        & = \underbrace{\frac{1}{2}  \sum_{(i,k)\in S}p_i^{-1}[\bfX \bfA^\top+\bfB \bfY^\top]_{i,k}^2 -\frac{1}{2}\norm{\bfX \bfA^\top+\bfB \bfY^\top}_{\mathrm{F}}^2}_{\coloneqq a_1}  \\
			& + \underbrace{\frac{1}{2}\norm{\bfX \bfA^\top+\bfB \bfY^\top}_{\mathrm{F}}^2}_{\coloneqq a_2}.
		\end{align*}
		For $a_2$, we leverage \eqref{LemmaA5.3} and have
		\begin{align*}
			a_2 = \frac{1}{2}\norm{\bfX \bfA^\top+\bfB \bfY^\top}_{\mathrm{F}}^2 &= \frac{1}{2}\norm{\bfX \bfA^\top}_{\mathrm{F}}^2 + \frac{1}{2}\norm{\bfB \bfY^\top}_{\mathrm{F}}^2 + tr(\bfX^\top \bfB \bfY^\top \bfA) \\
            &= \frac{1}{2}\norm{\bfX \bfA^\top}_{\mathrm{F}}^2 + \frac{1}{2}\norm{\bfB \bfY^\top}_{\mathrm{F}}^2 + \norm{\bfX^\top \bfB}_{\mathrm{F}}^2\\
            & \geq \frac{1}{2}\norm{\bfX \bfA^\top}_{\mathrm{F}}^2 + \frac{1}{2}\norm{\bfB \bfY^\top}_{\mathrm{F}}^2.
		\end{align*}
        Then, use the condition \eqref{LemmaA5.1} to rewrite it in terms of the true $\bfX^\star$ and $\bfY^\star$.
		\[a_2 \geq \frac{1}{2}\norm{\bfX^\star \bfA^\top}_{\mathrm{F}}^2 + \frac{1}{2}\norm{\bfB \bfY^{\star\top}}_{\mathrm{F}}^2 -\frac{1}{100}\sigma^\star_{\min} \left(\norm{\bfA}_{\mathrm{F}}^2+\norm{\bfB}_{\mathrm{F}}^2\right).\]
Let us bound $a_1$ now. Denote ${\bf\Delta}_\bfX\coloneqq\bfX-\bfX^\star$ and ${\bf\Delta}_\bfY\coloneqq\bfY-\bfY^\star$, and write
		\begin{align*}
			a_1 =&\frac{1}{2}  \sum_{(i,k)\in S}p_i^{-1}[\bfX^\star \bfA^\top+\bfB \bfY^{\star\top}+{\bf\Delta}_\bfX \bfA^\top+\bfB {\bf\Delta}_\bfY^\top]_{i,k}^2   -\frac{1}{2}\norm{\bfX^\star \bfA^\top+\bfB \bfY^{\star\top}+{\bf\Delta}_\bfX \bfA^\top+\bfB {\bf\Delta}_\bfY^\top}_{\mathrm{F}}^2\\
			=& \underbrace{\frac{1}{2}  \sum_{(i,k)\in S}p_i^{-1} [\bfX^\star \bfA^\top+\bfB \bfY^{\star\top}]_{i,k}^2 -\frac{1}{2}\norm{\bfX^\star \bfA^\top+\bfB \bfY^{\star\top} }_{\mathrm{F}}^2}_{\coloneqq b_1} \\
			& +\underbrace{\frac{1}{2}  \sum_{(i,k)\in S}p_i^{-1} [{\bf\Delta}_\bfX \bfA^\top]_{i,k}^2 -\frac{1}{2}\norm{ {\bf\Delta}_\bfX \bfA^\top }_{\mathrm{F}}^2}_{\coloneqq b_2}
			+\underbrace{ \frac{1}{2}  \sum_{(i,k)\in S}p_i^{-1}[\bfB {\bf\Delta}_\bfY^\top]_{i,k}^2 -\frac{1}{2}\norm{ \bfB {\bf\Delta}_\bfY^\top}_{\mathrm{F}}^2}_{\coloneqq b_3} \\
			& + \underbrace{\sum_{(i,k)\in S} p_i^{-1}  [{\bf\Delta}_\bfX \bfA^\top]_{i,k} [\bfB {\bf\Delta}_\bfY^\top]_{i,k} -  \langle {\bf\Delta}_\bfX \bfA^\top, \bfB {\bf\Delta}_\bfY^\top \rangle}_{\coloneqq b_4} \\
			& +\underbrace{\sum_{(i,k)\in S} p_i^{-1} [\bfX^\star \bfA^\top+\bfB \bfY^{\star\top}]_{i,k} [{\bf\Delta}_\bfX \bfA^\top+\bfB {\bf\Delta}_\bfY^\top]_{i,k} -  \langle \bfX^\star \bfA^\top+\bfB \bfY^{\star\top}, {\bf\Delta}_\bfX \bfA^\top+\bfB {\bf\Delta}_\bfY^\top \rangle}_{\coloneqq b_5}.
		\end{align*}
		We will bound $b_1$ through $b_5$ in turn.
		\begin{enumerate}
			\item By marginally modifying Section 4.2 of \cite{candes2009exact} to allow for heterogeneous $p_i$, we can show that, with probability at least $1-O(\bar{d}^{-100})$,
			\[|b_1| \leq \frac{1}{64} \norm{\bfX^\star \bfA^\top+\bfB \bfY^{\star\top}}_{\mathrm{F}}^2 \leq \frac{1}{32}\left( \norm{\bfX^\star \bfA^\top }_{\mathrm{F}}^2+\norm{ \bfB \bfY^{\star\top}}_{\mathrm{F}}^2 \right) \]
			provided that $p_{\min}d_1d_2(d_2-1) \gg \mu R \bar{d} \log(\bar{d})$.
			\item By following the proof of Lemma 9 in \cite{zheng2016convergence} with straightforward modifications, we have with probability at least $1-O(\bar{d}^{-100})$,
			\begin{align*}
				\frac{1}{2}  \sum_{(i,k)\in S} p_i^{-1}  [{\bf\Delta}_\bfX \bfA^\top]_{i,k}^2 & \leq d_1 \norm{\bf\Delta_\bfX}^2_{2, \infty} \norm{\bfA}_{\mathrm{F}}^2 \\
				\frac{1}{2}  \sum_{(i,k)\in S} p_i^{-1} [{\bfB \bf\Delta}_\bfY^\top]_{i,k}^2 &\leq \frac{d_2(d_2-1)}{2}\norm{\bf\Delta_\bfY}^2_{2, \infty} \norm{\bfB}_{\mathrm{F}}^2
			\end{align*}
                as long as $p_{\min} \geq C \frac{\log (d_1)}{d_1}$ and $p_{\min} \geq C \frac{\log (d_2(d_2-1))}{d_2(d_2-1)}$ for some $C>0.$ Also, it is straightforward that
                \begin{align*}
                    \frac{1}{2} \norm{ {\bf\Delta}_\bfX \bfA^\top }_{\mathrm{F}}^2 &\leq \frac{1}{2} d_1 \norm{\bf\Delta_\bfX}^2_{2, \infty} \norm{\bfA}_{\mathrm{F}}^2, \\
                    \frac{1}{2} \norm{ \bfB{\bf\Delta}_\bfY^\top }_{\mathrm{F}}^2 &\leq \frac{1}{2} d_2(d_2-1) \norm{\bf\Delta_\bfY}^2_{2, \infty} \norm{\bfB}_{\mathrm{F}}^2
                \end{align*}
            Then, the condition \eqref{LemmaA5.1} ensures that
			\begin{align*}
				 |b_2|+|b_3| \leq  \frac{1}{100}\sigma^\star_{\min} \left(\norm{\bfA}_{\mathrm{F}}^2+ \norm{\bfB}_{\mathrm{F}}^2\right).
			\end{align*}
			\item By Lemma 21 in \cite{chen2020noisy} and Lemma 3.2 in \cite{keshavan2010matrix}, we have
			\begin{align*}
				|b_4| &\leq \norm{\sum_{(i,k)\in S} p_i^{-1}   \bfe_i \bfe_k^\top -{\bf 1}{\bf 1}^\top} \norm{{\bf\Delta}_\bfX}_{2,\infty}\norm{\bfA}_{\mathrm{F}}\norm{{\bf\Delta}_\bfY}_{2,\infty}\norm{\bfB}_{\mathrm{F}}  \\
                & \lesssim \sqrt{\frac{\bar{d}}{\bar{p}}}  \norm{{\bf\Delta}_\bfX}_{2,\infty}\norm{\bfA}_{\mathrm{F}}\norm{{\bf\Delta}_\bfY}_{2,\infty}\norm{\bfB}_{\mathrm{F}},
			\end{align*}
 with probability at least $1-O(\bar{d}^{-100})$. Then the condition \eqref{LemmaA5.1}, $\bar{p} \bar{d} \gg 1$, and the elementary inequality $2ab \leq a^2+b^2$ lead to 
 \begin{align*}
     |b_4| \lesssim \frac{1}{100}\sigma^\star_{\min} \left(\norm{\bfA}_{\mathrm{F}}^2+ \norm{\bfB}_{\mathrm{F}}^2\right).
 \end{align*}
			\item To establish a bound for the term $b_5$, we only consider the following term as the argument for the remaining terms are nearly identical:
            \begin{align*}
                &\sum_{(i,k) \in S} p_i^{-1} [\bfX^\star \bfA^\top]_{i,k} [{\bf\Delta}_\bfX \bfA^\top]_{i,k} -  \langle \bfX^\star \bfA^\top, {\bf\Delta}_\bfX \bfA^\top \rangle \\
                & \quad  \leq   \norm{\sum_{(i,k) \in S}  p_i^{-1/2}[\bfX^\star \bfA^\top]_{i,k}}_{\mathrm{F}}   \norm{\sum_{(i,k)\in S} p_i^{-1/2} [{\bf\Delta}_\bfX \bfA^\top]_{i,k}}_{\mathrm{F}} + \norm{\bfX^\star \bfA^\top}_{\mathrm{F}}\norm{{\bf\Delta}_\bfX \bfA^\top}_{\mathrm{F}}.
            \end{align*}
We bound the first term similarly to $|b_2|$ and $|b_3|$, and the second term similarly to $|b_1|$. We omit the details for brevity. With probability at least $1-O(\bar{d}^{-100})$,
			\[|b_5|\leq \frac{1}{100} \sigma^\star_{\min}\left(\norm{\bfA}_{\mathrm{F}}^2+\norm{\bfB}_{\mathrm{F}}^2 \right).\]
			
			\item By collecting the bounds on $b_1$ through $b_5$, we obtain, with probability at least $1-O(\bar{d}^{-100})$,
			\begin{align*}
				|a_1| \leq \sum_{l=1}^5 |b_l| \leq \frac{1}{32}\left(\norm{\bfX^\star\bfA^\top}_{\mathrm{F}}^2 +\norm{\bfB \bfY^{\star\top}}_{\mathrm{F}}^2\right)+\frac{1}{25} \sigma^\star_{\min} \left(\norm{\bfA}_{\mathrm{F}}^2 +\norm{\bfB}_{\mathrm{F}}^2\right)
			\end{align*}
		\end{enumerate}
		Finally, combining the bounds on $a_1$ and $a_2$, we can complete the proof.  
	\end{enumerate}
\end{proof}

\begin{lemma}\label{LemmaA6}
	With probability at least $1-O(\bar{d}^{-100})$, 
	\begin{gather*}
		  \norm{\sum_{(i,k)\in S} p_i^{-1}([\bfX\bfY^\top]_{i,k} - L^\star_{i,k})\bfe_i \bfe_k^\top-  \left( \bfX \bfY^\top 
 -\bfL^\star\right)}  <\frac{1}{8}  \lambda
	\end{gather*}
	holds uniformly for all $(\bfX, \bfY)$ satisfying \eqref{LemmaA4.1}.
\end{lemma}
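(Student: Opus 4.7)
The strategy is to write $\bfX\bfY^\top - \bfL^\star$ as a sum of low-rank perturbation pieces and then bound the random operator $\calP_S - \calI$ on each piece separately. Setting ${\bf\Delta}_\bfX \coloneqq \bfX - \bfX^\star$ and ${\bf\Delta}_\bfY \coloneqq \bfY - \bfY^\star$, and using $\bfL^\star = \bfX^\star \bfY^{\star\top}$, we have the identity
\begin{align*}
\bfX\bfY^\top - \bfL^\star = {\bf\Delta}_\bfX\bfY^{\star\top} + \bfX^\star{\bf\Delta}_\bfY^\top + {\bf\Delta}_\bfX{\bf\Delta}_\bfY^\top,
\end{align*}
each summand of rank at most $R$. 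Writing $\calP_S(\bfZ) \coloneqq \sum_{(i,k)\in S} p_i^{-1}Z_{i,k}\bfe_i\bfe_k^\top$, it therefore suffices to bound $\norm{(\calP_S - \calI)(\bfA\bfB^\top)}$ for each of the three factorizations.

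For a fixed pair $(\bfA,\bfB)$, I would apply the matrix Bernstein inequality \citep{tropp:2015} to the random sum $\sum_{i,k}(p_i^{-1}\delta_{i,k}-1)[\bfA\bfB^\top]_{i,k}\bfe_i\bfe_k^\top$, yielding a bound of the form
\begin{align*}
\norm{(\calP_S - \calI)(\bfA\bfB^\top)} \lesssim \sqrt{\frac{\bar{d}\log\bar{d}}{\bar{p}}}\bigl(\norm{\bfA}_{2,\infty}\norm{\bfB} + \norm{\bfA}\norm{\bfB}_{2,\infty}\bigr) + \frac{\log\bar{d}}{\bar{p}}\sqrt{\bar{d}}\,\norm{\bfA}_{2,\infty}\norm{\bfB}_{2,\infty},
\end{align*}
in the spirit of Lemma 21 of \cite{chen2020noisy}. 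Combining \eqref{LemmaA4.1} with the incoherence consequence $\norm{\bfX^\star}_{2,\infty},\,\norm{\bfY^\star}_{2,\infty} \lesssim \sqrt{R\mu \sigma^\star_{\max}/\min\{d_1, d_2(d_2-1)/2\}}$ obtained from Lemma \ref{lem:incoherence}, together with $\norm{\bfX^\star}, \norm{\bfY^\star} \leq \sqrt{\sigma^\star_{\max}}$, one checks under Assumption \ref{asp:sieve_assumptions} iii) that each of the three contributions is $\ll \sqrt{\bar{d}/\bar{p}}\asymp \lambda/C_\lambda$, so the sum stays strictly below $\lambda/8$ whenever $C_\lambda$ is chosen large enough.

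The principal obstacle is uniformity: the lemma must hold simultaneously over every $(\bfX,\bfY)$ in the 2,∞-ball \eqref{LemmaA4.1}, not just at a fixed pair. This I would handle via a standard covering-net argument on the admissible set of $({\bf\Delta}_\bfX, {\bf\Delta}_\bfY)$, analogous to the nets used in \cite{chen:2019inference,chen2020noisy}: take an $\epsilon$-net with $\epsilon$ polynomially small in $\bar{d}$, apply the concentration bound at each net point via a union bound, and absorb the Lipschitz discretization error into the probability-one operator-norm control of $\calP_S-\calI$ on small rank-$R$ matrices. Among the three summands, ${\bf\Delta}_\bfX{\bf\Delta}_\bfY^\top$ is negligible (both factors are small in 2,∞-norm), while the two mixed terms involving the large factors $\bfX^\star, \bfY^\star$ are binding; here the incoherence of $\bfX^\star$ and $\bfY^\star$ is essential, since it provides the small $\norm{\cdot}_{2,\infty}$ factor that multiplies $\sqrt{\bar{d}/\bar{p}}$ and makes the bound strictly subdominant to $\lambda$.
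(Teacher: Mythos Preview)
Your decomposition and overall plan are sound, but the paper takes a considerably simpler and sharper route that sidesteps the covering argument entirely. Rather than applying matrix Bernstein at a fixed pair and then union-bounding over a net, the paper invokes the \emph{deterministic} inequality (Lemma 4.5 of \cite{chen2017memory}, cf.\ Lemma 3.2 of \cite{keshavan2010matrix})
\[
\Bigl\|\sum_{(i,k)\in S}p_i^{-1}[\bfA\bfB^\top]_{i,k}\bfe_i\bfe_k^\top-\bfA\bfB^\top\Bigr\|
\;\le\;\Bigl\|\sum_{(i,k)\in S}p_i^{-1}\bfe_i\bfe_k^\top-\mathbf{1}\mathbf{1}^\top\Bigr\|\;\norm{\bfA}_{2,\infty}\norm{\bfB}_{2,\infty},
\]
which holds for \emph{all} $\bfA,\bfB$ simultaneously. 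The only random quantity is the scalar $\norm{\calP_S(\mathbf{1}\mathbf{1}^\top)-\mathbf{1}\mathbf{1}^\top}\lesssim\sqrt{\bar{d}/\bar{p}}$, controlled once and for all; uniformity in $(\bfX,\bfY)$ is automatic. With the two-term split $\bfX\bfY^\top-\bfL^\star=(\bfX-\bfX^\star)\bfY^\top+\bfX^\star(\bfY-\bfY^\star)^\top$ and $\norm{\bfY}_{2,\infty}\le 2\norm{\bfY^\star}_{2,\infty}$, this immediately gives the claim under Assumption~\ref{asp:sieve_assumptions}~iii).

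Your Bernstein-plus-covering route is correct in spirit, but there is a quantitative concern: an $\epsilon$-net on the admissible $({\bf\Delta}_\bfX,{\bf\Delta}_\bfY)\in\bbR^{\bar d\times R}$ in the $2,\infty$-ball has log-cardinality of order $R\bar d\log(1/\epsilon)$, so the union bound forces the Bernstein deviation to absorb an extra $\sqrt{R\bar d}$ factor. Combined with the fact that your pointwise bound already carries the operator norms $\norm{{\bf\Delta}_\bfX},\norm{{\bf\Delta}_\bfY}$ (which \eqref{LemmaA4.1} does not control directly, only via $\sqrt{d}\cdot\norm{\cdot}_{2,\infty}$), the resulting condition is strictly stronger than what Assumption~\ref{asp:sieve_assumptions}~iii) provides. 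The paper's deterministic inequality avoids both issues because it depends only on $2,\infty$ norms and requires no union bound.
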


\begin{proof}
	Using the decomposition $\bfX\bfY^\top-\bfL^\star=(\bfX-\bfX^\star)\bfY^\top+\bfX^\star(\bfY-\bfY^\star)^\top$, we can write
	\begin{align*}
		 &\norm{\sum_{(i,k)\in S} p_i^{-1}([\bfX\bfY^\top]_{i,k} - L^\star_{i,k})\bfe_i \bfe_k^\top-  \left( \bfX \bfY^\top 
 -\bfL^\star\right)} \\
 &\quad \leq  \norm{\sum_{(i,k)\in S} p_i^{-1}  [(\bfX-\bfX^\star)\bfY^\top]_{i,k} -(\bfX-\bfX^\star)\bfY^\top 
 } +  \norm{\sum_{(i,k) \in S} p_i^{-1} [\bfX^\star(\bfY-\bfY^\star)^\top]_{i,k}-\bfX^\star(\bfY-\bfY^\star)^\top}
	\end{align*}
	For the first term, applying Lemma 4.5 in \cite{chen2017memory}, Lemma 3.2 in \cite{keshavan2010matrix}, the condition \eqref{LemmaA4.1}, and $\norm{\bfY}_{2,\infty} \leq  2 \norm{\bfY^\star}_{2,\infty}$, we have
	\begin{align*}
	&	\norm{\sum_{(i,k)\in S} p_i^{-1}  [(\bfX-\bfX^\star)\bfY^\top]_{i,k} - (\bfX-\bfX^\star)\bfY^\top 
  }  \leq \norm{\sum_{(i,k)\in S} p_i^{-1} \bfe_i \bfe_k^\top -{\bf 1}{\bf 1}^\top}\norm{\bfX-\bfX^\star}_{2,\infty} \norm{\bfY}_{2,\infty} \\
		&\lesssim \sqrt{\frac{\bar{d}}{\bar{p}}}\norm{\bfX-\bfX^\star}_{2,\infty}\norm{\bfY}_{2,\infty} \ll \lambda,
	\end{align*}
    with probability at least $1-O(\bar{d}^{-100})$. The second term can be bounded analogously.
\end{proof}
 
\section{Analysis of the nonconvex gradient descent iterates}\label{sec:sectionB}
In this section, we analyze the error bounds of the nonconvex gradient descent iterates of \eqref{eq:nonconvex_appen} using leave-one-out techniques, a result that may be of independent interest. As in Section \ref{sec:sectionA}, the architecture of the analysis is the same as that of \cite{chen2020noisy}. Again, the main differences are that we allow heterogeneous $p_i$, assume approximate low-rankness, and the noise structure is different from that in \cite{chen2020noisy} as we consider the BTL model. Throughout this section, we assume that Assumption \ref{asp:randomness} and Assumption \ref{asp:sieve_assumptions} hold, and that $\lambda = C_{\lambda} \sqrt{ \bar{d}/\bar{p}}$ for some sufficiently large constant $C_{\lambda} > 0$. As the proofs of the lemmas in this section are similar to those in \cite{chen2020noisy}, we omit them for brevity. The complete proofs are available upon request. For notational simplicity, we denote
\begin{align*}
	\bfF^t \coloneqq \begin{bmatrix}
		\bfX^t\\
		\bfY^t
	\end{bmatrix} \in \mathbb{R}^{\bar{d} \times R}
	\quad \text{and} \quad 
	\bfF^\star \coloneqq \begin{bmatrix}
		\bfX^\star\\
		\bfY^\star
	\end{bmatrix} \in \mathbb{R}^{\bar{d} \times R}.
\end{align*}

Note that we have the following properties of $\bfF^\star.$ 
\begin{align}
	&\sigma_1(\bfF^\star)  =\sqrt{2 \sigma^\star_{\max}}; \qquad \sigma_R(\bfF^\star)=\sqrt{2\sigma^\star_{\min}};  \label{Prelim1} \\
	&\norm{\bfF^\star}_{2, \infty}=\max \{\norm{\bfX^\star}_{2,\infty}, \norm{\bfY^\star}_{2, \infty}\} \leq \sqrt{\frac{\mu R \sigma^\star_{\max}}{\min\{d_1,d_2(d_2-1)/2\}}}. \label{Prelim2}
\end{align}

We define the leave-one-out problems. First, define $S_{l,\cdot} \coloneqq \{(i,k)\in S | i= l\}$ for $l=1, \ldots, d_1$ and $S_{\cdot, l}\coloneqq \{(i,k)\in S | k= l-d_1\}$ for $l=d_1+1, \ldots, \bar{d}$. We then define 
     \begin{align*}
   f^{(l)}(\bfX, \bfY) \coloneqq &  \frac{1}{2}\sum_{(i,k) 
   \in S \setminus S_{l,\cdot}}   p_i^{-1} \left(y_{i,k}- [\bfX \bfY^\top]_{i,k} \right)^2 +\frac{1}{2} \sum_{k=1}^{d_2(d_2-1)/2}    \left(\sigma(\Mstar_{l,k})-[\bfX \bfY^\top]_{l,k} \right)^2  \\
   & \quad + \frac{\lambda}{2} \norm{\bfX}_{\mathrm{F}}^2 + \frac{\lambda}{2} \norm{\bfY}_{\mathrm{F}}^2
 \end{align*}  
 for $l =1, \ldots, d_1$, and 
  \begin{align*}
      f^{(l)}(\bfX, \bfY) \coloneqq &  \frac{1}{2}\sum_{(i,k)\in S \setminus S_{\cdot,l}} p_i^{-1} \left(y_{i,k}- [\bfX \bfY^\top]_{i,k}  \right)^2 +  \frac{1}{2}\sum_{i=1}^{d_1}  \left(\sigma(\Mstar_{i,l-d_1})-[\bfX \bfY^\top]_{i,l-d_1} \right)^2    \\
      & \quad + \frac{\lambda}{2} \norm{\bfX}_{\mathrm{F}}^2 + \frac{\lambda}{2} \norm{\bfY}_{\mathrm{F}}^2.
\end{align*} 
 for $l = d_1+1, \ldots, \bar{d}.$ We set the number of iterations as $t_0 = \bar{d}^{20}$. Then, the gradient descent iterates for $f^{(l)}(\cdot, \cdot)$ are constructed as follows:
 
\noindent {\bf Initialization:} $\bfX^{0,(l)}=\bfX^{\star}$ and $\bfY^{0,(l)}=\bfY^{\star}$.

\noindent {\bf Gradient updates:} for $t=0,1, \ldots, t_0-1$, compute
\begin{align}
    \bfX^{t+1,(l)} = \bfX^{t,(l)} - \eta \nabla_\bfX f^{(l)}(\bfX^{t,(l)}, \bfY^{t,(l)}); \quad     \bfY^{t+1,(l)} = \bfY^{t,(l)} - \eta \nabla_\bfY f^{(l)}(\bfX^{t,(l)}, \bfY^{t,(l)}).\label{eq:LOOgradientupdates}
\end{align}
 Denoting $\bfF^{t,(l)} \coloneqq\begin{bmatrix}
     \bfX^{t,(l)}\\
     \bfY^{t,(l)}
 \end{bmatrix},$  we can also define for each $t$ and $l$,
\begin{align*}
	\bfH^{t, (l)} &\coloneqq \argmin_{\bfO \in \mathcal{O}^{r \times r}}\norm{
		\bfF^{t, (l)}
		\bfO - 
		\bfF^{\star}}_{\mathrm{F}}; \quad  \bfR^{t, (l)} \coloneqq \argmin_{\bfO \in \mathcal{O}^{r \times r}}\norm{
		\bfF^{t, (l)}
		\bfO - 
		\bfF^{t} \bfH^{t}}_{\mathrm{F}}.
\end{align*}

\begin{lemma}\label{LemmaB1}
	With probability at least $1-O(\bar{d}^{-10})$, the gradient descent iterates defined in \eqref{eq:gradientupdates_up}, \eqref{eq:gradientupdates}, and \eqref{eq:Ht} satisfy the following:
	\begin{align}
		&\max_t \norm{\bfF^t \bfH^t-\bfF^\star}_{\mathrm{F}}   \leq C_{\mathrm{F}} \sqrt{\frac{\bar{d}}{\bar{p} (\sigma^\star_{\min})^2 }}
 \norm{\bfF^\star}_{\mathrm{F}}, \label{LemmaB1.1}
		\\ 
		&\max_t    \norm{\bfF^t \bfH^t-\bfF^\star}  \leq C_{\op} \sqrt{\frac{\bar{d}}{\bar{p} (\sigma^\star_{\min})^2 }} \norm{\bfF^\star}, \label{LemmaB1.2}
		\\
		&\max_t  \norm{\bfF^t \bfH^t-\bfF^\star}_{2, \infty}   \leq C_{\infty}  \kappa \sqrt{\frac{\bar{d} \log(\bar{d})}{\bar{p} (\sigma^\star_{\min})^2 }}  \norm{\bfF^\star}_{2, \infty}\\
 & \max_t \max_{1\leq l\leq \bar{d}}  \norm{\bfF^{t} \bfH^{t}-\bfF^{t,(l)}\bfR^{t,(l)}}_{\mathrm{F}} \leq C_{\mathrm{HR}}   \sqrt{\frac{\bar{d}\log(\bar{d})}{\bar{p} (\sigma^\star_{\min})^2 }}\norm{\bfF^\star}_{2, \infty} \\
 & \min_{0\leq t < t_0}\norm{\nabla f (\bfX^t,\bfY^t)}_{\mathrm{F}} \leq \frac{1}{\bar{d}^{5}} \lambda \sqrt{\sigma^\star_{\min}}\label{newLemma2}
	\end{align}
	where $C_{\mathrm{F}}, C_{\op}, C_{\infty}, C_{\mathrm{HR}}>0$ are some absolute constants.
\end{lemma}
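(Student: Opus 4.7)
The plan is to establish the first four bounds simultaneously by induction on the iteration index $t$, and then deduce the gradient bound \eqref{newLemma2} from a telescoping argument. At $t=0$ all four quantities on the left-hand sides vanish since $\bfF^0 = \bfF^\star$ and, by construction, $\bfF^{0,(l)} = \bfF^\star$ for every $l$, so $\bfH^0 = \bfR^{0,(l)} = \bfI$. For the induction step, I will assume that all four inequalities hold at iteration $t$ with the stated constants and propagate them to $t+1$, choosing the step size $\eta$ to be a sufficiently small constant multiple of $1/\sigma_{\max}^\star$. The core analytic input that makes the induction self-contained is a local restricted strong convexity/smoothness property of the square-loss component of $f$ on the neighborhood $\mathcal{B}\coloneqq\{\bfF:\|\bfF\bfH-\bfF^\star\|_{2,\infty}\le C_\infty \kappa \sqrt{\bar d \log\bar d/(\bar p(\sigma_{\min}^\star)^2)}\|\bfF^\star\|_{2,\infty}\}$, which on $\mathcal{B}$ behaves (up to constants) like a well-conditioned quadratic around $\bfF^\star$ once one invokes Condition \ref{cond:injectivity} (Lemma \ref{LemmaA5}).

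For the Frobenius and operator norm bounds, I will combine the one-step contraction identity $\bfF^{t+1}\bfH^t - \bfF^\star = (\bfF^t\bfH^t - \bfF^\star) - \eta\, \nabla f(\bfF^t)\bfH^t$ with two ingredients: restricted strong convexity of $f$ at $\bfF^\star$ (giving contraction of the noise-free part) and the spectral noise bound $\|\sum_{(i,k)\in S} p_i^{-1}(y_{i,k}-L^\star_{i,k})\bfe_i\bfe_k^\top\|\lesssim \sqrt{\bar d/\bar p}$ from Lemma \ref{lem:boundgradient}, which controls the inhomogeneous contribution driven by the Bradley--Terry and sieve-approximation errors. A standard optimization-theoretic bound (e.g.\ the potential-function argument of Chen--Chi--Fan--Ma--Yan, adapted to heterogeneous $p_i$ using $p_{\min}\asymp p_{\max}$ from Assumption \ref{asp:randomness}) then gives the stated scalings.

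The $\ell_{2,\infty}$ and leave-one-out bounds are coupled and must be propagated together. For a fixed row index $l$, I will write
\begin{align*}
\|[\bfF^{t+1}\bfH^{t+1}-\bfF^\star]_{l,\cdot}\|_2 \;\le\; \|[\bfF^{t+1,(l)}\bfR^{t+1,(l)}-\bfF^\star]_{l,\cdot}\|_2 + \|\bfF^{t+1}\bfH^{t+1}-\bfF^{t+1,(l)}\bfR^{t+1,(l)}\|_{\mathrm{F}},
\end{align*}
so it suffices to control the two summands. The first is handled by noticing that $\bfF^{t+1,(l)}$ is statistically independent of the randomness along the $l$-th row/column of $S$, hence its per-row error is governed by a Bernstein-type bound on the $l$-th row/column noise times $\bfF^\star$, producing the $\kappa\sqrt{\bar d\log\bar d/(\bar p(\sigma_{\min}^\star)^2)}\|\bfF^\star\|_{2,\infty}$ scaling. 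The second is the proximity of the two gradient trajectories; writing $\bfF^{t+1}-\bfF^{t+1,(l)}\bfR^{t+1,(l)}\bfH^{t+1,\top}$ in closed form and expanding gives a linear recursion driven by the leave-one-out discrepancy of the gradient, which the induction hypothesis bounds by a contraction plus a small additive term of the same order. The main obstacle, as in \cite{chen2020noisy}, is to show that the additional non-Gaussianity of the BTL/ sieve noise does not spoil the Bernstein bounds; here Assumption \ref{asp:sieve_assumptions}(i) guarantees $\|\bm{\calE}\|_{\infty}\ll 1/\sqrt{\max\{d_1,d_2(d_2-1)/2\}}$ so the approximation error contributes at most a lower-order term.

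Finally, \eqref{newLemma2} will follow from a telescoping sum. With the step size $\eta$ chosen as above, standard arguments give $f(\bfF^{t+1}) \le f(\bfF^t) - c\eta \|\nabla f(\bfF^t)\|_{\mathrm{F}}^2$ on the good event of the first four bounds. Since $f$ is bounded below by $0$ and $f(\bfF^0) = f(\bfF^\star) \lesssim \mathrm{poly}(\bar d)$, summing from $t=0$ to $t_0-1 = \bar d^{20}-1$ gives
\begin{align*}
\min_{0\le t<t_0}\|\nabla f(\bfX^t,\bfY^t)\|_{\mathrm{F}}^2 \;\lesssim\; \frac{f(\bfF^0)}{\eta\, t_0} \;\lesssim\; \frac{\mathrm{poly}(\bar d)}{\bar d^{20}},
\end{align*}
which is comfortably smaller than $(\lambda\sqrt{\sigma_{\min}^\star}/\bar d^{5})^2$ under the assumed scalings of $\lambda$ and $\sigma_{\min}^\star$, yielding the claim.
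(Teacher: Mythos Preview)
Your proof sketch is correct and follows essentially the same approach the paper intends: the paper explicitly omits the proof of this lemma, stating that it is ``similar to those in \cite{chen2020noisy}'' with adaptations for heterogeneous $p_i$, approximate low-rankness, and the BTL noise structure. Your inductive scheme (simultaneous propagation of the Frobenius, operator, $\ell_{2,\infty}$, and leave-one-out bounds, followed by a telescoping/descent-lemma argument for \eqref{newLemma2}) is precisely the Chen--Chi--Fan--Ma--Yan template, and you have correctly identified the three modifications needed here---using $p_{\min}\asymp p_{\max}$ from Assumption~\ref{asp:randomness}, absorbing the sieve error $\bm{\calE}$ via Assumption~\ref{asp:sieve_assumptions}(i), and checking that the bounded BTL noise still admits Bernstein-type concentration.
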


\begin{lemma}\label{LemmaB11}
With probability at least $1-O(\bar{d}^{-10})$, the followings hold for all $t\leq t_0$:
	\begin{enumerate}
		\item[(i)]  
		\begin{align}
			\norm{\bfF^{t,(l)}\bfR^{t,(l)}-\bfF^\star}_{2, \infty} &\leq  (C_{\infty}\kappa+C_{\mathrm{HR}})\sqrt{\frac{\bar{d} \log(\bar{d})}{\bar{p} (\sigma^\star_{\min})^2 }}\norm{\bfF^\star}_{2, \infty}, \label{LemmaB11.1}  \\ 
			\norm{\bfF^{t,(l)}\bfR^{t,(l)}-\bfF^\star} &\leq 2C_{\op}\sqrt{\frac{ \bar{d} }{\bar{p} (\sigma^\star_{\min})^2 }}\norm{\bfX^\star}.  \label{LemmaB11.2}
		\end{align}
		\item[(ii)]  
		\begin{gather}
			\norm{\bfF^t \bfH^t-\bfF^\star} \leq \norm{\bfX^\star}, \quad \norm{\bfF^t \bfH^t-\bfF^\star}_{\mathrm{F}} \leq \norm{\bfX^\star}_{\mathrm{F}}, \quad \norm{\bfF^t \bfH^t-\bfF^\star}_{2, \infty} \leq \norm{\bfF^\star}_{2, \infty}, \label{LemmaB11.3} \\
			\norm{\bfF^t} \leq 2 \norm{\bfX^\star}, \quad \norm{\bfF^t}_{\mathrm{F}} \leq 2 \norm{\bfX^\star}_{\mathrm{F}}, \quad
			\norm{\bfF^t}_{2, \infty} \leq 2 \norm{\bfF^\star}_{2, \infty}. \label{LemmaB11.4}
		\end{gather}
		\item[(iii)]  
		\[
		\norm{\bfF^t \bfH^t -\bfF^{t,(l)}\bfH^{t,(l)}}_{\mathrm{F}} \leq 5 \kappa \norm{\bfF^t \bfH^t -\bfF^{t,(l)}\bfR^{t,(l)}}_{\mathrm{F}}.
		\]
		\item[(iv)]  \eqref{LemmaB11.3}, \eqref{LemmaB11.4} also hold for the leave-one-out gradient descent iterates $\bfF^{t,(l)}\bfH^{t,(l)}.$ Additionally, we have
		\[\sigma^\star_{\min}/2 \leq \sigma_{\min} \left( (\bfY^{t,(l)} \bfH^{t,(l)})^\top \bfY^{t,(l)} \bfH^{t,(l)}\right) \leq \sigma_{\max} \left( (\bfY^{t,(l)} \bfH^{t,(l)})^\top \bfY^{t,(l)} \bfH^{t,(l)}\right) \leq 2 \sigma^\star_{\max}.\]
	\end{enumerate}
\end{lemma}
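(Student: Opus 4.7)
The plan is to deduce all four assertions from Lemma \ref{LemmaB1} by combining triangle inequalities, Weyl-type perturbation bounds, and a Procrustes-type comparison between $\bfH^{t,(l)}$ and $\bfR^{t,(l)}$. Throughout, I will use the fact that Assumption \ref{asp:sieve_assumptions} iii) makes the ratios $\sqrt{\bar{d}/(\bar{p}(\sigma^\star_{\min})^2)}$ and $\kappa\sqrt{\bar{d}\log(\bar{d})/(\bar{p}(\sigma^\star_{\min})^2)}$ sufficiently small, so that every ``size'' inequality such as those in \eqref{LemmaB11.3}--\eqref{LemmaB11.4} holds with plenty of slack.

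\emph{Part (i).} For each $l$, I would write
\[
\bfF^{t,(l)}\bfR^{t,(l)}-\bfF^\star = \bigl(\bfF^{t,(l)}\bfR^{t,(l)} - \bfF^t\bfH^t\bigr) + \bigl(\bfF^t\bfH^t-\bfF^\star\bigr),
\]
and apply the triangle inequality in the $(2,\infty)$ and operator norms. For $(2,\infty)$, I use $\|\cdot\|_{2,\infty}\leq\|\cdot\|_{\mathrm{F}}$ on the first summand and feed in the $\bfF^{t}\bfH^t$--$\bfF^{t,(l)}\bfR^{t,(l)}$ Frobenius bound together with the $(2,\infty)$ bound, both from Lemma \ref{LemmaB1}. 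For operator norm, I likewise use $\|\cdot\|\leq\|\cdot\|_{\mathrm{F}}$ on the first summand, combine with the operator-norm bound of Lemma \ref{LemmaB1} on the second, and compare the resulting two terms using $\|\bfF^\star\|_{2,\infty}\lesssim\sqrt{R\mu\sigma^\star_{\max}/\min\{d_1,d_2(d_2-1)/2\}}$ (Lemma \ref{lem:incoherence} and \eqref{Prelim2}) against $\|\bfX^\star\|\asymp\sqrt{\sigma^\star_{\max}}$, absorbing the logarithmic factor into the constant by Assumption \ref{asp:sieve_assumptions} iii).

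\emph{Part (ii).} The three inequalities in \eqref{LemmaB11.3} are obtained by directly invoking Lemma \ref{LemmaB1} and noting that $C_{\op}\sqrt{\bar{d}/(\bar{p}(\sigma^\star_{\min})^2)}\leq 1$, $C_{\mathrm{F}}\sqrt{\bar{d}/(\bar{p}(\sigma^\star_{\min})^2)}\leq 1$, and $C_{\infty}\kappa\sqrt{\bar{d}\log(\bar{d})/(\bar{p}(\sigma^\star_{\min})^2)}\leq 1$ hold under Assumption \ref{asp:sieve_assumptions} iii). The bounds \eqref{LemmaB11.4} on $\bfF^t$ itself then follow from \eqref{LemmaB11.3} together with $\bfF^t = \bfF^t\bfH^t(\bfH^t)^\top$, using the triangle inequality against $\bfF^\star$ and the fact that $\bfH^t$ is orthogonal.

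\emph{Part (iii).} This is the main technical step in the lemma. The matrices $\bfH^{t,(l)}$ and $\bfR^{t,(l)}$ solve, respectively, the Procrustes problems aligning $\bfF^{t,(l)}$ to $\bfF^\star$ and to $\bfF^t\bfH^t$. Writing
\[
\bfF^{t}\bfH^t - \bfF^{t,(l)}\bfH^{t,(l)} = \bigl(\bfF^t\bfH^t - \bfF^{t,(l)}\bfR^{t,(l)}\bigr) + \bfF^{t,(l)}\bigl(\bfR^{t,(l)} - \bfH^{t,(l)}\bigr),
\]
I only need to control the second piece. A standard Procrustes perturbation argument (e.g., derived via the SVD form of the orthogonal Procrustes problem applied to $\bfF^{t,(l)\top}\bfF^\star$ and $\bfF^{t,(l)\top}\bfF^t\bfH^t$) yields
\[
\|\bfR^{t,(l)} - \bfH^{t,(l)}\| \lesssim \frac{\|\bfF^{t,(l)}\|}{\sigma^\star_{\min}}\|\bfF^t\bfH^t - \bfF^\star\|_{\mathrm{F}} \quad \text{or} \quad \frac{\|\bfF^\star\|}{\sigma^\star_{\min}}\|\bfF^t\bfH^t - \bfF^{t,(l)}\bfR^{t,(l)}\|,
\]
which introduces the factor $\kappa$ after multiplying by $\|\bfF^{t,(l)}\|$ and bounding the latter by $\|\bfF^\star\|$ via part (ii) applied to the leave-one-out iterate. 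The key step is to show that the bound can be stated in terms of $\|\bfF^t\bfH^t - \bfF^{t,(l)}\bfR^{t,(l)}\|_{\mathrm{F}}$, which I expect to be the main obstacle—this requires carefully using the well-conditioning of $\bfF^{t,(l)}$ (guaranteed by part (iv) below, so the argument is simultaneously inductive).

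\emph{Part (iv).} The size bounds \eqref{LemmaB11.3}--\eqref{LemmaB11.4} for $\bfF^{t,(l)}\bfH^{t,(l)}$ follow from part (i) in exactly the same way as part (ii) followed from Lemma \ref{LemmaB1}, after observing that $\|\bfF^{t,(l)}\bfH^{t,(l)} - \bfF^\star\|\leq\|\bfF^{t,(l)}\bfR^{t,(l)} - \bfF^\star\|$ by definition of $\bfH^{t,(l)}$ as the Procrustes minimizer. Finally, the singular value sandwich on $(\bfY^{t,(l)}\bfH^{t,(l)})^\top\bfY^{t,(l)}\bfH^{t,(l)}$ is an immediate consequence of Weyl's inequality applied to the perturbation $\bfY^{t,(l)}\bfH^{t,(l)} - \bfY^\star$, whose operator norm is controlled by the Frobenius bound just established. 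The induction closes because parts (iii) and (iv) feed back into the proof of Lemma \ref{LemmaB1} only through bounds that have already been verified at step $t$.
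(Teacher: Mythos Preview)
The paper does not actually prove this lemma: it states at the beginning of Section~\ref{sec:sectionB} that ``the proofs of the lemmas in this section are similar to those in \cite{chen2020noisy}, we omit them for brevity.'' So there is no paper-proof to compare against directly; the intended argument is the one in \cite{chen2020noisy}, and your sketch follows that template closely. Parts (i) and (ii) are routine triangle-inequality consequences of Lemma~\ref{LemmaB1}, and you have them right. Part (iii) is indeed the Procrustes perturbation step, and the version you want is the one that bounds $\|\bfR^{t,(l)}-\bfH^{t,(l)}\|_{\mathrm{F}}$ in terms of $\|\bfF^{t}\bfH^{t}-\bfF^{t,(l)}\bfR^{t,(l)}\|_{\mathrm{F}}/\sigma_{\min}(\bfF^{t,(l)})$; the well-conditioning of $\bfF^{t,(l)}$ you need here comes from part (i) and Weyl, not from part (iv), so there is no circularity in the logical order (i) $\Rightarrow$ (iii) $\Rightarrow$ (iv). Your remark that ``parts (iii) and (iv) feed back into the proof of Lemma~\ref{LemmaB1}'' mixes levels: in the paper's presentation Lemma~\ref{LemmaB1} is a black box and Lemma~\ref{LemmaB11} is a corollary; the joint induction lives inside the (omitted) proof of Lemma~\ref{LemmaB1}.

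One small slip in part (iv): the inequality $\|\bfF^{t,(l)}\bfH^{t,(l)}-\bfF^\star\|\le\|\bfF^{t,(l)}\bfR^{t,(l)}-\bfF^\star\|$ does \emph{not} follow from the definition of $\bfH^{t,(l)}$, because $\bfH^{t,(l)}$ minimizes the Frobenius norm, not the operator or $(2,\infty)$ norm. The clean way is to go through part (iii): bound $\|\bfF^{t,(l)}\bfH^{t,(l)}-\bfF^\star\|$ by $\|\bfF^{t,(l)}\bfH^{t,(l)}-\bfF^{t}\bfH^{t}\|_{\mathrm{F}}+\|\bfF^{t}\bfH^{t}-\bfF^\star\|$, control the first piece by $5\kappa\|\bfF^{t}\bfH^{t}-\bfF^{t,(l)}\bfR^{t,(l)}\|_{\mathrm{F}}$, and then use the $C_{\mathrm{HR}}$ and $C_{\op}$ bounds from Lemma~\ref{LemmaB1}. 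The same route handles the $(2,\infty)$ norm.
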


\end{document}